\DeclareMathOperator {\Id} {Id}
\DeclareMathOperator {\Div} {Div} 
\DeclareMathOperator {\curl} {curl} 
\DeclareMathOperator {\tr} {Tr}
\newcommand{\RR}{\mathbb{R}}
\newcommand{\de}{\delta}
\newcommand{\La}{\Lambda}
\newtheorem{remark}{Remark}[section]
\title{On the system of elastic-gravitational equations describing the
    oscillations of the earth}
\author{Maarten V. de Hoop\thanks{Rice University, Houston, TX, USA
    ({\tt mdehoop@rice.edu})} \and Sean Holman \thanks{University of
    Manchester, Manchester, UK ({\tt sean.holman@manchester.ac.uk})}
  \and Ha Pham \thanks{INRIA, Pau, France} ({\tt
    ha.howard@inria.fr})}
\begin{document}
\maketitle

\pagestyle{myheadings}
\thispagestyle{plain}
\markboth{DE HOOP, HOLMAN, and PHAM}{Oscillations of the earth}

\begin{abstract}
The linear equations of motion of a uniformly rotating, elastic and
self-gravitating earth model are analyzed under minimal regularity
assumptions. We present existence and uniqueness results for the
system, energy estimates, convergence of Galerkin approximations, and
propose a method based on a Volterra equation to deal with the
nonlocal self-gravitation.
\end{abstract}

%\tableofcontents
%\newpage

\section{Introduction}

The study of Earth's oscillations is fundamental to seismology, as it
is a key part of the theory of earth's dynamic response to external or
internal forces. The same theory is applicable to the analysis of free
oscillations or normal modes and postseismic relaxation, and the
analysis of seismic surface waves and body waves.

We develop a comprehensive framework for the analysis of the
oscillations, establish well-posedness under appropriate and natural
conditions and energy estimates, and provide an anatomy describing,
for example, how the system of equations describing acousto-elastic
waves can be meaningfully extracted. The results in this paper form
the foundations for the characterization of the spectrum of the earth,
which we will give in a follow-up paper. Here, the key complication
arises from the presence of a liquid outer core.

We consider a bounded set $\widetilde{X} \subset \mathbb{R}^3$
representing the interior of the earth, with Lipschitz continuous
exterior boundary $\partial \tilde{X}$. The set $\widetilde{X}$ is
subdivided into solid and fluid regions, denoted by $\Omega^S$ and
$\Omega^F$ respectively. Ideally, the region $\overline{\Omega^F}$ is
the union of a deformed annulus corresponding to the outer core, and
some regions corresponding to the oceans; $\overline{\Omega^S}$ is the
union of a deformed ball corresponding with the inner core and a
deformed annulus representing essentially the mantle and crust.  Both
$\Omega^S$ and $\Omega^F$ can be further divided into subregions;
these subregions will be separated by $C^1$ (inner) interfaces which we will collectively label
as $\Sigma$. Thus, we have
$$
   \widetilde{X} = \Omega^S \cup \Omega^F \cup \Sigma
   \cup \partial \widetilde{X} ,
$$
$$
   \Omega^S = \bigcup_{i=1}^n \Omega_i^S ;\ \
   \Omega^F  = \bigcup_{j=1}^m
      (\Omega^{F}_{\text{int}})_j \cup \bigcup_{k=1}^l \Omega_k^O ,
$$
where $\Omega_k^O$ is an ocean layer, and
$(\Omega^{F}_{\text{int}})_j$ is an internal fluid region.  The outer
boundary $\partial \widetilde{X}$ is divided into the land surface
$(\partial \widetilde{X})_S $ and the ocean surface $(\partial
\widetilde{X})_F$, thus
$$
   \partial \widetilde{X} = \partial \widetilde{X}_S
          \cup \partial \widetilde{X}_F .
$$
The inner interfaces are subdivided into 
$$
   \Sigma = \Sigma^{SS} \cup \Sigma^{FF} \cup \Sigma^{FS} ;\ \
   \Sigma^{FS} = \Sigma^{FS}_{\text{int}} \cup \Sigma^{FS}_O ,
$$
where
\begin{itemize}
\item $\Sigma^{FF}$ is the union of all the different interfaces in
  the interior of $\overline{\Omega^F_{\text{int}}}$, that is, in between
  two inner fluid regions.
\item $\Sigma^{SS}$ is the union of all the different interfaces in
  the interior of $\overline{\Omega^S}$, that is, in between two solid
  regions.
\item $\Sigma^{FS}$ is the union of all the different interfaces
  separating a solid and a fluid region. We further distinguish
  $\Sigma^{FS}_{\text{int}}$ and $\Sigma^{FS}_O$.
  $\Sigma^{FS}_{\text{int}}$ is the union of interfaces between a
  solid layer $\Omega_i^S$ and an inner fluid region $\Omega_j^F$,
  that is, not an ocean layer, while $\Sigma^{FS}_O$ is the union of
  ocean floors, that is, union of interfaces between a solid region
  $\Omega_i^S$ and an ocean layer $\Omega_k^O$. The boundaries of
  ocean layers are composed of ocean floors $\Sigma^{FS}_O$ and ocean
  (free) surfaces $\partial \tilde{X}_F$, that is, $\partial \Omega^O
  = \Sigma^{FS}_O \cup \partial \widetilde{X}_F$.
  \item We will also write $\Sigma^F = \Sigma^{FF} \cup \Sigma^{FS}$ for the union of all interfaces involving a fluid.
\end{itemize}

For the purpose of a well-posedness result, we impose further
restrictions on the above model. In our model, the earth is made up of
`onion-like' layers of the solid regions $\Omega^S_i$ and fluid
regions $\Omega^F_{\text{int},j}$ (except for the oceans). We also
assume that the boundaries and interfaces of different types listed
above do not intersect one another in the interior. We will glue the
different regions together following certain boundary conditions
discussed in subsection \ref{BoCo}. We will also assume that for any
interface between a solid and an inner fluid region or between two
solid regions: $\partial \Sigma^{FS}_{\text{int}} = \emptyset$,
$\partial \Sigma^{SS} = \emptyset$. On the other hand, ocean edges,
which are boundaries of ocean floors $\Sigma^{FS}_O$, as well ocean
surfaces $\partial \widetilde{X}_F$ exist; hence $\partial
\Sigma^{FS}_O = \partial \widetilde{X}_F \neq \emptyset$.

We assume that prior to the occurrence of an earthquake, the earth is
in a state of mechanical equilibrium by requiring that the static
momentum equation \eqref{Equi1} be satisfied throughout $\Omega^S$ and
$\Omega^F$. In $\Omega^F$, the static momentum equation takes the
special form of \eqref{Equi2}. For our analysis, the fluid region
contains a `perfect fluid' characterized by \eqref{PerFluid}.

Denote by $u =u(t,x) $ the displacement which takes values in
$\mathbb{C}^3$. One hopes for existence and uniqueness of solutions to
the following equation of motion modelling the oscillations of an
elastic and self-gravitating earth, imposed with the boundary and
interface conditions listed in Table~\ref{unT},
\begin{equation}\label{globalseis}
   \rho^0 [\ddot u + 2 R_{\Omega} \cdot \dot u]
     + \rho^0 u \cdot \nabla\nabla (\Phi^0 + \Psi^s)
   + \rho^0 \nabla S(u) - \nabla \cdot (\La^{T^0} : \nabla u)
   = \rho^0 f,
\end{equation}
where
$$
   (\La^{T^0} : \nabla u)_{ij} =
   \sum_{k, l = 1}^3 \La^{T^0}_{ijkl} \partial_k u_l = \sum_{k, l =
   1}^3 \La^{T^0}_{jikl} \partial_l u_k ;
$$
$$
   (\nabla\nabla (\Phi^0 + \Psi^s))_{ij} = \partial_i
            \partial_j (\Phi^0 + \Psi^s) ;
$$
$$
   R_{\Omega} \cdot \dot{u} = \Omega \times \dot{u} 
\quad\text{with}\
   R_{\Omega} :=\Bigg (\sum_{j=1}^3 \epsilon_{ijk}
           \Omega_j \Bigg )_{i,k=1}^3 .
$$
Here, $\Omega \in \RR^3$ is the angular velocity of the earth's
rotation and $R_{\Omega}\cdot \dot{u}$ represents the induced Coriolis
force. $\Psi^s(x)$ is the corresponding (spatial) centrifugal
potential given by \eqref{centriPo}, $\Phi^0$ is the gravitational
potential of the reference state given by \eqref{eq:Phi0} and $\Psi^s$
is the centrifugal potential given by
\eqref{centriPo}. Furthermore, $\rho^0$ is the density, related to the
reference gravitational potential $\Phi^0$, given by \eqref{ellPhi},
$\La^{T^0}$ is the modified stiffness tensor defined by
\eqref{ModStiffTensor}, and $S(u)$ is the perturbation of the
gravitational potential due to the oscillation, given by equation
\eqref{eq:Phi1sol}. We will describe all the quantities involved in
this equation in more detail in Section~\ref{IntroEq}. Also, from now
on we will use the summation convention over repeated indices so that
for example
\[
\sum_{k, l =1}^3 \La^{T^0}_{jikl} \partial_l u_k =  \La^{T^0}_{jikl} \partial_l u_k.
\]

Since it is not always possible to hope for a classical solution, one
must explore various notions of weak solution. Coercitivity is the crucial
ingredient in any approach to proving existence and uniqueness of weak, or classical, solutions of \eqref{globalseis}. We thus briefly review the concept of coercivity. Let $H$ and $E$ be Hilbert spaces with $E\hookrightarrow H$ a dense and continuous embedding. A continuous sesquilinear form $a$ over $E\times E$ is said to be $E$ coercive relative to $H$ if
there exist $\alpha > 0$ and $\beta \in \RR$ so that
$$
   a(v,v) \geq \alpha \lVert v\rVert_E^2
       - \beta \lVert v\rVert^2_H ,\quad  \forall \quad v \in E .
$$
This definition also carries over to the unbounded operator $A$ in $H$
corresponding to $a$.  By \cite[Theorem XVII.3.3]{DautrayLionsV2}, if
coercitivity of $A$ holds, then $A$ is the infinitesimal generator of
a semigroup of class $\mathcal{C}^0$ in $H$. From this result,
\cite{Ball} gives the well-posedness for the Cauchy problem $u_t + Au
= f$, $u(0) = g$. This is called the semi-group approach. Another approach, the Galerkin method, also requires coercivity of $A$.

Before discussing the elastic-gravitational equation, we first visit
the instructive simpler case of the linear elastic equation with zero
displacement on the boundary,
$$
   \ddot{u} -\nabla\cdot(\mathbf{a} :\nabla u) = 0\ \text{on}\
          \Omega \ ; \ \ u|_{\partial \Omega} = 0 ,
$$
to motivate the type of assumptions imposed on $\La^{T^0}$ in order to
obtain coerciveness. In this case, if we assume that $\mathbf{a}$ is
strongly elliptic, that is, for all non-zero vectors $\alpha$ and $k$,
$$
   \big(\mathbf{a} :(\alpha \otimes k )\big): (\alpha \otimes k )
           > 0 ,
$$ 
then, e.g. by \cite{Valette}, the associated Dirichlet form is
$H^1_0(\Omega)$ coercive relative to $L^2(\Omega)$. That is, there exists
$c,d > 0$ so that
$$
   \int_{\Omega} (\mathbf{a}:\nabla u ) :
       \nabla \overline{u} \, d V
   \geq c \lVert u\rVert^2_{H^1_0(\Omega)}
         - d \lVert u \rVert^2_{L^2(\Omega)} \quad
   \forall \quad u \in H^1_0(\Omega) .
$$
With this result, \cite[Theorem XVII.3.4]{DautrayLionsV2} gives that $$
\tilde{A} = \begin{pmatrix} 0 & \Id \\ A & 0 \end{pmatrix} ,$$ where $
Au = -\nabla\cdot(\mathbf{a} :\nabla u)$, is the infinitesimal
generator of a semigroup in $H^1_0(\Omega) \times L^2(\Omega)$.  The
well-posedness for a weak solution of the linear elastic equation then
follows with the help of \cite{Ball}.

In fact, strong ellipticity is a necessary condition in order to use
semigroup theory to establish well-posedness. Indeed, by Theorem 3.7
in \cite{Marsden}, if strong ellipticity strictly fails, $ \tilde{A}$
cannot generate a semigroup on any sub-Banach space $Y$ of $L^2(\Omega)$ with
$D(A) \subset Y$. If we wish to consider cases in which the
displacement $u$ is nonzero at the boundary of $\Omega$, then strong
ellipticity of $\mathbf{a}$ is not sufficient to obtain
coercitivity. Indeed, \cite{Valette} contains a counter
example\footnote{ It should be noted that this counter example
  actually shows that \cite[Proposition 1.5]{Marsden} is incorrect as
  stated.} showing that strong ellipticity does not imply coercivity
for the Dirichlet form on $H^1(\Omega)$ relative to $L^2(\Omega)$. The
stronger requirement, which we will need and which is referred to as
{\it pointwise stability}\footnote{In \cite[Proposition 3.1]{Marsden},
  pointwise stability implies strong ellipticity. However, the reverse
  statement is not true, as can be seen in the case of an isotropic
  and homogeneous elastic tensor, given in \cite[Proposition
    3.13]{Marsden}.} in \cite{Marsden}, is that, for any symmetric two
tensor $C$, $(\mathbf{a}:C):C > 0$.

In our problem with the elastic-gravitational equation, by integration
by parts (requiring $u$ and $v$ to have sufficient regularity), we
obtain the preliminary weak formulation
$$
   \dfrac{d}{dt} \int_{\tilde{X}}
          \rho^0 \dot{u} \cdot \overline{v} \, dV
   + \int_{\tilde{X}} 2 \rho^0 R_{\Omega} \dot{u}
    \cdot \overline{v} \, dV + a_{\text{original}}\big(u, v\big)
   = \int_{\tilde{X}} \rho^0 f \cdot \overline{v} \, dV ,
$$
where the sesquilinear form $a_{\text{original}}$ is defined in
\eqref{aOrig}. Although the solid regions support waves in all
directions, which is equivalent to $\Lambda^{T^0}$ being strongly
elliptic\footnote{By Proposition 3.9 in \cite{Marsden}, the $4$-tensor
  $\mathbf{a}$ admits progressive elastic plane waves in all possible
  directions if and only if $\mathbf{a}$ is strongly elliptic.}, this
only gives $H^1_0$-coercivity, and is not sufficient to obtain $H^1$
coercivity, as shown by the counterexample in \cite{Valette}. This
necessitates a stronger assumption of pointwise stability in the solid
regions. On the other hand, the fluid regions do not support shear
waves, and in its current form, $(\Lambda^{T^0} : \nabla u) : \nabla
\overline{u}$ in the fluid regions is not an expression of definite
sign. This issue is rectified by using an idea of \cite{Valette} to
modify the problem, so that in the fluid regions, the highest order
terms occurring in the weak formulation take the form
$$
   ( \Lambda^{T^0} : \nabla u ) : \nabla \overline{v}
     + \sigma_N  \nabla u : \nabla \overline{v}^T
         - \sigma_N (\nabla\cdot u)( \nabla \cdot \overline{v})
   = (\text{positive constant}) (\nabla \cdot u)
           (\nabla \cdot \overline{v}) .
$$
Here, $p^0$ is the initial hydrostatic pressure given in
\eqref{InHP}, and $\sigma_N$ is any regular scalar function which is equal to
$-p^0$ in $\Omega^F$, and $0$ outside of a small neighborhood of
$\Omega^F$. This neighborhood is sufficiently small so that none of
the solid-solid interfaces intersect the support of $\sigma_N$. With
this modification, one can assure positivity of the highest order
terms in the fluid regions, although we must compensate this by adding
hypotheses on the size of $p^0$ near the fluid-solid boundary.

In short, $a_{\text{original}}$ is not coercive and not closed, hence
the main technical challenge of the problem lies in finding the
`proper' variational formulation of the problem, one whose associated
sesquilinear form is coercive. Using the idea of \cite{Valette} described in the previous paragraph, one
derives and works with an equivalent problem of the form
$$
   \dfrac{d}{dt} \int_{\tilde{X}} \rho^0 \dot{u}\cdot \overline{v} \,
   dV + \int_{\tilde{X}} 2 \rho^0 R_{\Omega} \dot{u}
   \cdot \overline{v} \, dV + a_2\big(u, v\big)
   = \int_{\tilde{X}} \rho^0 f \cdot \overline{v} \ dV,
$$ 
where the sesquilinear form $a_2$, whose definition is given in
\eqref{a2} and \eqref{DefE}, is Hermitian, closed, and coercive in a
Hilbert space $E$ relative to $H = L^2(\tilde{X}, \rho^0 \, d
V)$. Furthermore, $E$ is dense in $H$. As a result, the `equivalent'
variational problem can be formulated in the setting of
$E\hookrightarrow H\hookrightarrow E'$ ($E'$ here denotes the dual space of $E$ in which $H$ is identified as a subspace by $H \ni h \mapsto (h,\cdot)_H \in E'$) on $(0,T)$ for $T > 0$ as
follows,
\begin{equation}\label{VartionalP2}
\begin{aligned}
& \text{Given} \ g \in E , h \in H,  f\in L^2(0,T; H) \\[0.2cm] 
& \text{Find}\ u\ \text{satisfying}:
     u \in \mathcal{C}^0( [0,T]; E), \dot{u} \in \mathcal{C}^0([0,T]; H) ,
  \text{such that} \\
& \forall v \in E , \dfrac{d}{dt} (\dot{u}, v)_H
  + (2R_{\Omega} \dot{u} , v)_H + a_2(u,v) = (f, v)_H ,\
  \text{in}\ \mathcal{D}'(0,T) ; \\
& u(0) = g\ ;\ \dot{u}(0) = h . 
\end{aligned}
\end{equation}
In subsection~\ref{A2Op}, we show that $a_2$ corresponds to a
self-adjoint unbounded operator $(A_2, D(A_2))$ densely defined on $H$
with the property that $A_2 \in \mathcal{L}(E,E')$. Using the operator $A_2$ we can rewrite the main
equation in \eqref{VartionalP2} in its equivalent form
$$ 
   \forall v \in E ,  \dfrac{d}{dt} (\dot{u}, v)_H
        + (2R_{\Omega} \dot{u}, v)_H + \langle A_2 u, v\rangle_{E',E} = (f, v)_H ,\
   \text{in}\ \mathcal{D}'(0,T) ,
$$
or simply
\begin{equation}\label{IVP}
    \ddot{u} + 2 R_{\Omega} \dot{u} + A_2 u = f ,\
   \text{in}\ \mathcal{D}'(0,T; E') .
\end{equation}
One can consider \eqref{IVP} as a single equation of second order in time, or as a system of two equations both first-order in time. In the first approach, it is possible to study Problem \eqref{IVP} directly using the Galerkin
approximation and `parabolic regularisation' for the second order
evolution equation, cf.\cite{LionsMagenesV1}. In the second approach,
one studies the well-posedness of the associated first-order problem
obtained from \eqref{IVP} by reduction of order. The associated first
order problem is formulated on $\mathcal{D}'(0,T; H \times E')$ as
$$
   \dfrac{d}{dt} \begin{pmatrix} u \\ \dot{u} \end{pmatrix}
   - \tilde{A}_2 \begin{pmatrix} u \\ \dot{u} \end{pmatrix}
   = \begin{pmatrix} 0 \\ f\end{pmatrix};\ \
   \tilde{A}_2 := \begin{pmatrix}  0 & \Id \\ -A_2  & -2R_{\Omega}
                                   \end{pmatrix} .
$$
Here $\tilde{A}_2$ makes sense as a densely defined operator on
$\mathcal{H} = E\times H$ with domain $D(\tilde{A}_2) = D(A_2) \times
E$. Well-posedness results for this system can be obtained either by
semi-group theory, cf. \cite{DautrayLionsV5}, or by the Galerkin method
and a `regularisation' technique for first-order evolution equations,
cf. \cite{LionsMagenesV1}. The main tool in semigroup theory we will
use is a corollary of the Hille-Yosida theorem, the Hille-Phillips
theorem, cf. \cite{HillePhilips}.  From either approach, one can arrive at
the following well-posedness result.

\medskip

\begin{theorem}\label{maintheorem} {\it Suppose that the hypotheses of
  Theorem~\ref{CoerciveA2} are satisfied. For $g \in E$, $h\in
  L^2(\widetilde{X}, \rho^0\,dx)$ and $f \in L^2(0,T; E) $ there
  exists a unique solution $u$ to the problem \eqref{VartionalP2} with
$$
   u \in \mathcal{C}^0([0,T]; E) ,\ \dot{u} \in  \mathcal{C}^0([0,T];H ) .
$$
This solution is a called a weak solution of 
\begin{equation}\label{MainProblem}
   \ddot{u} + 2 R_{\Omega} \dot{u} + A_2 u = f ,\
   u(0) = g ,\ \dot{u}(0) = h .
\end{equation}
If $f \in H^1(0,T, H), g \in D(A_2) $ and $h\in E$, the
unique variational solution $u$ obtains more regularity, $u \in
\mathcal{C}^1([0,T], H) \,\cap \,\mathcal{C}^2((0,T), H)$, and solves
\eqref{MainProblem} in a strong (classical) sense that is, in
$\mathcal{C}^0([0,T], H)$. In addition, if we denote by $$U(t)
= \begin{pmatrix} u \\ \dot{u}\end{pmatrix} ,$$ then $U(t)$ is given by the formula
\eqref{VectorVSol}.}
\end{theorem}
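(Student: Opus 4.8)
\medskip
\noindent\textbf{Sketch of proof.} The plan is to pass to the first-order reduction and invoke semigroup theory, with the Galerkin method and energy estimates available as an equivalent route. Under the hypotheses of Theorem~\ref{CoerciveA2} the form $a_2$ is Hermitian, continuous on $E\times E$, and $E$-coercive relative to $H=L^2(\widetilde{X},\rho^0\,dV)$, so there are $\alpha>0$ and $\beta\in\RR$ with $a_2(v,v)+\beta\lVert v\rVert_H^2\ge\alpha\lVert v\rVert_E^2$ for all $v\in E$; and $R_\Omega$, acting on $H$ as multiplication by the constant antisymmetric matrix $\big(\sum_j\epsilon_{ijk}\Omega_j\big)_{i,k}$, is bounded and skew-Hermitian on $H$, so that $\operatorname{Re}(R_\Omega w,w)_H=0$ for all $w\in H$. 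With $\mathcal{H}=E\times H$ and $D(\tilde A_2)=D(A_2)\times E$, write $\tilde A_2=\tilde A_2^{(0)}+P$ where $\tilde A_2^{(0)}=\begin{pmatrix}0&\Id\\-A_2&0\end{pmatrix}$ and $P=\begin{pmatrix}0&0\\0&-2R_\Omega\end{pmatrix}$, the latter being bounded on $\mathcal{H}$ because $R_\Omega\in\mathcal{L}(H)$.

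The coercivity of $a_2$ is exactly the hypothesis under which $\tilde A_2^{(0)}$ is the infinitesimal generator of a $\mathcal{C}^0$-semigroup on $\mathcal{H}$; this is \cite[Theorem XVII.3.4]{DautrayLionsV2}, the analogue for $a_2$ of the statement recalled above for the elastic equation. Adding the bounded perturbation $P$ preserves semigroup generation by the Hille--Phillips theorem \cite{HillePhilips}, so $\tilde A_2$ generates a $\mathcal{C}^0$-semigroup $S(t)$ on $\mathcal{H}$. For $g\in E$, $h\in H$ and $f\in L^2(0,T;H)$ (hence in particular under the hypotheses of the theorem, since $E\hookrightarrow H$), the vector $\begin{pmatrix}0\\f\end{pmatrix}$ lies in $L^1(0,T;\mathcal{H})$, and the variation-of-constants formula
\[
\begin{pmatrix}u\\\dot u\end{pmatrix}(t)=S(t)\begin{pmatrix}g\\h\end{pmatrix}+\int_0^t S(t-s)\begin{pmatrix}0\\f(s)\end{pmatrix}\,ds
\]
---which is \eqref{VectorVSol}---defines $U=(u,\dot u)\in\mathcal{C}^0([0,T];\mathcal{H})$, that is, $u\in\mathcal{C}^0([0,T];E)$ and $\dot u\in\mathcal{C}^0([0,T];H)$, with $U(0)=(g,h)$. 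Since $\tilde A_2$ extends to a bounded map $\mathcal{H}\to H\times E'$ (using $A_2\in\mathcal{L}(E,E')$ and $R_\Omega\in\mathcal{L}(H)\subset\mathcal{L}(H,E')$), this $U$ satisfies $\tfrac{d}{dt}U=\tilde A_2U+(0,f)$ in $\mathcal{D}'(0,T;H\times E')$, which is precisely \eqref{IVP}; hence $u$ solves \eqref{VartionalP2}. Uniqueness is that of mild solutions; equivalently it follows from the energy identity $\tfrac{d}{dt}\big(\lVert\dot u\rVert_H^2+a_2(u,u)\big)=2\operatorname{Re}(f,\dot u)_H$ (the Coriolis contribution vanishes because $R_\Omega$ is skew on $H$) applied to the difference of two solutions, and this same identity, implemented on a Galerkin scheme and combined with a Gr\"onwall argument, yields an independent construction of $u$ together with a priori bounds controlled by $\lVert g\rVert_E$, $\lVert h\rVert_H$ and $\lVert f\rVert_{L^2(0,T;H)}$, cf.\ \cite{LionsMagenesV1},\cite{DautrayLionsV5}.

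For the higher-regularity statement, suppose $g\in D(A_2)$, $h\in E$ and $f\in H^1(0,T;H)$. Then $(g,h)\in D(\tilde A_2)$ and $(0,f)\in H^1(0,T;\mathcal{H})\hookrightarrow\mathcal{C}^0([0,T];\mathcal{H})$, so the solution produced by \eqref{VectorVSol} is a strong (classical) solution of the first-order system: $U\in\mathcal{C}^1([0,T];\mathcal{H})\cap\mathcal{C}^0([0,T];D(\tilde A_2))$. Reading off components, $\dot u\in\mathcal{C}^0([0,T];E)\hookrightarrow\mathcal{C}^0([0,T];H)$ gives $u\in\mathcal{C}^1([0,T];H)$; $\ddot u\in\mathcal{C}^0([0,T];H)$ gives $u\in\mathcal{C}^2((0,T);H)$; and $u\in\mathcal{C}^0([0,T];D(A_2))$ makes $A_2u=f-\ddot u-2R_\Omega\dot u$ an identity in $\mathcal{C}^0([0,T];H)$, so \eqref{MainProblem} holds classically in $\mathcal{C}^0([0,T];H)$. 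The same regularity is obtainable variationally: differentiate the equation in $t$, apply the energy estimate to $\dot u$ with data $(h,\;f(0)-2R_\Omega h-A_2g)$, and pass to the limit through a difference-quotient/parabolic-regularisation argument, cf.\ \cite{LionsMagenesV1}. I expect the main difficulty to lie not in this abstract machinery but in the bookkeeping behind Theorem~\ref{CoerciveA2}---establishing that $(E,H)$ really is a Gelfand triple with $a_2$ closed and $E$-coercive relative to $H$, and that $A_2\in\mathcal{L}(E,E')$---and in carefully reconciling the variational formulation \eqref{VartionalP2}, the distributional equation \eqref{IVP} in $E'$, and the semigroup representation \eqref{VectorVSol}, so that one and the same function $u$ satisfies all three with exactly the stated regularity.
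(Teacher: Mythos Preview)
Your proof is correct and follows the same overall strategy as the paper: pass to the first-order system on $\mathcal{H}=E\times H$, show that $\tilde A_2$ generates a $\mathcal{C}^0$-semigroup, and read off existence, uniqueness, and regularity from the variation-of-constants formula \eqref{VectorVSol}. The one genuine difference is in how the generation result is obtained. You split $\tilde A_2=\tilde A_2^{(0)}+P$, invoke \cite[Theorem~XVII.3.4]{DautrayLionsV2} for the undamped part, and then treat the Coriolis term as a bounded perturbation. The paper instead works with the full operator $\tilde A_2$ at once: it equips $\mathcal{H}$ with the inner product built from $(a_2+\beta)$, computes directly that $\operatorname{Re}(\tilde A_2 U,U)_{\mathcal{H}}\le c\lVert U\rVert_{\mathcal{H}}^2$ (so $\tilde A_2-c$ is dissipative), and checks the range condition $\lambda-\tilde A_2$ onto $\mathcal{H}$ by reducing to the solvability of $(A_2+2\lambda R_\Omega+\lambda^2)u=g$ via Lax--Milgram. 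The paper's version of Hille--Phillips is the Lumer--Phillips characterization (dissipativity plus range), not the bounded-perturbation theorem you cite under that name; your argument is valid, but the attribution is slightly off. Your route is a bit more modular; the paper's buys an explicit quasi-contractivity constant and hence the energy estimate in Corollary~\ref{Energyest:cor}. For the passage from the mild first-order solution back to the variational second-order problem, the paper also does more explicit bookkeeping (computing $\tilde A_2^\ast$ and proving Lemma~\ref{WeakSol1to2}) where you argue by extending $\tilde A_2$ to $\mathcal{L}(\mathcal{H},H\times E')$; both are fine.
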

\medskip

In order to leverage the existing theory in functional analysis
relating to the well-posedness of evolution equations, and then
leading to Galerkin type numerical schemes, it is critical to consider
carefully the spaces on which the variational formulation is
defined. We comment that Wahr (\cite{Wahr}) also applied some results
of functional analysis to find time harmonic solutions for
\eqref{globalseis} including also a body force and traction force at
the surface. In that work the author considers a quotient space which
removes the zero frequency eigenfunctions of the operator $A_2$. The
Hilbert space $E$ which we introduce here, coming from an idea in
\cite{Valette}, is different from the space and inner product used in
\cite{Wahr} and obviates the need to take the quotient. We also add
that there are technical problems with the inner product constructed
in \cite{Wahr} beginning with the claim that the product may be
extended from the domain of the operator $A_2$ to all of $L^2 \times
L^2$.

In addition to showing the proof of Theorem \ref{maintheorem} using the semi-group approach, we also in section \ref{Galerkin1} prove that, under some hypotheses on regularity, Galerkin approximations for the solution of problem \eqref{VartionalP2} converge strongly. Also, in section \ref{Voltsec} we provide another approach to well-posedness using a Volterra equation and viewing the terms in \eqref{VartionalP2} of less than second order as perturbations. This has the advantage in potential numerical applications that the non-local self-gravitation terms are treated separately.

The outline of the remainder of the paper is as follows. In section \ref{IntroEq} we discuss in detail all terms appearing in equation \eqref{globalseis} including the physical meaning of each term. We also discuss the boundary conditions applied at all of the interfaces. In section \ref{Prelimsec} we introduce rigorously preliminary formulations of the problem \eqref{globalseis} and the initial weak formulation given by the bilinear form $a_{\text{original}}$ as described above. Then in section \ref{weak:sec} we go through the detailed technical conversion of the preliminary weak formulation using $a_{\text{original}}$ to the formulation using $a_2$ for which we can prove well-posedness. Section \ref{functsec} contains some background on the abstract framework in which we can apply functional analysis to prove well-posedness, and establishes coercivity of the form $a_2$. The next section, section \ref{semigroup:sec}, then applies semi-group theory to prove the well-posedness, and in particular to prove Theorem \ref{maintheorem} quoted above. In section \ref{Galerkin1} we establish the convergence of Galerkin approximations, and then in section \ref{Voltsec} we study the Volterra equation approach described above. There are also several appendices concerning conservation of the physical energy, proof of a lemma on regularity of vector valued distributions, and some geometrical background which is important for the boundary conditions at the interfaces.

\section{Mathematical model of rotating and self-gravitating earth}\label{IntroEq}

In this section we will discuss the equation \eqref{globalseis} giving details on the physical meaning of each term, as well as details on the boundary conditions. However, in order to motivate the application, we first give a small amount of background in seismology using a recent
Nepal earthquake (Mw 7.8; 2015, April 25). The equation \eqref{globalseis} is intended to model the oscillations of the earth resulting from such a source, and in particular allow modelling of the seismic waves measured in a seismogram. In a seismogram, one displays the three components of displacement at a particular station on earth's surface. An example of a single station seismogram and one component of seismograms for different
epicentral distances are shown in Figure~\ref{fig:2}.
\begin{figure}
\centering
\hspace*{0.2cm}
\includegraphics[width=105.5mm]{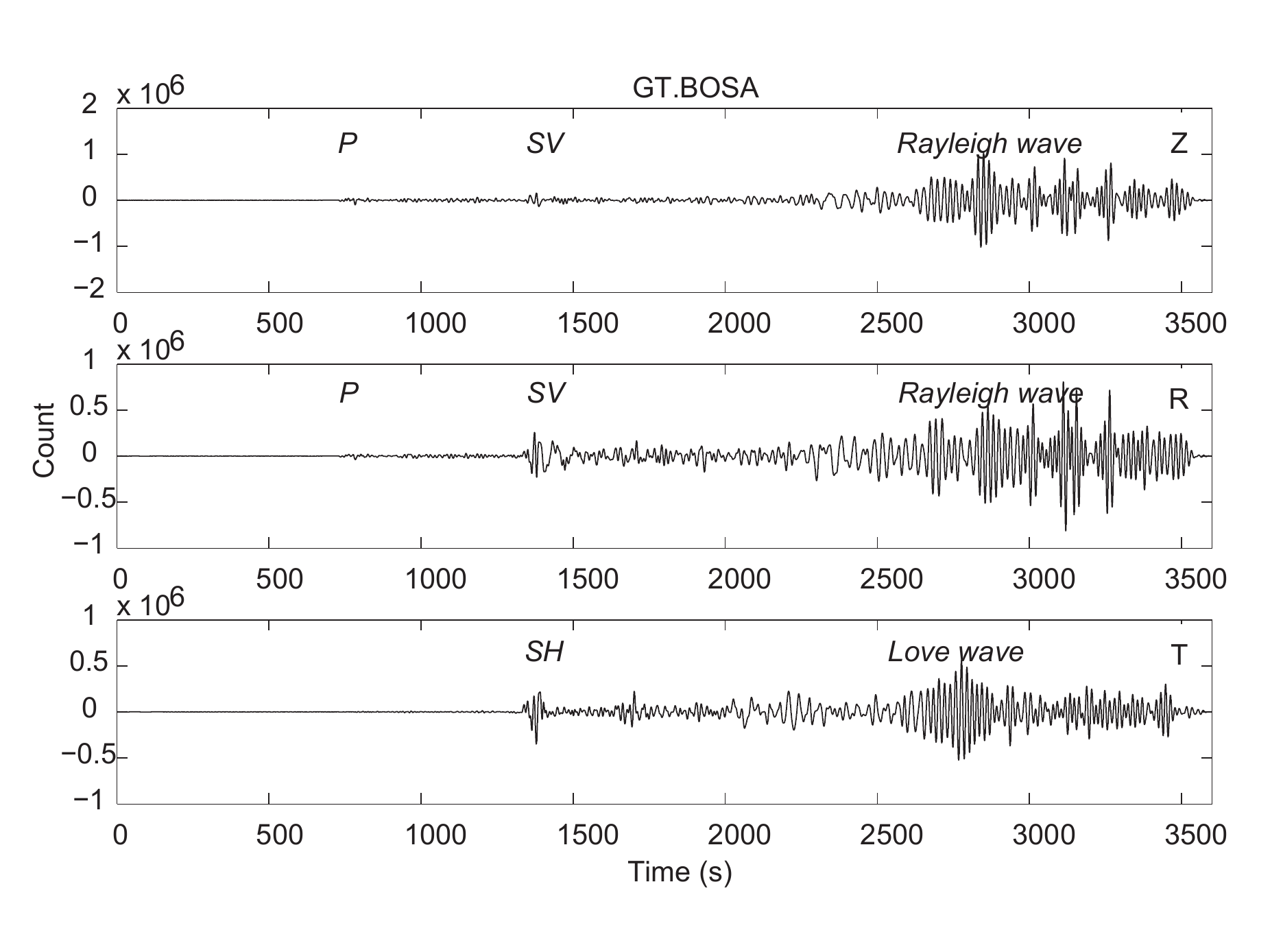} \\
\includegraphics[width=100mm]{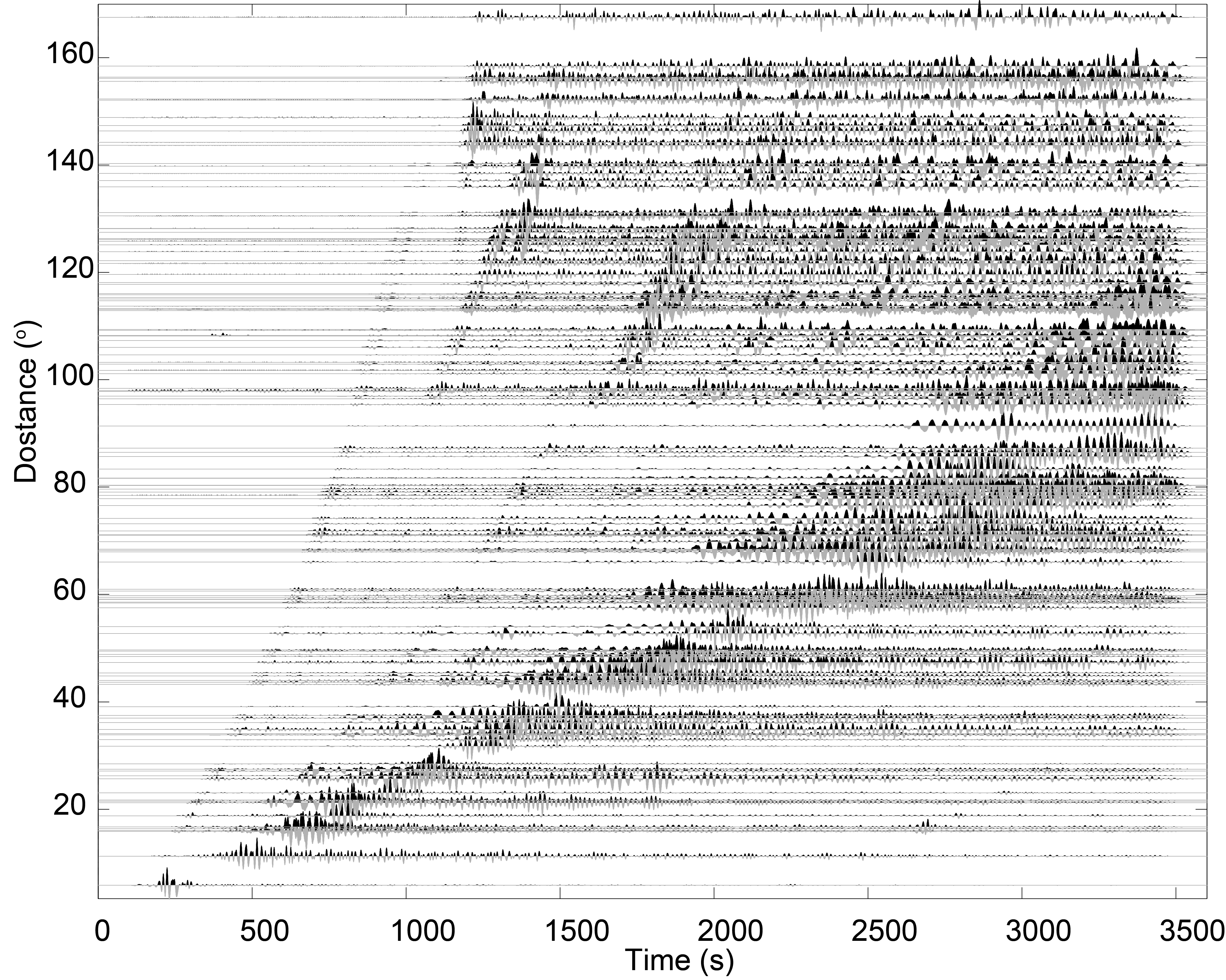}
\caption{Top: A seismogram, that is, three (vertical ($Z$), radial
  ($R$) and transverse ($T$)) components of the displacement as a
  function of time. We indicate the presence of (\textit{P}- and
  \text{S}-polarized) body waves and (Rayleigh and Love) surface
  waves. Bottom: $Z$ component of displacement as a function of time
  and angular epicentral distance (acknowledgment: Chunquan Yu).}
\label{fig:2}
\end{figure}
We have also indicated
some body-wave and surface-wave phases.

% \begin{figure}
% \centering
% \includegraphics[width=120mm]{}
% \caption{Solid and fluid regions and major discontinuities in the
%   earth. Color indicates topography (courtesy Ruichao Ye).}
% \label{fig:3}
% \end{figure}

% In Figure~\ref{fig:3} we illustrate the decomposition of the earth in
% solid and fluid regions corresponding with the regions $\Omega^S$ and
% $\Omega^F$. The color indicates topography of the relevant (interior)
% boundaries. The Moho defines the bottom of the crust. The
% discontinuities at an average depth of 410 km and 660 km and the top
% of the so-called $D''$ layer correspond with major phase
% transitions. We also indicated the oceans, the core-mantle boundary
% (CMB) and the inner-core boundary (ICB) noting that the inner core is
% solid and the outer core is fluid.

\subsection{System of elastic-gravitational equations} 

We now begin defining more precisely the elastic-gravitational equation which we plan to study in the remainder of the paper. Denote by $u = u(t,x)$ the displacement which takes values in
$\mathbb{C}^3$ (the physical displacement being the real part of $u$). For $f \in L^2( (-T,T) \times \tilde{X}, \mathbb{C}^3 )$ the basic system of equations describing free
oscillations of the earth is
\begin{equation}
\label{eq:systemu}
  \rho^0 \left [ \ddot u + 2 R_{\Omega} \cdot \dot u \right ]
     + \rho^0 u \cdot \nabla\nabla (\Phi^0 + \Psi^s)
   + \rho^0 \nabla S(u) - \nabla \cdot (\La^{T^0} : \nabla u) = \rho^0 f ;
\end{equation}
where
$$(\La^{T^0} : \nabla u)_{ij} =
   \sum_{k, l = 1}^3 \La^{T^0}_{ijkl} \partial_k u_l = \sum_{k, l =
   1}^3 \La^{T^0}_{jikl} \partial_l u_k,$$
and
$$(\nabla\nabla (\Phi^0 + \Psi^s))_{ij} = \partial_i
  \partial_j (\Phi^0 + \Psi^s)$$
while
\begin{equation}\label{ROm}
   R_{\Omega} \cdot \dot{u} = \Omega \times \dot{u} 
\quad\text{with}\
   R_{\Omega} := (\sum_{j=1}^3 \epsilon_{ijk} \Omega_j )_{i,k=1}^3 .
\end{equation}
We describe below the physical meaning of the parameters $\rho^0$,
$R_{\Omega}$, $\Lambda^{T^0}$, potentials $\Phi^0$ and $\Psi^s$, and operator
$S$ appearing in \eqref{eq:systemu}.

\subsection{Earth's rotation}

$\Omega \in \RR^3$ is the angular velocity of the earth's rotation and $R_{\Omega}\cdot
\dot{u}$ represents the induced Coriolis force. $\Psi^s(x)$ is the
corresponding (spatial) centrifugal potential with
\begin{equation}\label{centriPo}
   \Psi^s(x) := -\tfrac{1}{2} \left (\Omega^2x^2-(\Omega\cdot x)^2 \right ) .
\end{equation}

%\pagestyle{fancy}
%\fancyhead{}
%\fancyhead[CE]{Maarten de Hoop, Sean Holman and Ha Pham}
%\fancyhead[LE]{\thepage}
%\fancyhead[CO]{Elastic Gravitational Equation}
%\fancyhead[RO]{\thepage}
%\fancyfoot{}
%\renewcommand{\headrulewidth}{0.0pt}

\subsection{Initial prestressed state}

$\Phi^0$ is the reference gravitational potential and $\rho^0$ the reference density. We are considering oscillations about the reference state corresponding to these quantities which satisfy the relation
\begin{equation}\label{ellPhi}
\Delta \Phi^0 = 4 \pi G \rho^0
\end{equation}
where $G$ is the gravitational constant. We assume that $\rho^0 \in L^\infty(\tilde{X})$ and thus $\Phi^0 \in H^2(\RR^3)$ by elliptic regularity. In fact for well-posedness $\rho^0$ is required to be in the space $W^{1,\infty}(\tilde{X} \setminus \Sigma)$ and to be bounded from below by a positive constant. $W^{1,\infty}$ is the space of $C^0$ functions whose weak gradient is in $L^\infty$, or equivalently the space of uniformly Lipschitz functions. Thus $W^{1,\infty}(\tilde{X} \setminus \Sigma)$ is the space of functions which are uniformly Lipschitz in $\tilde{X}$ except for possibly having jumps across some of the interfaces in $\Sigma$.

Making use of the Green's function
\[
E_3(x) = -\frac{1}{4 \pi |x|}
\]
for the Laplacian in three dimensions $\Phi^0$ may be written explicitly as
\begin{equation}
\label{eq:Phi0}
   \Phi^0 = 4\pi G E_3 \ast \rho^0
\end{equation}
in $\RR^3$. Since $\Phi^0 \in H^2(\RR^3)$, $\Phi^0$ is continuous across all of the boundaries $\Sigma$. The sum $\Phi^0 + \Psi^s$ is referred to as the geopotential. 

Denote by $p^0$ the initial hydrostatic pressure, 
 \begin{equation}\label{InHP}
 p^0:= \begin{cases}  \text{hydrostatic pressure} & \text{in}\ \Omega^F\\ -\tfrac{1}{3} \tr T^0  & \text{in}\ \Omega^S \end{cases}
 \end{equation} 
 and by $T^0$ the initial static stress 
 \begin{equation}\label{staticStress}
  T^0 = \begin{cases}
 -p^0 \Id & \text{in}\ \Omega^F \\
 -p^0 \Id + \tau^0  & \text{in}\ \Omega^S \end{cases}.
 \end{equation}
 $T^0$ has the symmetry
$$T^0_{ij} = T^0_{ji}.$$
Note that \eqref{staticStress} decomposes $T^0$ into its isotropic and deviatoric parts which are respectively $- p^0 \mathrm{Id}$ and $\tau^0$, and that from these definitions $\tr \tau^0 = 0 $. It is important to note that \eqref{staticStress} includes the physical assumption that the prestress is hydrostatic in $\Omega^F$. 

\subsection{Mechanical equilibrium} For a uniformly rotating earth model
prior to the occurrence of an earthquake the earth is assumed to be in
a state of mechanical equilibrium, that is, at rest with respect to a
set of Cartesian coordinates $x\in \mathbb{R}^3$ which are rotating
uniformly with angular velocity $\Omega$ \cite{Dahlen}. The mechanical
equilibrium condition is given by the static momentum equation,
satisfied throughout $\Omega^S$ and $\Omega^F$.
 \begin{equation}\label{Equi1}
 \text{Mechanical equilibrium}:\ \ \nabla \cdot T^0 = \rho^0 \nabla ( \Phi^0 + \Psi^s) =:  \rho^0 g_0'.
\end{equation}
%$$\Psi^s := -\tfrac{1}{2} ( \Omega^2 x^2 - (\Omega \cdot x)^2) $$
Here we are making the definition $g_0' := \nabla (\Phi^0 + \Psi^s)$, and we remind the reader that $\Phi^0$ is the gravitational potential of the reference state given by \eqref{eq:Phi0} and $\Psi^s$ is the centrifugal potential given by \eqref{centriPo}. It is important to note that not all components of the deviatoric initial static stress, $\tau_0$, in the solid regions are determined by \eqref{Equi1}. Indeed, the equations (with appropriate boundary conditions given by \eqref{TractCont} below) only constrain three out of six independent components of $T_0$. In the fluid region the static momentum equation \eqref{Equi1} assumes the following form:
\begin{equation}\label{Equi2}
\text{Hydrostatic equilibrium in}\ \Omega^F:\ \ \ \  \nabla p^0 = - \rho^0 g_0' .
 \end{equation}
Taking the limit at the boundaries and interfaces the equilibrium
conditions take the form of the
\begin{equation}\label{TractCont}
\text{Traction Continuity Condition}:\ \begin{cases}
  \partial \tilde{X}&: \ \ \nu \cdot T^0 = 0 \\
 \Sigma^{SS} \cup \Sigma^{FF} \cup \Sigma^{FS} \ \ &:\ \ [\nu \cdot T^0]^+_- = 0 
\end{cases}
\end{equation}
where $\nu$ is a unit normal to the relevant surface oriented from the ``negative side" to the ``positive side." The notation $[\cdot]_-^+$ indicates the difference between the limits from each size of an interface (that is, the limit from the positive side minus the limit from the negative side). For the interior interfaces a choice of which side is positive and which is negative must be made for every interface in a consistent way, but the boundary conditions do not depend on these choices.
Along $\Sigma^{FF}$ and $\Sigma^{SS}$, the choice we will take is so that the normal vector fields along these interfaces point outward. For the exterior interfaces (that is, along $\partial \tilde{X}$) we take the interior of $\tilde{X}$ to be the negative side and the exterior to be the positive side so that $\nu$ is the outward pointing unit normal vector on $\partial \tilde{X}$. For the fluid-solid interface $\Sigma^{FS}$ we take the positive side to be the solid region and the negative the fluid region so that $\nu$ points from the fluid toward the solid.

In the next lemma we collect a few properties of the prestress in the
fluid region which follow from \eqref{Equi2}. We assume that $\nabla
\rho^0$ is non-vanishing.

\medskip

\begin{lemma} \label{parallellemma}
If $\rho_0$, $p^0$, and $g_0'$ are in $C^1$ up to the boundary on each component of $\Omega^F$ and satisfy \eqref{Equi2} in $\Omega^F$, then we have
\begin{equation}\label{EquiFluid}
\nabla \rho^0 \big|\big| g_0' \big|\big| \nabla p^0
\end{equation}
on $\Omega^F$. The notation $\big|\big|$ means that the two vectors are parallel. On any $C^1$ portion of $\Sigma^{FF}$ across which $\rho^0$ is not continuous
\[
\nabla \rho^0_{\pm} \big|\big| \nabla p^0_\pm \big|\big| (g_0')_\pm \big|\big| \nu
\]
where $\nabla \rho^0_\pm$ mean respectively the limit of $\nabla \rho^0$ from either the positive of negative side of $\Sigma^{FF}$.
\end{lemma}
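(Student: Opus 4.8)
The plan is to read \eqref{EquiFluid} as an integrability consequence of the hydrostatic relation \eqref{Equi2}. From $\nabla p^0 = -\rho^0 g_0'$ with $\rho^0$ a positive $C^1$ scalar we get $\nabla p^0 \parallel g_0'$ for free, and since the right-hand side of \eqref{Equi2} is $C^1$ we get $p^0 \in C^2$; also $g_0' = \nabla(\Phi^0 + \Psi^s)$ forces $\Phi^0 + \Psi^s \in C^2$ on each fluid component. The key step is then to differentiate the $j$-th component of \eqref{Equi2} in $x_i$: because $\partial_i\partial_j p^0$ and $\rho^0\,\partial_i\partial_j(\Phi^0+\Psi^s)$ are both symmetric under $i\leftrightarrow j$, the remaining term $\partial_i\rho^0\,\partial_j(\Phi^0+\Psi^s)$ must also be symmetric, which is exactly $\nabla\rho^0 \times g_0' = 0$. (Equivalently: take the curl of \eqref{Equi2} and use $\curl\nabla p^0 = 0$ together with $\curl g_0' = \curl\nabla(\Phi^0+\Psi^s) = 0$.) Together with $\nabla p^0 = -\rho^0 g_0'$ this makes $\nabla\rho^0$, $g_0'$, $\nabla p^0$ pairwise parallel on $\Omega^F$; since all data are $C^1$ up to the boundary of each fluid component, the identity extends by continuity to the closure.

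For the statement on $\Sigma^{FF}$, fix a $C^1$ portion with unit normal $\nu$ across which $\rho^0$ jumps. I would first record two continuity facts: by \eqref{TractCont} together with $T^0 = -p^0\,\Id$ in $\Omega^F$ from \eqref{staticStress}, $[\nu\cdot T^0]^+_- = -[p^0]^+_-\,\nu = 0$, so $p^0$ is continuous across the portion; and $\Phi^0 \in H^2(\RR^3) \hookrightarrow C^0(\RR^3)$ with $\Psi^s$ smooth, so $\Phi^0+\Psi^s$ is continuous across it. Continuity of a function across a $C^1$ hypersurface forces its one-sided tangential gradients to agree, so the tangential parts of $\nabla p^0_\pm$ agree and those of $(g_0')_\pm$ agree; writing $(g_0')_\pm = \tau + g^\nu_\pm\,\nu$ with common tangential part $\tau$ and taking the tangential component of $\nabla p^0_\pm = -\rho^0_\pm\,(g_0')_\pm$ yields $(\rho^0_+-\rho^0_-)\,\tau = 0$. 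Since $\rho^0$ is discontinuous, $\rho^0_+\neq\rho^0_-$ on the portion, hence $\tau = 0$; therefore $(g_0')_\pm\parallel\nu$, and then $\nabla p^0_\pm = -\rho^0_\pm(g_0')_\pm\parallel\nu$, and finally $\nabla\rho^0_\pm\parallel\nu$ from \eqref{EquiFluid} applied up to the boundary, $\nabla\rho^0_\pm\parallel(g_0')_\pm$, wherever $(g_0')_\pm\neq0$ and on the rest of the portion by continuity.

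The genuinely routine pieces are the mixed-partial (or curl) computation and the elementary fact that continuity across a $C^1$ hypersurface forces agreement of one-sided tangential gradients. The part needing real care is the interface analysis: justifying that \eqref{TractCont} on $\Sigma^{FF}$, combined with the hydrostatic form of $T^0$, upgrades continuity of $\nu\cdot T^0$ to continuity of $p^0$ itself; being careful with traces and one-sided limits for data that are only $C^1$ up to the boundary; and, in the degenerate case where $g_0'$ vanishes at some points of the interface portion, invoking continuity together with the standing hypothesis that $\nabla\rho^0$ is non-vanishing in order to still conclude $\nabla\rho^0_\pm\parallel\nu$ there. This last point is the only place the non-vanishing of $\nabla\rho^0$ is really used — it is what turns ``$\nabla\rho^0\times g_0' = 0$'' into a genuine parallelism.
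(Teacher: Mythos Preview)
Your argument for the first claim is the same as the paper's: take the curl of \eqref{Equi2} and use $\curl g_0'=\curl\nabla(\Phi^0+\Psi^s)=0$ to get $\nabla\rho^0\times g_0'=0$; the parallelism with $\nabla p^0$ is immediate from $\nabla p^0=-\rho^0 g_0'$.

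For the interface claim you take a genuinely different route from the paper. The paper localizes around $\Sigma^{FF}$ in coordinates $\{x^3=0\}$, pairs \eqref{Equi2} against test vector fields $u_\epsilon=\tfrac{v}{\epsilon}\varphi(x^1,x^2)\tilde\varphi(x^3/\epsilon)$ with $\tilde\varphi$ odd, and passes to the limit $\epsilon\to0$ to extract $[\rho^0]^+_-\,(g_0')_j v^j=0$ for every tangent $v$; this uses continuity of $p^0$ (to kill the left-hand side after integrating by parts) and continuity of $g_0'$ across $\Sigma^{FF}$ (to factor the jump as $[\rho^0]^+_- g_0'$). Your approach is more elementary: from \eqref{TractCont} and $T^0=-p^0\,\Id$ you get $[p^0]^+_-=0$, and from $\Phi^0\in H^2(\RR^3)\subset C^0$ you get $[\Phi^0+\Psi^s]^+_-=0$; continuity of a scalar across a $C^1$ hypersurface forces agreement of the one-sided \emph{tangential} gradients, so writing $(g_0')_\pm=\tau+g^\nu_\pm\nu$ with common tangential part $\tau$ and taking tangential parts of $\nabla p^0_\pm=-\rho^0_\pm(g_0')_\pm$ gives $(\rho^0_+-\rho^0_-)\tau=0$, hence $\tau=0$. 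This avoids the mollifier limit entirely and, notably, needs only continuity of the potential $\Phi^0+\Psi^s$ rather than of its full gradient $g_0'$; the paper's argument needs the latter (which does hold, by potential theory for $\rho^0\in L^\infty$, but is not stated among the lemma's hypotheses). Your handling of the degenerate case $g_0'=0$ via continuity of the nonvanishing $\nabla\rho^0$ is fine.
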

\medskip

\begin{proof}
From the identity $\curl ( \nabla  f)  = 0$ and the definition $g_0' = \nabla (\Phi^0 + \Psi^s)$ we see that $\curl (g_0' )= 0$ and so
$$\curl \big( \rho^0 g_0' \big)
=  \nabla \rho^0 \times g_0'. $$
As a result, the equilibrium condition \eqref{Equi2} gives $0 = \nabla \rho^0 \times g_0'$, and so $\nabla \rho^0 \big|\big| \, g_0'$. Also, we have directly from \eqref{Equi2}
\[
\nabla p^0 \times g_0' = 0
\]
from which it follows that $\nabla \rho^0 \big|\big| \nabla p^0$.

To prove the second claim take any $x_0 \in \Sigma^{FF}$ such that $\Sigma^{FF}$ is $C^1$ in a neighborhood of $x_0$. Then we can find a set of local coordinates $\{x^j\}_{j=1}^3$ near $x_0$ such that locally $\Sigma^{FF} = \{x^3 = 0\}$ and the positive side is given by $\{ x^3 > 0\}$. Let $\varphi(x^1,x^2) \in C_c^\infty(\RR^2)$ and suppose that $\tilde{\varphi} \in C_c^\infty(\RR)$ is an odd function such $-\int_{-\infty}^0 \tilde{\varphi} = \int_{0}^\infty \tilde{\varphi} = 1$. Then for any locally defined vector field $v = v^1 \partial_{x^1} + v^2 \partial_{x^2}$ define the following vector fields using the local coordinates
\[
u_\epsilon = u^j_\epsilon \partial_{x^j}  = \frac{v^j}{\epsilon}  \varphi\left ( x^1, x^2 \right )  \tilde{\varphi} \left ( \frac{x^3}{\epsilon} \right )\partial_{x^j}.
\]
In the coordinates \eqref{Equi2} implies
\[
-\int_{\RR^3} (\partial_{x^j} p^0) u_\epsilon^j \ \sqrt{|e|} \ \mathrm{d} x = \int_{\RR^3} \rho^0 (g_0')_j u_\epsilon^j   \ \sqrt{|e|} \ \mathrm{d} x
\]
where $\sqrt{|e|}\ \mathrm{d} x$ is the volume form of the Euclidean metric in these coordinates. Using the continuity of $g_0'$, the right hand side is equal to
\[
\begin{split}
\int_{\RR^2} &\left ( \int_{\RR} \rho^0 (g_0')_j \frac{v^j}{\epsilon}   \varphi\left ( x^1, x^2 \right )  \tilde{\varphi} \left ( \frac{x^3}{\epsilon} \right ) \ \mathrm{d}x^3 \right ) \ \mathrm{d} x^1\ \mathrm{d} x^2\\
&\hskip1in  \underbrace{\longrightarrow}_{\epsilon \rightarrow 0^+} \ \int_{\{x^3 = 0\}} [\rho^0]_-^+ (g_0')_j v^j  \varphi\left ( x^1, x^2 \right ) \sqrt{|e|} \ \mathrm{d} x^1 \ \mathrm{d} x^2.
\end{split}
\]
Using integration by parts and the continuity of $p^0$, we find that the left hand side equals
\[
\int_{\RR^2} \left ( \int_{\RR} p^0  \partial_{x^j} (u_\epsilon^j \ \sqrt{|e|}) \ \mathrm{d}x^3 \right ) \ \mathrm{d} x^1\ \mathrm{d} x^2 \underbrace{\longrightarrow}_{\epsilon \rightarrow 0^+} 0.
\]
Since $\varphi \in C_c^\infty(\RR^2)$ is arbitrary and $[\rho^0]_-^+$ is not zero this implies that $v^j (g_0')_j = 0$, and so $g_0' \big|\big| \nu$. Combining this with the first part of the lemma the second claim is proven.
\end{proof}
\medskip

\subsection{Perturbation of gravitational potential} $S(u)$ denotes the perturbation, also written as $\Phi^1$, of the
gravitational potential caused by the redistribution of mass. This is the Eulerian perturbation of the Newtonian potential associated to
the field of displacement $u$. We have
% In particular the weak solution of 
%$$\triangle \Phi^1 = -4\pi G \nabla \cdot (\rho^0
%u)$$ has been expressed in terms of the displacement $u$ as $\Phi^1 = S(u)$. 
\begin{equation}
\label{eq:Phi1sol}
   S(u) := E_3 \ast (-4\pi G \nabla \cdot (\rho^0 u)).
\end{equation}
Note that the divergence in this formula is taken in the weak sense since $\rho^0$ may not be continuous across the interfaces $\Sigma$. We then have that
\[
\Delta S(u) = -4 \pi G \nabla \cdot (\rho^0 u).
\]
The operator $S$ so defined is a pseudodifferential operator on $\RR^3$ of order $-1$.

\subsection{First Piola-Kirchhoff stress}

%\subsection{Boundary Conditions}
In \eqref{eq:systemu} $\La^{T^0}$ is the modified stiffness tensor defined by
\begin{equation}\label{ModStiffTensor}
   \Lambda^{T^0}_{ijkl} = \Xi_{ijkl} + T^0_{ik} \delta_{jl}
\end{equation}
where $T^0$ is the initial static stress appearing in \eqref{staticStress} and $\Xi_{ijkl}\in L^{\infty}(\tilde{X})$ is the stiffness tensor coming from the linearization of the constitutive function. The stiffness tensor possesses the classical symmetries \cite{Dahlen}
\begin{equation} \label{SymElastic}
   \Xi_{ijkl} = \Xi_{jikl} = \Xi_{ijlk} = \Xi_{klij}
\end{equation}
On the other
hand the first Piola-Kirchhoff stress tensor, $T^{PK1}$, satisfying
 $$T^{PK1} = \Lambda^{T^0}
: \nabla u $$ is not symmetric which reflects on the fact that the
invariant definition of $T^{PK1}$ is actually as a two-point tensor
(see \cite{Marsden}).  In fact, following the discussion in
\cite[Section 3.6.2]{Dahlen}, one can introduce the alternate representations,
\begin{equation}\label{ElasticTensor}
   \Lambda^{T^0}_{ijkl} = 
      \Gamma_{ijkl} + a (T^0_{ij} \delta_{kl} + T^0_{kl} \delta_{ij})
      + (1 + b) T^0_{ik} \delta_{jl}
      + b (T^0_{jk} \delta_{il} + T^0_{il} \delta_{jk}
                    + T^0_{jl} \delta_{ik}).
\end{equation}
Each choice of scalars $a,b$ defines a possible tensor $\Gamma$ possessing the symmetries \eqref{SymElastic}. $\Xi$ in \eqref{ModStiffTensor} is the elastic tensor with $a = b = 0$, which is also the choice of \cite{Valette}. Another choice adopted by \cite{Dahlen_1972} is $a = \tfrac{1}{2}, b =-\tfrac{1}{2}$. We use $\Gamma$ to denote from now on this choice of elasticity tensor (that is, with $a = -b = \tfrac{1}{2}$) so that the modified stiffness tensor is given by
\begin{equation}\label{NotationRel}
   \Lambda^{T^0}_{ijkl} = 
      \Gamma_{ijkl} + \tfrac{1}{2} (
      T^0_{ij} \delta_{kl} + T^0_{kl} \delta_{ij}
      + T^0_{ik} \delta_{jl}
      - T^0_{jk} \delta_{il} - T^0_{il} \delta_{jk}
                    - T^0_{jl} \delta_{ik})
\end{equation}

Now, the definition of an isotropic solid given in \cite{Dahlen} is as follows
\medskip

\begin{definition}[Isotropic solid]
An isotropic solid is one whose elasticity tensor %\footnote{with the choice of $a = -b = \tfrac{1}{2}$ in \eqref{ElasticTensor}}
is of the form 
\begin{equation}\label{isotropicEl}
\Gamma_{ijkl} = (\kappa - \tfrac{2}{3} \mu) \, \delta_{ij} \delta_{kl}
      + \mu \, (\delta_{ik} \delta_{jl} + \delta_{il} \delta_{jk}),
      \end{equation}
      where $\kappa$ is the isentropic incompressibility (or bulk modulus) and $\mu$ is the rigidity (or shear modulus). 
\end{definition}
\medskip

In the fluid regions, $\Omega^F$, of the earth $\Gamma$ is isotropic and the rigidity is identically zero so we have
%\HOX{I do not understand the explanation of this. For a perfect fluid shouldn't $c_{ijkl} = \kappa \delta_{ij}\delta_{kl}$? Nevermind.}
     \begin{equation}\label{isoFluid}
      \Gamma_{ijkl} = \kappa \, \delta_{ij} \delta_{kl}.
     \end{equation} 
Using \eqref{isoFluid} and the relationship between $\Xi_{ijkl}$ and $\Gamma_{ijkl}$, which can be found by equating the right hand sides of \eqref{ModStiffTensor} and \eqref{NotationRel}, we obtain
\begin{equation}\label{PerFluid}
\begin{aligned}
\text{Perfect fluid}\ \Omega^F:\ \ \Xi_{ijkl} 
&= -p^0 \, (\de_{ij} \delta_{kl}              - \de_{jk} \delta_{il} - \delta_{ik}\delta_{jl})
   +  \kappa \, \delta_{ij} \delta_{kl}\\
&= p^0 (\gamma-1) \delta_{ij}\delta_{kl} + p^0 \delta_{ik} \delta_{jl} + p^0 \delta_{jk} \delta_{il},
\end{aligned}
\end{equation}
where $\gamma$ is the adiabatic index of the fluid. Using \eqref{PerFluid} we also find that in the fluid regions
\begin{equation}\label{TPK1}
T^{PK1}_{ij} = p^0 (\gamma - 1) \delta_{ij} (\nabla \cdot u) + p^0 (\nabla u)_{ij}.
\end{equation}

%\subsection{Properties of the Piola Kirchoff stress}

%\begin{claim}{Marsden}
%
%
%Having wave propagation in all direction forces the strongly ellipticity of the stress tensr
%\end{claim}
%\subsection{Calculation of the eigenvalues for the Piola stress}

%Dahlen page 59, in the undeformed configuration, the Cauchy tress and the two Piola Kirchhoff stresses coincide

\subsection{Boundary conditions}\label{BoCo}\
The equations of motion \eqref{eq:systemu} are accompanied by linearized kinematic, dynamic and gravitational conditions on the boundaries and interfaces $\Sigma = \partial \tilde{X} \cup \Sigma^{SS} \cup \Sigma^{FF} \cup \Sigma^{FS}$.
The discussion here follows partly from \cite[Section 3.4]{Dahlen} although we will use \cite{Valette} for the dynamic boundary condition along $\Sigma^{FS}$, which is \eqref{BC1}. We also comment that the boundaries are required to have at least $C^1$ regularity.

%Boundary conditions as a result of Equilibrium conditions \eqref{Equi1} and \eqref{Equi2})
First we specify some notation. The jump across a boundary between two regions $\Omega_-$ and $\Omega^+$ will be written as $[ u]^+_-:= u_+  - u_- $ where $\nu$ is the unit normal oriented from $\Omega_-$ to $\Omega^+$. Along $\Sigma^{FS}$, we chose the unit normal $\nu$ that points from $\Omega^F$ to $\Omega^S$ so in this case $\Omega^S$ is $\Omega^+$ and $\Omega^F$ is $\Omega_-$. On the earth's free surface, $\partial \tilde{X}$, $\nu$ will denote the outward pointing unit normal.

\begin{enumerate}
\item The Kinematic Boundary Conditions require that there is no slip along the welded solid-solid interfaces which means that
\begin{equation}\label{firstCon}
[u]^+_- = 0 \ \text{across $\Sigma^{SS}$}. 
\end{equation}
Along the fluid-solid and fluid-fluid interfaces tangential slip is allowed but it is required that there is no separation or interpenetration \cite{Dahlen}. This is assured by the linearized continuity condition 
\begin{equation} \label{FcontinuityBC}
[u\cdot \nu]^+_- = 0  \ \text{across $\Sigma^{F} = \Sigma^{FF} \cup \Sigma^{FS}$}.
\end{equation}
We call this the first-order tangential slip condition.
\item The Dynamic Boundary Conditions require that juxtaposed particles on either side of a welded or solid-solid boundary at time $t=0$ must remain juxtaposed \cite{Dahlen}. This condition in terms of $T^{PK1}$ is 
\[
[\nu \cdot  T^{PK1}]^+_-  = 0 \ \text{across $\Sigma^{SS}$}.
\]
On the outer free surface $\partial \tilde{X}$ 
\begin{equation}\label{lastCon}
\nu \cdot  T^{PK1} = 0 .
\end{equation}
To model the case in which there is an applied traction force at the surface the right hand side of \eqref{lastCon} can be made nonzero although we will not consider this here. Along $\Sigma^{FS}$ and $\Sigma^{FF}$, since there may be tangential slip, juxtaposed particles on either side of the boundary need not remain juxtaposed after deformation. However, it is required that there is no shear traction along $\Sigma^{F} = \Sigma^{FF} \cup \Sigma^{FS}$. To model this requirement we use the condition\footnote{By \cite{Valette}, \eqref{BC1} is equivalent to the boundary conditions along $\Sigma^{F}$ used in \cite{WoodhouseDahlen78}.}
\begin{equation}\label{BC1}
[\nu \cdot  T^{PK1}]^+_- 
 = - \nu \nabla^{\Sigma} \cdot ( p^0 [ u]^+_-) -p^0 W [u]^+_-
\end{equation}
where $\nabla^\Sigma \cdot$ is the surface divergence and $W$ is the Weingarten operator for the surface (see Appendix~\ref{Wein} for the definitions). We comment that \eqref{BC1} corresponds precisely with formula (3.81) in \cite{Dahlen}. Furthermore, \cite{Dahlen} includes an extra condition at the fluid-solid boundary given by \cite[Formula (3.82)]{Dahlen}. It can be checked that this extra condition is automatically satisfied when $\Xi_{ijkl}$ takes the form \eqref{PerFluid} in the fluid region.

\item Gravitational Boundary Conditions: The following continuity conditions are satisfied on all $\Sigma = \partial \tilde{X} \cup \Sigma^{SS} \cup \Sigma^{FF} \cup \Sigma^{FS}$,
\[
\begin{aligned}
\big[S(u)\big]^+_- = 0 ,\\
\Big[ \nabla S(u) \cdot \nu + 4\pi G  \rho^0  u \cdot \nu\Big]^+_- = 0 .
\end{aligned}
\]
\end{enumerate}
%  The usual hypothesis is at the interface $\Sigma^{FS}$, the different part $\Omega^S$ and $\Omega^F$ remain in contact. This yield to first-order that $u\cdot \nu$ is continuous across the interface
%  \begin{equation}\label{ExBC1}
%  [ u \cdot \nu^{FS}]^+_- = 0
%  \end{equation}
%While along $\Sigma^{SS}$, the solid regions are welded together
%\begin{equation}\label{ExBC2}
%\text{Along}\ \Sigma^{SS}\ :  [ u ]^+_- = 0 
%\end{equation}
%{\eqref{moreCon} does not seem to be used by Valette!!}
For a summary of all the boundary conditions including the conditions
\eqref{firstCon} to \eqref{lastCon} and the traction continuity
condition at the boundaries \eqref{TractCont} see
table~\ref{unT}.

% The boundary conditions on $\Sigma^{\mathrm{FF}}$ with normal $\nu$ are the
%    same as on $\Sigma^{\mathrm{SS}}$ with $T^0_{ij} = -p^0 \de_{ij}$
%    and $T^{\mathrm{PK}1}_{ij} = \kappa \, \delta_{ij} (\nabla \cdot
%    u) - p^0 (\delta_{ij} (\nabla \cdot u) - \partial_j u_i)$.}]
\begin{table}
\caption{Linearized Boundary Conditions satisfied by $u$ and $T^0$ }
\centering
\begin{tabular}{l | r}
\textbf{Boundary Type} & \textbf{Linearized Boundary Conditions}\\ \hline \\[-0.2cm]
 Earth's free surface, $\partial \tilde{X}$ & $ T^0 \cdot \nu = 0$ \\ &$ \nu \cdot T^{PK1} =0 $\\ \hline \\[-0.2cm]
 Solid - Solid,  $\Sigma^{SS}$ & $[T^0 \cdot \nu ]^+_- = 0 $ \\
  & $[ \nu \cdot T^{PK1}]^+_- = 0 ;\ [u]^+_- = 0$\\ \hline \\[-0.2cm]
  Fluid - Fluid, $\Sigma^{FF}$ & $ [T^0 \cdot \nu ]^+_- = 0 $ \\
   & $ [\nu \cdot  T^{PK1}]^+_-  = - \nu \nabla^{\Sigma} \cdot ( p^0 [ u]^+_-) -p^0 W [u]^+_-;\ \ [u \cdot \nu]^+_- = 0  $\\ \hline \\[-0.2cm]
  Fluid - Solid, $\Sigma^{FS}$ & $ [T^0 \cdot \nu]^+_- = 0 $ \\
  & $ [\nu \cdot  T^{PK1}]^+_-  = - \nu \nabla^{\Sigma} \cdot ( p^0 [ u]^+_-) -p^0 W [u]^+_-;\ \ [u \cdot \nu]^+_- = 0  $ \\ \hline \\[-0.3cm]
 All boundaries and interfaces $\Sigma$ & $\big[S(u)\big]^+_- = 0;\ \ \Big[ \nabla S(u) \cdot \nu + 4\pi G  \rho^0  u \cdot \nu\Big]^+_- = 0$
 \end{tabular}
 \label{unT}
\end{table}
%\HOX{Is the requirement $[u]_-^+ = 0$ on $\Sigma^{FF}$ correct, or should it just be $[u\cdot \nu]_-^+ = 0$ on $\Sigma^{FF}$? $[u]_-^+$ is used above \eqref{a1firststep}}

\subsection{Representation of the source} \label{ssec:2.7}

A source typically represents a rupture process. The rupture process
involves a complicated slip function that is variable in space and
time. To infer source parameters, one approximates the rupture as a
constant slip on a geometrically flat fault. The faulting is then
approximated by a double couple of equivalent body forces. (Processes
described by single couples would generate large torques and thus
affect the rotation of the earth. Double sets of couples do not
generate a net torque.) For background on modeling earthquake sources see for example \cite{SteinWys,AkiRichards}.

The equivalent body force takes the form
\[
   f_j(x,t) = -M_{ij} \partial_i \delta(x - \tilde{x})
                   H(t - \tilde{t})
            = -M_{ij} \partial_i \delta(x - \tilde{x})
                \partial_t^{-1} \delta(t - \tilde{t}) ,
\]
in which $\tilde{x}$ coincides with the epicenter and $\tilde{t}$ with
the origin time. The step function $H$ signifies an idealized time-rise
function, and the spatial delta function is also an idealised representation of a highly localised source. More generally,
\[
   f_j = -\partial_i S_{ij} ,
\]
where $S$ stands for the stress glut tensor. %\HOX{Can you confirm whether the negative sign in $f_j = - \partial_i S_{ij}$ is correct? It doesn't appear to be consistent with the other equations involving $f$ and $S$}
In the lowest-order far-field approximation,
\begin{equation}
   \partial_t S_{ij}(x,t) = M_{ij} \delta(x - \tilde{x})
                   \delta(t - \tilde{t}) .
\end{equation}
The seismic moment tensor $M$ is itself defined as the spatial and temporal integrals of the stress glut rate over the support of the source.

Combining force couples of different orientations into the seismic
moment tensor, $M$, gives a general description of various seismic
sources. A body force couple consists of two forces, $f$ say, oriented
in opposite directions acting together and offset, $\sigma$, in a direction
normal to the forces. If $f$ is aligned with the $x_1$-axis and $\sigma$
with the $x_2$-axis, then $M_{12}$ is given by $f_1 \sigma_2$; to model a
couple acting at a point, one takes the limit $\sigma_2 \downarrow 0$ such
that $f_1 \sigma_2$ stays constant. %{\bf [COMMENT FROM SEAN: I do not understand this explanation. $f$ is a single vector but represents two forces? Do I understand then that one force is $f$, and the other is $-f$, and they act tangent to a flat fault plane displaced from each other by $\sigma$ where $\sigma$ is normal to the fault place?]}

A component of the seismic moment tensor is written in the form
\begin{equation} \label{eq:2.26}
   M_{ij} = M_0 \, (n_i d_j + d_i n_j) ,
\end{equation}
where $n$ signifies the unit vector normal to the fault plane, and $d$
the unit slip vector. Naturally, this tensor is symmetric. The seismic
moment has a vanishing trace:
\[
   M_{ii} = 2 M_0 \, n_i d_i = 0 ,
\]
because the slip vector lies in the fault plane. Thus this moment
tensor is purely deviatoric. A non-vanishing trace implies a volume
change (explosion) which does not exist for a pure double-couple
source. Moreover, $\det M = 0$. Indeed, the vanishing of the trace and
determinant guarantee a representation of the form
(\ref{eq:2.26}). The so-called scalar moment $M_0$ represents the norm
of the moment tensor,
\[
   M_0 = \tfrac{1}{\sqrt{2}} (M_{ij} M_{ij})^{1/2} ;
\]
one also introduces $\hat{M}$ according to
\[
   M = \sqrt{2} M_0 \hat{M} \quad \mbox{so that} \quad \hat{M}_{ij} = \frac{1}{\sqrt{2}} (n_i d_j + d_i n_j).
\]

We now discuss how to recover the fault geometry from a moment
tensor. Three axes (in dimension three) play a role: the T, P, and
null axes oriented along $t$, $p$ and $b$. These vectors are
orthogonal to one another. In particular,
\[
   \tfrac{1}{2} \epsilon_{ijk} t_j p_k = -b_i .
\]
One has the relations
\begin{eqnarray}
   t_i &=& n_i + d_i ,
\label{eq:T}
\\
   p_i &=& n_i - d_i ,
\label{eq:P}
\\
   b_i &=& \epsilon_{ijk} n_j d_k.\nonumber
\end{eqnarray}
An explicit
calculation shows that $t$, $p$ and $b$ are eigenvectors of the
seismic moment tensor:
\[
   M_{ij} t_i = M_0 \, t_j ,\quad
   M_{ij} p_i = -M_0 \, p_j ,\quad
   M_{ij} b_i = 0 .
\]
One typically displays the moment tensor using ``beach balls''. We let
$\tilde{\nu}$ denote a unit vector on the sphere at the hypocenter;
one uses white in the beach ball plot if $-1 \le \tilde{\nu}_i
\hat{M}_{ij} \tilde{\nu}_j < 0$ and black if $0 < \tilde{\nu}_i
\hat{M}_{ij} \tilde{\nu}_j \le 1$. One orients the sphere using
(global) spherical polar coordinates: The radial, geocentric
colatitudal and longitudinal unit vectors are pointing up, south and
east, respectively. The standard method of display uses a zenithal
equal-area projection onto the plane orthogonal to the radial
direction at the hypocenter. We indicate in Figure \ref{fig:1a} such a
projection $P'$ of a direction $P$ identified with angles,
$(\theta,\psi)$, where $\psi$ is the angle in the mentioned plane
relative to the geographical North.

\begin{figure}
\centering
\includegraphics[width=110mm]{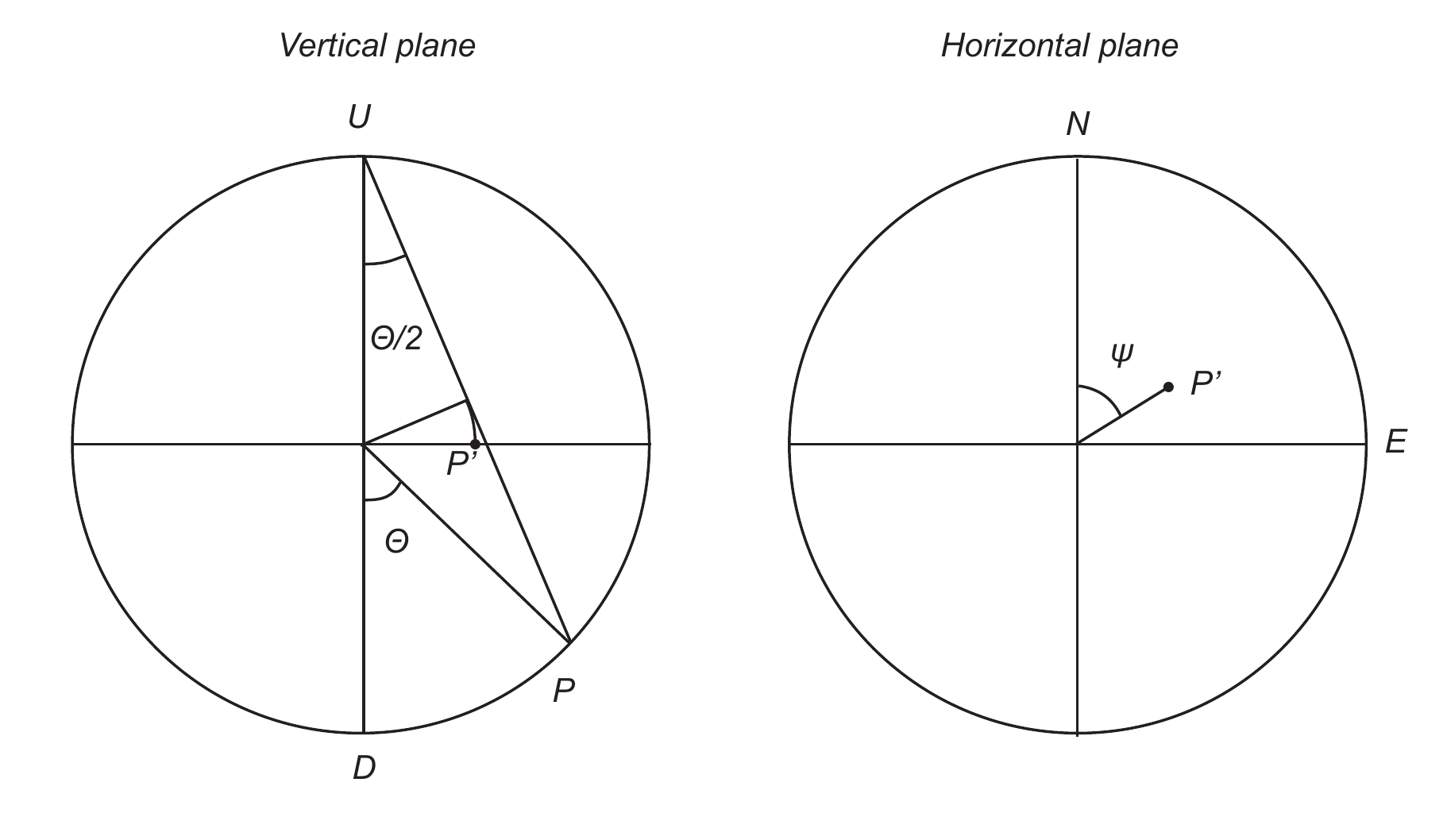}
\caption{Earthquake source: Method of display of a moment tensor on
  the focal sphere. Left: ``Vertical'' plane ($U$ points radially
  outwards from the center of the earth); right: ``Horizontal'' or tangential plane ($N$ points
  towards the geographical North). The point $P$ on the sphere
  indicates the direction $\tilde{\nu}$ (associated with angles
  $(\theta, \psi)$); $P'$ is the zenithal equal-area projection of
  $P$.}
\label{fig:1a}
\end{figure}
In Figure~\ref{fig:1b} we show
\begin{figure}
\centering
\includegraphics[width=55mm]{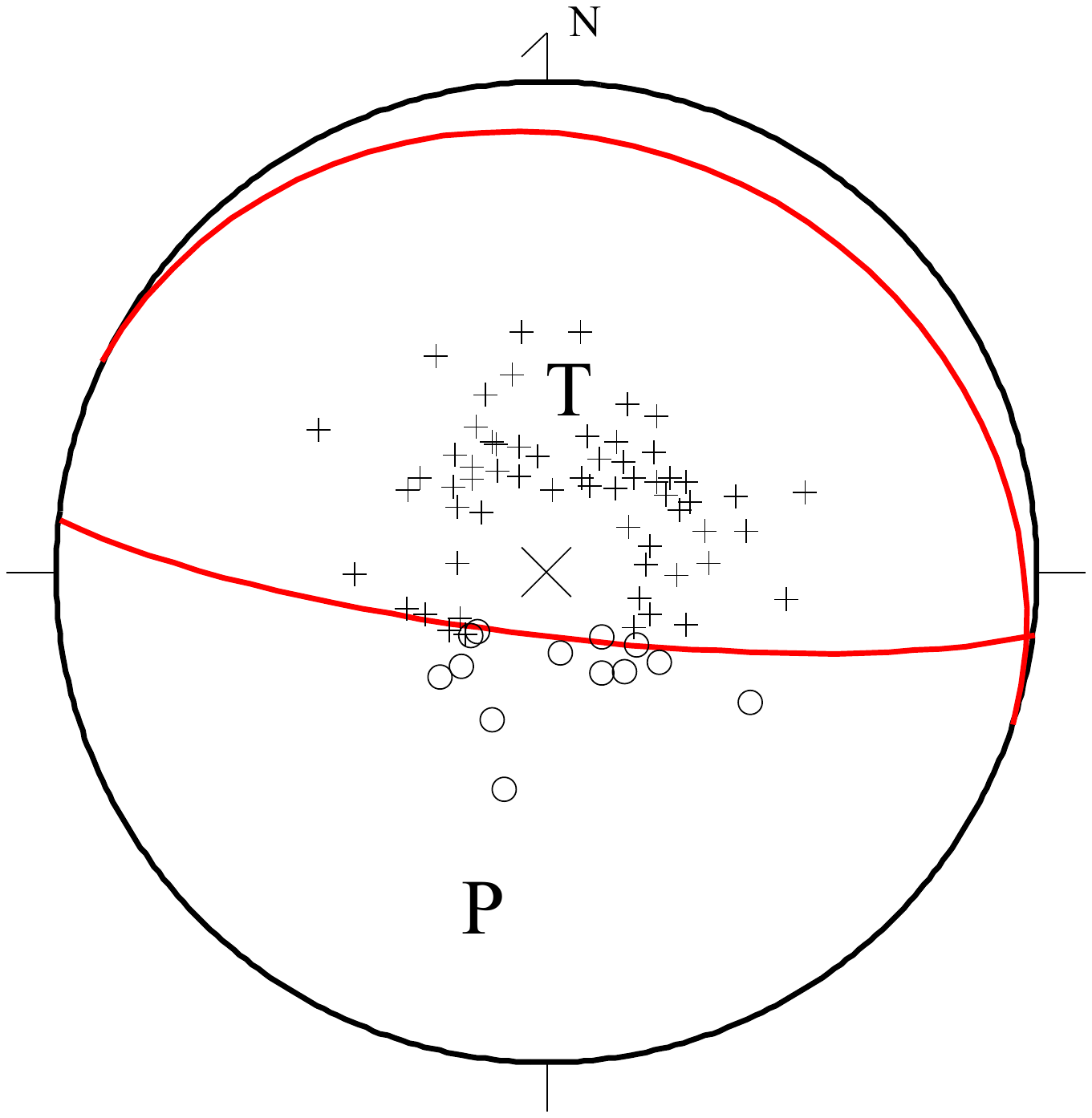}
\includegraphics[width=55mm]{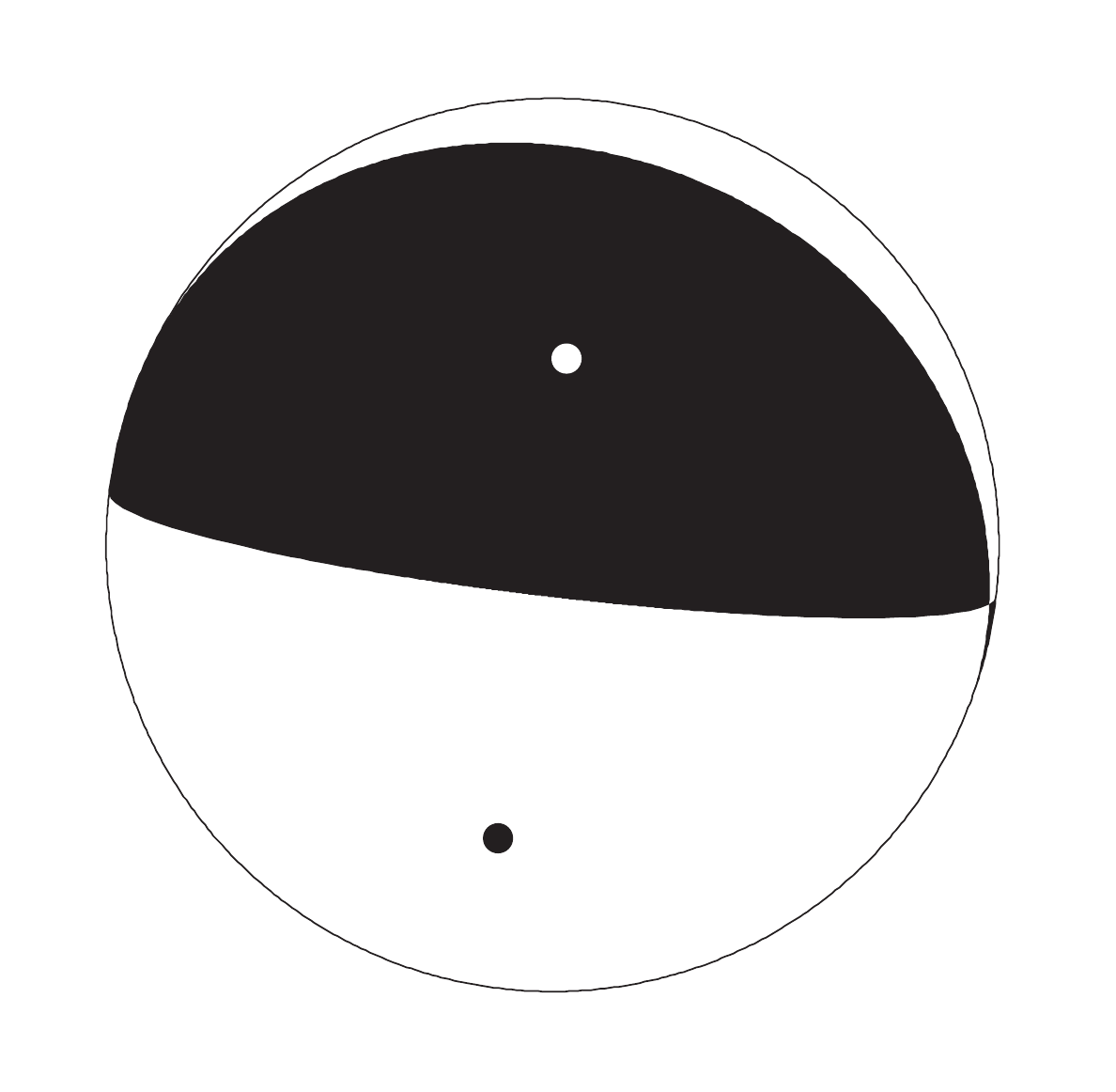}
\caption{Earthquake source, focal mechanism. Left: Sign ($\circ$
  indicates negative and $+$ indicates positive polarity) of
  $\tilde{\nu}_i \hat{M}_{ij} \tilde{\nu}_j$ obtained from (direct
  \textit{P}-wave) data; right: Beach ball plot (acknowledgment:
  Chunquan Yu). The directions of $T$ and $P$ are indicated by dots on
  the beach ball and are given in \eqref{eq:T} and \eqref{eq:P},
  respectively.}
\label{fig:1b}
\end{figure}
the reconstruction of a moment tensor as it
appears in an equivalent point body force displayed using a beach ball plot.

\section{Preliminary formulation of solutions} \label{Prelimsec}

In the next three sections we turn to the rigorous formulation of the mathematical model describing Earth's oscillations.

\subsection{Strong formulation of the system of elastic-gravitational
               equations}
\label{StrongForm}

We will say that $u$ is a strong solution of the elastic-gravitational
equation if $u$ satisfies in the classical sense the following initial
and boundary value problem, for $ g \in H^1(\tilde{X},
\mathbb{C}^3),\ h \in L^2(\tilde{X}, \mathbb{C}^3)$:
 $$\begin{cases} \rho^0 \Big[ \ddot{u} + 2 R_{\Omega} \dot{u}\Big] 
 + \rho^0 u \cdot \nabla \nabla ( \Phi^0+\Psi^s) + \rho^0 \nabla S(u) - \nabla \cdot (\Lambda^{T^0} : \nabla u) = f;\\
  u(0) = g ; \ \ \dot{u}(0) = h \end{cases}.$$
with the boundary conditions for $u$ given in table~\ref{unT}. Here we adopt the point of view that all of the other quantities in \eqref{eq:systemu} except for $u$ are already given. Hence the decoupled boundary conditions from table~\ref{unT} satisfied by $T^0$ should be considered as requirements on the given quantity $T_0$.   

We begin the analysis by introducing a notation for the ``spatial" part of the differential operator involved in the elastic-gravitational equation.
%With $T^{PK1} = \Lambda^{T^0} : \nabla u $
%  \begin{itemize}
% \item Free boundary $\partial \tilde{X}$
% \begin{equation}\label{BCnf}
%  T^{PK1} \cdot \tilde{\nu} \ \Big|_{\partial \tilde{X}} = 0 \end{equation}
% \begin{equation}  \nabla \Phi^1 \cdot \tilde{\nu} + 4\pi G  \rho^0  u \cdot \tilde{\nu} \ \ \Big|_{\partial \tilde{X}} = 0 \end{equation}
% 
% \item Along $\Sigma^{SS}$
%  \begin{equation}\Big[  T^{PK1} \cdot \nu^{SS}\Big]^+_- = 0 \end{equation}
% \begin{equation}\Big[ \nabla \Phi^1 \cdot \nu^{SS} + 4\pi G  \rho^0  u \cdot \nu^{SS}\Big]^+_- = 0 \end{equation}
% \item Along $ \Sigma^{FF}$
%  \begin{equation}\Big[  T^{PK1} \cdot \nu^{FF}\Big]^+_- = 0 \end{equation}
% \begin{equation}\Big[ \nabla \Phi^1 \cdot \nu^{FF} + 4\pi G  \rho^0  u \cdot \nu^{FF}\Big]^+_- = 0 \end{equation} \item Along $\Sigma^{FS}$
% %$$\Big[ t^{PK1}\Big]^+_- = 0 $$
% \begin{equation}\label{BCne}\Big[ \nabla \Phi^1 \cdot \nu^{FS} + 4\pi G  \rho^0  u \cdot \nu^{FS}\Big]^+_- = 0 \end{equation}
% \end{itemize}
 Denote by $A$ the following operator
\begin{equation}\label{Aorig}
u\in \mathcal{C}^{\infty}_0(\tilde{X}) ;\ \ Au :=    u \cdot \nabla \nabla ( \Phi^0+\Psi^s) + \nabla S(u) - \dfrac{1}{\rho^0}\nabla \cdot (\Lambda^{T^0} : \nabla u) .
\end{equation}
The first two terms in the definition of $A$ are formally of order $0$ while the third is a differential operator of order $2$. In terms of $A$, \eqref{eq:systemu} is written as
\begin{equation}\label{eq:Aform}
\rho^0 \Big[ \ddot{u} + 2 R_{\Omega} \dot{u}\Big] + \rho^0 A u = \rho^0 f.
\end{equation}
Denote by $ L^2(\tilde{X} , \rho^0 dV)$ the following weighted $L^2$ Hilbert space: 
\begin{equation}
\begin{aligned}
 L^2(\tilde{X} , \rho^0 dV) := \left \{  u : \int_{\tilde{X}} \lvert u\rvert^2 \rho^0 \, dV  < \infty  \right \} ;\\
 \langle u, v\rangle_{L^2(\tilde{X} , \rho^0 dx) }:= \int_{\tilde{X}} u \ \overline{v}\, \rho^0 \, dV.
 \end{aligned}
\end{equation}
It is a fact that $(A , \mathcal{C}^{\infty}_0(\tilde{X}))$ is an
unbounded closable operator with dense domain in $ L^2(\tilde{X} ,
\rho^0 dV)$.

\medskip

\noindent

\subsection{Preliminary weak formulation of the system of
               elastic-gravitational equations}
\label{WeakForm}

We call $u\in H^2(\tilde{X})$ a weak solution of the system of
elastic-gravitational equations if for all $v \in H^1(\tilde{X}) $ we
have
%$$ \dfrac{d}{dt} \int_{\tilde{X}} \rho^0 \dot{u}\cdot \overline{v}\, dV
%+ \int_{\tilde{X}} 2 R_{\Omega} \dot{u} \cdot \overline{v} \, dV
%+ \int_{\tilde{X}} \rho\big[ u \cdot \nabla \nabla ( \Phi^0+ \Psi^0) + \nabla\Phi^1 + \nabla F^s\big] \cdot \overline{v} \, dV$$
%$$+ \int_{\tilde{X}} ( \Lambda^{T^0} : \nabla u) : \nabla \overline{v}\, dV 
%+ \int_{\tilde{X}} \Big[ \rho^0 u + \dfrac{1}{4\pi G} \nabla \Phi^1\Big]\cdot \nabla \overline{w}\, dV+  {\int_{ \Sigma^{FS} }[( \Lambda^{T^0} : \nabla u)\cdot \nu]^+_-} =  0 $$
% 
%This can be written as (\emph{separating the equation satisfied by $\Phi^1$})
 \begin{equation}\label{Weak1}
 \begin{aligned}
  \dfrac{d}{dt} \int_{\tilde{X}} \rho^0 \dot{u}\cdot \overline{v}\, dV
+ \int_{\tilde{X}} 2\rho^0 R_{\Omega} \dot{u} \cdot \overline{v} \, dV
+ \int_{\tilde{X}} \rho^0\big[ u \cdot \nabla \nabla ( \Phi^0+ \Psi^s) + \nabla S(u) - f \big] \cdot \overline{v} \, dV\\
+ \int_{\tilde{X}} ( \Lambda^{T^0} : \nabla u) : \nabla \overline{v}\, dV 
+ \int_{ \Sigma^{F} }[\nu \cdot ( \Lambda^{T^0} : \nabla u) \cdot \overline{v}]^+_- \, d\Sigma \ = 0.
\end{aligned}
\end{equation}
% and 
% \begin{equation}\label{Weak2}
%  \int_{\tilde{X}} \Big[ \rho^0 u + \dfrac{1}{4\pi G} \nabla \Phi^1\Big]\cdot \nabla w\, dV = 0.
%   \end{equation}
Recall that $\Sigma^F = \Sigma^{FF} \cup \Sigma^{FS}$ is the union of the fluid-fluid and fluid-solid interfaces. Note that for $u$ a strong solution \eqref{Weak1} can be obtained by multiplying the strong equation by $\overline{v}$ and then integrating by parts. The dynamic boundary conditions at the solid-solid and exterior boundaries then cause those boundary terms to disappear and we are left only with the boundary integral on the fluid region boundaries. This weak formulation leads to the introduction of a corresponding sesquilinear form:
\begin{equation}\label{aOrig}
\begin{aligned}
a_{\text{original}}\big( u, v \big)
= & \int_{\tilde{X}}T^{PK1} : \nabla \overline{v}\, dV 
+ \int_{\tilde{X}} \rho^0 u \cdot \nabla \nabla ( \Phi^0 + \Psi^s) \cdot \overline{v} \, dV +\int_{\tilde{X}} \rho^0 \nabla S(u) \cdot \overline{v} \, d V \\
&+  \int_{\Sigma^{F}} \big[ (\nu \cdot T^{PK1}) \cdot \overline{v}\big]^+_-  \, d\Sigma. 
\end{aligned}
\end{equation}
In terms of $a_{\text{original}}$ then \eqref{Weak1} can be written as
\begin{equation}\label{Weak1.2}
\dfrac{d}{dt} \int_{\tilde{X}} \rho^0 \dot{u}\cdot \overline{v}\, dV
+ \int_{\tilde{X}} 2 \rho^0 R_{\Omega} \dot{u} \cdot \overline{v} \, dV + a_{\text{original}}\big(u, v\big) = \int_{\tilde{X}} \rho^0 f \overline{v}\, dV.
\end{equation}

\subsection{Analysis of the fluid boundary integral} \label{FS:sec}

We can write the fluid boundary integral contribution to the
sesquilinear form $a_{\text{original}}$, when $u$ and $v$ satisfy the
appropriate boundary conditions, in different forms. We begin by
making use of some geometric constructions on the fluid interfaces
$\Sigma^{F}$ including the covariant surface derivative and the Weingarten
operator. For a background on these objects see Appendix
\ref{Wein}. We briefly recall that for a vector $V$ tangential to a
hypersurface $\Sigma$, with $\nu$ denoting a unit normal vector field along
$\Sigma$, the Weingarten operator applied to $V$ is
 $$W (V):= \nabla_V \nu $$
 where $\nabla$ is the (Levi-Civita) covariant derivative.
 \medskip
 
\begin{lemma} \label{aOriga1lem}
For $u$ and $v$ satisfying $[u\cdot\nu]^+_- = [v\cdot\nu]^+_- = 0$ on $\Sigma^{F}$ and given the boundary condition \eqref{BC1}, which is listed here
$$[\nu \cdot  T^{PK1}]^+_- 
 = - \nu \nabla^{\Sigma} \cdot ( p^0 [ u]^+_-) -p^0 W [u]^+_-, $$
 we have
\begin{equation}\label{EquiFormBdy}
\begin{aligned}
\int_{\Sigma^{F}} \big[ (\nu \cdot  T^{PK1}) \cdot \overline{v} \big]^+_-\, d\Sigma = &  
\int_{\Sigma^{F}} p^0 [u]^+_- \cdot \nabla^{\Sigma} (\overline{v} \cdot \nu) 
+p^0 [\overline{v}]^+_- \cdot \nabla^{\Sigma} ( u \cdot \nu) \, d\Sigma\\
-& \int_{\Sigma^{F}} p^0 \big[ W\big( u - (u \cdot \nu) \nu\big) \cdot \big( \overline{v} - (\overline{v}\cdot \nu) \nu\big)  \big]^+_- \, d\Sigma .
\end{aligned}
\end{equation} 
\end{lemma}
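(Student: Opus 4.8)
The plan is to compute the integrand $[(\nu\cdot T^{PK1})\cdot\overline v]^+_-$ pointwise on $\Sigma^F$ by separating the normal and tangential components of $\overline v$, to treat the resulting normal contribution by a surface integration by parts, and to identify the tangential contribution by inserting the explicit form of $T^{PK1}$ in a perfect fluid.

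Since $[\overline v\cdot\nu]^+_-=0$ the scalar $\overline v\cdot\nu$ is single valued along $\Sigma^F$, so on each side we may write $\overline v_{\pm}=(\overline v\cdot\nu)\,\nu+\overline v^{\,\mathrm{tan}}_{\pm}$ with only the tangential part jumping. As $\overline v^{\,\mathrm{tan}}_{\pm}$ is tangential it only pairs with the tangential part of $\nu\cdot T^{PK1}$, which gives the splitting
\[
   \big[(\nu\cdot T^{PK1})\cdot\overline v\big]^+_-
   =(\overline v\cdot\nu)\,\big[\nu\cdot(\nu\cdot T^{PK1})\big]^+_-
     +\big[(\nu\cdot T^{PK1})^{\mathrm{tan}}\cdot\overline v^{\,\mathrm{tan}}\big]^+_- .
\]
For the first summand, contracting the dynamic condition \eqref{BC1} with $\nu$ and using that $[u]^+_-$ is tangential (because $[u\cdot\nu]^+_-=0$), hence $W[u]^+_-$ is tangential and $\nu\cdot W[u]^+_-=0$, yields $\big[\nu\cdot(\nu\cdot T^{PK1})\big]^+_-=-\nabla^\Sigma\!\cdot\big(p^0[u]^+_-\big)$. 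Multiplying by $\overline v\cdot\nu$, integrating over $\Sigma^F$, and shifting $\nabla^\Sigma\!\cdot$ onto $\overline v\cdot\nu$ by the surface divergence theorem for the tangential field $p^0[u]^+_-$ produces $\int_{\Sigma^F}p^0[u]^+_-\cdot\nabla^\Sigma(\overline v\cdot\nu)\,d\Sigma$. Here one must check that the boundary term over $\partial\Sigma^F$ vanishes: $\partial\Sigma^{FF}=\partial\Sigma^{FS}_{\text{int}}=\emptyset$ by the onion-layer hypothesis and our standing assumptions, while the only remaining edge $\partial\Sigma^{FS}_O=\partial\tilde X_F$ is the ocean edge, along which the hydrostatic pressure $p^0$ vanishes (zero pressure at the ocean free surface).

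The heart of the argument is the tangential summand. On the fluid side of $\Sigma^F$ the explicit expression \eqref{TPK1} gives $(\nu\cdot T^{PK1})_j=p^0(\gamma-1)\nu_j(\nabla\cdot u)+p^0\,\nu_i\partial_j u_i$; decomposing $\nu_i\partial_j u_i$ into its tangential and normal parts, using $\nu_i\nabla^\Sigma_j\nu_i=0$ together with the identity $u^{\mathrm{tan}}_i\nabla^\Sigma_j\nu_i=(W\,u^{\mathrm{tan}})_j$ (which follows from $W(X)=\nabla_X\nu$ and the symmetry of the second fundamental form), one finds $(\nu\cdot T^{PK1})^{\mathrm{tan}}=p^0\big(\nabla^\Sigma(u\cdot\nu)-W\,u^{\mathrm{tan}}\big)$ on the fluid side. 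The key observation is that the very same formula holds on the solid side of $\Sigma^{FS}$: the tangential part of \eqref{BC1} is exactly $-p^0W[u]^+_-=-p^0W(u^{\mathrm{tan}}_+-u^{\mathrm{tan}}_-)$, so adding it to the fluid-side value converts $W u^{\mathrm{tan}}_-$ into $W u^{\mathrm{tan}}_+$. With the common scalar $p^0$ (continuous across $\Sigma^{FF}$ by continuity of $\nu\cdot T^0$, and taken as the fluid-side value across $\Sigma^{FS}$) and the relation $[u]^+_-=u^{\mathrm{tan}}_+-u^{\mathrm{tan}}_-$, a short computation gives $\big[(\nu\cdot T^{PK1})^{\mathrm{tan}}\cdot\overline v^{\,\mathrm{tan}}\big]^+_-=p^0\,[\overline v]^+_-\cdot\nabla^\Sigma(u\cdot\nu)-p^0\big[W u^{\mathrm{tan}}\cdot\overline v^{\,\mathrm{tan}}\big]^+_-$. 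Integrating over $\Sigma^F$, adding the normal contribution from the previous paragraph, and writing $u^{\mathrm{tan}}=u-(u\cdot\nu)\nu$, $\overline v^{\,\mathrm{tan}}=\overline v-(\overline v\cdot\nu)\nu$, produces precisely \eqref{EquiFormBdy}.

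I expect the main obstacle to be the differential-geometric step of pulling the Weingarten operator out of $\nu_i\partial_j u_i$ and thereby identifying $(\nu\cdot T^{PK1})^{\mathrm{tan}}$ on the fluid side; once that is in place, everything else is careful bookkeeping with normal/tangential decompositions of jumps and a single application of the surface divergence theorem (including the check that $p^0$ kills the ocean-edge boundary term).
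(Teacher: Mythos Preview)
Your argument is correct. The ingredients are the same as in the paper --- the dynamic condition \eqref{BC1}, the explicit fluid formula \eqref{TPK1}, the surface divergence theorem with the $p^0=0$ ocean-edge observation, and the fact that $W$ maps tangent vectors to tangent vectors --- but you organise the computation differently.

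The paper splits the jump via the product rule
\[
\big[(\nu\cdot T^{PK1})\cdot\overline v\big]^+_-
=[\nu\cdot T^{PK1}]^+_-\cdot\overline v^+
+(\nu\cdot T^{PK1})\big|_-\cdot[\overline v]^+_-,
\]
applies \eqref{BC1} to the first summand and the fluid formula \eqref{TPK1} to the second, and only then extracts the Weingarten contribution by rewriting $(\nu\cdot\nabla u_-)\cdot[\overline v]^+_-$ through $\nabla^\Sigma(\nu\cdot u)$ and $W$. Your approach instead resolves $\overline v$ and $\nu\cdot T^{PK1}$ into normal and tangential parts first, and exploits the pleasant fact that the tangential traction takes the \emph{same} form $p^0\big(\nabla^\Sigma(u\cdot\nu)-W u^{\mathrm{tan}}\big)$ on both sides of $\Sigma^F$: automatically so on fluid--fluid interfaces, and on $\Sigma^{FS}$ because the tangential part of \eqref{BC1} is exactly $-p^0 W[u]^+_-$, which upgrades $W u^{\mathrm{tan}}_-$ to $W u^{\mathrm{tan}}_+$. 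This makes the jump of the tangential piece a one-line computation. Both routes are equally short; yours is arguably more transparent since it isolates a pointwise identity for $(\nu\cdot T^{PK1})^{\mathrm{tan}}$ rather than carrying a mixed $\overline v^+$ through the calculation. One small remark: in your index expression $\nu_i\partial_j u_i$ you have implicitly adopted a convention for $(\nabla u)_{ij}$ that differs from the paper's, but this is cosmetic and does not affect the argument.
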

\medskip

\begin{proof}
%Note that in Valette's notation. 
%$$\tau = T^{PK1}.$$
%By Valette (35b)
%\begin{equation}
%\big[ \nu\cdot \tau  \big]^+_-
%=   - \nu \nabla^{\Sigma} \cdot ( p^0 [u]^+_-)  + p^0 \nu\cdot \big[   \nabla u -    ( \nabla u \cdot \nu) \nu  \big]^+_-  
%\end{equation}
We will use the notation $u^t = u - (u \cdot \nu) \nu$ for the
orthogonal projection of $u$ onto the tangent space of
$\Sigma^{F}$. Also, note that the hypotheses $[u \cdot \nu]_-^+=0$ and
$[v \cdot \nu]_-^+=0$ imply that $[u]_-^+$ and $[v]_-^+$ are tangent to
$\Sigma^{F}$. Moreover, for $\nu \cdot u$ and $\nu \cdot v$ on
$\Sigma^{F}$ it does not depend from which side we approach the interface.

To begin the proof note that
\begin{equation}\label{rewrite1}
\begin{aligned}
&[ (\nu\cdot T^{PK1} )\cdot\overline{v}]^+_- = [ \nu\cdot T^{PK1}]^+_- \cdot \overline{v}^+ + (\nu\cdot T^{PK1} )|_- \cdot [\overline{v}]^+_- .
\end{aligned}
\end{equation}
Now use the boundary condition \eqref{BC1} to rewrite the first term in the above expression as
\begin{align*}
[ \nu\cdot T^{PK1}]^+_- \cdot \overline{v}^+ &= 
 - \nu\cdot  \overline{v}^+ \nabla^{\Sigma} \cdot ( p^0 [ u]^+_-) -p^0 W ([u]^+_-) \cdot \overline{v}^+\\
 & = - \nu\cdot  \overline{v} \nabla^{\Sigma} \cdot ( p^0 [ u]^+_-) -p^0 W ([u]^+_-) \cdot (\overline{v}^t)^+ ,
\end{align*}
where for the second equality we used the fact that $W$ maps vectors tangent to $\Sigma^{F}$ to other tangent vectors. Applying integration by parts inside of $\Sigma^{F}$ and using our assumptions that $\partial\Sigma^{F}_{\text{int}} = \emptyset$ and $p^0 = 0 $ on $\partial \Sigma^{FS}_O$ (which follows from the boundary conditions \eqref{TractCont})  gives
$$\int_{\Sigma^{F}} 
- \nu\cdot  \overline{v} \nabla^{\Sigma} \cdot ( p^0 [ u]^+_-) \, d\Sigma= 
 \int_{\Sigma^{F}} p^0 \nabla^{\Sigma} ( \nu \cdot  \overline{v}) \cdot   [ u]^+_- \, d\Sigma.$$
Here we have also used the fact that $\Sigma^{F}$ is a $C^1$ surface. Next, using \eqref{TPK1} we have from the fluid side
\[
\begin{split}
(T^{PK1}_{ij})_- & = p^0 (\gamma - 1) (\nabla \cdot u_-)\delta_{ij} + p^0 (\nabla u_-)_{ij}.
\end{split}
\]
Hence the second term in \eqref{rewrite1} can be rewritten as
$$( \nu \cdot T^{PK1})|_-\cdot [\overline{v}]^+_-
= p^0 (\gamma-1) (\nabla \cdot u_- ) (\nu \cdot  [\overline{v}]^+_-)
+ p^0(  \nu \cdot \nabla u_- )\cdot  [\overline{v}]^+_-. $$
By the hypotheses the normal component of $[\overline{v}]_-^+$ at $\Sigma^{F}$ vanishes so the first term in the previous equation is zero and we obtain
$$ ( \nu \cdot T^{PK1})|_-\cdot [\overline{v}]^+_- = p^0 ( \nu \cdot \nabla u_- ) \cdot [\overline{v}]^+_-.$$
Next use the product rule and the continuity conditions across the boundary to rewrite the right hand side 
\begin{align*}
 ( \nu \cdot T^{PK1})|_-\cdot [\overline{v}]^+_- &= p^0  \nabla  ( \nu \cdot u_-) \cdot [\overline{v}]^+_- - p^0 ( u_- \cdot \nabla\nu) \cdot [\overline{v}]^+_-\\
 &= p^0  \nabla  ( \nu \cdot u) \cdot [\overline{v}]^+_- - p^0 u_- \cdot W( [\overline{v}]^+_-) \\
  &=p^0  \nabla^{\Sigma}  ( \nu \cdot u) \cdot [\overline{v}]^+_- - p^0 W(u^t_-) \cdot [\overline{v}]^+_- .
  \end{align*}
For the last step recall that $W$ maps vectors tangent to $\Sigma^{F}$ to other tangent vectors, is symmetric and, once again, that $[\overline{v}]^+_-$ is tangent to $\Sigma^{F}$ by the hypotheses.
%which is consistent with the property that 
%$$(\nabla^{\Sigma} u  )\cdot \nu =  
%\nabla u \cdot \nu  - ( \nabla u \cdot \nu) \nu \cdot \nu = 0$$
%$$\nu \cdot (\nabla^{\Sigma} u  )
% = \nu \cdot \nabla u - \nu \cdot ( \nabla u \cdot \nu ) \nu 
% =  \nu \cdot \nabla u -( \nu \cdot ( \nabla u \cdot \nu ) )\nu 
% = \nu \cdot \nabla u -( \nu \cdot  \nabla u) \cdot \nu ) )\nu $$
% or if $u$ is tangential then?
%
Now putting the previous calculations together
\begin{align*}
& \int_{\Sigma^{F}} [ (\nu \cdot T^{PK1}) \cdot \overline{v}]^+_-\, d\Sigma\\
&= \int_{\Sigma^{F}} \Big(p^0 [u]^+_- \cdot \nabla^{\Sigma} ( \nu\cdot \overline{v}) + p^0  \nabla^{\Sigma}  ( \nu \cdot u) \cdot [\overline{v}]^+_-
 - p^0 W( [u]^+_-) \cdot (\overline{v}^t)^+ 
  - p^0 W(u^t_-) \cdot [\overline{v}]^+_-\Big)\, d\Sigma.
\end{align*} 
This is equivalent to \eqref{EquiFormBdy} and so the proof is complete.
\end{proof}
\medskip

In view of Lemma~\ref{aOriga1lem} it appears that an appropriate weak
formulation of the system of elastic-gravitational equations which
incorporates all of the boundary conditions would be given by
\eqref{Weak1.2} with $a_{\text{original}}$ replaced by $a_1$, see \eqref{a1}
below, since $a_1$ is simply $a_{\text{original}}$ with the fluid-solid
boundary integral term replaced using \eqref{EquiFormBdy}. However, we
are not able to prove that $a_1$ is coercive on an appropriate domain
and this necessitates the introduction of $a_2$ (see \eqref{a2}) and
the corresponding calculations in section~\ref{weak:sec}.

\section{Equivalent weak formulations}\label{weak:sec}

In this section we show how the sesquilinear form $a_{\text{original}}$ may be modified when $u, v$ have enough regularity to take derivatives, traces, and trace of the normal derivative in the classical sense. We must also assume that $T_0$, $u$, and $v$ satisfy certain boundary conditions. The actual domain of the sesquilinear forms will be discussed in subsection \ref{DoSes}.

We will consider two additional sesquilinear forms, defined initially only for $u$ and $v$ sufficiently regular, as follows. First we have a form obtained simply by applying \eqref{EquiFormBdy} to \eqref{Aorig}
\begin{equation}\label{a1}
 \begin{aligned}
a_1\big(u, v \big)
= & \int_{\tilde{X}} ( \Lambda^{T^0} : \nabla u ) : \nabla \overline{v}\, d V
+ \int_{\tilde{X}} \rho^0 u \cdot \nabla \nabla ( \Phi^0 + \Psi^s) \cdot \overline{v}\ dV +\int_{\tilde{X}} \rho^0 \nabla \Phi^1 \cdot \overline{v} \, d V \\
&+ \int_{\Sigma^{F}} p^0 [u]^+_- \cdot \nabla^{\Sigma} (\overline{v} \cdot \nu) 
+p^0 [\overline{v}]^+_- \cdot \nabla^{\Sigma} ( u \cdot \nu)\, d \Sigma \\
-& \int_{\Sigma^{F}} p^0 \big[ W\big( u - (u \cdot \nu) \nu\big) \cdot \big( \overline{v} - (\overline{v}\cdot \nu) \nu\big)  \big]^+_- \, d\Sigma.
\end{aligned}
\end{equation}
Next we introduce a more complicated form, which is the final one used in the next section for the results on well-posedness. First we need some extra notation. Let $\sigma_N$ be any regular scalar function which is $-p^0$ in $\Omega^F$ and $0$ outside of a small neighborhood of $\Omega^F$. We choose this neighborhood sufficiently small so that none of the solid-solid interfaces intersect the support of $\sigma_N$. Also, whenever we have an expression $B(u,\overline{v})$ we will write $\mathfrak{S}B$ for the symmetrization
\[
\mathfrak{S}\{ B(u,\overline{v}) \} = \tfrac{1}{2}\left ( B(u,\overline{v}) + B (\overline{v},u)\right ).
\]
The other sesquilinear form is defined by
\begin{equation}\label{a2}
\begin{aligned}
 a_2\big(u, v \big)
= &  \int_{\Omega^S}\Big(  ( \Lambda^{T^0} : \nabla u ) : \nabla \overline{v} +  \sigma_N  \nabla u : \nabla \overline{v}^T -  \sigma_N (\nabla\cdot u)( \nabla \cdot \overline{v})\Big) \, d V\\
& +\int_{\Omega^S}\Big( -\mathfrak{S} \big\{  (g_0' \cdot u) ( \overline{v}\cdot \nabla \rho^0)  \big\} 
+ \mathfrak{S} \big\{  - ( \nabla \sigma_N  +\rho^0 g_0' ) \cdot u (\nabla \cdot \overline{v}  )\big\}
\\
&+ \mathfrak{S} \big\{     (  \nabla \sigma_N -  \rho^0 g_0' )\cdot \nabla u \cdot \overline{v} \big\}\Big) \, dV - \int_{\Sigma^{SS} } \mathfrak{S} \big\{ [\rho^0]^+_- (u \cdot g_0' )( \overline{v}\cdot\nu)\big\} \, d\Sigma
\\
&+ \int_{\Omega^F} \Big(\dfrac{p^0 \gamma}{(\rho^0)^2}  \Big(\nabla\cdot (\rho^0  u ) 
 - \tilde{s} \cdot u \Big) \Big(\nabla\cdot (\rho^0  v ) 
 - \tilde{s} \cdot \overline{v} \Big)- \tilde{s}\cdot g_0'  \dfrac{(g_0' \cdot u) (\overline{v}\cdot g_0' ) }{\lVert g_0'\rVert^2}\Big) \, dV \\
 &- \int_{ \Sigma^{FF} } [\rho^0]^+_- (g_0' \cdot \nu)(u \cdot \nu)( \overline{v}\cdot\nu) \, d\Sigma\\
&-\int_{\Sigma^{FS}}  \mathfrak{S}\Big\{ (\overline{v}\cdot \nu)\Big( u_+ \cdot [\rho^0] g_0' \Big)\Big\} d\Sigma\\
&-\dfrac{1}{4\pi G} \int_{\RR^3} \nabla S(u) \cdot \nabla S(\overline{v}) \, d V + \int_{\partial \tilde{X}} \mathfrak{S}\{\rho^0 (u \cdot g_0' ) \overline{v}\cdot\nu\}\, d \Sigma
\end{aligned}
\end{equation}
where
$$\tilde{s} = \nabla \rho^0 +\dfrac{g_0' (\rho^0)^2}{p^0 \gamma} \ \text{and} \ g_0' = \nabla ( \Phi^0 + \Psi^s).$$
\
\noindent \begin{remark}
$\tilde{s}$ as defined is related to the Brunt-V\"ais\"al\"a frequency $N^2$ ( see for e.g. \cite{Valette,Valette_86,Dahlen})
$$N^2 = \dfrac{\tilde{s}}{\rho^0} \cdot (-g_0'). $$
 The Brunt-V\"ais\"al\"a frequency $N^2$ arises naturally in the analysis of the local stability of the fluid, providing a simple way to formulate the Schwarzschild criterion (see for e.g. \cite{ChaljubValette}). 
\end{remark}

\medskip
The remainder of this section will be dedicated to proving the equivalence of these forms and $a_{\text{original}}$ given sufficient regularity and boundary conditions. The result is as follows.
\medskip

\begin{lemma}\label{A3}\
If $T^0$ satisfies \eqref{TractCont} and $u,v \in C^\infty$ satisfy along $\Sigma^{F}$
$$[u\cdot \nu]^+_- = [v\cdot \nu]^+_- = 0, $$
as well as
\[
[u]_-^+ = [v]_-^+ = 0
\]
along $\Sigma^{SS}$ then we have
$$ a_1\big(u, v \big) =  a_2\big(u, v \big).$$
In addition, if $u$ and $T^0$ satisfy the boundary conditions in Table \ref{unT} then
$$a_{\text{original}} \big(u, v \big)
= a_1\big(u, v \big) =  a_2\big(u, v \big).$$
%$$\langle A_{\text{original}}(u), v \rangle_{L^2}
% =a_1 \big( u, v \big) = a_2 \big( u, v \big)= a_3\big( u, v \big)$$
\end{lemma}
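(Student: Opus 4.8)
The plan is to prove the identity $a_1 = a_2$ first, and then derive the identity $a_{\text{original}} = a_1$ as an immediate consequence of Lemma~\ref{aOriga1lem}, since $a_1$ is by construction exactly $a_{\text{original}}$ with the fluid boundary integral $\int_{\Sigma^F}[(\nu\cdot T^{PK1})\cdot\overline v]^+_-\,d\Sigma$ replaced using \eqref{EquiFormBdy}; the hypothesis that $u$ and $T^0$ satisfy all the boundary conditions in Table~\ref{unT} is precisely what is needed to apply that lemma (namely $[u\cdot\nu]^+_-=0$ on $\Sigma^F$ together with the dynamic condition \eqref{BC1}), so once $a_1=a_2$ is known the chain $a_{\text{original}}=a_1=a_2$ follows. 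Thus the entire content of the lemma reduces to the purely algebraic/integration-by-parts identity $a_1(u,v)=a_2(u,v)$ under the stated regularity and the conditions $[u\cdot\nu]^+_-=[v\cdot\nu]^+_-=0$ on $\Sigma^F$ and $[u]^+_-=[v]^+_-=0$ on $\Sigma^{SS}$.

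For the identity $a_1=a_2$ I would split the domain $\tilde X$ into $\Omega^S$ and $\Omega^F$ and treat the volume terms region by region. In $\Omega^S$: the modification consists of inserting $\sigma_N$, which vanishes identically on $\Omega^S$ except in a small collar near $\partial\Omega^F$, so the terms $\sigma_N\nabla u:\nabla\overline v^T-\sigma_N(\nabla\cdot u)(\nabla\cdot\overline v)$ are supported in that collar; integrating by parts the identity $\int \sigma_N(\nabla u:\nabla\overline v^T-(\nabla\cdot u)(\nabla\cdot\overline v))$ produces boundary terms on $\Sigma^{FS}$ (and $\Sigma^{SS}$, where they cancel using $[u]^+_-=[v]^+_-=0$ and the continuity of $\sigma_N$ there) plus a volume term $\nabla\sigma_N\cdot(\dots)$. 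One then uses the equilibrium identities: $\nabla\cdot T^0=\rho^0 g_0'$ from \eqref{Equi1}, the decomposition \eqref{NotationRel} of $\Lambda^{T^0}$ into $\Gamma$ plus $T^0$-terms, and the product rule to convert $(\Lambda^{T^0}:\nabla u):\nabla\overline v$ into the symmetrized first-order expressions involving $g_0'$, $\nabla\rho^0$, and $\nabla\sigma_N$ that appear in the $\Omega^S$ lines of \eqref{a2}, with the leftover generating the $\Sigma^{SS}$ boundary integral $-\int_{\Sigma^{SS}}\mathfrak S\{[\rho^0]^+_-(u\cdot g_0')(\overline v\cdot\nu)\}$. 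In $\Omega^F$: here $\sigma_N=-p^0$, one uses the perfect-fluid form \eqref{PerFluid} (equivalently \eqref{TPK1}) for $T^{PK1}$, the hydrostatic equilibrium \eqref{Equi2} $\nabla p^0=-\rho^0 g_0'$, and Lemma~\ref{parallellemma} ($\nabla\rho^0\parallel g_0'\parallel\nabla p^0$), and one must massage $(\Lambda^{T^0}:\nabla u):\nabla\overline v+\sigma_N(\dots)$ plus the two $\nabla^\Sigma$ boundary terms and the Weingarten term of $a_1$ into the $\frac{p^0\gamma}{(\rho^0)^2}(\nabla\cdot(\rho^0 u)-\tilde s\cdot u)(\cdots)$ form and the $\tilde s\cdot g_0'$ term, generating the $\Sigma^{FF}$ and $\Sigma^{FS}$ boundary integrals of \eqref{a2}. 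The self-gravitation term $\rho^0\nabla S(u)\cdot\overline v$ is converted to $-\frac{1}{4\pi G}\int_{\RR^3}\nabla S(u)\cdot\nabla S(\overline v)$ by integrating by parts and using $\Delta S(u)=-4\pi G\nabla\cdot(\rho^0 u)$ together with the gravitational jump conditions of Table~\ref{unT}, which also produces the $\partial\tilde X$ surface term; the order-zero geopotential Hessian term $\rho^0 u\cdot\nabla\nabla(\Phi^0+\Psi^s)\cdot\overline v = \rho^0 u\cdot(\nabla g_0')\cdot\overline v$ should be combined with pieces of the first-order $\Omega^S$ and $\Omega^F$ terms via $\nabla g_0'$ being symmetric (since $\curl g_0'=0$).

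The main obstacle I expect is the bookkeeping in the fluid region: reorganizing the quadratic form $(\Lambda^{T^0}:\nabla u):\nabla\overline v$ together with the $\sigma_N=-p^0$ compensation, the Weingarten/surface-gradient boundary terms from $a_1$, and the order-zero Hessian term into the manifestly-sign-definite combination $\frac{p^0\gamma}{(\rho^0)^2}|\nabla\cdot(\rho^0 u)-\tilde s\cdot u|^2$ plus the single rank-one term along $g_0'$, while correctly tracking every $\Sigma^{FF}$ and $\Sigma^{FS}$ contribution, is delicate. The key analytic inputs that make it go through are the equilibrium equations \eqref{Equi1}–\eqref{Equi2}, the parallelism from Lemma~\ref{parallellemma}, the perfect-fluid constitutive relation \eqref{PerFluid}, the symmetry of $\nabla g_0'$, symmetry of the Weingarten operator, and careful integration by parts on the $C^1$ interfaces using $\partial\Sigma^{FS}_{\text{int}}=\partial\Sigma^{SS}=\emptyset$ and $p^0=0$ on $\partial\Sigma^{FS}_O$ (so that surface divergences integrate cleanly with no edge contributions except where they should vanish). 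I would organize the write-up as a sequence of sublemmas — one handling the solid volume terms, one handling the fluid volume terms, one handling the self-gravitation term, and one collecting boundary contributions — rather than a single monolithic computation.
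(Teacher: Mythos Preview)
Your overall strategy is the same as the paper's and the ingredients you list are correct, but your roadmap misattributes which manipulations produce which terms, and this would cause real trouble in the execution.

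First, in $\Omega^S$ the highest-order term $(\Lambda^{T^0}:\nabla u):\nabla\overline v$ is \emph{not} converted into anything; it appears verbatim in both $a_1$ and $a_2$. The first-order symmetrized expressions in $g_0'$, $\nabla\rho^0$, $\nabla\sigma_N$ that you see in the $\Omega^S$ block of $a_2$, together with the $\Sigma^{SS}$ boundary integral and the $\partial\tilde X$ boundary integral, all come from applying the divergence theorem to the geopotential Hessian term $\rho^0 u\cdot\nabla g_0'\cdot\overline v$ (write $\nabla\cdot(\rho^0(u\cdot g_0')\overline v)$ by the product rule and integrate). The $\sigma_N$ volume terms arise separately: the key step (the paper's Step~1) is to show that the $\Sigma^F$ boundary block of $a_1$ --- the $\nabla^\Sigma$ terms plus the Weingarten term --- equals $\int_{\tilde X}\sigma_N(\nabla u:\nabla\overline v^T-(\nabla\cdot u)(\nabla\cdot\overline v))+\int_{\tilde X}\mathfrak S\{(\nabla\overline v\cdot u-u\,\nabla\cdot\overline v)\cdot\nabla\sigma_N\}$ minus a $\Sigma^{FS}$ integral involving $\nabla^\Sigma p^0$. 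This is a genuine surface-to-volume identity and is the crux of Valette's trick; you should not expect it to fall out of manipulating $\Lambda^{T^0}$ via \eqref{NotationRel}. Also, the $\Sigma^{SS}$ contribution from the $\sigma_N$ integration by parts vanishes because $\sigma_N\equiv 0$ near $\Sigma^{SS}$ by construction, not because of continuity of $u,v$.

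Second, the self-gravitation conversion $\int_{\tilde X}\rho^0\nabla S(u)\cdot\overline v=-\tfrac{1}{4\pi G}\int_{\RR^3}\nabla S(u)\cdot\nabla S(\overline v)$ uses only $\Delta S(\overline v)=-4\pi G\,\nabla\cdot(\rho^0\overline v)$ and the decay of $S$ and $\nabla S$ at infinity; no interface jump conditions are invoked, and this step does \emph{not} produce the $\partial\tilde X$ term. The $\partial\tilde X$ integral in $a_2$ comes from the divergence-theorem treatment of the Hessian term described above.

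With those corrections your sublemma organization (solid volume, fluid volume, self-gravitation, boundary bookkeeping) matches the paper's proof exactly; the fluid-volume sublemma is the paper's Lemma~\ref{loVt}.
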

\medskip

\begin{proof}\

\textbf{Step 1a}: 
Let us assume that $T^0$ satisfies \eqref{TractCont} and $u,v$ satisfy
\begin{equation}\label{condSigmaFS}
[u\cdot \nu]^+_- = [v\cdot \nu]^+_- = 0
\end{equation}
along $\Sigma^{F}$ and 
\begin{equation}\label{condSigmaFFSS}
[u]_-^+ = [v]_-^+ = 0
\end{equation}
 on $\Sigma^{SS}$. Note that \eqref{condSigmaFFSS} implies \eqref{condSigmaFS}. Then, following the idea of \cite[equation (A.41), p.41]{Valette}, we have the following computation
\begin{equation}\label{stepcomp}
\begin{split}
&  \int_{\tilde{X}} \sigma_N \Big( \nabla u : \nabla \overline{v}^T -  (\nabla\cdot u)( \nabla \cdot \overline{v})\Big)\, dV
+ \int_{\tilde{X}}\mathfrak{S}\big\{\big( \nabla \overline{v} \cdot u - u \nabla \cdot \overline{v}\big) \cdot \nabla \sigma_N \big\} \, dV\\
\stackrel{(1)}{=}&\int_{\tilde{X}} \mathfrak{S}\Big\{ \nabla \cdot \big( \sigma_N \nabla \overline{v} \cdot u - \sigma_N u \nabla \cdot \overline{v}\big)  \Big\} \, dV \\
\stackrel{(2)}{=}& \int_{\Sigma^F} p^0 \mathfrak{S} \Big\{\Big[ \nu \cdot \nabla \overline{v} \cdot u - (u\cdot \nu) \nabla \cdot \overline{v}\Big]^+_-\Big\} \, d\Sigma\\
\stackrel{(3)}{=}& \int_{\Sigma^F} \Big(p^0 [u]^+_- \cdot \nabla^{\Sigma} ( \nu\cdot \overline{v}) + p^0  \nabla^{\Sigma}  ( \nu \cdot u) \cdot [\overline{v}]^+_-\Big)\, d\Sigma\,  - \int_{\Sigma^F} p^0  \big[ W( u - (u \cdot \nu) \nu) \cdot \big(  \overline{v} - (\overline{v} \cdot \nu) \nu)\big) \big]^+_-\, d\Sigma\\
& +\int_{\Sigma^F} \mathfrak{S} \big\{ (u\cdot \nu)[\overline{v}]^+_- \cdot \nabla^{\Sigma} p^0 \big\}\, d\Sigma
\end{split}
\end{equation}
Equality (1) follows from the product rule for $\nabla \cdot$.  Equality (2) is an application of the divergence theorem. Here, the boundary term along $\Sigma^{SS}$ vanishes, due to the assumption on the support of $\sigma_N$, while that along $(\partial \widetilde{X})_F$ vanishes because $p^0 = 0 $ there. 

 For equality (3), we will first rewrite the terms $[(u\cdot \nu) \nabla \cdot \overline{v}]^+_-$ and $[ \nu \cdot (\nabla \overline{v} \cdot u )]^+_-$ separately.  Indeed, using the continuity conditions \eqref{condSigmaFS} we have
%\HOX{I've changed this a little bit. I think there was actually an error before in the proof, although the conclusion was correct.}
\begin{equation}\label{rw1}
\begin{aligned}
 \big[(u\cdot \nu) \nabla \cdot \overline{v}\big]^+_-  & = [ u\cdot \nu]_-^+ \nabla \cdot \overline{v}_+ + (u_- \cdot \nu) [\nabla \cdot \overline{v}]_-^+ \\
&= (u \cdot \nu) ( \nabla^{\Sigma} \cdot [\overline{v}]_-^+ + [\nu \cdot \nabla \overline{v} \cdot \nu]_-^+).
\end{aligned}
\end{equation}
on $\Sigma^F$. Note in the second equality we used equation \eqref{tanDi}.

For $[ \nu \cdot (\nabla \overline{v} \cdot u )]_-^+$, the other term at step (2) in \eqref{stepcomp}, we have using the product rule, again the continuity conditions \eqref{condSigmaFS}, and properties of the Weingarten operator $W$ on $\Sigma^F$
\begin{equation}\label{rw2}
\begin{aligned}
\big [ \nu \cdot (\nabla \overline{v} \cdot u )\big ]_-^+
& = \big[ \nu \cdot (\nabla \overline{v} \cdot (u^t + \nu (u \cdot \nu) )\big]^+_- \\
& = \big[ \nu \cdot \nabla \overline{v} \cdot u^t \big]^+_- + [\nu \cdot \nabla \overline{v} \cdot \nu]_-^+ (u\cdot \nu) \\
&= [ \nabla^\Sigma ( \nu \cdot \overline{v}) \cdot u^t - \overline{v}\cdot\nabla \nu \cdot  u^t]_-^+ + [\nu \cdot \nabla \overline{v} \cdot \nu]_-^+ (u\cdot \nu)\\
&= \nabla^\Sigma( \nu \cdot \overline{v}_-) \cdot [u^t]_-^+  - [\overline{v}]_-^+ \cdot \nabla\nu \cdot u^t_+ 
- \overline{v}_- \cdot \nabla \nu \cdot [u^t]_-^+ + [\nu \cdot \nabla \overline{v} \cdot \nu]_-^+ (u\cdot \nu)\\
&= \nabla^{\Sigma} ( \nu \cdot \overline{v}) \cdot [u]_-^+ - [ W(u^t) \cdot \overline{v}^t]_-^+ + [\nu \cdot \nabla \overline{v} \cdot \nu]_-^+ (u\cdot \nu),
\end{aligned}
\end{equation}
where $u^t$ and $v^t$ are the orthogonal projections of $u$ and $v$ onto the tangent space of $\Sigma^F$. Returning to the second equality of \eqref{stepcomp}, subtracting \eqref{rw1} from \eqref{rw2} now gives
\[
p^0 \mathfrak{S} \Big\{\Big[ \nu \cdot \nabla \overline{v} \cdot u - (u\cdot \nu) \nabla \cdot \overline{v}\Big]^+_-\Big\} = p^0 \mathfrak{S}\left \{  \nabla^{\Sigma} ( \nu \cdot \overline{v}) \cdot [u]_-^+ - [ W(u^t) \cdot \overline{v}^t]_-^+ -(u \cdot \nu) \nabla^{\Sigma} \cdot [\overline{v}]_-^+ \right \}.
\]
To complete the proof of step (3) note that integration by parts in $\Sigma^F$ and the product rule gives
\[
- \int_{\Sigma^F} p^0 \mathfrak{S}\left \{  (u \cdot \nu) \nabla^{\Sigma} \cdot [\overline{v}]_-^+ \right \} \, d\Sigma= \int_{\Sigma^F} \Big(p^0 \mathfrak{S}\left \{ \nabla^\Sigma ( u \cdot \nu)\cdot [\overline{v}]_-^+ \right \} + \mathfrak{S}\left \{ ( \nu \cdot u)( [\overline{v}]_-^+ \cdot \nabla^\Sigma p^0) \right \}\Big)\, d\Sigma .
\]
Here, the boundary term along $\partial \Sigma^F = \partial \Sigma^{FF} \cup \partial \Sigma^{FS}_{\text{int}} \cup \partial \Sigma^{FS}_O$vanishes. This is because for the interior intefaces $\partial \Sigma^{FF}= \partial \Sigma^{FS}_{\text{int}} = \emptyset$ , while for ocean edges, although $\partial \Sigma^F_O \neq \emptyset$, we assume that $p^0 = 0$ there. 
This completes the proof of \eqref{stepcomp}.

\textbf{Step 1b} : Rearranging \eqref{stepcomp} we now have
\[
\begin{split}
& \int_{\Sigma^{F}} p^0 [u]^+_- \cdot \nabla^{\Sigma} ( \nu\cdot \overline{v}) + p^0  \nabla^{\Sigma}  ( \nu \cdot u) \cdot [\overline{v}]^+_-\,d\Sigma  - \int_{\Sigma^{F}} p^0  \big[ W( u - (u \cdot \nu) \nu) \cdot \big(  \overline{v} - (\overline{v} \cdot \nu) \nu)\big) \big]^+_-d\Sigma\\
& = \int_{\tilde{X}} \sigma_N \Big( \nabla u : \nabla \overline{v}^T -  (\nabla\cdot u)( \nabla \cdot \overline{v})\Big)\, dV
+ \int_{\tilde{X}} \mathfrak{S}\big\{\big( \nabla \overline{v} \cdot u - u \nabla \cdot \overline{v}\big) \cdot \nabla \sigma_N \big\} \, dV\\
&\hskip2in -  \int_{\Sigma^{FS}} \mathfrak{S} \big\{ (u\cdot \nu)[\overline{v}]^+_- \cdot \nabla^{\Sigma} p^0 \big\} \, d\Sigma.
\end{split}
\]
The final integral in this formula is only over $\Sigma^{FS}$ since along $\Sigma^{FF}$ Lemma \ref{parallellemma} implies that $\nabla p^0$ is parallel to $\nu$ from both sides of $\Sigma^{FF}$ . Hence the term $\nabla^{\Sigma} p^0$ is zero on $\Sigma^{FF}$.

This shows how we can replace the last two lines in the definition \eqref{a1} of $a_1$ so that we have under the given hypotheses
\begin{equation}\label{a1firststep}
 \begin{aligned}
a_1\big(u, v \big)
= & \int_{\tilde{X}} ( \Lambda^{T^0} : \nabla u ) : \nabla \overline{v}\, dV
+ \int_{\tilde{X}} \rho^0 u \cdot \nabla \nabla ( \Phi^0 + \Psi^s) \cdot \overline{v} \, dV+\int_{\tilde{X}} \rho^0 \nabla \Phi^1 \cdot \overline{v} \, dV\\
&+\int_{\tilde{X}} \sigma_N \Big( \nabla u : \nabla \overline{v}^T -  (\nabla\cdot u)( \nabla \cdot \overline{v})\Big)
\, dV+ \int_{\tilde{X}}\mathfrak{S}\big\{\big( \nabla \overline{v} \cdot u - u \nabla \cdot \overline{v}\big) \cdot \nabla \sigma_N \big\} \, dV\\
&-  \int_{\Sigma^{FS}} \mathfrak{S} \big\{ (u\cdot \nu)[\overline{v}]^+_- \cdot \nabla^{\Sigma} p^0 \big\} d\Sigma.
\end{aligned}
\end{equation}

 \textbf{Step 2} : Now let us continue to consider the terms
\begin{equation}\label{twoterms}
\int_{\tilde{X}} \rho^0 u \cdot \nabla \nabla ( \Phi^0 + \Psi^s) \cdot \overline{v} \,dV\ \text{and}\ 
\int_{\tilde{X}} \rho^0 \nabla \Phi^1 \cdot \overline{v}\, dV.
\end{equation}
Recall that
\[
\Phi^1 = S(u)
\]
where $S(u)$ is the solution of the Poisson's equation 
$$ \Delta S(u) =   - 4 \pi G \nabla \cdot ( \rho^0 u) $$
in $\RR^3$ with $\rho^0 = 0 $ in $\RR^3 \setminus \overline{\tilde{X}}$.  Using Lemma~\ref{infiComp}, which says that $S(u) \rightarrow 0$ and $\nabla S(u) \rightarrow 0$ as $|x| \rightarrow \infty$, we can rewrite the second integral using integration by parts twice as follows
%\HOX{I am slightly suspicious about this for two reasons. First, do we not have to include boundary terms in the integration by parts where there are discontinuities in $\rho_0$? Second it seems that this is saying $\langle  \nabla S(u) , v \rangle_{H} = C \langle \nabla S(u), \nabla S(v) \rangle_{L^2(\mathbb{R}^3)}$ for $u$ and $v \in C^\infty$. If that's true there's an easy proof that $\|\nabla S(u) \|_{L^2(\mathbb{R}^3)} \leq C \|u\|_H$, but that is stronger than what is proven later. {\bf In fact I think this is correct, and we can simplify the later estimates.}}
\begin{equation}\label{S(u)calc}
\begin{split}
\int_{\tilde{X}} \rho^0 \nabla \Phi^1 \cdot \overline{v} \, dV& = \frac{1}{4 \pi G}\int_{\RR^3} (-4 \pi G \nabla \cdot (\rho^0 \overline{v})) \  \Phi^1\, dV\\
& = \frac{1}{4 \pi G} \int_{\RR^3} \Delta S(\overline{v}) S(u) \, dV\\
& = -\frac{1}{4 \pi G} \int_{\RR^3} \nabla S(u) \cdot \nabla S(\overline{v})\, dV.
\end{split}
\end{equation}
This calculation can alternately be interpreted as simply coming from the definition of the weak derivative, and the fact that $S(u)$ is a weak solution of the Poisson equation given above.

For the first term in \eqref{twoterms} we first recall that $g_0' = \nabla ( \Phi^0 + \Psi^s)$. The term may then be rewritten by first noting that the product rule implies
\begin{align*}
\nabla \cdot \big( \rho^0 (u \cdot g_0' ) \overline{v} \big) 
&= \rho^0 (u \cdot g_0' ) \nabla \cdot \overline{v}
+ \nabla \big( \rho^0 u \cdot g_0'\big) \cdot \overline{v}\\
&= \rho^0 (u \cdot g_0' ) \nabla \cdot \overline{v}
+ (u \cdot g_0')(\nabla \rho^0)  \cdot  \overline{v}
+ \rho^0     (u \cdot \nabla g_0' )\cdot \overline{v}
+ \rho^0 (g_0' \cdot \nabla u ) \cdot \overline{v}.
\end{align*}
Thus the divergence theorem gives
\begin{equation}\label{2ndterm}
\begin{split}
\int_{\tilde{X}} \rho^0     u \cdot \nabla g_0' \cdot \overline{v}\, dV
= & \int_{\tilde{X}} \nabla \cdot \big( \rho^0 (u \cdot g_0' ) \overline{v} \big) \, dV\\
&\hskip-.75in - \int_{\tilde{X}} \Big(\rho^0 (u \cdot g_0' ) \nabla \cdot \overline{v}
+ (u \cdot g_0')\nabla \rho^0  \cdot  \overline{v}
+ \rho^0 g_0' \cdot \nabla u  \cdot \overline{v}\Big)\, dV\\
= & \int_{\partial \tilde{X}} \rho^0 (u \cdot g_0' ) \overline{v}\cdot\nu \, d\Sigma
- \int_{\Sigma^{SS} \cup \Sigma^{FF} } \Big[\rho^0 (u \cdot g_0' ) \overline{v}\cdot\nu\Big]^+_-\, d\Sigma
\\
&\hskip-.75in - \int_{\Sigma^{FS} }\Big[ \rho^0 (u \cdot g_0' ) \overline{v}\cdot\nu\Big]^+_-\, d\Sigma - \int_{\tilde{X}} \Big(\rho^0 (u \cdot g_0' ) \nabla \cdot \overline{v}
+ (u \cdot g_0')\nabla \rho^0  \cdot  \overline{v}
+ \rho^0 g_0' \cdot \nabla u  \cdot \overline{v}\Big)\, dV.
\end{split}
\end{equation}
Symmetrizing \eqref{2ndterm} and substituting it into \eqref{a1firststep} we now have
\begin{equation}\label{thirdR}
\begin{aligned}
a_1\big(u, v \big)
= &  \int_{\tilde{X}} ( \Lambda^{T^0} : \nabla u ) : \nabla \overline{v} \, dV - \dfrac{1}{4\pi G} \int_{\RR^3} \nabla S(u) \cdot \nabla S(\overline{v}) \, d V  +  \int_{\partial \tilde{X}} \mathfrak{S}\big\{ \rho^0 (u \cdot g_0' ) \overline{v}\cdot\nu\big\}\, d\Sigma\\
& - \int_{\Sigma^{SS} \cup \Sigma^{FF} } \mathfrak{S}\left \{ \Big[\rho^0 (u \cdot g_0' ) \overline{v}\cdot\nu\Big]^+_-\right \}\, d\Sigma
- \int_{\Sigma^{FS} } \mathfrak{S} \left \{\Big[ \rho^0 (u \cdot g_0' ) \overline{v}\cdot\nu\Big]^+_- \right \}\, d\Sigma\\
&- \int_{\tilde{X}} \mathfrak{S}\Big\{ \rho^0 (u \cdot g_0' ) \nabla \cdot \overline{v}
+ (u \cdot g_0')\nabla \rho^0  \cdot  \overline{v}
+ \rho^0 g_0' \cdot \nabla u  \cdot \overline{v} \Big\}  \, dV\\
 &+ \int_{\tilde{X}}   \sigma_N \Big( \nabla u : \nabla \overline{v}^T -  (\nabla\cdot u)( \nabla \cdot \overline{v})\Big)\, d V
+ \int_{\tilde{X}} \mathfrak{S}\big\{\big( \nabla \overline{v} \cdot u - u \nabla \cdot \overline{v}\big) \cdot \nabla \sigma_N \big\}\, d V \\
&-  \int_{\Sigma^{FS}} \mathfrak{S} \big\{ (u\cdot \nu)[\overline{v}]^+_- \cdot \nabla^{\Sigma} p^0 \big\} \, d\Sigma.
\end{aligned}
\end{equation}
Now we begin to rearrange the right hand side of \eqref{thirdR} to match the form of \eqref{a2}.

First let us consider the integrals over the fluid-fluid and solid-solid boundaries. Using the boundary condition $[u]^+_- = 0 $ along $\Sigma^{SS}$ we obtain
\begin{equation}\label{SSbdy}
\int_{\Sigma^{SS} } \mathfrak{S}\Big\{ \big[\rho^0 (u \cdot g_0' ) \overline{v}\cdot\nu\big]^+_-\Big\} d\Sigma= \int_{\Sigma^{SS}} \mathfrak{S}\big\{ [\rho^0]^+_- (u \cdot g_0' )( \overline{v}\cdot\nu)\big\}d\Sigma. 
\end{equation} 
Since we assume that the fluid is stratified $g_0' || \nu $ along $\Sigma^{FF}$ and so $g_0' = (g_0'\cdot \nu) \nu$. Using also the condition that $[u \cdot \nu]_-^+ = 0$ on $\Sigma^{FF}$, the boundary integral term along $\Sigma^{FF}$ can thus be written as
 $$\int_{\Sigma^{FF}} [\rho^0]^+_- (g_0' \cdot \nu) (u\cdot \nu)( \overline{v}\cdot\nu) d\Sigma.$$
These are the same integral terms over the fluid-fluid and solid-solid boundaries that appear in \eqref{a2}.

Next we look at the fluid-solid boundary integral terms in \eqref{thirdR}. These are
\begin{align*}
&-\int_{\Sigma^{FS}} \mathfrak{S}\big\{ (u\cdot \nu)[\overline{v}]^+_- \cdot \nabla^{\Sigma} p^0 \big\}\, d\Sigma
-\int_{\Sigma^{FS} } \mathfrak{S} \Big\{\big[ \rho^0 (u \cdot g_0' ) \overline{v}\cdot\nu\big]^+_- \Big\}d\Sigma\\
=&- \int_{\Sigma^{FS}} \mathfrak{S}\Big\{ (\overline{v}\cdot \nu)\big[u  \cdot \big( \nabla^{\Sigma} p^0 + 
 \rho^0 g_0' \big)\big]^+_- \Big\}d\Sigma.
\end{align*}
Now we simplify $ \nabla^{\Sigma} p^0 + \rho^0 g_0'$ using the relationship between $p^0$ and $g_0' $. The mechanical equilibrium condition \eqref{Equi1} gives
$$ \rho^0 g_0' = \nabla \cdot T^0.$$
On the other hand, by \eqref{staticStress} the pretressed state $T^0$ in the fluid region has the form
$$T^0 = -p^0 \mathrm{Id}\ \Rightarrow\ \nabla \cdot T^0 = -\nabla p^0.$$
Hence in the fluid region $\nabla p^0 =- \rho^0 g_0'$ and so using the definition of the tangential gradient
$$ \nabla^{\Sigma} p^0 = \nabla p^0 - ( \nabla p^0 \cdot \nu) \nu 
= - \rho^0_- g_0'  +  (  \rho^0_- g_0' \cdot \nu) \nu.$$
Along $\Sigma^{FS}$ this gives
\begin{equation}\label{bd1}
 [ u \cdot \nabla^{\Sigma} p^0]_-^+ = [u]_-^+ \cdot \nabla^{\Sigma} p^0= [u]_-^+ \cdot \Big(- \rho^0_- g_0'  +  (  \rho^0_- g_0' \cdot \nu) \nu \Big)
= - \rho^0_- [u]_-^+ \cdot  g_0'  \,, \end{equation}
where we used the fact that $[u]_-^+$ is tangent to $\Sigma^{FS}$. For the second term
\begin{equation}\label{bd2}
\begin{aligned}
  \big[ u \cdot   \rho^0 g_0'\big]^+_- &= u_+ \cdot [\rho^0]_-^+ g_0' + [u]_-^+ \cdot \rho^0_- g_0' 
\end{aligned}
\end{equation} 
and putting \eqref{bd1} and \eqref{bd2} together we obtain
$$\big[u  \cdot \big( \nabla^{\Sigma} p^0 +  \rho^0 g_0' \big)\big]^+_- 
=u_+ \cdot [\rho^0]_-^+ g_0' .$$
Therefore the fluid-solid boundary integral term is
\[
-\int_{\Sigma^{FS}} \mathfrak{S}\left \{ (\overline{v}\cdot \nu) (u_+ \cdot [\rho^0]_-^+ g_0') \right \}
\]
which matches the term in \eqref{a2}.

Now we move onto the volume terms. First note that by simply rearranging
\begin{equation}
\begin{aligned}
&- \mathfrak{S}\Big\{ \rho^0 (u \cdot g_0' ) \nabla \cdot \overline{v}
+ (u \cdot g_0')\nabla \rho^0  \cdot  v
+ \rho^0 g_0' \cdot \nabla u  \cdot \overline{v} \Big\}
+ \mathfrak{S}\big\{\big( \nabla \overline{v} \cdot u - u \nabla \cdot \overline{v}\big) \cdot \nabla \sigma_N \big\}\\
=& -\mathfrak{S} \big\{  (g_0' \cdot u) ( v\cdot \nabla \rho^0)  \big\} 
+ \mathfrak{S} \big\{  - ( \nabla \sigma_N  +\rho^0 g_0' ) \cdot u (\nabla \cdot \overline{v}  )\big\}
+ \mathfrak{S}\big\{     (  \nabla \sigma_N -  \rho^0 g_0' )\cdot \nabla u \cdot \overline{v} \big\}
\end{aligned}
\end{equation}
and in the solid region $\Omega^S$ this already matches the form in \eqref{a2}. To complete the proof of the equivalence of $a_1$ and $a_2$ all that remains is to consider the portion of the volume integral over the fluid region $\Omega^F$. We do this in the following lemma.
\medskip

\begin{lemma}[Volume integral in $\Omega^F$]\label{loVt}
\begin{equation}\label{VolumeIntFluid}
\begin{aligned}
&\int_{\Omega^F}\Big( ( \Lambda^{T^0} : \nabla u ) : \nabla \overline{v} +  \sigma_N  \nabla u : \nabla \overline{v}^T -  \sigma_N (\nabla\cdot u)( \nabla \cdot \overline{v}) \Big) \, dV +\int_{\Omega^F}\Big(-\mathfrak{S} \big\{  (g_0' \cdot u) ( \overline{v}\cdot \nabla \rho^0)  \big\} \\
& + \mathfrak{S} \big\{  - ( \nabla \sigma_N  +\rho^0 g_0' ) \cdot u (\nabla \cdot \overline{v}  )\big\}
+ \mathfrak{S}\big\{     (  \nabla \sigma_N -  \rho^0 g_0' )\cdot \nabla u \cdot \overline{v} \big\}\Big) \, dV\\
= & \int_{\Omega^F}\Big(  \dfrac{p^0 \gamma}{(\rho^0)^2}  \Big(\nabla\cdot (\rho^0  u ) 
 - \tilde{s} \cdot u \Big) \Big(\nabla\cdot (\rho^0  \overline{v} ) 
 - \tilde{s} \cdot \overline{v} \Big)- \tilde{s}\cdot g_0'  \dfrac{(g_0' \cdot u) (\overline{v}\cdot g_0' ) }{\lVert g_0'\rVert^2}\Big) \, dV ,
\end{aligned}
\end{equation}
where
$$\tilde{s} = \nabla \rho^0 +\dfrac{g_0' (\rho^0)^2}{p^0 \gamma}.$$
\end{lemma}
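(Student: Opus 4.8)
The plan is to establish \eqref{VolumeIntFluid} as a pointwise identity between the two integrands on $\Omega^F$: no further integration by parts is needed here, since all boundary contributions have already been peeled off above. Throughout $\Omega^F$ I would use two facts: $\sigma_N = -p^0$ there, and the hydrostatic equilibrium \eqref{Equi2}, $\nabla p^0 = -\rho^0 g_0'$, which together give $\nabla\sigma_N = \rho^0 g_0'$ on $\Omega^F$. Consequently $\nabla\sigma_N - \rho^0 g_0' = 0$, so the term $\mathfrak{S}\{(\nabla\sigma_N - \rho^0 g_0')\cdot\nabla u\cdot\overline{v}\}$ drops out, while $\nabla\sigma_N + \rho^0 g_0' = 2\rho^0 g_0'$, so $\mathfrak{S}\{-(\nabla\sigma_N+\rho^0 g_0')\cdot u\,(\nabla\cdot\overline{v})\} = -\rho^0(g_0'\cdot u)(\nabla\cdot\overline{v}) - \rho^0(g_0'\cdot\overline{v})(\nabla\cdot u)$.

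Next I would handle the second-order part. On $\Omega^F$, combining \eqref{ModStiffTensor}, \eqref{staticStress} and \eqref{PerFluid} (equivalently, using \eqref{TPK1}) gives
\[
(\Lambda^{T^0}:\nabla u):\nabla\overline{v} = p^0(\gamma-1)(\nabla\cdot u)(\nabla\cdot\overline{v}) + p^0\,\nabla u:\nabla\overline{v}^T ,
\]
and adding $\sigma_N\,\nabla u:\nabla\overline{v}^T - \sigma_N(\nabla\cdot u)(\nabla\cdot\overline{v}) = -p^0\,\nabla u:\nabla\overline{v}^T + p^0(\nabla\cdot u)(\nabla\cdot\overline{v})$ cancels the shear term and leaves exactly $p^0\gamma(\nabla\cdot u)(\nabla\cdot\overline{v})$; this cancellation is the entire point of the $\sigma_N$-modification. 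Combining with the previous paragraph and with $-\mathfrak{S}\{(g_0'\cdot u)(\overline{v}\cdot\nabla\rho^0)\} = -\tfrac12(g_0'\cdot u)(\overline{v}\cdot\nabla\rho^0) - \tfrac12(g_0'\cdot\overline{v})(u\cdot\nabla\rho^0)$, the left-hand integrand of \eqref{VolumeIntFluid} reduces to
\[
p^0\gamma(\nabla\cdot u)(\nabla\cdot\overline{v}) - \rho^0(g_0'\cdot u)(\nabla\cdot\overline{v}) - \rho^0(g_0'\cdot\overline{v})(\nabla\cdot u) - \tfrac12(g_0'\cdot u)(\overline{v}\cdot\nabla\rho^0) - \tfrac12(g_0'\cdot\overline{v})(u\cdot\nabla\rho^0).
\]

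For the right-hand side I would first record $\nabla\cdot(\rho^0 u) - \tilde{s}\cdot u = \rho^0(\nabla\cdot u) - \tfrac{(\rho^0)^2}{p^0\gamma}(g_0'\cdot u)$, immediate from $\nabla\cdot(\rho^0 u) = \rho^0(\nabla\cdot u) + u\cdot\nabla\rho^0$ and the definition of $\tilde{s}$, and likewise with $u$ replaced by $\overline{v}$. Substituting these into $\tfrac{p^0\gamma}{(\rho^0)^2}(\cdots)(\cdots)$ and expanding produces $p^0\gamma(\nabla\cdot u)(\nabla\cdot\overline{v})$, the cross terms $-\rho^0(g_0'\cdot u)(\nabla\cdot\overline{v}) - \rho^0(g_0'\cdot\overline{v})(\nabla\cdot u)$, and a term $+\tfrac{(\rho^0)^2}{p^0\gamma}(g_0'\cdot u)(g_0'\cdot\overline{v})$; meanwhile $\tilde{s}\cdot g_0' = \nabla\rho^0\cdot g_0' + \tfrac{(\rho^0)^2}{p^0\gamma}\lVert g_0'\rVert^2$, so the $\tilde{s}\cdot g_0'$ term of the right-hand integrand equals $-\tfrac{(\rho^0)^2}{p^0\gamma}(g_0'\cdot u)(g_0'\cdot\overline{v}) - (\nabla\rho^0\cdot g_0')\tfrac{(g_0'\cdot u)(g_0'\cdot\overline{v})}{\lVert g_0'\rVert^2}$. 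The $\tfrac{(\rho^0)^2}{p^0\gamma}$-terms cancel, leaving the right-hand integrand as
\[
p^0\gamma(\nabla\cdot u)(\nabla\cdot\overline{v}) - \rho^0(g_0'\cdot u)(\nabla\cdot\overline{v}) - \rho^0(g_0'\cdot\overline{v})(\nabla\cdot u) - (\nabla\rho^0\cdot g_0')\frac{(g_0'\cdot u)(g_0'\cdot\overline{v})}{\lVert g_0'\rVert^2}.
\]

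Finally, the difference of the two reduced integrands is $(\nabla\rho^0\cdot g_0')\tfrac{(g_0'\cdot u)(g_0'\cdot\overline{v})}{\lVert g_0'\rVert^2} - \tfrac12(g_0'\cdot u)(\overline{v}\cdot\nabla\rho^0) - \tfrac12(g_0'\cdot\overline{v})(u\cdot\nabla\rho^0)$, and the last step is to invoke Lemma~\ref{parallellemma}: on $\Omega^F$ one has $\nabla\rho^0 \big|\big| g_0'$, hence $\nabla\rho^0 = \tfrac{\nabla\rho^0\cdot g_0'}{\lVert g_0'\rVert^2}\,g_0'$ exactly, so $u\cdot\nabla\rho^0 = \tfrac{\nabla\rho^0\cdot g_0'}{\lVert g_0'\rVert^2}(g_0'\cdot u)$ and similarly for $\overline{v}$, whereupon the difference vanishes identically. (The divisions are legitimate since $\lVert g_0'\rVert$ does not vanish on $\Omega^F$, which is already implicit in the definition \eqref{a2} and follows from $\nabla\rho^0$ being non-vanishing and parallel to $g_0'$.) I expect the only genuine difficulty to be the bookkeeping --- propagating the symmetrizations $\mathfrak{S}$ through and tracking signs correctly; the one structural ingredient is the parallelism $\nabla\rho^0 \big|\big| g_0'$ from Lemma~\ref{parallellemma}, which is precisely what forces the residual $\nabla\rho^0$-terms to cancel, and everything else is routine algebra.
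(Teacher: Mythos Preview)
Your argument is correct and follows essentially the same route as the paper: reduce the lower-order terms via $\nabla\sigma_N=\rho^0 g_0'$ on $\Omega^F$, collapse the second-order part to $p^0\gamma(\nabla\cdot u)(\nabla\cdot\overline{v})$, and then match with the right-hand side using the identity $\nabla\cdot(\rho^0 u)-\tilde{s}\cdot u=\rho^0(\nabla\cdot u)-\tfrac{(\rho^0)^2}{p^0\gamma}(g_0'\cdot u)$ together with the parallelism $\nabla\rho^0\parallel g_0'$ from \eqref{EquiFluid}. The only organizational difference is that the paper expands the left-hand side forward toward the right-hand form, while you expand the right-hand side and compare; the algebra and the key ingredient (Lemma~\ref{parallellemma}) are identical.
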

\medskip

\begin{proof}
Since  $\sigma_N = - p^0$ in $\Omega^F$ we have as above $\nabla \sigma_N = \rho^0 g_0' $. Thus
\begin{align*}
\int_{\Omega^F}\Big(-\mathfrak{S} \big\{  (g_0' \cdot u) ( \overline{v}\cdot \nabla \rho^0)  \big\} 
& + \mathfrak{S} \big\{  - ( \nabla \sigma_N  +\rho^0 g_0' ) \cdot u (\nabla \cdot \overline{v}  )\big\}
+ \mathfrak{S} \big\{     (  \nabla \sigma_N -  \rho^0 g_0' )\cdot \nabla u \cdot \overline{v} \big\} \Big)\, dV\\
=& -\int_{\Omega^F}\Big(\mathfrak{S} \big\{  (g_0' \cdot u) ( \overline{v}\cdot \nabla \rho^0)  \big\} 
+ \mathfrak{S} \big\{  2\rho^0 g_0'  \cdot u (\nabla \cdot \overline{v}  )\big\}\Big)\, dV.
\end{align*}
Now use the product rule to rewrite
\begin{align*}
\mathfrak{S} \big\{  (g_0' \cdot u) ( \overline{v}\cdot \nabla \rho^0)  \big\} 
& + \mathfrak{S} \big\{   2\rho^0 g_0'  \cdot u (\nabla \cdot \overline{v}  )\big\}\\
=&\ \mathfrak{S} \big\{  (g_0' \cdot u) ( \overline{v}\cdot \nabla \rho^0)  \big\} 
+ \mathfrak{S} \big\{   2 g_0'  \cdot u (\nabla \cdot (\rho^0 \overline{v})   )
- 2 (g_0' \cdot u )( \nabla \rho^0 \cdot \overline{v}) \}\\
=& - \mathfrak{S} \big\{  (g_0' \cdot u) ( \overline{v}\cdot \nabla \rho^0)  \big\} 
 + \mathfrak{S} \big\{   2 g_0'  \cdot u (\nabla \cdot (\rho^0 \overline{v})   )\big\}.
\end{align*}
Using \eqref{EquiFluid} to replace $\nabla \rho^0$ we finally see that the second line in \eqref{VolumeIntFluid} is equivalent to
\[
\int_{\Omega^F}  \Big( \frac{(g_0' \cdot u) ( \overline{v}\cdot g_0')}{\|g_0'\|^2} (\nabla \rho^0 \cdot g_0')  - 2 \mathfrak{S}\{( g_0'\cdot u) (\nabla \cdot ( \rho^0 \overline{v}))\}\Big) \, dV.
\]
On the other hand for the first line in \eqref{VolumeIntFluid} we have by direct computation making use of \eqref{staticStress}, \eqref{ModStiffTensor}, and \eqref{PerFluid}
$$ ( \Lambda^{T^0} : \nabla u ) : \nabla \overline{v} +  \sigma_N  \nabla u : \nabla \overline{v}^T -  \sigma_N (\nabla\cdot u)( \nabla \cdot \overline{v}) 
 = p^0 \gamma ( \nabla\cdot u ) (\nabla \cdot \overline{v})$$
and by the product rule
\[
\begin{split}
p^0 \gamma ( \nabla\cdot u ) (\nabla \cdot \overline{v}) & =  \dfrac{p^0 \gamma}{(\rho^0)^2} \Big(\nabla\cdot (\rho^0  u ) 
 - \nabla \rho^0 \cdot u \Big)\Big(\nabla\cdot (\rho^0  \overline{v} ) 
 - \nabla \rho^0 \cdot \overline{v} \Big) \\
 & =   \dfrac{p^0 \gamma}{(\rho^0)^2} \Big(\nabla\cdot (\rho^0  u ) 
 - \tilde{s} \cdot u + \dfrac{g_0' (\rho^0)^2}{p^0 \gamma} \cdot u \Big)\Big(\nabla\cdot (\rho^0  \overline{v} ) 
 - \tilde{s} \cdot \overline{v}+ \dfrac{g_0' (\rho^0)^2}{p^0 \gamma} \cdot \overline{v} \Big)\\
 & =  \dfrac{p^0 \gamma}{(\rho^0)^2} \Big(\nabla\cdot (\rho^0  u ) 
 - \tilde{s} \cdot u \Big)\Big(\nabla\cdot (\rho^0  \overline{v} ) 
 - \tilde{s} \cdot \overline{v} \Big) + \frac{(\rho^0)^2}{p^0 \gamma} (g_0' \cdot u) (g_0' \cdot \overline{v})\\
 &\hskip.5in  - 2 \mathfrak{S}\{(\tilde{s} \cdot u) (g_0' \cdot \overline{v})\}+ 2 \mathfrak{S}\{(g_0' \cdot \overline{u}) \nabla \cdot(\rho^0 \overline{v})\}.
\end{split}
\]
Now note that using again \eqref{EquiFluid} we have
\[
\tilde{s} \cdot u = \left ( \frac{\nabla \rho^0 \cdot g_0'}{\| g_0'\|^2} + \frac{(\rho^0)^2}{p^0 \gamma} \right ) (g_0' \cdot u) = \frac{\tilde{s} \cdot g_0'}{\|g_0'\|^2} (g_0' \cdot u)
\]
and so the previous calculation gives
\[
\begin{split}
p^0 \gamma ( \nabla\cdot u ) (\nabla \cdot \overline{v}) & =  \dfrac{p^0 \gamma}{(\rho^0)^2} \Big(\nabla\cdot (\rho^0  u ) 
 - \tilde{s} \cdot u \Big)\Big(\nabla\cdot (\rho^0  \overline{v} ) 
 - \tilde{s} \cdot \overline{v} \Big) + \frac{(\rho^0)^2}{p^0 \gamma} (g_0' \cdot u) (g_0' \cdot \overline{v})\\
 &\hskip.5in  - 2 \frac{\tilde{s} \cdot g_0'}{\| g_0'\|^2}(g_0' \cdot u)(g_0' \cdot \overline{v}) + 2 \mathfrak{S}\{(g_0' \cdot \overline{u}) \nabla \cdot(\rho^0 \overline{v})\}.
 \end{split}
\]
Combining the previous formulae finally proves \eqref{VolumeIntFluid}.
\end{proof}
\medskip

Lemma~\ref{loVt} completes the proof of the equivalence of $a_1(u,v)$
and $a_2(u,v)$ in the case when $[u\cdot \nu]_-^+ = [v \cdot \nu]_-^+
=0$ along $\Sigma^{F}$ and $[u]_-^+ = [v]_-^+ = 0$ along $\Sigma^{SS}$. If we additionally assume that $u$
and $T_0$ satisfy all of the boundary conditions in table~\ref{unT}
then Lemma~\ref{aOriga1lem} implies $a_{\text{original}}(u,v) = a_1(u,v)$ and
so the proof is complete.
\end{proof}
\medskip

%\begin{definition}
%Denote by $A_i$ the operator associated to the sequilinear form $a_i$ 
%\end{definition}

%need to see if 
%\begin{itemize}
%\item what is the $D(A)$ and is it dense?
%\item  we have 
%$$ a( u, u ) + \lambda (u,u) \geq c \lVert u \rVert_E $$
%Does this imply $(\lambda - A)^{-1}$ exists as a bounded operator from $E\rightarrow D(A)$?
%
%\item third condition????
%\end{itemize}

\medskip

%\textbf{[perhaps we should add the definition of the
%    Dirichlet-to-Neumann map in this context; this would establish a
%    connection with the work of Capdeville at IPGP]}
%

\section{The variational triple $\left( a_2,E, L^2(\tilde{X}, \rho^0 \,dx) \right)$} \label{functsec}

As mentioned in the introduction, using the idea of \cite{Valette},
one works with the sesquilinear form $(a_2, E, L^2(\tilde{X},
\rho^0 \,d V))$ whose definition is given in \eqref{a2} and
\eqref{DefE}.  In this section we study the properties of the
sesquilinear form $a_2$ in terms of its domain, continuity and
coercitivity.  We will show in Lemma \ref{Propa3} that $a_2$ is
Hermitian, closed, and coercive in Hilbert space $E$ relative to
$L^2(\tilde{X}, \rho^0 \,d V)$, with $E$ dense in $L^2(\tilde{X},
\rho^0 dV)$. By functional analysis, see for e.g. \cite[Section
  VI.3.2.5]{DautrayLionsV2}, this implies that $a_2$ corresponds to a
self-adjoint unbounded operator $(A_2, D(A_2))$ in $L^2(\tilde{X},
\rho^0 dV)$ with $D(A_2)$ dense in $E$ and $L^2(\tilde{X}, \rho^0 \,d
V)$.

\subsection{ Domain of $a_2$}\label{DoSes}

We initially defined $a_2$ in \eqref{a2} for smooth functions, but it can also be extended to $u$ and $v$ with reduced regularity, provided the volume and surface integrals still make sense. For the volume integral in the solid region $\Omega^S$  in \eqref{a2} to make sense, we need that, $u,v,\nabla u, \nabla v \in L^2(\Omega^S)$, that is, $u,v\in H^1(\Omega^S)$. This implies that we can take the trace of $u$ and $v$ from both sides of $\Sigma^{SS}$ by the classical trace theorem; thus this requirement also takes care of the validity of the surface integral along $\Sigma^{SS}$. For the volume integral in the fluid region $\Omega^F$ in \eqref{a2} to make sense, we need $u,v,\nabla \cdot u,\nabla \cdot v \in L^2(\Omega^F)$, the set of such functions is defined below as $H(\Div,\Omega^F)$. For such a function $u$, there is a generalized definition of the trace of $u\cdot \nu$ to the boundary, which lies in $H^{-1/2}(\partial \Omega^F)$, given by Lemma~\ref{tracelem}. This is not sufficient for the integrals over $\Sigma^{FF}$ and $\Sigma^{FS}$ in \eqref{a2} to make sense; we must also require in addition that $u\cdot \nu |_{\Sigma^{FF}}, \overline{v}\cdot \nu| _{\Sigma^{FF}}\in L^2$, the set of such functions are $H\left(\Div, \,\Omega^F,\, L^2\left(\partial \Omega^F \right)\right)$. These above observations give the motivation for the definition the Hilbert space $E$ defined in \eqref{DefE} and its role as the domain of $a_2$. 

We now state more concretely the definitions of the above mentioned function spaces. We have defined in Section 2 the Hilbert space $ L^2(\tilde{X} , \rho^0 dx) $
\begin{align*}
 L^2(\tilde{X} , \rho^0 dx) := \{  u : \int_{\tilde{X}} \lvert u\rvert^2 \rho^0 \, dV  < \infty  \} ;\\
 \langle u, v\rangle_{L^2(\tilde{X} , \rho^0 dx) }:= \int_{\tilde{X}} u \ \overline{v}\, \rho^0 \, d V .
 \end{align*}
For $V$ a bounded domain with Lipschitz boundary we also denote by $L^2(V) $ the usual unweighted space $L^2(V, dx)$ and make the following definitions.
\medskip

\begin{definition}
Denote by $\nu$ the outward point unit normal on $\partial V$.

\begin{equation}\label{SSp}
\begin{aligned}
 H(\Div, V) := \big\{ u \in L^2(V) :  \nabla \cdot u \in L^2(V)\big\}; \\ 
  \langle u, v\rangle _{H(\Div, V)} := \langle u, v\rangle _{L^2(V)} + \langle  \nabla \cdot u, \nabla \cdot v \big\rangle_{L^2(V)}.
  \end{aligned}
  \end{equation}
  \begin{equation}\label{FSp}
  \begin{aligned}
  H\big( \Div, V, L^2(\partial V)\big)
  = \big\{u \in L^2(V) :  \nabla \cdot u \in L^2(V) , \ \left . u \right |_{\partial V} \cdot \nu \in L^2(\partial V) \big\}; \\ 
(u, v)_{H( \Div, V, L^2(\partial V))}
 := \langle u, v\rangle _{L^2(V)} + \langle \nabla \cdot u, \nabla \cdot v \rangle_{L^2(V)}
 + \langle \left . u \right |_{\partial V} \cdot \nu,\left . v \right |_{\partial V}\cdot \nu \rangle_{L^2(\partial V)}  .
\end{aligned}
 \end{equation}
 
  \end{definition}
  \medskip
  
 \begin{lemma}\cite[Proposition 2 and Proposition 3, p 68]{Valette}
 \begin{itemize}
 \item 
 $ \Big( H(\Div, V), \langle \cdot, \cdot\rangle_{H(\Div, V)}\Big)$ is a Hilbert space with 
 $\mathcal{C}^{\infty} (\overline{V})$ dense in $ H(\Div, V)$.
 \item  $\Big( H\big( \Div, V, L^2(\partial V)\big), \langle \cdot, \cdot\rangle_{ H\big( \Div, V, L^2(\partial V)\big)}\Big)$ is a Hilbert space with $\mathcal{C}^{\infty} (\overline{V})$ dense in\linebreak $ H\big( \Div, V, L^2(\partial V)\big)$.
 \end{itemize}
 \end{lemma}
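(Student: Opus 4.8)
The plan is to separate each assertion into two parts: (i) the stated pairing is an inner product under which the space is complete, and (ii) $\mathcal{C}^\infty(\overline V)$ is dense. Part (i) is a routine functional-analytic argument; part (ii) for $H(\Div,V,L^2(\partial V))$ is the real content, and coincides essentially with the cited result of Valette.

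For part (i): in each case the pairing is visibly sesquilinear, Hermitian and positive definite, so only completeness needs checking. If $(u_n)$ is Cauchy in $H(\Div,V)$, then $u_n\to u$ in $L^2(V)$ and $\nabla\cdot u_n\to w$ in $L^2(V)$ for some $u,w$; testing against $\varphi\in\mathcal{C}^\infty_0(V)$ shows $w=\nabla\cdot u$, since the distributional divergence is a closed operator, so $u\in H(\Div,V)$ and $u_n\to u$ in $H(\Div,V)$. For $H(\Div,V,L^2(\partial V))$, a Cauchy sequence $(u_n)$ is in particular Cauchy in $H(\Div,V)$, hence converges there to some $u$, and $u_n|_{\partial V}\cdot\nu\to g$ in $L^2(\partial V)$ for some $g$. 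Using that the generalized normal trace $\gamma_\nu\colon H(\Div,V)\to H^{-1/2}(\partial V)$ is bounded (Lemma~\ref{tracelem}), $\gamma_\nu u_n\to\gamma_\nu u$ in $H^{-1/2}(\partial V)$; comparing with the convergence $u_n|_{\partial V}\cdot\nu\to g$ in $L^2(\partial V)\hookrightarrow H^{-1/2}(\partial V)$ forces $\gamma_\nu u=g\in L^2(\partial V)$, so $u\in H(\Div,V,L^2(\partial V))$ and $u_n\to u$ in its norm. This establishes that both are Hilbert spaces.

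For density of $\mathcal{C}^\infty(\overline V)$ in $H(\Div,V)$ I would use the classical scheme. A finite partition of unity subordinate to a cover of $\overline V$ by one interior open set and boundary coordinate patches splits $u$ into a piece compactly supported in $V$, to which mollification applies directly, plus pieces supported in the patches. Since $\partial V$ is Lipschitz it satisfies a uniform cone condition, so in each patch one translates the corresponding piece a distance $t$ along the (fixed) cone direction, pushing its support into the open set $V$, and then mollifies at scale $\epsilon\ll t$ to obtain a function in $\mathcal{C}^\infty(\overline V)$. Both translation and mollification converge to the identity in the $H(\Div,V)$-norm, so summing over the partition gives the desired smooth approximants.

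The main obstacle is density in $H(\Div,V,L^2(\partial V))$, because the translate-and-mollify operator $T_{t,\epsilon}$ above converges to the identity in $H(\Div,V)$ but only yields convergence of normal traces in $H^{-1/2}(\partial V)$, not in $L^2(\partial V)$. The remedy is to compare $\gamma_\nu(T_{t,\epsilon}u)$ with the analogous translate-and-mollify of the boundary datum, $T'_{t,\epsilon}(u\cdot\nu)$, performed on $\partial V$: the latter converges to $u\cdot\nu$ in $L^2(\partial V)$ precisely because $u\cdot\nu\in L^2(\partial V)$ by hypothesis, so what is left is the commutator $\gamma_\nu T_{t,\epsilon}u-T'_{t,\epsilon}\gamma_\nu u$. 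Working in the straightened boundary coordinates of a patch, where the normal component obeys $\partial_{x_n}u_n=\nabla\cdot u-\nabla'\cdot u'$, one estimates this commutator in $L^2$ of the (flattened) boundary by the $H(\Div)$-norm of the localized $u$ times a factor tending to $0$ as $t,\epsilon\to0$, so it also vanishes in the limit. One must take care that bi-Lipschitz flattening does not preserve $\mathcal{C}^\infty$; this is handled either by running the whole argument in the original coordinates via the cone direction, or by first approximating by functions smooth in the straightened variables (which are Lipschitz, hence still in $H(\Div,V)$) and then carrying out a secondary approximation. Reassembling over the partition of unity produces $u_k\in\mathcal{C}^\infty(\overline V)$ with $u_k\to u$ in $H(\Div,V)$ and $u_k|_{\partial V}\cdot\nu\to u|_{\partial V}\cdot\nu$ in $L^2(\partial V)$, which is exactly the assertion; the detailed commutator bound is the substance of \cite[Proposition 3]{Valette}.
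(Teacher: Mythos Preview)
The paper itself does not prove this lemma; it is stated with a bare citation to \cite[Propositions~2 and~3]{Valette}, and the subsequent Lemma~\ref{tracelem} on the normal trace is likewise only stated. So there is no proof in the paper to compare your attempt against.

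Your outline is sound and matches the standard route that the cited reference presumably follows. The completeness arguments are correct (and your use of the continuity of the $H^{-1/2}$ trace to identify the limit of the boundary data is the right device; note only that you are invoking a statement equivalent to Lemma~\ref{tracelem}, which in the paper's ordering comes \emph{after} the present lemma, so logically you should regard it as a separate standard fact rather than cite it forward). The density argument for $H(\Div,V)$ via localisation, cone-direction translation, and mollification is the classical one.

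For density in $H(\Div,V,L^2(\partial V))$ your plan is also correct, but the one sentence ``one estimates this commutator in $L^2$ of the (flattened) boundary by the $H(\Div)$-norm of the localized $u$ times a factor tending to $0$'' hides the only nontrivial step. The identity $\partial_{x_n}u_n=\nabla\!\cdot u-\nabla'\!\cdot u'$ introduces a term $\nabla'\!\cdot u'$ that is \emph{not} controlled by the $H(\Div)$-norm; the point is that after convolution one may shift $\nabla'$ onto the mollifier, picking up a factor $\varepsilon^{-1}$, while the normal integration contributes a factor $t^{1/2}$. Choosing the translation step $t$ small relative to $\varepsilon$ (e.g.\ $t=\varepsilon^3$) then forces the commutator to vanish in $L^2(\partial V)$. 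You clearly know this is the crux (you defer to \cite{Valette} for it), but as written the sentence could be read as claiming a direct $H(\Div)$ bound on $\nabla'\!\cdot u'$, which would be false. Spelling out the integration-by-parts-onto-the-mollifier mechanism and the resulting $t^{1/2}\varepsilon^{-1}$ balance would make the argument self-contained.
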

 \medskip
 
 \begin{lemma}\label{tracelem}
 The trace application  $u \in \mathcal{C}^{\infty} (\overline{V}) \mapsto (u\cdot \nu|_{\partial V})$ extends by continuity to a continuous linear map denoted by $\tr$,
 $$  \tr: H(\Div, V ) \rightarrow H^{-1/2}( \partial V) ,$$
defined for $u \in H(\Div, V )$ and $ \phi \in H^{1/2}(\partial V)$ as
\begin{equation}\label{generalizedG}
\langle \tr_u, \phi\rangle_{H^{-1/2}(\partial V), H^{1/2}(\partial V)} := \left( u, \nabla   R(\phi) \right)_{L^2(V)} + \left( \nabla \cdot u,  R(\phi)\right)_{L^2(V)}. \end{equation}
Here, $R$ is an extension operator with $R : H^{1/2}(\partial V) \rightarrow H^1(V) $ bounded and $R(\phi)|_{\partial V}= \phi$. 

 If $u\in H(\Div, V, L^2(\partial V)) $ then $\tr_u \in L^2(\partial V)$, and \eqref{generalizedG} becomes
 $$ (\tr_u, \phi)_{L^2(\partial V)} = \left( u, \nabla   R(\phi) \right)_{L^2(V)} + \left(\nabla \cdot u,  R(\phi)\right)_{L^2(V)} .$$
 \end{lemma}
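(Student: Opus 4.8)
The plan is to run the classical normal-trace argument for $H(\Div,V)$, using the two density statements quoted just above as black boxes. Everything rests on the divergence-theorem identity: for $u\in\mathcal{C}^\infty(\overline{V})$ and $w\in H^1(V)$,
\[
(u\cdot\nu,\,w)_{L^2(\partial V)} = (u,\nabla w)_{L^2(V)} + (\nabla\cdot u,\,w)_{L^2(V)},
\]
which is the divergence theorem applied to the field $\overline{w}\,u$ — valid first for $w\in\mathcal{C}^\infty(\overline{V})$ on the Lipschitz domain $V$, and then for $w\in H^1(V)$ by density together with continuity of the $H^1(V)\to L^2(\partial V)$ trace.

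First I would fix the extension operator $R$ and, for $u\in H(\Div,V)$, view the right-hand side of \eqref{generalizedG} as a linear functional of $\phi\in H^{1/2}(\partial V)$. Cauchy--Schwarz and boundedness of $R$ give
\[
\bigl|(u,\nabla R(\phi))_{L^2(V)} + (\nabla\cdot u,\,R(\phi))_{L^2(V)}\bigr| \le \|u\|_{H(\Div,V)}\,\|R(\phi)\|_{H^1(V)} \le C_R\,\|u\|_{H(\Div,V)}\,\|\phi\|_{H^{1/2}(\partial V)},
\]
so this functional defines an element $\tr_u\in H^{-1/2}(\partial V)$ with $\|\tr_u\|_{H^{-1/2}(\partial V)}\le C_R\|u\|_{H(\Div,V)}$. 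Linearity in $u$ is immediate, hence $\tr:H(\Div,V)\to H^{-1/2}(\partial V)$ is bounded linear.

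Next I would check that $\tr$ does extend the classical normal trace: for $u\in\mathcal{C}^\infty(\overline{V})$, plugging $w=R(\phi)$ (so that $w|_{\partial V}=\phi$) into the identity above shows $\langle\tr_u,\phi\rangle_{H^{-1/2},H^{1/2}} = (u\cdot\nu,\phi)_{L^2(\partial V)}$ for all $\phi$, i.e.\ $\tr_u$ is the functional induced by $u\cdot\nu|_{\partial V}\in L^2(\partial V)\hookrightarrow H^{-1/2}(\partial V)$. Two consequences follow: the formula \eqref{generalizedG} is independent of the chosen $R$ on the dense subspace $\mathcal{C}^\infty(\overline{V})$, hence — by boundedness of $\tr$ in $u$ and density of $\mathcal{C}^\infty(\overline{V})$ in $H(\Div,V)$ — independent of $R$ on all of $H(\Div,V)$; and $\tr$ is precisely the continuous extension of $u\mapsto u\cdot\nu|_{\partial V}$ from $\mathcal{C}^\infty(\overline{V})$, so \eqref{generalizedG} holds for every $u\in H(\Div,V)$. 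This independence-of-extension point is the only one needing genuine care.

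For the last assertion I would use the second density lemma: given $u\in H(\Div,V,L^2(\partial V))$, pick $u_n\in\mathcal{C}^\infty(\overline{V})$ with $u_n\to u$ in that space. Then $u_n\to u$ in $H(\Div,V)$, so $\tr_{u_n}\to\tr_u$ in $H^{-1/2}(\partial V)$; simultaneously $\tr_{u_n}=u_n\cdot\nu|_{\partial V}\to u|_{\partial V}\cdot\nu$ in $L^2(\partial V)$ by the definition of the $H(\Div,V,L^2(\partial V))$-norm. Since $L^2(\partial V)\hookrightarrow H^{-1/2}(\partial V)$ continuously, the two limits agree, giving $\tr_u=u|_{\partial V}\cdot\nu\in L^2(\partial V)$; the duality pairing in \eqref{generalizedG} then collapses to the $L^2(\partial V)$ inner product, which is the stated identity. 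Beyond this, the only real bookkeeping is keeping the complex conjugations consistent between the $H^{-1/2}$--$H^{1/2}$ pairing and the $L^2$ inner products; the genuinely nontrivial input — approximating $H(\Div)$ fields by fields smooth up to the Lipschitz boundary — is already supplied by the cited density results.
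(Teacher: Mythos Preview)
Your proof is correct and follows the standard normal-trace argument for $H(\Div,V)$. The paper itself states this lemma without proof, treating it as a known fact (the preceding density lemma is cited from \cite{Valette}, but this trace lemma is simply asserted). Your argument supplies exactly the details one would expect: the divergence-theorem identity on smooth fields, the boundedness estimate giving $\tr_u\in H^{-1/2}(\partial V)$, independence of the extension operator $R$ via density, and the identification $\tr_u=u|_{\partial V}\cdot\nu$ on $H(\Div,V,L^2(\partial V))$ using the second density statement. There is nothing to compare against in the paper beyond noting that your proof is the standard one and is sound.
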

 \medskip
 
%\HOX{I went through the proof of Theorem 5.7 fairly carefully, and I believe that there were two small issues here. I have changed the boundary term in the inner product from being over $\Sigma^{FF} \cup \Sigma^{FS} \cup (\partial \tilde{X})_F$ to just $\Sigma^{FF} \cup (\partial \tilde{X})_F$, and added the requirement that $[u \cdot \nu]_-^+ = 0$ on $\Sigma^{FF}$. It doesn't seem like the fluid-fluid term in $a_2$ even makes sense unless you add the latter requirement. Can you check these?}
\begin{definition}
\begin{equation}\label{DefE}
\begin{aligned}
E = \left\{ u \in L^2(\tilde{X},\rho^0 \, dx) : \begin{cases}
u|_{\Omega^S} \in H^1(\Omega^S) , \\ 
u|_{\Omega^F} \in H\big( \Div,\, \Omega^F,\, L^2(\partial \Omega^F) \big), \\
 [u\cdot \nu]^+_- = 0  \ \ \text{along}\ \Sigma^{FS} \cup \Sigma^{FF}  \end{cases}\right\};\\
(u, v) _E := \left( u|_{\Omega^S} , v|_{\Omega^S}\right) _{H^1(\Omega^S)} 
 + \left( u|_{\Omega^F}, v|_{\Omega^F}\right)_{H\left( \Div,\, \Omega^F, \, L^2\left(\Sigma^{FF} \cup (\partial \tilde{X})_F \right)\right)}.
 \end{aligned}
\end{equation}
\end{definition}
\medskip

\begin{proposition}\label{PropE}\cite[Proposition 14, p.104]{Valette}
\begin{itemize}
\item  $E$ equipped with inner product $( \cdot, \cdot)_E$ defined by \eqref{DefE}
is a separable Hilbert space.
\item The injective inclusion of $E$ into $L^2(\tilde{X}, \rho^0 \,dx)$ is continuous.
\item $E$ is dense in $L^2(\tilde{X}, \rho^0 \,dx)$.
\end{itemize}
\end{proposition}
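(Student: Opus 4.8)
The plan is to realize $E$ isometrically as a closed subspace of a concrete product Hilbert space, from which all three assertions follow. Set
\[
\mathcal{F} := H^1(\Omega^S) \oplus H\big(\Div,\Omega^F,L^2(\Sigma^{FF}\cup(\partial\tilde{X})_F)\big)
\]
and define $J : E \to \mathcal{F}$ by $Ju = (u|_{\Omega^S}, u|_{\Omega^F})$; this is well defined since $H(\Div,\Omega^F,L^2(\partial\Omega^F))$ is contained in $H(\Div,\Omega^F,L^2(\Sigma^{FF}\cup(\partial\tilde{X})_F))$, and by the very definition \eqref{DefE} of $(\cdot,\cdot)_E$ it is a linear isometry onto its image. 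Since $\mathcal{F}$ is a separable Hilbert space — each factor embeds isometrically, via $u\mapsto(u,\nabla u)$ resp.\ $u\mapsto(u,\nabla\cdot u,u\cdot\nu)$, into a finite product of $L^2$ spaces over $\Omega^S$, $\Omega^F$, and their boundaries, which are separable — it suffices, for the first assertion, to show $J(E)$ is closed in $\mathcal{F}$; then $E$ inherits completeness and separability through $J$.

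To show closedness I would first identify $J(E)$. For $u\in E$ the classical trace theorem gives $u|_{\Omega^S}\cdot\nu|_{\Sigma^{FS}}\in H^{1/2}(\Sigma^{FS})\subset L^2(\Sigma^{FS})$, while Lemma~\ref{tracelem} gives $u|_{\Omega^F}\cdot\nu\in H^{-1/2}(\partial\Omega^F)$; since $[u\cdot\nu]^+_- = 0$ along $\Sigma^{FS}$ these two traces agree there, so $u|_{\Omega^F}\cdot\nu|_{\Sigma^{FS}}\in L^2(\Sigma^{FS})$ is automatic, and thus the requirement $u|_{\Omega^F}\in H(\Div,\Omega^F,L^2(\partial\Omega^F))$ in \eqref{DefE} follows from $Ju\in\mathcal{F}$ together with the interface conditions. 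Hence
\[
J(E) = \big\{(w^S,w^F)\in\mathcal{F} : [w\cdot\nu]^+_- = 0 \ \text{along}\ \Sigma^{FS}\cup\Sigma^{FF}\big\},
\]
with the jump read in $H^{-1/2}$ on $\Sigma^{FS}$ and in $L^2$ on $\Sigma^{FF}$. If $(w^S_n,w^F_n)\to(w^S,w^F)$ in $\mathcal{F}$, the trace theorem gives $w^S_n\cdot\nu\to w^S\cdot\nu$ in $H^{1/2}(\Sigma^{FS})$, Lemma~\ref{tracelem} gives $w^F_n\cdot\nu\to w^F\cdot\nu$ in $H^{-1/2}(\partial\Omega^F)$, and the $L^2(\Sigma^{FF})$ term of the $\mathcal{F}$-norm upgrades this to $L^2$-convergence of the fluid-side traces on $\Sigma^{FF}$; so the identities $[w_n\cdot\nu]^+_- = 0$ pass to the limit, and the function $w$ obtained by gluing $w^S$ and $w^F$ lies in $E$ (it is in $L^2(\tilde{X},\rho^0\,dx)$ because $\rho^0$ is bounded and $\tilde{X} = \Omega^S\cup\Omega^F$ up to a null set). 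Thus $J(E)$ is closed, proving the first assertion.

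For the second assertion, using $\rho^0\in L^\infty(\tilde{X})$ and again that $\tilde{X} = \Omega^S\cup\Omega^F$ up to measure zero,
\[
\|u\|_{L^2(\tilde{X},\rho^0\,dx)}^2 \le \|\rho^0\|_{L^\infty}\big(\|u\|_{L^2(\Omega^S)}^2 + \|u\|_{L^2(\Omega^F)}^2\big) \le \|\rho^0\|_{L^\infty}\|u\|_E^2 ,
\]
so the inclusion $E\hookrightarrow L^2(\tilde{X},\rho^0\,dx)$ is bounded; it is injective since $\|u\|_E = 0$ forces $u = 0$ a.e.\ on both $\Omega^S$ and $\Omega^F$.

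For the third assertion it suffices to prove $E$ dense in $L^2(\tilde{X},dx)$, the two norms being equivalent because $\rho^0$ is bounded above and below by positive constants. Every $\varphi\in C_c^\infty(\Omega^S\cup\Omega^F;\mathbb{C}^3)$, extended by zero, lies in $E$: on $\Omega^S$ it is smooth with support compactly contained in $\Omega^S$, hence in $H^1(\Omega^S)$ with vanishing traces on $\partial\Omega^S$; on $\Omega^F$ it is likewise in $H(\Div,\Omega^F,L^2(\partial\Omega^F))$ with vanishing normal trace; and the interface conditions $[\varphi\cdot\nu]^+_- = 0$ on $\Sigma^{FS}\cup\Sigma^{FF}$ hold trivially. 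Since $\Omega^S\cup\Omega^F$ is open and its complement in $\tilde{X}$ is null, $C_c^\infty(\Omega^S\cup\Omega^F;\mathbb{C}^3)$ is dense in $L^2(\tilde{X};\mathbb{C}^3)$, hence so is $E$. I expect the closedness step to be the only genuinely delicate part, the crux being that the $L^2$-regularity of the fluid normal trace on $\Sigma^{FS}$ — which the inner product \eqref{DefE} does not control directly — comes for free by matching with the solid-side $H^{1/2}$ trace.
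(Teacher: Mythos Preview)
The paper does not give its own proof of this proposition: it is quoted verbatim from \cite[Proposition~14, p.~104]{Valette} and no argument is supplied. So there is no ``paper's own proof'' to compare against.

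Your argument is correct and self-contained. The key structural idea---realizing $E$ as a closed subspace of the product $H^1(\Omega^S)\times H(\Div,\Omega^F,L^2(\Sigma^{FF}\cup(\partial\tilde X)_F))$ rather than the larger $H(\Div,\Omega^F,L^2(\partial\Omega^F))$, and recovering the missing $L^2(\Sigma^{FS})$ regularity of the fluid normal trace from the matching condition with the $H^{1/2}$ solid trace---is exactly the right observation, and your closedness argument handles it cleanly. The continuity and density claims are routine and correctly dispatched. This is the natural approach one would expect in Valette's original, and your write-up could stand in for the omitted proof.
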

\medskip

\begin{remark}\label{sepa2}
As a result, see for e.g. \cite{DautrayLionsV2}, we have the setting of a Hilbert triple 
$$ E\hookrightarrow L^2(\tilde{X}, \rho^0 \,dx) \hookrightarrow E'.$$
where each space is continuously, densely and injectively embedded in the next, denoted by $\hookrightarrow$. In addition, by definition, $E$ is a subspace of $L^2(X)$; as a result $E$ is also separable. This property of $E$ will be needed for the Galerkin method in Section \ref{Galerkin1}. 
\end{remark}

\medskip

%\textbf{[we also need a remark on separability, in view of the use of
%    the Galerkin method, below Theorem 6.19, and identify
%    $\mathcal{H}$]}

%\HOX{I have changed the hypotheses of lemma 5.6. Please check}
\begin{lemma}\label{Propa3}
Suppose that $\rho^0$ is in $W^{1, \infty}$ when restricted to $\Omega^F$ and $\Omega^S$ with $\rho^0$ bounded away from zero, $p^0 \in L^{\infty}(\tilde{X})$ with $p_0$ bounded away from zero, $\gamma \in L^{\infty}(\Omega^F)$ with $\gamma$ bounded away from zero, and $g_0' \in L^{\infty}(\tilde{X})$ with $\|g'_0\|$ bounded away from zero. Then the sesquilinear form $a_2$ defined by \eqref{a2}
$$a_2(\cdot,\cdot) : E \times E \rightarrow \mathbb{C} , $$
is bounded, that is,
$$\lvert a_2(u,v)\rvert \leq C \lVert u\rVert_E \lVert v\rVert_E ;\ u, v\in E$$
for some constant $C >0$, and Hermitian. 
\end{lemma}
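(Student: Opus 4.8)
The plan is to treat the boundedness and the Hermitian symmetry of $a_2$ separately, in each case going term by term through the decomposition \eqref{a2}.

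\textbf{Boundedness.} The observation that makes this routine is that, under the stated hypotheses, every coefficient occurring in \eqref{a2} belongs to $L^\infty$. The components of $\Lambda^{T^0}$, together with $\sigma_N$, $\nabla\sigma_N$, $g_0'$ and $\nabla\rho^0$, are bounded (the last because $\rho^0\in W^{1,\infty}$), while the hypotheses that $\rho^0$, $p^0$, $\gamma$ and $\|g_0'\|$ are bounded away from zero are exactly what is needed to conclude that $\tilde s=\nabla\rho^0+g_0'(\rho^0)^2/(p^0\gamma)$, the factor $p^0\gamma/(\rho^0)^2$, and $\tilde s\cdot g_0'/\|g_0'\|^2$ appearing in the $\Omega^F$-integrand are also in $L^\infty$. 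Granting this, I would bound: (i) the $\Omega^S$ volume terms by $C\|u\|_{H^1(\Omega^S)}\|v\|_{H^1(\Omega^S)}$ via Cauchy--Schwarz after extracting the $L^\infty$ coefficients; (ii) the $\Omega^F$ volume term by $C\|u\|_{H(\Div,\Omega^F)}\|v\|_{H(\Div,\Omega^F)}$, using that $\nabla\cdot(\rho^0 u)=\rho^0\nabla\cdot u+\nabla\rho^0\cdot u\in L^2(\Omega^F)$ with norm controlled by $\|u\|_{H(\Div,\Omega^F)}$ since $\rho^0\in W^{1,\infty}(\Omega^F)$; (iii) the surface integrals over $\Sigma^{SS}$, over $\Sigma^{FS}$, and over $(\partial\tilde X)_S$ by the classical trace theorem $H^1(\Omega^S)\to L^2$ (on $\Sigma^{FS}$ all traces may be taken from the solid side, the normal component there agreeing with the fluid side by the constraint $[u\cdot\nu]^+_-=0$ built into $E$); (iv) the integrals over $\Sigma^{FF}$ and over $(\partial\tilde X)_F$, which involve only the normal traces $u\cdot\nu,v\cdot\nu$: on $\Sigma^{FF}$ this is manifest from \eqref{a2}, and on $(\partial\tilde X)_F$ it follows after writing $g_0'\cdot u=(g_0'\cdot\nu)(u\cdot\nu)$, valid because $g_0'$ is normal to $(\partial\tilde X)_F$ (the hydrostatic pressure vanishes there and $\nabla p^0=-\rho^0 g_0'$, as in the argument of Lemma \ref{parallellemma}), and these normal traces are controlled by $\|u\|_E$ by the very definition \eqref{DefE} of the $E$-inner product; (v) the self-gravitation term using that $S$ is a pseudodifferential operator of order $-1$, so $\nabla S$ has order $0$ and is bounded on $L^2(\RR^3)$ (concretely $\|\nabla S(u)\|_{L^2(\RR^3)}\le 4\pi G\|\rho^0 u\|_{L^2(\tilde X)}\le C\|u\|_{L^2(\tilde X,\rho^0\,dx)}\le C\|u\|_E$ by Proposition \ref{PropE}). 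Summing these yields $|a_2(u,v)|\le C\|u\|_E\|v\|_E$.

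\textbf{Hermitian symmetry.} Here I would check $a_2(v,u)=\overline{a_2(u,v)}$ termwise and note that almost everything is automatic. Every term of \eqref{a2} carrying the symmetrization $\mathfrak{S}\{\cdot\}$ is Hermitian for free: since the coefficients are real, $\overline{\mathfrak{S}\{B(u,\overline{v})\}}=\tfrac12\bigl(B(\overline{u},v)+B(v,\overline{u})\bigr)=\mathfrak{S}\{B(v,\overline{u})\}$. The $\Omega^F$ volume term has the shape $\int_{\Omega^F}\bigl(c_1\,P(u)\overline{P(v)}+c_2\,(g_0'\cdot u)\overline{(g_0'\cdot v)}\bigr)\,dV$ with $P(w):=\nabla\cdot(\rho^0 w)-\tilde s\cdot w$ and $c_1,c_2$ real, and the $\Sigma^{FF}$ term and the self-gravitation term are of the same manifestly Hermitian type (for the latter using that $S$ has a real kernel, so $S(\overline{v})=\overline{S(v)}$). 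The only terms demanding a genuine remark are those on the first line of \eqref{a2}: the part $\int_{\Omega^S}\sigma_N\bigl(\nabla u:\nabla\overline{v}^T-(\nabla\cdot u)(\nabla\cdot\overline{v})\bigr)$ is Hermitian by a direct relabeling of indices (using that $\sigma_N$ is real-valued), and $\int_{\Omega^S}(\Lambda^{T^0}:\nabla u):\nabla\overline{v}$ is Hermitian precisely because $\Lambda^{T^0}$ possesses the major symmetry $\Lambda^{T^0}_{ijkl}=\Lambda^{T^0}_{klij}$, which follows from \eqref{ModStiffTensor} together with $\Xi_{ijkl}=\Xi_{klij}$ from \eqref{SymElastic} and $T^0_{ik}=T^0_{ki}$; note this holds even though $\Lambda^{T^0}$ lacks the minor symmetry in its first index pair, which is exactly why $T^{PK1}$ is not symmetric.

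\textbf{Expected obstacle.} I anticipate no deep difficulty, only careful bookkeeping, and the place where that bookkeeping genuinely bites is the boundary integrals: one must make sure that each trace occurring there is precisely the one that $\|\cdot\|_E$ controls. This is where the design of $E$ is used --- the $H^1(\Omega^S)$-component supplies $L^2$ control of full traces on the solid-side faces, the constraint $[u\cdot\nu]^+_-=0$ transports the normal trace across $\Sigma^{FS}$, and the $L^2(\Sigma^{FF}\cup(\partial\tilde X)_F)$-component handles the purely fluid faces --- but the last step rests on the geometric fact that $g_0'$ is normal to $\Sigma^{FF}$ and to $(\partial\tilde X)_F$, so that the integrands there collapse to products of normal components.
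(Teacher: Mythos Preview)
Your proposal is correct and follows the same term-by-term inspection that the paper sketches; the paper's own proof is essentially a two-sentence ``by inspection'' argument, so you have simply filled in all the details (including the trace-theorem invocation on $\Sigma^{FS}$ that the paper singles out, and the fact that $g_0'\parallel\nu$ on $(\partial\tilde X)_F$, which the paper only invokes later in the coercivity proof). The one place where you supply an argument the paper never makes explicit is the Hermitian symmetry of the leading $\Omega^S$ term via the major symmetry $\Lambda^{T^0}_{ijkl}=\Lambda^{T^0}_{klij}$; this is exactly right and worth having on record.
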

\medskip

\begin{proof}
The Hilbert space $E$ is intentionally designed so that $a_2$ is continuous on $E \times E$, and this follows simply by inspection of each of the terms in \eqref{a2}. For the fluid-solid boundary integral term, note that we must invoke the classical trace theorem. Similarly from \eqref{a2} we easily see
\[
a_2(u,v) = \overline{a_2(v,u)}
\]
which means that $a_2$ is Hermitian.
\end{proof}
\medskip

%\begin{claim}
%For $u, v\in E$ we have
%$$\langle A_{\text{original}} \, u, v \rangle_{L^2(X)} 
% = a_3 \big( u, v\big) = \langle A_3\,  u , v\rangle_E $$
%\end{claim}

\subsection{Coercivity  of $a_2$}\

We remind the reader that the sesquilinear form $a_2$ is called coercive on the Hilbert space $E$ with respect to $L^2(\tilde{X},\rho^0 dx)$ if there exist constants $\alpha$ and $\beta >0$ such that for any $u \in E$
\[
a_2(u,u) \geq \alpha \| u \|_E^2 - \beta \| u \|^2_{L^2(\tilde{X},\rho^0 dx)}.
\]
Coercivity is a critical element in many proofs of well-posedness for equations such as \eqref{Asos2} and, together with continuity of $a_2$ on $E$, can also be interpreted as the statement that $\sqrt{a_2(u,u) + \beta \| u \|^2_{L^2(\tilde{X},\rho^0 dx)}}$ is a norm on $E$ equivalent to $\|u\|_E$. In this section we show that under certain hypotheses $a_2$ is coercive.

To begin we introduce some notation. First define $\pi$ by
$$\pi_{ijkl} = \Xi_{ijkl} + T^0_{ik} \delta_{jl} + \sigma_N( \delta_{ik} \delta_{jl} - 
 \delta_{ij} \delta_{kl}).$$
We can then rewrite the top order term in the solid region $\Omega^S$ in terms of $\pi_{ijkl}$ as 
$$\int_{\Omega^S}\Big ( ( \Lambda^{T^0} : \nabla u ) : \nabla \overline{v} +  \sigma_N  \nabla u : \nabla \overline{v}^T -  \sigma_N (\nabla\cdot u)( \nabla \cdot \overline{v}) \Big ) dV
 = \int_{\Omega^S} (\pi :\nabla u) : \nabla \overline{v} \ dV.$$ 
For two tensors $C,B$, denote by $\langle C, B\rangle$ the scalar product that is, $\langle C, B\rangle = C_{ij} \overline{B_{ij}}$. 
With this notation, we can write
$$\pi_{ijkl} \partial_j u_i \partial_l \overline{u}_k = \langle \pi : \nabla u , \nabla u \rangle.$$
\medskip 
 
\begin{theorem}\label{CoerciveA2}\
Denote by $\tau^0 = T^0 - \frac{1}{3} \mathrm{tr}(T^0) \mathrm{Id}$ the deviatoric stress. 
We make the following assumptions: 
\begin{itemize}
\item there exist $\mathfrak{c} > 0$ so that for all $2$-tensors $\eta_{ij}$
%\HOX{I changed $+p^0$ to $-p^0$ which appears to be what is required in the proof of Lemma 5.8. Can you check?}
\begin{equation}\label{Hypo1}
\mathfrak{c} \lvert \eta_{ij} + \eta_{ji} \rvert^2 \leq  (\Xi_{ijkl} - p^0 \delta_{ik} \delta_{jl})  \eta_{kl} \overline{\eta}_{ij} ,
\end{equation}
in the solid region $\Omega^S$,
\item  $\Xi_{ijkl} \in L^\infty(\tilde{X})$, 
%\HOX{Other hypotheses have been changed slightly as well. Please check.}
\item $\rho^0$ is piece-wise $W^{1, \infty}$ with $\rho^0$ bounded away from zero on $\tilde{X}$, 
\item $p^0 \in L^\infty(\tilde{X})$ with $p^0$ bounded away from zero, and $\nabla p^0 \in L^\infty(U)$ for $U$ a neighborhood of $\Sigma^{FS}$,
\item  $g_0' \in L^{\infty}(\tilde{X})$ with $\|g_0'\|$ bounded away from zero,
\item $\gamma \in L^\infty(\Omega^F)$ with $\gamma$ bounded away from zero,
\item  $[\rho^0]_-^+ \,(g_0' \cdot \nu) < -C < 0$ along $\Sigma^{FF}$ and $\rho^0 \,g_0' \cdot \nu >C >  0$ along $(\partial \widetilde{X})_F$ for a constant $C$; see Remark \ref{SignP} for further comments on this assumption.
\end{itemize}
%\HOX{The conditions on $p^0$ and $g_0'$ are likely satisfied automatically based on the definitions of $p^0$ and $g_0'$. We should put in a remark about this}
For $\lVert \tau^0 \rVert_{L^{\infty}(\tilde{X})}$ sufficiently small,
$\sigma_N$ such that $$\lVert \sigma_N\rVert_{L^{\infty}(\Omega^S)}
\leq\lVert \tau_0\rVert_{L^{\infty}(\tilde{X})} ,\ \text{and}\ \| \nabla
\sigma_N \|_{L^\infty(\Omega^S)} \leq\| \nabla p^0 \|_{L^\infty(U)} ,$$
there exist $\alpha, \beta > 0$ such that
$$ a_2(u,u) \geq \alpha \lVert u \rVert^2_E - \beta \lVert u \rVert^2_{L^2(\tilde{X}, \rho^0 \,dx)}, \ \forall u \in E. $$
In other words, $a_2$ is $E$ coercive relative to $L^2(\widetilde{X}, \rho^0 \,dx)$. 
\end{theorem}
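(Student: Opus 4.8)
The plan is to set $v=u$ (so that $a_2(u,u)\in\RR$ since $a_2$ is Hermitian by Lemma~\ref{Propa3}) and decompose $a_2(u,u)$ into three groups: (i) the two genuinely coercive top-order contributions, namely the solid bulk term $\int_{\Omega^S}\langle\pi:\nabla u,\nabla u\rangle$ and the fluid bulk term $\int_{\Omega^F}\tfrac{p^0\gamma}{(\rho^0)^2}\bigl|\nabla\cdot(\rho^0 u)-\tilde s\cdot u\bigr|^2$; (ii) the two surface integrals over $\Sigma^{FF}$ and $(\partial\tilde X)_F$, which I will show are nonnegative and in fact control $\|u\cdot\nu\|_{L^2(\Sigma^{FF}\cup(\partial\tilde X)_F)}^2$; and (iii) all remaining terms, which I will bound in modulus, for every $\varepsilon>0$, by $\varepsilon\bigl(\|\nabla u\|_{L^2(\Omega^S)}^2+\|\nabla\cdot u\|_{L^2(\Omega^F)}^2\bigr)+C_\varepsilon\|u\|_{L^2(\tilde X,\rho^0\,dx)}^2$.

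For the remainder group (iii) the key tool is the trace inequality with a small parameter: for $w\in H^1(\Omega^S)$ and any $\varepsilon>0$, $\|w\|_{L^2(\partial\Omega^S)}^2\le\varepsilon\|\nabla w\|_{L^2(\Omega^S)}^2+C_\varepsilon\|w\|_{L^2(\Omega^S)}^2$, applied componentwise on the $\Omega^S_i$ (it follows from the divergence theorem applied to $|w|^2V$ with $V\cdot\nu=1$ on the boundary). All the coefficients in the surface integrals over $\Sigma^{SS}$, $\Sigma^{FS}$ and $(\partial\tilde X)_S$ lie in $L^\infty$, and the relevant traces there can be taken from the solid side (using $[u\cdot\nu]^+_-=0$ on $\Sigma^{FS}$ to identify $u\cdot\nu$ with the solid-side trace), so these are all absorbed. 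The solid-bulk lower-order terms — those carrying $g_0'$, $\nabla\rho^0$, $\nabla\sigma_N$, all in $L^\infty(\Omega^S)$ by hypothesis — are handled by Cauchy--Schwarz and Young's inequality. The nonlocal term $-\tfrac1{4\pi G}\int_{\RR^3}|\nabla S(u)|^2\,dV$ is bounded below by $-C\|u\|_{L^2(\tilde X,\rho^0\,dx)}^2$, since testing the Poisson equation for $S(u)$ against $\overline{S(u)}$ as in \eqref{S(u)calc} gives $\|\nabla S(u)\|_{L^2(\RR^3)}\le 4\pi G\|\rho^0 u\|_{L^2}$; and the fluid lower-order term $-\tilde s\cdot g_0'\,\|g_0'\|^{-2}|g_0'\cdot u|^2$ is $O(\|u\|_{L^2(\Omega^F)}^2)$ because $\tilde s,g_0'\in L^\infty$ and $\|g_0'\|$ is bounded below.

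For the coercive group (i), in the solid I would write $\pi_{ijkl}=(\Xi_{ijkl}-p^0\delta_{ik}\delta_{jl})+\tau^0_{ik}\delta_{jl}+\sigma_N(\delta_{ik}\delta_{jl}-\delta_{ij}\delta_{kl})$ using $T^0=-p^0\Id+\tau^0$ on $\Omega^S$; hypothesis \eqref{Hypo1} controls the first piece pointwise by $4\mathfrak c\,|\Sym\nabla u|^2$, while the remaining pieces are dominated in modulus by $(\|\tau^0\|_{L^\infty}+2\|\sigma_N\|_{L^\infty(\Omega^S)})|\nabla u|^2\le 3\|\tau^0\|_{L^\infty}|\nabla u|^2$. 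Korn's second inequality on each $\Omega^S_i$ then yields $\int_{\Omega^S}\langle\pi:\nabla u,\nabla u\rangle\ge(4\mathfrak c\,c_K-3\|\tau^0\|_{L^\infty})\|\nabla u\|_{L^2(\Omega^S)}^2-C\|u\|_{L^2(\Omega^S)}^2$, whose leading constant is strictly positive once $\|\tau^0\|_{L^\infty}$ is small enough (the hypothesis $\|\sigma_N\|_{L^\infty(\Omega^S)}\le\|\tau^0\|_{L^\infty}$ makes this self-consistent). In the fluid, $\tfrac{p^0\gamma}{(\rho^0)^2}$ is bounded below by a positive constant, and from $\nabla\cdot(\rho^0 u)=\rho^0\nabla\cdot u+\nabla\rho^0\cdot u$ with $\rho^0$ bounded below and $\nabla\rho^0,\tilde s\in L^\infty(\Omega^F)$ one gets $\int_{\Omega^F}\tfrac{p^0\gamma}{(\rho^0)^2}|\nabla\cdot(\rho^0 u)-\tilde s\cdot u|^2\ge c\,\|\nabla\cdot u\|_{L^2(\Omega^F)}^2-C\|u\|_{L^2(\Omega^F)}^2$. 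For group (ii), Lemma~\ref{parallellemma} gives $g_0'\parallel\nu$ on $\Sigma^{FF}$, and the hydrostatic relation \eqref{Equi2} together with the vanishing of $p^0$ on $(\partial\tilde X)_F$ gives $g_0'\parallel\nu$ there too (exactly as in the second part of Lemma~\ref{parallellemma}); hence these terms collapse to $\int_{\Sigma^{FF}}\bigl(-[\rho^0]^+_-(g_0'\cdot\nu)\bigr)|u\cdot\nu|^2$ and $\int_{(\partial\tilde X)_F}(\rho^0 g_0'\cdot\nu)|u\cdot\nu|^2$, both of which are nonnegative and bound $C\|u\cdot\nu\|_{L^2(\Sigma^{FF}\cup(\partial\tilde X)_F)}^2$ from below by the sign hypotheses.

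Choosing $\varepsilon$ smaller than the two positive leading constants above and recalling $\|u\|_E^2=\|u\|_{H^1(\Omega^S)}^2+\|u\|_{L^2(\Omega^F)}^2+\|\nabla\cdot u\|_{L^2(\Omega^F)}^2+\|u\cdot\nu\|_{L^2(\Sigma^{FF}\cup(\partial\tilde X)_F)}^2$, the three groups combine to $a_2(u,u)\ge\alpha\|u\|_E^2-\beta\|u\|_{L^2(\tilde X,\rho^0\,dx)}^2$, after replacing the leftover $\|u\|_{L^2(\tilde X)}^2$ by $\|u\|_{L^2(\tilde X,\rho^0\,dx)}^2$ using that $\rho^0$ is bounded below. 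The main obstacle is the solid bulk term: extracting full $H^1(\Omega^S)$-control from the merely pointwise-stable form \eqref{Hypo1}, which only sees $\Sym\nabla u$, requires Korn's inequality, and one must verify that the perturbations $\tau^0$ and $\sigma_N|_{\Omega^S}$ — together with the $\Sigma^{FS}$ surface term absorbed via the small-parameter trace inequality — do not destroy the positivity of the resulting leading constant; this is exactly what forces the smallness hypotheses on $\tau^0$ and $\sigma_N$. A secondary point of care is that in the fluid only $H(\Div)$- and boundary-$L^2$-control is available (fluids carry no shear), so every fluid contribution must be arranged to fit the weaker $E$-norm.
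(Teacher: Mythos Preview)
Your proposal is correct and follows essentially the same route as the paper: the paper's proof is broken into the same pieces (Lemma~\ref{CoerciveSolid2} for your solid bulk via Korn, Lemma~\ref{LowerSolid} for the solid lower-order and $\Sigma^{SS}$ terms, Lemma~\ref{CoerciveFluid} for the fluid bulk and $\Sigma^{FF}$, then the $\Sigma^{FS}$, gravitation, and exterior-boundary terms separately), using the identical tools of the pointwise decomposition of $\pi$, Korn's inequality, Cauchy--Schwarz/Young, and the sign hypotheses on $\Sigma^{FF}\cup(\partial\tilde X)_F$. The only cosmetic difference is that the paper obtains the small-parameter trace estimate via $H^s$-interpolation for $1/2<s<1$ plus Young's inequality rather than your divergence-theorem argument, and the paper simply asserts $g_0'\parallel\nu$ on $(\partial\tilde X)_F$ where you supply the justification.
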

\medskip

\begin{remark}
Before beginning the proof we remark that the hypotheses of Theorem~\ref{CoerciveA2} may be modified so that a different portion of the tensor $\pi$ is assumed to satisfy the condition \eqref{Hypo1} and the remaining portion is assumed to be sufficiently small. There are many possible ways to do this choosing for example different values of $a$ and $b$ in \eqref{ElasticTensor} then using $\Gamma_{ijkl}$ instead of $\Xi_{ijkl}$.
\end{remark}

\begin{remark}\label{SignP}
Since $\rho^0$ on the exterior of $\widetilde{X} $ is zero, with the convention that the normal vector points outward the condition along $(\partial \widetilde{X})_F$ can be rewritten as
$$\rho^0 \,g_0' \cdot \nu = - [\rho^0]^+_-\, (g_0'\cdot \nu)  > C \Leftrightarrow [\rho^0]^+_- (g_0'\cdot \nu)  < -C.$$
Hence, the assumptions $\rho^0 g_0' \cdot \nu > 0$ along $(\partial \widetilde{X})_F$ and 
$[\rho^0]_-^+ (g_0' \cdot \nu) < 0$ along $\Sigma^{FF}$ are consistent with each other and the choice of orientation of normal vectors at each interface. In fact, these two assumptions can be rewritten as
$$ [\rho^0]^+_- \,(g_0'\cdot \nu) < -C , \ \ \text{along}\  \Sigma^{FF} \cup (\partial \widetilde{X})_F.$$
This condition on the fluid-fluid boundaries appears to be related to the local stability of the fluid. Indeed, on the fluid-fluid boundaries this is stating that a positive jump in density must occur in the same direction as the force due to gravity and rotation, a condition that, at least at an intuitive level, must be satisfied by a stably stratified fluid. On the surface $(\partial \widetilde{X})_F$ the condition is stating that the total force due to gravity and rotation must point downward.
\end{remark}
\medskip

\begin{proof}
We will break the proof of Theorem~\ref{CoerciveA2} into the following steps given by separate lemmas:
\begin{itemize}
\item Showing that the highest order term of the volume integral in the solid region satisfies G\aa rding's inequality. The result is given by 
Lemma \ref{CoerciveSolid2}.
\item Establishing an upper bound for the lower order terms in the solid region and along $\Sigma^{SS}$ given by Lemma \ref{LowerSolid}
\item Establishing a lower bound for the volume term in the fluid region and the boundary integral along $\Sigma^{FF}$ given by by Lemma \ref{CoerciveFluid}. 
\item Establishing an upper bound for the gravity and exterior boundary $\partial \tilde{X}$ terms. 
\end{itemize}
And now we present the lemmas. In the proofs we will use positive constants $C$ and $D$ which may change from step to step.
\medskip

 \begin{lemma}\label{CoerciveSolid2}
If $\mathfrak{c}$ satisfies \eqref{Hypo1}, $\|\sigma_N \|_{L^\infty(\Omega_S)}  \leq \|\tau_0\|_{L^\infty(\tilde{X})}$, and $\|\tau_0\|_{L^\infty(\tilde{X})}$ is sufficiently small then there exists $C$ and $D > 0$ so that 
  $$\int_{\Omega^S} \langle \pi : \nabla u, \nabla u \rangle \ dV
\geq C \lVert u\rVert^2_{H^1(\Omega^S)} - D  \lVert u\rVert_{L^2(\Omega^S)}^2.$$
 \end{lemma}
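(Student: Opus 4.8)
The plan is to reduce the integrand $\langle\pi:\nabla u,\nabla u\rangle$ to a positive multiple of the squared symmetrized gradient $\epsilon(u)$, $\epsilon_{ij}(u):=\tfrac12(\partial_i u_j+\partial_j u_i)$, plus terms that are either controllably small or of lower order, and then to upgrade control of $\lVert\epsilon(u)\rVert_{L^2}$ to control of the full $H^1$ norm by means of Korn's inequality.

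First I would expand the integrand algebraically. Substituting $T^0=-p^0\,\Id+\tau^0$ into the definition of $\pi$ gives, pointwise a.e.\ in $\Omega^S$,
\[
\langle\pi:\nabla u,\nabla u\rangle=\pi_{ijkl}\,\partial_j u_i\,\partial_l\overline{u}_k=(\Xi_{ijkl}-p^0\delta_{ik}\delta_{jl})\,\partial_j u_i\,\partial_l\overline{u}_k+\tau^0_{ik}\,\partial_j u_i\,\partial_j\overline{u}_k+\sigma_N\big(\lvert\nabla u\rvert^2-\lvert\nabla\cdot u\rvert^2\big),
\]
using $\delta_{ik}\delta_{jl}\,\partial_j u_i\,\partial_l\overline{u}_k=\lvert\nabla u\rvert^2$ and $\delta_{ij}\delta_{kl}\,\partial_j u_i\,\partial_l\overline{u}_k=\lvert\nabla\cdot u\rvert^2$. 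Applying \eqref{Hypo1} pointwise with the (in general non-symmetric) tensor $\eta_{ij}=\partial_j u_i(x)$ bounds the first term below by $\mathfrak{c}\lvert\partial_i u_j+\partial_j u_i\rvert^2=4\mathfrak{c}\lvert\epsilon(u)\rvert^2$. The other two terms are bounded in absolute value, pointwise, by $C\big(\lVert\tau^0\rVert_{L^\infty(\tilde{X})}+\lVert\sigma_N\rVert_{L^\infty(\Omega^S)}\big)\lvert\nabla u\rvert^2$ (using $\lvert\nabla\cdot u\rvert^2\le3\lvert\nabla u\rvert^2$), hence by $C\lVert\tau^0\rVert_{L^\infty(\tilde{X})}\lvert\nabla u\rvert^2$ since $\lVert\sigma_N\rVert_{L^\infty(\Omega^S)}\le\lVert\tau^0\rVert_{L^\infty(\tilde{X})}$. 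Integrating over $\Omega^S$ yields $\int_{\Omega^S}\langle\pi:\nabla u,\nabla u\rangle\,dV\ge4\mathfrak{c}\lVert\epsilon(u)\rVert^2_{L^2(\Omega^S)}-C\lVert\tau^0\rVert_{L^\infty(\tilde{X})}\lVert\nabla u\rVert^2_{L^2(\Omega^S)}$.

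The key step is then Korn's (second) inequality. Since $\Omega^S$ is a finite union of bounded domains $\Omega^S_i$ with $C^1$ (hence Lipschitz) boundary, Korn's second inequality holds on each $\Omega^S_i$, and summing over the finitely many components produces a constant $C_K>0$ with $\lVert\nabla u\rVert^2_{L^2(\Omega^S)}\le C_K\big(\lVert\epsilon(u)\rVert^2_{L^2(\Omega^S)}+\lVert u\rVert^2_{L^2(\Omega^S)}\big)$, equivalently $\lVert\epsilon(u)\rVert^2_{L^2(\Omega^S)}\ge C_K^{-1}\lVert\nabla u\rVert^2_{L^2(\Omega^S)}-\lVert u\rVert^2_{L^2(\Omega^S)}$. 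Substituting this into the previous bound gives $\int_{\Omega^S}\langle\pi:\nabla u,\nabla u\rangle\,dV\ge\big(4\mathfrak{c}C_K^{-1}-C\lVert\tau^0\rVert_{L^\infty(\tilde{X})}\big)\lVert\nabla u\rVert^2_{L^2(\Omega^S)}-4\mathfrak{c}\lVert u\rVert^2_{L^2(\Omega^S)}$. Choosing $\lVert\tau^0\rVert_{L^\infty(\tilde{X})}$ small enough that $4\mathfrak{c}C_K^{-1}-C\lVert\tau^0\rVert_{L^\infty(\tilde{X})}\ge2\mathfrak{c}C_K^{-1}=:C>0$, and then using $\lVert\nabla u\rVert^2_{L^2(\Omega^S)}=\lVert u\rVert^2_{H^1(\Omega^S)}-\lVert u\rVert^2_{L^2(\Omega^S)}$, yields the asserted inequality with $D=C+4\mathfrak{c}$.

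I expect the only genuinely non-elementary ingredient, and hence the main obstacle, to be Korn's inequality: \eqref{Hypo1} controls only the symmetric part of $\nabla u$ pointwise, so without Korn's inequality there is no way to pass from $\lVert\epsilon(u)\rVert_{L^2}$ to the full gradient, let alone to the $H^1$ norm. The rest is bookkeeping; the points to be careful about are isolating and absorbing the $O(\lVert\tau^0\rVert_{L^\infty})$ contributions of the deviatoric prestress $\tau^0$ and of the auxiliary function $\sigma_N$ in $\Omega^S$, and noting that $\Omega^S$ has only finitely many components so that the Korn constant remains finite.
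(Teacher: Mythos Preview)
Your proof is correct and follows essentially the same approach as the paper: establish the pointwise lower bound $\langle\pi:\nabla u,\nabla u\rangle\ge\mathfrak{c}\lvert\nabla u+\nabla u^T\rvert^2-C\lVert\tau^0\rVert_{L^\infty}\lvert\nabla u\rvert^2$ by splitting off the $(\Xi-p^0\,\Id\otimes\Id)$ part and invoking \eqref{Hypo1}, then apply Korn's inequality and absorb the $O(\lVert\tau^0\rVert_{L^\infty})$ remainder for $\lVert\tau^0\rVert_{L^\infty}$ small. Your write-up is simply more explicit about the algebra and about why Korn's inequality applies on $\Omega^S$.
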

 \medskip
 
\begin{proof}
This lemma follows from Korn's Lemma (e.g. see \cite{Marsden}) and the point-wise estimate
\begin{equation}
\label{coerciveptwise}
\langle \pi : \nabla u, \nabla u \rangle \geq \mathfrak{c} \lvert \nabla u +  \nabla u^T \rvert^2 
- C \lVert \tau^0 \rVert_{L^\infty(\tilde{X})} \lvert \nabla u\rvert^2
\end{equation}
which holds for some constant $C>0$. This inequality can be established by noting that
\[
\langle \pi : \nabla u, \nabla u \rangle = \langle (\Xi_{ijkl} + T^0_{ik} \delta_{jl} + \sigma_N (\delta_{ik} \delta_{jl} - \delta_{ij} \delta_{kl})) : \nabla u, \nabla u \rangle,
\]
in the solid region $T^0_{ik} = - p_0 \delta_{ik} + \tau^0_{ik}$, and using $\lVert \sigma_N\rVert_{L^{\infty}(\Omega^S )} \leq\lVert \tau^0\rVert_{L^{\infty}(\tilde{X})}$ as well as the hypothesis \eqref{Hypo1}. By Korn's Lemma \eqref{coerciveptwise} implies that for some constant $c >0$
\[
\int_{\Omega^{S}} \langle \pi : \nabla u, \nabla u \rangle \ dV \geq c \lVert  u \rVert_{H^1(\Omega^S)}^2 - \lVert u \rVert_{L^2(\Omega^S)}^2  
- C \lVert \tau^0 \rVert_{L^\infty(\tilde{X})} \| \nabla u\|_{L^2(\Omega^S)}^2.
\]
Therefore if $\lVert \tau^0 \rVert_{L^\infty(\tilde{X})} < c/C$ the result is proven.
\end{proof}
\medskip

Now we continue to deal with the lower order terms in the solid region and the solid-solid boundary terms in the next lemma.
\medskip

\begin{lemma}
\label{LowerSolid}
Under the hypotheses of Theorem~\ref{CoerciveA2} we have for $u \in H^1(\Omega^S)$ and any $\epsilon >0$
$$\Bigg |\int_{\Omega^S} \Big ( -\mathfrak{S} \big\{  (g_0' \cdot u) ( \overline{u}\cdot \nabla \rho^0)  \big\} 
+ \mathfrak{S} \big\{  - ( \nabla \sigma_N  +\rho^0 g_0' ) \cdot u (\nabla \cdot \overline{u}  )\big\}
+ \mathfrak{S} \big\{     (  \nabla \sigma_N -  \rho^0 g_0' )\cdot \nabla u \cdot \overline{u} \big\}\Big ) dV;$$
$$- \int_{\Sigma^{SS} } \mathfrak{S} \big\{ [\rho^0]^+_- (u \cdot g_0' )( \overline{u}\cdot\nu)\big\} d \Sigma \Bigg | 
\leq C \epsilon \lVert u \rVert^2_{H^1(\Omega^S)} + \dfrac{1}{d(\epsilon)} \lVert u \rVert^2_{L^2(\tilde{X}, \rho^0 \,dx)}; $$
where $d(\epsilon)>0$ depends continuously on $\epsilon >0$ and $C>0$ is a constant.
\end{lemma}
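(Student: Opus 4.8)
The plan is to notice that every term in the estimate is of order at most one in $u$ with a uniformly bounded coefficient, and then to convert each genuinely first-order contribution into $\epsilon\,\|u\|_{H^1(\Omega^S)}^2$ plus an $\epsilon$-dependent multiple of an $L^2$ norm — via Young's inequality in the interior and the $\epsilon$-version of the trace inequality on $\Sigma^{SS}$ — finishing by passing from $\|u\|_{L^2(\Omega^S)}$ to $\|u\|_{L^2(\tilde{X},\rho^0\,dx)}$. First I would record that, under the hypotheses of Theorem~\ref{CoerciveA2}, all the coefficients that appear are bounded: $g_0'\in L^\infty(\tilde{X})$ (and, being continuous by elliptic regularity for \eqref{ellPhi}, it has a bounded trace on $\Sigma^{SS}$), $\nabla\rho^0\in L^\infty(\Omega^S)$ because $\rho^0$ is piecewise $W^{1,\infty}$, $\nabla\sigma_N\in L^\infty(\Omega^S)$ because $\|\nabla\sigma_N\|_{L^\infty(\Omega^S)}\le\|\nabla p^0\|_{L^\infty(U)}$, $\rho^0\in L^\infty(\tilde{X})$, and $[\rho^0]^+_-\in L^\infty(\Sigma^{SS})$. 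Hence, pointwise, the volume integrand is bounded by $C(|u|\,|\nabla u|+|u|^2)$ and the $\Sigma^{SS}$ integrand by $C|u|^2$.

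For the volume part, the first symmetrized term $\mathfrak{S}\{(g_0'\cdot u)(\overline u\cdot\nabla\rho^0)\}$ contains no derivative of $u$, so it contributes at most $C\,\|u\|_{L^2(\Omega^S)}^2$. The remaining two volume terms are pointwise $\le C|u|\,|\nabla u|$, so the weighted Young inequality gives
\[
\int_{\Omega^S} C\,|u|\,|\nabla u|\,dV \;\le\; C\epsilon\,\|\nabla u\|_{L^2(\Omega^S)}^2 + \frac{C}{\epsilon}\,\|u\|_{L^2(\Omega^S)}^2 \;\le\; C\epsilon\,\|u\|_{H^1(\Omega^S)}^2 + \frac{C}{\epsilon}\,\|u\|_{L^2(\Omega^S)}^2 .
\]

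For the boundary integral over $\Sigma^{SS}$ I would use that $\Sigma^{SS}$ is a $C^1$ hypersurface with $\partial\Sigma^{SS}=\emptyset$ contained in the union of the boundaries of the subdomains $\Omega_i^S$, and that $u|_{\Omega^S}\in H^1(\Omega^S)$ restricts to $H^1(\Omega_i^S)$ on each piece. Applying the $\epsilon$-trace inequality $\|u\|_{L^2(\partial\Omega_i^S)}^2\le\epsilon\,\|\nabla u\|_{L^2(\Omega_i^S)}^2+C(\epsilon)\,\|u\|_{L^2(\Omega_i^S)}^2$ (which holds for Lipschitz domains, with $C(\epsilon)$ of the form $C/\epsilon$, hence continuous in $\epsilon$) and summing over $i$ yields $\|u\|_{L^2(\Sigma^{SS})}^2\le\epsilon\,\|\nabla u\|_{L^2(\Omega^S)}^2+C(\epsilon)\,\|u\|_{L^2(\Omega^S)}^2$, and therefore $\int_{\Sigma^{SS}}C|u|^2\,d\Sigma\le C\epsilon\,\|u\|_{H^1(\Omega^S)}^2+C\,C(\epsilon)\,\|u\|_{L^2(\Omega^S)}^2$.

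Finally I would add the three bounds and replace $\|u\|_{L^2(\Omega^S)}^2$ by $c^{-1}\|u\|_{L^2(\tilde{X},\rho^0\,dx)}^2$, which holds because $\rho^0\ge c>0$ and $\Omega^S\subset\tilde{X}$; taking $C$ to be the sum of the coefficients of $\epsilon\,\|u\|_{H^1(\Omega^S)}^2$ and $1/d(\epsilon)$ the sum of all the $L^2$-coefficients (a finite expression built from $1/\epsilon$ and $C(\epsilon)$, so continuous in $\epsilon$; shrinking $d$ absorbs the fixed constant coming from the zeroth-order term) gives exactly the claimed inequality. The only point that is not completely routine is insisting on the $\epsilon$-trace inequality rather than the plain trace theorem: this is what makes the $\Sigma^{SS}$ contribution absorbable, together with the interior first-order terms, into G\aa rding's inequality for the leading solid term obtained in Lemma~\ref{CoerciveSolid2}, so that the combined coercivity estimate in Theorem~\ref{CoerciveA2} does not acquire an uncontrolled multiple of $\|u\|_{H^1(\Omega^S)}^2$.
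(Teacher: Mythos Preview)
Your proof is correct and follows essentially the same strategy as the paper: Cauchy--Schwarz and Young's inequality for the volume terms, and a trace-type estimate with an $\epsilon$ for the $\Sigma^{SS}$ integral. The only minor difference is in how the boundary term is handled: you invoke the $\epsilon$-trace inequality $\|u\|_{L^2(\partial\Omega_i^S)}^2\le\epsilon\|\nabla u\|_{L^2}^2+C(\epsilon)\|u\|_{L^2}^2$ directly, whereas the paper traces into $H^s(\Omega^S)$ for $1/2<s<1$ and then uses Sobolev interpolation $\|u\|_{H^s}^2\le\|u\|_{H^1}^{2(1-s)}\|u\|_{L^2}^{2s}$ followed by Young's inequality; both routes are standard and yield the same bound.
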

\medskip

\begin{remark}
Throughout the proofs of Lemmas \ref{LowerSolid} and \ref{CoerciveFluid}, and Theorem \ref{CoerciveA2} we will write $d(\epsilon)$ for any positive function depending continuously on $\epsilon > 0$. Note that $d(\epsilon)$ may change from step to step.
\end{remark}

\begin{proof}
For general $f$ and $g \in H^1(\Omega^S)$ we have from the Cauchy-Schwarz inequality
$$\left \lvert \int_{\Omega^S} f \ \nabla g \ d V\right \rvert \leq \lVert f\rVert_{ L^2(\Omega^S)} \lVert \nabla g\rVert_{L^2(\Omega^S)}
\leq  \lVert f\rVert_{L^2(\Omega^S)} \lVert  g\rVert_{H^1(\Omega^S)}
\leq \epsilon  \lVert g\rVert_{H^1(\Omega^S)}^2 + \dfrac{1}{d(\epsilon)} \lVert f\rVert^2_{L^2(\Omega^S)}$$
for any $\epsilon >0$. Combining this with the hypotheses from Theorem~\ref{CoerciveA2} for $\rho^0$, $\sigma_N$, and $g_0'$ bounds the volume integral as required. For the surface integral we have from the classical trace theorem and the hypotheses on $g_0'$ and $\rho^0$
\[
\Bigg | \int_{\Sigma^{SS} } \mathfrak{S} \big\{ [\rho^0]^+_- (u \cdot g_0' )( \overline{u}\cdot\nu)\big\} \ d \Sigma \Bigg | \leq C \| u \|_{H^s(\Omega^S)}^2
\]
for some constant $C>0$ and any $s> 1/2$. If also $s<1$ then by Sobolev space interpolation and Young's inequality we have
\[
\|u \|^2_{H^s(\Omega^S)} \leq \| u \|^{2(1-s)}_{H^1(\Omega^S)} \| u \|^{2s}_{L^2(\Omega^S)} \leq \epsilon \| u \|^2_{H^1(\Omega^S)} + \frac{1}{d(\epsilon)} \| u\|^2_{L^2(\Omega^S)}
\]
and this can be used to find the required bound for the surface integral term.
\end{proof}
\medskip

Now we move to estimates for the integrals over the fluid region and the fluid-fluid interfaces.
\medskip

\begin{lemma}\label{CoerciveFluid}
Assuming the hypotheses of Theorem~\ref{CoerciveA2} and that $[\rho^0]^+_- (g_0' \cdot \nu)<0$ along $\Sigma^{FF}$ there are constants $C$ and $D >0$ such that for any $u \in E$
$$\int_{\Omega^F} \Bigg ( \dfrac{p^0 \gamma}{(\rho^0)^2}  \Big(\nabla\cdot (\rho^0  u ) 
 - \tilde{s} \cdot u \Big) \Big(\nabla\cdot (\rho^0  u ) 
 - \tilde{s} \cdot \overline{u} \Big)- \tilde{s}\cdot g_0'  \dfrac{(g_0' \cdot u) (\overline{u}\cdot g_0' ) }{\lVert g_0'\rVert^2} \Bigg ) \ dV
 -  \int_{\Sigma^{FF}} [\rho^0]^+_- (g_0' \cdot \nu)(u\cdot \nu) ( \overline{u}\cdot \nu) \ d \Sigma$$
 $$\geq  C \left ( \lVert \nabla \cdot u \rVert_{L^2(\Omega^F)}^2  + \| u \cdot \nu\|^2_{L^2(\Sigma^{FF})} \right ) - D \lVert u \rVert^2_{L^2(\tilde{X}, \rho^0 \,dV)}.$$ 
\end{lemma}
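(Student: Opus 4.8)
The plan is to handle the volume integral over $\Omega^F$ and the surface integral over $\Sigma^{FF}$ separately. In each case the goal is to exhibit the claimed coercive quantity — $\|\nabla\cdot u\|^2_{L^2(\Omega^F)}$ from the volume term and $\|u\cdot\nu\|^2_{L^2(\Sigma^{FF})}$ from the boundary term — and to show that everything left over is controlled by $\|u\|^2_{L^2(\tilde X,\rho^0\,dV)}$.

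The first step is a purely algebraic simplification of the fluid volume integrand using $\tilde s = \nabla\rho^0 + \frac{g_0'(\rho^0)^2}{p^0\gamma}$. Since $\nabla\cdot(\rho^0 u) = \rho^0\,\nabla\cdot u + \nabla\rho^0\cdot u$, one has $\nabla\cdot(\rho^0 u) - \tilde s\cdot u = \rho^0\,\nabla\cdot u - \frac{(\rho^0)^2}{p^0\gamma}(g_0'\cdot u)$. Expanding the modulus squared and using $\tilde s\cdot g_0' = \nabla\rho^0\cdot g_0' + \frac{(\rho^0)^2}{p^0\gamma}\|g_0'\|^2$, the positive term $\frac{(\rho^0)^2}{p^0\gamma}|g_0'\cdot u|^2$ produced by the square cancels exactly against the matching piece of $-\tilde s\cdot g_0'\,\frac{|g_0'\cdot u|^2}{\|g_0'\|^2}$ (this cancellation is precisely why the Brunt--V\"ais\"al\"a combination $\tilde s$ is the right one), and the fluid volume integrand with $v=u$ collapses to
\[
p^0\gamma\,|\nabla\cdot u|^2 \;-\; 2\rho^0\,\operatorname{Re}\!\big((\nabla\cdot u)\,\overline{(g_0'\cdot u)}\big) \;-\; \frac{\nabla\rho^0\cdot g_0'}{\|g_0'\|^2}\,|g_0'\cdot u|^2 .
\]
(Alternatively, and slightly more crudely, one may avoid the exact bookkeeping and simply use $|a+b|^2\ge\tfrac12|a|^2-|b|^2$ on $\rho^0\nabla\cdot u - \frac{(\rho^0)^2}{p^0\gamma}(g_0'\cdot u)$.)

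Next I would estimate the remaining two terms. By Young's inequality the cross term is $\ge -\epsilon|\nabla\cdot u|^2 - C_\epsilon|g_0'\cdot u|^2$; and since $\nabla\rho^0, g_0'\in L^\infty(\tilde X)$ with $\|g_0'\|$ bounded away from zero, both $|g_0'\cdot u|^2$ and $\frac{\nabla\rho^0\cdot g_0'}{\|g_0'\|^2}|g_0'\cdot u|^2$ are $\le C|u|^2$ pointwise. Choosing $\epsilon<\tfrac12\inf_{\Omega^F}(p^0\gamma)$, a strictly positive constant because $p^0$ and $\gamma$ are bounded away from zero, the fluid integrand is $\ge c\,|\nabla\cdot u|^2 - C|u|^2$ pointwise for some $c>0$. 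Integrating over $\Omega^F$ and using that $\rho^0$ is bounded below to absorb $\int_{\Omega^F}|u|^2\,dV\le C\|u\|^2_{L^2(\tilde X,\rho^0\,dV)}$ gives the required lower bound for the volume part.

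For the surface integral the hypothesis $[\rho^0]^+_-\,(g_0'\cdot\nu)<-C<0$ along $\Sigma^{FF}$ means $-[\rho^0]^+_-\,(g_0'\cdot\nu)$ is bounded below by the positive constant $C$, so
\[
-\int_{\Sigma^{FF}}[\rho^0]^+_-\,(g_0'\cdot\nu)\,|u\cdot\nu|^2\,d\Sigma \;\ge\; C\,\|u\cdot\nu\|^2_{L^2(\Sigma^{FF})},
\]
where $u\cdot\nu|_{\Sigma^{FF}}\in L^2(\Sigma^{FF})$ is well defined and unambiguous for $u\in E$ by the definition of $E$ together with Lemma~\ref{tracelem}. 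Adding this to the volume estimate and taking the minimum of the constants produces the stated inequality. I do not expect a genuine obstacle here; the only point requiring care is the organisation of the estimates — arranging the cancellation (or the crude inequality above) so that $\|\nabla\cdot u\|^2_{L^2(\Omega^F)}$ survives with a strictly positive coefficient while every other contribution is honestly lower order, which is exactly where the $L^\infty$ bounds and the positive lower bounds on $p^0$, $\gamma$, and $\rho^0$ enter.
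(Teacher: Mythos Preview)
Your proof is correct and follows essentially the same approach as the paper: treat the volume and surface integrals separately, extract the leading $p^0\gamma|\nabla\cdot u|^2$ contribution from the volume term, control the remaining cross terms and zeroth-order terms via Young's inequality and the $L^\infty$ bounds, and use the sign hypothesis directly for the $\Sigma^{FF}$ integral. Your explicit algebraic reduction of the integrand to $p^0\gamma|\nabla\cdot u|^2 - 2\rho^0\operatorname{Re}\big((\nabla\cdot u)\overline{(g_0'\cdot u)}\big) - \frac{\nabla\rho^0\cdot g_0'}{\|g_0'\|^2}|g_0'\cdot u|^2$ is a nice clarification of what the paper glosses over with the phrase ``easily seen to be bounded below.''
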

\medskip

\begin{proof}
The highest order term of the volume integral over $\Omega^F$ is easily seen to be bounded below by 
\[
C\int_{\Omega^F} \lvert \nabla \cdot u \rvert^2 \, dV
\]
for a constant $C>0$, and by the Cauchy-Schwarz inequality as in the proof of Lemma~\ref{LowerSolid} the lower order terms can all be bounded above by an expression
\[
C \epsilon  \int_{\Omega^F} \lvert \nabla \cdot u \rvert^2 \, dV + \frac{1}{d(\epsilon)} \lVert u \rVert^2_{L^2(\tilde{X}, \rho^0 \,dV)}
\]
for any $\epsilon >0$. Combining these and taking $\epsilon$ sufficiently small gives the required bound for the integral over $\Omega^F$. The estimate of the boundary term from below follows immediately from the hypotheses and thus the proof is complete.
\end{proof}

For the fluid-solid boundary term we have the estimate
 $$\int_{\Sigma^{FS}} 
 \mathfrak{S}\big\{  (\overline{u} \cdot\nu) \big( u_+ \cdot [\rho^0] g_0'\big)  \big\} d \Sigma
 \leq C \epsilon \lVert u\rVert^2_{H^1(\Omega^S)} + \tfrac{1}{d(\epsilon)} \lVert u \rVert^2_{L^2(\Omega^S)}$$
 for $u \in E$ and any $\epsilon >0$. This follows in the same way as the estimate of the solid-solid boundary terms in the proof of Lemma~\ref{LowerSolid}.

Next, for the gravitation term we have from \eqref{S(u)calc}
\[
( \nabla S(u), u)_{L^2(\tilde{X}, \rho^0 dV)} = -\frac{1}{4 \pi G}\| \nabla S(u) \|^2_{L^2(\mathbb{R}^3)}.
\]
%\HOX{I've simplified this in accordance with earlier comments.}
The Cauchy-Schwarz inequality and bound for $\rho^0$ therefore imply that
\[
\| \nabla S(u) \|_{L^2(\mathbb{R}^3)} \leq C \|u \|_{L^2(\tilde{X}, \rho^0 dV)}
\]
for a constant $C$ independent of $u$.

Finally, the exterior boundary integral is decomposed into 
$$\int_{\partial \tilde{X}} \rho^0 (u\cdot g_0') (\overline{u}\cdot \nu) \ d \Sigma
 = \int_{(\partial \tilde{X})_S} \rho^0 (u\cdot g_0') (\overline{u}\cdot \nu) \ d \Sigma
 + \int_{(\partial \tilde{X})_F} \rho^0 (u\cdot g_0') (\overline{u}\cdot \nu) \ d \Sigma.$$
The part along $(\partial \tilde{X})_S$ can be estimated in the same manner as the solid-solid interface integrals by
\[
\Bigg | \int_{\partial \tilde{X}} \rho^0 (u\cdot g_0') (\overline{u}\cdot \nu) \ d \Sigma
\Bigg | \leq \epsilon \lVert u\rVert^2_{H^1(\Omega^S)}
+ \dfrac{1}{d(\epsilon)} \lVert u \rVert^2_{L^2(\tilde{X}, \rho^0\, dV)}
\]
for any $\epsilon >0$.

As for the part along $(\partial \tilde{X})_F$, using the fact that along $(\partial \tilde{X})_F$ we have $g_0' = (g_0' \cdot \nu) \nu$, we rewrite this term as
\begin{equation}\label{IntF}
 \int_{(\partial \tilde{X})_F} \rho^0 (u\cdot g_0') (\overline{u}\cdot \nu) \ d \Sigma
=  \int_{(\partial \tilde{X})_F} \rho^0 (g_0'\cdot \nu) (u\cdot \nu) (\overline{u}\cdot \nu) \ d \Sigma
\end{equation}
With the assumption that 
$$ \rho^0( g_0' \cdot \nu)\ > C > 0 , \ \text{along} \ (\partial \widetilde{X})_F, $$
the term \eqref{IntF} contributes directly to the $E$ coercitivity since it is bounded below by
$$ > C \lVert u\cdot \nu\rVert^2_{L^2( (\partial \tilde{X})_F)}.$$
Combining all of the previous estimates and taking the $\epsilon$'s to be sufficiently small when necessary completes the proof of Theorem~\ref{CoerciveA2}.
\end{proof}
\medskip

\subsection{The unbounded operator $A_2$ defined from $(E, L^2(\widetilde{X}, \rho^0\,dx), a_2)$}\label{A2Op}

We give a brief description of the unbounded operator on
$L^2(\widetilde{X}, \rho^0\,dx)$ defined from the variational triple
$(E, L^2(\widetilde{X}, \rho^0\,dx), a_2)$. This correspondance is
described in more detail in \cite[Section VI.2.5]{DautrayLionsV2}.

As a result of Proposition~\ref{PropE}, we are in the setting of the following Hilbert triplet, 
$$E \hookrightarrow L^2(\widetilde{X}, \rho^0\,dx)\hookrightarrow E',$$
where $E$ is continuously embeded in $L^2(\widetilde{X}, \rho^0\,dx)$ with dense image. Here, we denote by $E'$ the Banach dual of $E$. We also have, from Theorem~\ref{CoerciveA2}, that $a_2$ is coercive on $E$ with respect to $L^2(\widetilde{X}, \rho^0\,dx)$; that is, there are $\alpha > 0$ and $\beta \in \RR$ such that
\begin{equation}\label{a2coerc}
a_2(u,u) + \beta \lVert u\rVert_{L^2(\widetilde{X}, \rho^0\,dx)}^2 
\geq \alpha \lVert u\rVert_E^2,\  \forall \,u \in E.
\end{equation}
Define
\begin{equation}\label{a2beta}
(a_2+\beta)(u,v):= a_2(u,v) + \beta\ (u,v)_{L^2(\widetilde{X}, \rho^0\,dx)} .
\end{equation}
Then $a_2+\beta$ is a bounded sesquilinear form on $E\times E$ and is $E$ coercive. We will assume in the remainder of the paper that the hypotheses of Theorem~\ref{CoerciveA2} are satisfied, and so \eqref{a2coerc} holds.

Now, \cite[Theorem 6, p.368]{DautrayLionsV2} gives an isomorphism between sesquilinear forms bounded on $E\times E$ and $\mathcal{L}(E, E')$. Here we denote by $\mathcal{L}(E,E')$ the set of bounded linear operators from $E$ to $E'$. Thus corresponding to $a_2+\beta$ is operator $A_2 +\beta = A_2 + \beta \Id \in \mathcal{L}(E,E')$ defined by
\[(a_2+\beta)(u,v) = \langle (A_2 +\beta \Id) u, v\rangle_{E', E} ,\ \ \forall u , v\in E,\]
where $\langle, \rangle_{E',E}$ is the duality pairing between $E'$ and $E$. By the Lax-Milgram Theorem, since $(a_2+\beta)$ is $E$ coercive, $A_2+\beta : E \rightarrow E'$ is an isomorphism, see for e.g.
\cite[Theorem 7, p 368]{DautrayLionsV2}. We can restrict $A_2+\beta \in \mathcal{L}(E,E')$ to $D(A_2+\beta)$ defined as
\begin{equation}\label{DoDef}
D(A_2+\beta ) :=  \{ u \in E :  v \mapsto (a_2+\beta) (u,v) \ \text{is continuous on} \ E \ \text{with the topology of}\ L^2(\widetilde{X}, \rho^0\,dx) \}.
\end{equation}
and obtain the unbounded operator $(A_2+\beta, D(A_2+\beta))$ on $L^2(\widetilde{X}, \rho^0\,dx)$. Since $a_2+\beta$ is $E$ coercive and Hermitian, $A_2+\beta$ enjoys the following properties, see for e.g. \cite[Propositions 9 and 10 , p 371]{DautrayLionsV2},
\medskip
\begin{proposition}\label{PropOp}
The following holds true
\begin{itemize}
\item $A_2+\beta \in \mathcal{L}(E,E')$ and $A_2+\beta : E \rightarrow E'$ is an isomorphism. In fact $A_2 + \beta : D(A_2+ \beta) \rightarrow L^2(\tilde{X},\rho^0\,dx)$ is an isomorphism with $(A_2+\beta)^{-1} : L^2(\tilde{X},\rho^0\,dx)\rightarrow L^2(\tilde{X},\rho^0\,dx)$ . 
\item $D(A_2+\beta)$ is dense in $L^2(\tilde{X},\rho^0\,dx)$ and in $E$.
\item $(A_2+\beta, D(A_2+\beta))$ is closed and self-adjoint as an unbounded operator on $L^2(\tilde{X},\rho^0\,dx)$.
\end{itemize}
\medskip
\end{proposition}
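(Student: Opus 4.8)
The plan is to read Proposition~\ref{PropOp} as the standard correspondence between a bounded, Hermitian, coercive sesquilinear form on a Hilbert triple and a self-adjoint unbounded operator, in the form developed in \cite[Sections~VI.2.3--VI.2.5]{DautrayLionsV2}; all the structural inputs are already in place, namely Proposition~\ref{PropE} for the triple $E\hookrightarrow H\hookrightarrow E'$ with $H=L^2(\widetilde X,\rho^0\,dx)$, Lemma~\ref{Propa3} for boundedness and Hermitian symmetry of $a_2$ (hence of $b:=a_2+\beta$), and Theorem~\ref{CoerciveA2} together with \eqref{a2coerc} for $E$-coercivity, $b(u,u)\ge\alpha\lVert u\rVert_E^2$. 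So the work is entirely in checking that these hypotheses line up with the cited general theorems.

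For the first bullet I would proceed as follows. That $A_2+\beta\in\mathcal L(E,E')$ is the canonical isometry between bounded forms and $\mathcal L(E,E')$ (\cite[Theorem~6, p.368]{DautrayLionsV2}), and that $A_2+\beta:E\to E'$ is an isomorphism is the Lax--Milgram theorem applied to the coercive form $b$ (\cite[Theorem~7, p.368]{DautrayLionsV2}). For the realisation on $H$: if $u$ lies in the set \eqref{DoDef}, then $v\mapsto b(u,v)$ is $H$-continuous on $E$, and density of $E$ in $H$ (Proposition~\ref{PropE}) together with the Riesz theorem produces the unique $f\in H$ with $b(u,v)=(f,v)_H$ for all $v\in E$; this $f$ is $(A_2+\beta)u$, compatible with the $E'$-picture under $H\hookrightarrow E'$. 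Then surjectivity of $A_2+\beta:D(A_2+\beta)\to H$ follows by feeding $f\in H\subset E'$ to the Lax--Milgram inverse, obtaining $u\in E$ with $b(u,v)=(f,v)_H$, which makes $v\mapsto b(u,v)$ $H$-continuous, hence $u\in D(A_2+\beta)$; injectivity and the bound $\lVert(A_2+\beta)^{-1}f\rVert_H\le C\lVert f\rVert_H$ come from $\alpha\lVert u\rVert_E^2\le b(u,u)=(f,u)_H\le\lVert f\rVert_H\lVert u\rVert_H\le C\lVert f\rVert_H\lVert u\rVert_E$ together with $E\hookrightarrow H$. This bounded, everywhere-defined inverse immediately gives that $(A_2+\beta,D(A_2+\beta))$ is closed on $H$.

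Next I would handle self-adjointness and the density statements. Symmetry of $A_2+\beta$ on its domain is a one-line consequence of $b$ being Hermitian; to upgrade to self-adjointness I would use the elementary fact that a symmetric operator whose range is all of $H$ is self-adjoint (given $v\in D((A_2+\beta)^*)$, pick $u\in D(A_2+\beta)$ with $(A_2+\beta)u=(A_2+\beta)^*v$ by surjectivity, and symmetry plus surjectivity force $v=u\in D(A_2+\beta)$). Density of $D(A_2+\beta)$ in $H$ then follows from the usual orthogonality argument: if $g\perp D(A_2+\beta)$ in $H$, write $g=(A_2+\beta)u$ with $u\in D(A_2+\beta)$, so $0=(g,u)_H=b(u,u)\ge\alpha\lVert u\rVert_E^2$, whence $u=g=0$.

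The one step I expect to require a touch more than bookkeeping is density of $D(A_2+\beta)$ in $E$, and I would sequence it last because it uses the surjectivity of $A_2+\beta$ onto $H$ established above. The plan there is to equip $E$ with the equivalent inner product $(u,v)_b:=b(u,v)$ and rule out a nonzero $w\in E$ that is $b$-orthogonal to $D(A_2+\beta)$: for arbitrary $f\in H$, set $u=(A_2+\beta)^{-1}f\in D(A_2+\beta)$; then $0=(w,u)_b=b(w,u)=\overline{b(u,w)}=\overline{(f,w)_H}$, and letting $f$ range over $H$ forces $w=0$. Apart from this, there is no real obstacle; the proposition is a packaged consequence of Lax--Milgram and the basic theory of bounded self-adjoint operators and their inverses, so the exposition would mostly consist of recording which cited theorem does each job.
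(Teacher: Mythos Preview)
Your proposal is correct and follows exactly the approach the paper takes: the paper does not give an independent proof of Proposition~\ref{PropOp} at all, but simply records that, since $a_2+\beta$ is bounded, Hermitian, and $E$-coercive on the Hilbert triple $E\hookrightarrow H\hookrightarrow E'$, the conclusions follow from \cite[Propositions~9 and~10, p.~371]{DautrayLionsV2}. Your write-up is in fact more detailed than the paper's, spelling out the Lax--Milgram step, the closedness-from-bounded-inverse observation, the symmetric-plus-surjective criterion for self-adjointness, and the orthogonality arguments for density in $H$ and in $E$; all of these are the standard ingredients behind the cited propositions, and your sequencing is sound.
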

From the definition of $a_2+\beta$ in \eqref{a2beta} and the definition of $D(A_2+\beta)$ in \eqref{DoDef}, we have $D(A_2+\beta) = D(A_2)$. Since $\beta \in \RR$, $(A_2, D(A_2))$ inherits the same properties as $A_2 + \beta$, which we summarize in the following proposition.
\medskip
\begin{proposition}\label{DomA2}
The following holds true
\begin{itemize}
\item $A_2 \in \mathcal{L}(E,E')$. 
\item $D(A_2)$ is dense in $L^2(\tilde{X},\rho^0\,dx)$ and in $E$.
\item $(A_2, D(A_2))$ is closed and self-adjoint as an unbounded operator on $L^2(\tilde{X},\rho^0\,dx)$.
\end{itemize}
\end{proposition}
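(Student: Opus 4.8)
The plan is to deduce Proposition~\ref{DomA2} directly from Proposition~\ref{PropOp} by viewing $A_2$ as a bounded, everywhere-defined self-adjoint perturbation of $A_2+\beta$, namely $A_2 = (A_2+\beta) - \beta\,\Id$, after first checking that the two operators share the same domain. The key preliminary observation is that $D(A_2+\beta) = D(A_2)$: the forms $a_2+\beta$ and $a_2$ differ only by the sesquilinear form $(u,v)\mapsto \beta\,(u,v)_{L^2(\tilde{X},\rho^0\,dx)}$, which is continuous on $E$ for the topology of $L^2(\tilde{X},\rho^0\,dx)$ (it extends continuously to $L^2(\tilde{X},\rho^0\,dx)\times L^2(\tilde{X},\rho^0\,dx)$). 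Hence, for fixed $u\in E$, the map $v\mapsto (a_2+\beta)(u,v)$ is continuous on $E$ with the $L^2(\tilde{X},\rho^0\,dx)$-topology if and only if $v\mapsto a_2(u,v)$ is; comparing with the defining condition \eqref{DoDef} gives $D(A_2)=D(A_2+\beta)$. Consequently the density of $D(A_2)$ in both $L^2(\tilde{X},\rho^0\,dx)$ and $E$ follows at once from the corresponding statement in Proposition~\ref{PropOp}.

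Next I would establish that $A_2\in\mathcal{L}(E,E')$. The operator $\beta\,\Id$ belongs to $\mathcal{L}(E,E')$, since it is the composition of the continuous injections $E\hookrightarrow L^2(\tilde{X},\rho^0\,dx)\hookrightarrow E'$ (Proposition~\ref{PropE} and Remark~\ref{sepa2}) with multiplication by the real scalar $\beta$. Because $A_2+\beta\in\mathcal{L}(E,E')$ by Proposition~\ref{PropOp}, the identity $A_2=(A_2+\beta)-\beta\,\Id$ shows $A_2\in\mathcal{L}(E,E')$. Moreover, restricting to $D(A_2)=D(A_2+\beta)$, this identity holds verbatim as an equality of unbounded operators on $L^2(\tilde{X},\rho^0\,dx)$: for $u\in D(A_2)$ each of $A_2 u$, $(A_2+\beta)u$, and $\beta u$ lies in $L^2(\tilde{X},\rho^0\,dx)$, so $A_2 u=(A_2+\beta)u-\beta u$.

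Finally, for closedness and self-adjointness I would invoke the standard perturbation facts that adding a bounded everywhere-defined symmetric operator to a closed operator preserves closedness, and adding a bounded everywhere-defined self-adjoint operator to a self-adjoint operator preserves self-adjointness. Since $\beta\in\RR$, the operator $-\beta\,\Id$ is bounded and self-adjoint on $L^2(\tilde{X},\rho^0\,dx)$, and $(A_2+\beta,D(A_2+\beta))$ is closed and self-adjoint by Proposition~\ref{PropOp}; therefore $(A_2,D(A_2))=(A_2+\beta,D(A_2+\beta))-\beta\,\Id$ is closed and self-adjoint as an unbounded operator on $L^2(\tilde{X},\rho^0\,dx)$. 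I do not expect any genuine obstacle here: the proposition is entirely a bookkeeping consequence of Proposition~\ref{PropOp}, and the only point meriting care is the verification that passing from $a_2+\beta$ to $a_2$ leaves the domain \eqref{DoDef} unchanged, which holds precisely because the difference of the two forms is $L^2(\tilde{X},\rho^0\,dx)$-bounded.
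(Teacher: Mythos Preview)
Your proposal is correct and follows essentially the same approach as the paper, which simply observes (in the paragraph preceding the proposition) that $D(A_2+\beta)=D(A_2)$ because the forms differ by an $L^2$-bounded term, and that since $\beta\in\RR$ the properties of $A_2+\beta$ from Proposition~\ref{PropOp} transfer to $A_2$. Your write-up is more detailed but the content is identical.
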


\medskip
\section{Well-posedness of the weak formulation via semi-group method}\label{semigroup:sec}

In the previous section we obtained the setting
$E\hookrightarrow H \hookrightarrow E'$ with operators $A_2 \in \mathcal{L}(E,E')$, and $R_{\Omega} \in \mathcal{L}(H,H)$. With $T > 0$ we can extend the classical problem to the space of vector-valued distributions by looking for solutions to
 \begin{equation}\label{2ndorderevo}
 \ddot{u} + 2 R_{\Omega} \dot{u} + A_2 u = f, \ \text{in} \ \mathcal{D}'(0,T; E').
 \end{equation}
Here $R_{\Omega}\, u$ is the matrix multiplication between $R_{\Omega}$ and $u$ with $R_{\Omega}$ defined by \eqref{ROm}. We would also like to add the initial conditions $u(0) = g$ and $\dot u(0) = h$, but it is not apparent a priori how to incorporate these conditions with no additional regularity assumed. This issue will be addressed in detail in section \ref{AbsCauchy}. One can either study Equation \eqref{2ndorderevo} directly, or the associated first-order system obtained from \eqref{2ndorderevo} by reduction of order. Well-posedness for the first-order system can be obtained either via semi-group theory, discussed in sections \ref{SemiGroupReview} and \ref{InfGen}, or by Galerkin method, cf. section \ref{Galerkin1}.
 
 Let us first write out the first-order problem obtained from \eqref{2ndorderevo} by reduction of order. We define the following product space
$$\mathcal{H} := E \times H;\ \   H = L^2(\widetilde{X}, \rho^0\,dx),$$
equipped with the scalar product $(,)_{\mathcal{H}}$ defined by  
\begin{equation}\label{SPH}
\left ( \begin{pmatrix} u_1 \\ u_2\end{pmatrix} , \begin{pmatrix} v_1 \\ v_2 \end{pmatrix} \right )_{\mathcal{H}} :=  (a_2+\beta)(u_1, v_1)+   \left( u_2, v_2 \right)_H.
\end{equation}
Here $\beta$ is the coercivity constant of $a_2$ which gives that $a_2(u,u) \geq \alpha \| u \|_E^2 - \beta \| u \|^2_H$ for $u\in D(A_2)$, given by Theorem \ref{CoerciveA2}. In addition $a_2$ is Hermitian, see Lemma \ref{Propa3}. As a result, $\mathcal{H}$ equipped with the scalar product \eqref{SPH} is a Hilbert space, and the norm induced by the scalar product $(a_2 +\beta)$ is equivalent to the norm $\lVert \cdot \rVert_E$. For $(u_1, u_2)^T \in D(A_2) \times  E \subset \mathcal{H}$, we define the following unbounded operator on $\mathcal{H}$
$$\tilde{A}_2 \begin{pmatrix} u_1 \\ u_2 \end{pmatrix}:= \begin{pmatrix}  0 & \Id \\ -A_2  & -2R_{\Omega} \end{pmatrix} \begin{pmatrix} u_1 \\ u_2 \end{pmatrix} = \begin{pmatrix} u_2 \\ -A_2 u_1 -2R_{\Omega}  u_2 \end{pmatrix} ,$$
with domain $D(\tilde{A}_2) = D(A_2) \times E$. Since $D(A_2)$ is dense in $E$ and in $H$ by Proposition \ref{DomA2} , and $E$ is dense in $H$ by Proposition \ref{PropE}, $D(\tilde{A}_2) $ is dense in $\mathcal{H}$; in other words, $\tilde{A}_2$ is an unbounded densely defined operator on $\mathcal{H}$.  Via reduction of order, we can formally write \eqref{2ndorderevo} as the following first-order abstract Cauchy problem on $\mathcal{H}$ 
  \begin{equation}\label{Asos2}
 \dot{U} = \tilde{A}_2U+ F  , \ \text{in} \ \mathcal{D}'(0,T; H \times E')
 \end{equation}
also with an initial condition $U(0) = U_0$. Here we have formally set $U = (u,\dot{u})^T$, $F = (0,f)^T$, and $U_0 = (u(0),\dot{u}(0))^T$. The two formulations, \eqref{2ndorderevo} and \eqref{Asos2}, are equivalent in the sense that a solution of either one will give a solution of the other. The formulation \eqref{Asos2} is best adapted to analysis via semi-group method, and this is the approach we follow in this section. In section \ref{From1to2} we look in detail how to prove that a solution of \eqref{Asos2} gives a corresponding solution of \eqref{2ndorderevo}.

For the further analysis, we will also need to know the adjoint of $\widetilde{A}_2$ which is given in the following lemma.
%\HOX{I've moved the study of the adjoint to here from where it was previously in section 6.4, and also changed it to be more precise. Please check}

\medskip
\begin{lemma}\label{A2*adjoint}
The following holds true
\[
\Big ( \widetilde{A}_2^*,\ D(\widetilde{A}^*_2) \Big ) = \Bigg (
\left ( \begin{matrix}
0 & - \mathrm{Id} + \beta (A_2 + \beta)^{-1}\\
A_2 + \beta & 2 R_{\Omega} 
\end{matrix}
\right ),\
D(\widetilde{A}_2)
\Bigg )
\]
\end{lemma}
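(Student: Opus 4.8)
The plan is to compute $\widetilde{A}_2^*$ directly from the definition of the Hilbert-space adjoint relative to the inner product $(\cdot,\cdot)_{\mathcal{H}}$ of \eqref{SPH}, using three ingredients established earlier in the paper: $a_2$, and hence $a_2+\beta$, is Hermitian by Lemma~\ref{Propa3}, and $(a_2+\beta)(u,w)=\big((A_2+\beta)u,w\big)_H$ whenever $u\in D(A_2)$ and $w\in E$; the operator $A_2+\beta$ is self-adjoint on $H$, with bounded self-adjoint inverse $(A_2+\beta)^{-1}\colon H\to H$ whose range is $D(A_2)\subseteq E$ (Propositions~\ref{PropOp} and~\ref{DomA2}); and the multiplication operator $R_\Omega$ is skew-adjoint on $H$, since $R_\Omega$ is a real antisymmetric matrix, so that $R_\Omega^*=-R_\Omega$ as a bounded operator on $H$.

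First I would establish the inclusion $B\subseteq\widetilde{A}_2^*$, where $B$ denotes the operator on the right-hand side of the asserted identity, with domain $D(A_2)\times E$. For $V=(v_1,v_2)\in D(A_2)\times E$ and $U=(u_1,u_2)\in D(\widetilde{A}_2)=D(A_2)\times E$ one expands
\[
(\widetilde{A}_2U,V)_{\mathcal{H}}=(a_2+\beta)(u_2,v_1)+\big(-A_2u_1-2R_\Omega u_2,\,v_2\big)_H,
\]
and rewrites each term. The first term equals $\big(u_2,(A_2+\beta)v_1\big)_H$, using the Hermitian property and $v_1\in D(A_2)$. In the second term one writes $-A_2u_1=-(A_2+\beta)u_1+\beta u_1$; here $-\big((A_2+\beta)u_1,v_2\big)_H=-(a_2+\beta)(u_1,v_2)$, while $\beta(u_1,v_2)_H$ is rewritten as $(a_2+\beta)\big(u_1,\beta(A_2+\beta)^{-1}v_2\big)$ by self-adjointness of $A_2+\beta$, which is exactly the step producing the off-diagonal correction $\beta(A_2+\beta)^{-1}$. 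Finally $\big(-2R_\Omega u_2,v_2\big)_H=\big(u_2,2R_\Omega v_2\big)_H$ by skew-adjointness. Collecting the terms yields $(\widetilde{A}_2U,V)_{\mathcal{H}}=(U,BV)_{\mathcal{H}}$ for all $U\in D(\widetilde{A}_2)$, so $V\in D(\widetilde{A}_2^*)$ with $\widetilde{A}_2^*V=BV$.

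Next I would prove the reverse inclusion $D(\widetilde{A}_2^*)\subseteq D(A_2)\times E$. Fix $V=(v_1,v_2)\in D(\widetilde{A}_2^*)$ and set $(w_1,w_2)=\widetilde{A}_2^*V\in\mathcal{H}=E\times H$. Testing $(\widetilde{A}_2U,V)_{\mathcal{H}}=\big(U,(w_1,w_2)\big)_{\mathcal{H}}$ against $U=(u_1,0)$ with $u_1$ ranging over $D(A_2)$ gives, after writing $-A_2u_1=-(A_2+\beta)u_1+\beta u_1$ and rearranging, $\big((A_2+\beta)u_1,\,w_1+v_2\big)_H=(u_1,\beta v_2)_H$ for every such $u_1$; by the definition of the adjoint of the self-adjoint operator $A_2+\beta$ this forces $w_1+v_2\in D(A_2)$ with $(A_2+\beta)(w_1+v_2)=\beta v_2$, hence $w_1+v_2=\beta(A_2+\beta)^{-1}v_2\in E$, and therefore $v_2=(w_1+v_2)-w_1\in E$ and $w_1=-v_2+\beta(A_2+\beta)^{-1}v_2$. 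Testing instead against $U=(0,u_2)$ with $u_2\in E$ gives $(a_2+\beta)(u_2,v_1)=\big(u_2,\,w_2-2R_\Omega v_2\big)_H$; rewriting the left side by the Hermitian property and invoking the definition \eqref{DoDef} of $D(A_2)$ shows $v_1\in D(A_2)$ with $(A_2+\beta)v_1=w_2-2R_\Omega v_2$. Thus $V\in D(A_2)\times E$ and $\widetilde{A}_2^*V=BV$, so $\widetilde{A}_2^*\subseteq B$; together with the first inclusion this proves $\widetilde{A}_2^*=B$ on $D(\widetilde{A}_2^*)=D(A_2)\times E$.

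The argument has no deep step; the only genuine subtlety is that the inner product on $\mathcal{H}$ is built from $a_2+\beta$ rather than from the plain inner product of $E$, which is what forces the correction term $\beta(A_2+\beta)^{-1}$ into the $(1,2)$ block of $\widetilde{A}_2^*$, and one must keep careful track of the conjugate-linearity of $a_2$ in its second argument when passing between the sesquilinear form and the $H$-inner product.
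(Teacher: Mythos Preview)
Your proof is correct and follows essentially the same route as the paper: both arguments compute the adjoint directly from the defining identity $(\widetilde{A}_2 U,V)_{\mathcal H}=(U,\widetilde{A}_2^*V)_{\mathcal H}$, testing against vectors with one component zero to isolate the conditions on $v_1$ and $v_2$. The only notable difference is in the reverse inclusion $D(\widetilde{A}_2^*)\subseteq D(A_2)\times E$: where the paper shows $u_2\in E$ by a Lax--Milgram construction (producing $\tilde u_2\in E$ and then checking $u_2=\tilde u_2$), you instead read off $w_1+v_2\in D(A_2)$ directly from self-adjointness of $A_2+\beta$, which is a small streamlining of the same idea.
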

\begin{proof}
We first check that the domain is correct. The domain of the adjoint is defined to be
\[
D(\widetilde{A}^*_2) = \{ U \in \mathcal{H} \ : \ D(\widetilde{A}_2) \ni V \mapsto (\widetilde{A}_2 V, U )_{\mathcal{H}} \mbox{ extends to a continuous functional on $\mathcal{H}$} \}.
\]
For $V \in D(\widetilde{A}_2)$ we have, using the notation $V = (v_1,v_2)$ and $U = (u_1,u_2)$,
\begin{equation}\label{toextend1}
\begin{split}
(\widetilde{A}_2 V, U )_{\mathcal{H}} & = \left ( \left ( \begin{matrix}
v_2 \\ -A_2 v_1 - 2 R_\Omega v_2
\end{matrix} \right ), \ \left (\begin{matrix} u_1\\u_2
\end{matrix} \right ) \right )_{H}\\
& = a_2(v_2,u_1) + \beta (v_2,u_1)_{H} - (A_2 v_1,u_2)_{H} - 2 (R_\Omega v_2,u_2)_{H}
\end{split}
\end{equation}
First note that if $U \in D(\widetilde{A}_2) = D(A_2) \times E$, then using that $a_2$ is Hermitian, we have from \eqref{toextend1}
\[
(\widetilde{A}_2 V, U )_{\mathcal{H}} =  \overline{(A_2 u_1,v_2)_H} + \beta (v_2,u_1)_{H} - a_2( v_1,u_2) - 2 (R_\Omega v_2,u_2)_{H}
\]
which is a bounded functional for $V \in \mathcal{H}$. Thus $D(\widetilde{A}_2) \subset D(\widetilde{A}_2^*)$. For the opposite inclusion we begin by taking $v_1 = 0$ to find that if $U \in D(\widetilde{A}^*_2)$, then
\[
E \ni v_2 \mapsto a_2(v_2,u_1) + \beta (v_2,u_1)_{H}
\]
extends to a bounded linear functional on $H$. Therefore $u_1 \in D(A_2 + \beta) = D(A_2)$. On the other hand, taking $v_2 = 0$ we see that if $U \in D(\widetilde{A}^*_2)$, then
\begin{equation}\label{toextend}
D(A_2) \ni v_1 \mapsto (A_2 v_1,u_2)_{H}
\end{equation}
extends to a bounded linear functional on $E$. This implies that $u_2 \in E$ as we now argue. Indeed, suppose that $u_2 \in H$ and let $l_{u_2}$ be the bounded functional on $E$ which extends \eqref{toextend}. Then since $a_2+\beta$ is $E$ coercive, by the Lax-Milgram theorem there exists a $\widetilde{u}_2 \in E$ such that
\[
a_2(v_1,\widetilde{u}_2) + \beta (v_1, \widetilde{u}_2) = l_{u_2}(v_1)+ \beta (v_1,u_2)_H
\]
for all $v_1 \in E$. Suppose we set $v_1 = (A_2 + \beta)^{-1} (u_2-\widetilde{u}_2)$. Then $a_2(v_1,\widetilde{u}_2) = (A_2 v_1, \widetilde{u}_2)_H$ since $v_1 \in D(A_2)$, and so we find that
\[
\big ( (A_2 + \beta) v_1, \widetilde{u}_2 \big )_H = \big ( (A_2+\beta) v_1,u_2 \big )_{H} \Rightarrow \big (u_2-\widetilde{u}_2,u_2-\widetilde{u}_2\big )_H = 0.
\]
Therefore $u_2 = \widetilde{u}_2 \in E$. Thus $D(\widetilde{A}^*_2) \subset D(\widetilde{A}_2)$, and so we have proven that $D(\widetilde{A}^*_2) = D(\widetilde{A}_2)$.

To prove that the given formula is correct for the adjoint operator we calculate for $V \in D(\widetilde{A}_2)$ and $U \in D(\widetilde{A}^*_2)$
\begin{equation} \label{step}
\begin{split}
\left ( \left ( 
\begin{matrix}
v_1\\
v_2
\end{matrix}
\right ), \left (
\begin{matrix}
-u_2 + \beta (A_2 + \beta)^{-1} u_2\\
(A_2 + \beta)u_1 + 2 R_\Omega u_2
\end{matrix}
\right ) \right )_{\mathcal{H}}
& = -a_2(v_1,u_2) - \beta (v_1,u_2)_{H} + (v_2,(A_2 + \beta)u_1)_{H}\\
& + 2 ( v_2,R_\Omega u_2)_{H} + \beta (a_2 + \beta)( v_1 , (A_2 + \beta)^{-1} u_2)\\
& = a_2(v_2,u_1) + \beta (v_2,u_1)_{H} - (A_2 v_1,u_2)_{H} - 2 (R_\Omega v_2,u_2)_{H}.
\end{split}
\end{equation}
Note that we have used $R_\Omega^* = -R_{\Omega}$ with respect to the inner product on $H$. Since this coincides with \eqref{toextend1} the proof is complete.
\end{proof}
 
%We need to address the following questions:
%\begin{enumerate}[label= Question \arabic*:]
%\item How does one make sense of \eqref{Asos2}, especially the initial condition at $0$? 
%\item In which function space does one have existence of such a solution? 
%\item In the case of existence of solution, does one have uniqueness? 
%\end{enumerate}

\medskip
The rest of the section is dedicated to obtaining a well-posedness result for this problem via semi-group theory. We will first show in subsection \ref{InfGen} and Theorem \ref{WellPosedNess} that $\tilde{A}_2$ is the infinitesimal generator of a semigroup of class $\mathcal{C}^0$. In subsection \ref{AbsCauchy}, we will discuss how one `solves' the abstract Cauchy problem \eqref{Asos2}, given such a property of $\tilde{A}_2$. After this is done, we will discuss how a solution of the first-order system obtained in this manner gives a solution of \eqref{2ndorderevo}, cf. subsection \ref{From1to2}. 
Before tackling these tasks, we make a short digression in subsection \ref{SemiGroupReview} to state the essential definitions and facts we will need from semigroup theory.

\subsection{Summary of semigroup theory}\label{SemiGroupReview}
 Our discussion follows \cite[Chapter XVII]{DautrayLionsV5}, where proofs and more details can be found. First we state the definition of a semigroup of class $\mathcal{C}^0$.
\medskip  
  \begin{definition}
Let $\mathcal{B}$ be a real or complex Banach space provided with norm $\lVert\cdot\rVert_\mathcal{B}$. 
A family $\{ G(t)\}_{t\geq 0}$ of elements $G(t) \in \mathcal{L}(\mathcal{B})$ for $t\geq 0$ is said to form a semigroup of class $\mathcal{C}^0$, or just a $\mathcal{C}^0$ semigroup, on $\mathcal{B}$ if it satisfies the following conditions:
 \begin{enumerate}
 \item $G(t+s) = G(t) \circ G(s)$ for $t, s\geq 0$ .
 \item $G(0) = \Id$ in $\mathcal{L}(\mathcal{B})$. 
 \item $\lim_{t\rightarrow 0^+} \lVert G(t)X - X\rVert_{\mathcal{B}} = 0$, for all $X \in \mathcal{B}$. 
 \end{enumerate}
\end{definition}
Following from its definition, for a semigroup $\{G(t)\}_{t\geq 0}$ of class $\mathcal{C}^0$ over Banach space $\mathcal{B}$, there exist real constants $\beta$ and $M$ such that $\lVert G(t)\rVert \leq Me^{\beta t}$ for $t \geq 0$.  When $M = 1$, the semigroup is called quasi-contractive; in the special case where $M=1$ and $\beta = 0 $, the semigroup is called contractive. Next we state the definition of the infinitesimal  generator of a $\mathcal{C}^0$ semigroup.
\medskip
\begin{definition}
The infinitesimal generator $\tilde{A}$ of semigroup $\{G(t)\}_{t\geq 0}$ of class $\mathcal{C}^0$ is the linear unbounded operator given by 
\medskip
\begin{equation}\label{limitSem}
\tilde{A}(X) := \lim_{t\rightarrow 0^+} \dfrac{G(t)X - X}{t}\end{equation}
on the domain $D(\tilde{A})$ defined to be the set of those $X \in \mathcal{B}$ such that the limit \eqref{limitSem} exists in $\mathcal{B}$. 
\end{definition}
From its definition, it follows that $D(\tilde{A})$ is a dense vector subspace of $\mathcal{B}$. 
\smallskip
\begin{remark}[Well-posedness and infinitesimal generator]\label{DiffSet}
If $X\in D(\tilde{A})$, then the function $t\mapsto G(t) X$ is once continuously differentiable from $[0, \infty)$ to $\mathcal{B}$ and one has
$$\dot{G}(t) X = \tilde{A} G(t) X = G(t) \tilde{A}X .$$
Hence $G(t)X$ is a strong solution to $\dot{V} = AV$, $V(0) = X$. In fact it is also the unique strong solution. 

Conversely, suppose for each $X \in D(\tilde{A})$, the problem
$\dot{U}= \tilde{A} U , U(0) = X$ has a unique solution, denoted by $G(t)X$, then 
$\{G(t)\}$ is a semigroup of class $\mathcal{C}^0$, with $\tilde{A}$ being its infinitesimal generator. 
\end{remark}
\smallskip
 \begin{remark}[Relation between quasi-contractive and contractive semi-groups]\label{QuasiCon}
Suppose that $\tilde{A}-c$ is the infinitesimal generator of a contraction semigroup of class $\mathcal{C}^0$ denoted by $G(t)$. For all $X \in D(\tilde{A}-c) = D(\tilde{A})$, we have $V(t):= G(t)X$ solves
\begin{equation}\label{eqn2}
\dot{V} = (\tilde{A}- c) V  ; \ \ V(0) = X.
\end{equation}
Now, $V(t)$ is a solution of \eqref{eqn2} if and only if 
 $U(t):= e^{ct} V  = e^{ct} G(t) x$ solves
 \begin{equation}\label{eqn1}
 \dot{U} = \tilde{A} U ,  \ \ U(0) = X.
 \end{equation}
Thus \eqref{eqn1} has existence and uniqueness of solution for all initial conditions in $D(\tilde{A})$. In addition, with $\{G(t)\}$ being contractive, we have
 $$\lVert V(t) X\rVert_{\mathcal{B}} \leq \lVert X \rVert_{\mathcal{B}} \Rightarrow \lVert U(t) X\rVert_{\mathcal{B}}  \leq e^{ct} \lVert X\rVert_{\mathcal{B}}  .$$
This implies, by Remark \ref{DiffSet}, that $\tilde{A}$ is the infinitesimal generator of a quasi-contractive semigroup of class $\mathcal{C}^0$, namely $e^{ct} G(t)$.   
 
\end{remark}
\medskip
The Hille-Yosida Theorem, cf. \cite[Chapter XVII, Section 3]{DautrayLionsV5}, gives the necessary and sufficient conditions for an operator densely defined on a Banach space to be the infinitesimal generator of a semigroup of class $\mathcal{C}^0$. For our problem, we will use a corollary of the Hille-Yosida Theorem, the Hille-Phillips Theorem \cite{HillePhilips}. Here we follow the version of the theorem given in \cite[Chapter XVII, Section 3,Theorem 7]{DautrayLionsV5}, 
\begin{theorem}[Hille-Phillips \cite{HillePhilips}] \label{HPtheorem}
Let $\tilde{A}$ be an unbounded operator with domain $D(\tilde{A})$ dense in a Hilbert space $\mathcal{H}$. Then $\tilde{A}$ is the infinitesimal generator of a contractive semigroup of class $\mathcal{C}^0$ if and only if
\begin{enumerate}
\item $\tilde{A}$ is dissipative, that is, $\text{Re} (\tilde{A}X,X)_\mathcal{H}\leq 0$ for all $X \in D(\tilde{A})$; and
\item the image of $D(\tilde{A})$ by $\text{Id} - \tilde{A}$ is equal to $\mathcal{H}$.
\end{enumerate}
\end{theorem}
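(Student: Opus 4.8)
The plan is to prove the two implications separately. The forward direction is a short computation; the backward direction is where the work lies, and I would reduce it to the Hille--Yosida theorem quoted just above.

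For the ``only if'' direction, assume $\tilde{A}$ is the infinitesimal generator of a contractive $\mathcal{C}^0$ semigroup $\{G(t)\}_{t\geq 0}$. To obtain dissipativity, fix $X\in D(\tilde{A})$; by Remark~\ref{DiffSet} the curve $t\mapsto G(t)X$ is $\mathcal{C}^1$ with derivative $\tilde{A}G(t)X$, so $\varphi(t):=\lVert G(t)X\rVert_{\mathcal{H}}^2$ is differentiable with $\varphi'(t)=2\,\text{Re}\,(\tilde{A}G(t)X,\,G(t)X)_{\mathcal{H}}$. Since $\varphi(t)\le\varphi(0)$ for all $t\geq 0$, the right derivative $\varphi'(0)=2\,\text{Re}\,(\tilde{A}X,X)_{\mathcal{H}}$ is $\le 0$, which is condition (1). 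For condition (2), contractivity $\lVert G(t)\rVert\le 1$ makes the Laplace transform $\int_0^\infty e^{-t}G(t)\,dt$ converge in $\mathcal{L}(\mathcal{H})$ and equal $(\Id-\tilde{A})^{-1}$, so in particular $\Id-\tilde{A}$ maps $D(\tilde{A})$ onto $\mathcal{H}$.

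For the ``if'' direction I would argue in four steps. (i) From dissipativity, for each $\lambda>0$ and $X\in D(\tilde{A})$ one has $\lVert(\lambda-\tilde{A})X\rVert_{\mathcal{H}}\lVert X\rVert_{\mathcal{H}}\ge\text{Re}\,((\lambda-\tilde{A})X,X)_{\mathcal{H}}=\lambda\lVert X\rVert_{\mathcal{H}}^2-\text{Re}\,(\tilde{A}X,X)_{\mathcal{H}}\ge\lambda\lVert X\rVert_{\mathcal{H}}^2$, hence $\lVert(\lambda-\tilde{A})X\rVert_{\mathcal{H}}\ge\lambda\lVert X\rVert_{\mathcal{H}}$; taking $\lambda=1$ shows $\Id-\tilde{A}$ is injective, so with the surjectivity hypothesis $(\Id-\tilde{A})^{-1}\in\mathcal{L}(\mathcal{H})$ with norm $\le 1$. (ii) Being bounded, $(\Id-\tilde{A})^{-1}$ is closed, hence so are $\Id-\tilde{A}$ and $\tilde{A}$. (iii) I would then show $(0,\infty)$ lies in the resolvent set $\rho(\tilde{A})$: the set $S:=\{\lambda>0:\lambda\in\rho(\tilde{A})\}$ is nonempty ($1\in S$), open in $(0,\infty)$ since $\tilde{A}$ is closed, and for $\lambda\in S$ the estimate in (i) applied to $X=(\lambda-\tilde{A})^{-1}Y$ gives $\lVert(\lambda-\tilde{A})^{-1}\rVert\le 1/\lambda$; since $1/\lambda$ stays bounded as $\lambda$ approaches any point of $(0,\infty)$, the resolvent cannot blow up there, so $S$ is also closed in $(0,\infty)$, whence $S=(0,\infty)$ by connectedness. (iv) Now $\tilde{A}$ is densely defined and closed with $(0,\infty)\subset\rho(\tilde{A})$ and $\lVert(\lambda-\tilde{A})^{-1}\rVert\le 1/\lambda$ for all $\lambda>0$, so the Hille--Yosida theorem (cf.\ \cite[Chapter XVII, Section 3]{DautrayLionsV5}) yields that $\tilde{A}$ generates a contractive $\mathcal{C}^0$ semigroup.

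The main obstacle is step (iii): the hypothesis provides surjectivity of $\Id-\tilde{A}$ only at the single point $\lambda=1$, and this must be propagated along the whole half-line $(0,\infty)$ together with the uniform resolvent bound $1/\lambda$ that Hille--Yosida requires. If one preferred not to invoke Hille--Yosida, an alternative is to construct the semigroup directly from the Yosida approximants $\tilde{A}_\lambda:=\lambda^2(\lambda-\tilde{A})^{-1}-\lambda\,\Id\in\mathcal{L}(\mathcal{H})$, show each $e^{t\tilde{A}_\lambda}$ is a contraction using the norm bound from (i), and pass to the limit $\lambda\to\infty$ using $\tilde{A}_\lambda X\to\tilde{A}X$ on $D(\tilde{A})$ and the resolvent identity; but with Hille--Yosida available, reduction to it is the cleaner route.
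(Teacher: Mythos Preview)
The paper does not give its own proof of this theorem; it simply cites \cite[Chapter XVII, Section 3, Theorem 7]{DautrayLionsV5} and moves on, so there is nothing in the paper to compare against. Your argument is correct and is the standard Lumer--Phillips route: the forward direction via differentiating $\lVert G(t)X\rVert^2$ and the Laplace-transform formula for the resolvent, and the backward direction by extracting the resolvent estimate $\lVert(\lambda-\tilde{A})^{-1}\rVert\le 1/\lambda$ from dissipativity, using a connectedness argument to propagate invertibility from $\lambda=1$ to all of $(0,\infty)$, and then invoking Hille--Yosida. That is exactly the proof one finds in the reference the paper cites.
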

\smallskip
\begin{remark}
A proof of the theorem can be found in \cite[Chapter XVII, Section 3, Theorem 7]{DautrayLionsV5}. By the proof, one can in fact replace the second condition by one that requires the existence of $\lambda > 0$ such that the image of $D(\tilde{A})$ by $\lambda \text{Id} - \tilde{A}$ is equal to $\mathcal{H}$, cf. \cite[Chapter XVII, Section 3, Remark 7]{DautrayLionsV5}. In fact, this is the version that we will use to show that  the operator $(\tilde{A}_2 - c, D(\tilde{A}_2))$ is the infinitesimal generator of a quasi-contractive semigroup of class $\mathcal{C}^0$, for some $c > 0$. 
\end{remark}

\subsection{$\tilde{A}_2$ is the infinitesimal generator of a quasi-contractive semigroup}\label{InfGen} 

For some $c >0$ we will show that $\tilde{A}_2-c$ generates a contractive semigroup by showing that it satisfies the hypothesis of Theorem \ref{HPtheorem}. Then Remark \ref{QuasiCon} gives the result for $\tilde{A}_2$. Recall that we assume the hypotheses of Theorem \ref{CoerciveA2} hold, and that $\alpha$ and $\beta$ are the coercivity constants for $a_2$.
 \begin{lemma}\label{WellPosedNess}
 \begin{enumerate}
 \item $D( \tilde{A}_2) $ is dense in $\mathcal{H} = E \times H$
\item There exists $c  > 0$ such that for all  $U \in D( \tilde{A}_2)$
$$\text{Re}( \tilde{A}_2 \, U, U )_{\mathcal{H}} \leq c ( U , U)_{\mathcal{H}}. $$ 
\item $\lambda - \tilde{A}_2$ is onto $\mathcal{H}$ for sufficiently large $\lambda$.
\end{enumerate}
  \end{lemma}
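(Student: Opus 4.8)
The plan is to verify the three conditions in turn; conditions (1) and (2) are short, while condition (3) carries the real content. For the density statement, recall $D(\tilde{A}_2) = D(A_2) \times E$, so it suffices to note that $D(A_2)$ is dense in $E$ (part of Proposition~\ref{DomA2}) and that $E$ is dense in $H$ (Proposition~\ref{PropE}); hence $D(A_2) \times E$ is dense in $E \times H = \mathcal{H}$ for the product topology.

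For the quasi-dissipativity estimate I would take $U = (u_1, u_2) \in D(A_2) \times E$ and compute $(\tilde{A}_2 U, U)_{\mathcal{H}}$ exactly as in the proof of Lemma~\ref{A2*adjoint}: from \eqref{SPH} and $\tilde{A}_2 U = (u_2, -A_2 u_1 - 2 R_{\Omega} u_2)$ one gets $(\tilde{A}_2 U, U)_{\mathcal{H}} = (a_2+\beta)(u_2, u_1) - (A_2 u_1, u_2)_H - 2(R_{\Omega} u_2, u_2)_H$, and since $u_1 \in D(A_2)$ and $a_2$ is Hermitian this equals $\overline{(A_2 u_1, u_2)_H} + \beta(u_2, u_1)_H - (A_2 u_1, u_2)_H - 2 (R_{\Omega} u_2, u_2)_H$. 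Taking real parts, $\overline{(A_2 u_1, u_2)_H} - (A_2 u_1, u_2)_H$ is purely imaginary and, because $R_{\Omega}^* = -R_{\Omega}$ on $H$, so is $(R_{\Omega} u_2, u_2)_H$; hence $\mathrm{Re}\,(\tilde{A}_2 U, U)_{\mathcal{H}} = \beta\,\mathrm{Re}\,(u_2, u_1)_H$. Bounding the right-hand side by $\tfrac{|\beta|}{2}(\|u_1\|_H^2 + \|u_2\|_H^2)$ and using the continuity of $E \hookrightarrow H$ together with the coercivity estimate $(a_2+\beta)(u_1,u_1) \ge \alpha \|u_1\|_E^2$ to absorb $\|u_1\|_H^2$ into $(U,U)_{\mathcal{H}}$ gives $\mathrm{Re}\,(\tilde{A}_2 U, U)_{\mathcal{H}} \le c\,(U,U)_{\mathcal{H}}$ for a suitable $c > 0$.

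For surjectivity of $\lambda - \tilde{A}_2$, given $(F_1, F_2) \in E \times H$ I would eliminate $u_2 = \lambda u_1 - F_1$ from the system $(\lambda - \tilde{A}_2)(u_1,u_2) = (F_1, F_2)$, reducing it to the single equation $(A_2 + 2\lambda R_{\Omega} + \lambda^2 \Id)\, u_1 = G$ with $G := F_2 + (\lambda \Id + 2 R_{\Omega}) F_1 \in H \subset E'$. To solve this, introduce the bounded sesquilinear form $b_\lambda(u,v) := a_2(u,v) + 2\lambda (R_{\Omega} u, v)_H + \lambda^2 (u,v)_H$ on $E \times E$, i.e.\ the form associated with $A_2 + 2\lambda R_{\Omega} + \lambda^2 \Id \in \mathcal{L}(E,E')$. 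Since $\mathrm{Re}\,(R_{\Omega} u, u)_H = 0$, the coercivity of $a_2$ from Theorem~\ref{CoerciveA2} gives $\mathrm{Re}\,b_\lambda(u,u) \ge \alpha \|u\|_E^2 + (\lambda^2 - \beta)\|u\|_H^2 \ge \alpha\|u\|_E^2$ as soon as $\lambda^2 \ge \beta$, so for such $\lambda$ the form $b_\lambda$ is strictly $E$-coercive, and Lax--Milgram yields a unique $u_1 \in E$ with $b_\lambda(u_1,v) = (G,v)_H$ for all $v \in E$. Writing $b_\lambda(u_1,v) = (a_2+\beta)(u_1,v) + \big(2\lambda(R_{\Omega} u_1,v)_H + (\lambda^2 - \beta)(u_1,v)_H\big)$ and noting the parenthesised part is $H$-continuous in $v$, it follows that $v \mapsto (a_2+\beta)(u_1,v)$ is $H$-continuous, hence $u_1 \in D(A_2+\beta) = D(A_2)$ by \eqref{DoDef}; then $(A_2 + 2\lambda R_{\Omega} + \lambda^2)u_1 = G$ holds in $H$, and $u_2 := \lambda u_1 - F_1 \in E$ gives $(u_1, u_2) \in D(A_2) \times E = D(\tilde{A}_2)$ with $(\lambda - \tilde{A}_2)(u_1, u_2) = (F_1, F_2)$.

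The main obstacle is condition (3): one must convert the $H$-defect in the coercivity of $a_2$ (the $-\beta\|\cdot\|_H^2$ term) into strict $E$-coercivity of the shifted form $b_\lambda$ — which works precisely because the skew-adjoint Coriolis term drops out of the real part — and then recover the $D(A_2)$-regularity of the Lax--Milgram solution so the abstract identity in $E'$ upgrades to an identity in $H$. Conditions (1) and (2) are essentially bookkeeping once Propositions~\ref{PropE} and~\ref{DomA2} and the computation underlying Lemma~\ref{A2*adjoint} are in hand.
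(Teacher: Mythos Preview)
Your proof is correct and follows the same overall strategy as the paper's: parts (1) and (2) are essentially identical, and for (3) both arguments eliminate $u_2$ and reduce to solving $(A_2 + 2\lambda R_\Omega + \lambda^2)u_1 = G$ in $H$ via a Lax--Milgram/coercivity argument for the associated sesquilinear form.

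The one genuine difference is in how you handle the Coriolis term in part (3). The paper (Lemma~\ref{invertible}) treats $2\lambda R_\Omega$ merely as a bounded perturbation, obtaining coercivity of $a_2 + 2\lambda R_\Omega$ with the degraded constant $\beta + |\lambda|\,\|2R_\Omega\|$, which forces the quadratic threshold $\lambda^2 > \beta + |\lambda|\,\|2R_\Omega\|$. You instead exploit the skew-adjointness $R_\Omega^* = -R_\Omega$ to observe that $\mathrm{Re}\,(R_\Omega u,u)_H = 0$, so the Coriolis term disappears from $\mathrm{Re}\,b_\lambda(u,u)$ and the simpler condition $\lambda^2 \ge \beta$ suffices. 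This is a cleaner and sharper argument; the paper's version works but does not use all the available structure. Your explicit check that the Lax--Milgram solution lies in $D(A_2)$ (via the $H$-continuity of the lower-order terms and definition~\eqref{DoDef}) is also more self-contained than the paper's appeal to \cite{DautrayLionsV2}.
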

  \medskip

\begin{proof}\

The first criterion follows simply from the fact that $D(A_2)$ is dense in $E$ by Proposition \ref{DomA2},  and so we have $D( \tilde{A}_{2}) = D(A_2) \times E$ is dense in $\mathcal{H} = E \times H $. 

For the second criterion we calculate for $U = (u_1,u_2) \in D(A_2)$
\begin{align*}
\text{Re}\left ( \tilde{A}_2\begin{pmatrix} u_1 \\ u_2\end{pmatrix} , \begin{pmatrix} u_1 \\ u_2\end{pmatrix} \right )_{\mathcal{H}}
&= \text{Re}\left [(a_2+\beta)( u_2, u_1) + (-A_2 u_1, u_2 )_H -  (  2 R_{\Omega} u_2, u_2 )_H \right ] \\
& = \text{Re}\left [ a_2 ( u_2, u_1) + \beta ( u_2, u_1 )_{H} - a_2 ( u_1, u_2 )  \right ]\\[0.2cm]
& =  \text{Re}\left [\beta ( u_1, u_2 )_{H} \right ]\\
&\leq \dfrac{1}{2} \beta \lVert u_1 \rVert_H^2 + \dfrac{1}{2} \beta \lVert u_2 \rVert_H^2 \\
& \leq \dfrac{1}{2} \beta  \lVert u_1 \rVert_E^2 + \dfrac{1}{2} \beta  \lVert u_2 \rVert^2_H  \\
&\leq c \big(  \lVert u_1 \rVert_{a_2+\beta}^2 +  \lVert u_2 \rVert^2_H \big) 
= c \, \left ( \begin{pmatrix} u_1 \\ u_2\end{pmatrix},  \begin{pmatrix} u_1 \\ u_2\end{pmatrix} \right )_{\mathcal{H}}
\end{align*}% \HOX{Isn't $R_\Omega$ skew symmetric so $(R_\Omega u_2, u_2) = 0$?}
%where we have used the equivalency of the norms $\| \cdot \|_E$ and $\| \cdot \|_{a_2+\beta}$ 
on $E$.

Finally for the third criterion we let $\begin{pmatrix} f \\ \tilde{f}\end{pmatrix}\in \mathcal{H}$ and consider the solvability of
$$\lambda \begin{pmatrix} u_1 \\ u_2\end{pmatrix} -\tilde{A}_2\begin{pmatrix} u_1 \\ u_2\end{pmatrix} 
  = \begin{pmatrix} f \\ \tilde{f}\end{pmatrix}.$$
This last formula holds if and only if
  $$
  \begin{cases} u_2 = \lambda u_1 - f\\
  \lambda u_2 + A_2 u_1 + 2R_{\Omega} u_2 = \tilde{f} \end{cases}\Leftrightarrow \begin{cases}   u_2 = \lambda u_1 - f \\
A_2 u_1 + 2 R_{\Omega}(\lambda u_1 - f)  + \lambda (\lambda u_1- f) =  \tilde{f}
  \end{cases}.$$
This solvability follows from that of the following problem
  $$ \Big( A_2 + \lambda 2 R_{\Omega} + \lambda^2  \Big) u_1 =  \tilde{f} + 2 R_{\Omega} f + \lambda f \in H$$
which is guaranteed by the following Lemma \ref{invertible} for $\lambda > \mathbf{c}$.
\end{proof}
\medskip

 \begin{lemma}\label{invertible}
 There exists $\mathbf{c} > 0$ so that for any $\lambda > \mathbf{c}$ and $g \in H$ there exists $u \in D(A_2)$ such that
 $$\big( A_2 + \lambda 2 R_{\Omega} \big)u + \lambda^2 u  =  g .$$
 \end{lemma}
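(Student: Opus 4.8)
The plan is to solve the equation variationally, via the Lax--Milgram theorem applied to a suitably shifted form, and then to identify the variational solution as an element of $D(A_2)$. For $\lambda > 0$ introduce the bounded sesquilinear form on $E \times E$ given by
\[
b_\lambda(u,v) := a_2(u,v) + 2\lambda\,(R_\Omega u, v)_H + \lambda^2\,(u,v)_H ,
\]
which is indeed bounded on $E\times E$: $a_2$ is bounded by Lemma~\ref{Propa3}, and since $R_\Omega \in \mathcal{L}(H,H)$ and the inclusion $E \hookrightarrow H$ is continuous by Proposition~\ref{PropE}, the remaining two terms are bounded on $E \times E$ as well.

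The key observation is that $R_\Omega$ is skew-adjoint on $H$, $R_\Omega^* = -R_\Omega$, so $(R_\Omega u, u)_H$ is purely imaginary and the cross term drops out of the real part; using also that $a_2$ is Hermitian (Lemma~\ref{Propa3}) and the coercivity estimate of Theorem~\ref{CoerciveA2},
\[
\text{Re}\, b_\lambda(u,u) = a_2(u,u) + \lambda^2 \lVert u\rVert_H^2 \ \geq\ \alpha \lVert u \rVert_E^2 + (\lambda^2 - \beta)\lVert u\rVert_H^2 .
\]
Fixing any $\mathbf{c} > 0$ with $\mathbf{c}^2 \geq \beta$, we obtain, for every $\lambda > \mathbf{c}$, that $\text{Re}\, b_\lambda(u,u) \geq \alpha \lVert u\rVert_E^2$, so $b_\lambda$ is $E$-coercive. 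For fixed $g \in H$ the map $v \mapsto (g,v)_H$ is a bounded antilinear functional on $E$ (again because $E \hookrightarrow H$), so the Lax--Milgram theorem (cf. \cite{DautrayLionsV2}) yields a unique $u \in E$ with $b_\lambda(u,v) = (g,v)_H$ for all $v \in E$.

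It then remains to show $u \in D(A_2)$ and that $u$ solves the equation in $H$. Rearranging the variational identity and adding $\beta(u,v)_H$ to both sides gives, for all $v \in E$,
\[
(a_2 + \beta)(u,v) = \big( g - 2\lambda R_\Omega u + (\beta - \lambda^2) u ,\, v \big)_H = ( \tilde g, v)_H ,
\]
where $\tilde g := g - 2\lambda R_\Omega u + (\beta - \lambda^2)u \in H$ since $g \in H$, $u \in E \subset H$ and $R_\Omega u \in H$. Hence $v \mapsto (a_2+\beta)(u,v)$ is continuous for the $L^2(\tilde X, \rho^0\,dx)$ topology on $E$, which by the definition \eqref{DoDef} means exactly $u \in D(A_2 + \beta) = D(A_2)$, with $(A_2 + \beta)u = \tilde g$ in $H$; subtracting $\beta u$ gives $(A_2 + 2\lambda R_\Omega)u + \lambda^2 u = g$ in $H$, as claimed. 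There is no genuine obstacle here: the substantive step is the observation that skew-adjointness of $R_\Omega$ annihilates the first-order term in $\text{Re}\, b_\lambda$, after which coercivity of $a_2$ plus the large positive shift $\lambda^2$ does the work; the only point requiring a little care is the final identification of the weak solution as an element of $D(A_2)$ through the definition \eqref{DoDef}.
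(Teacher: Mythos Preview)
Your proof is correct and follows the same overall strategy as the paper: show that the shifted sesquilinear form is $E$-coercive and invoke Lax--Milgram (equivalently, the machinery of \cite[Propositions~9--10]{DautrayLionsV2}), then identify the variational solution as lying in $D(A_2)$ via the definition~\eqref{DoDef}. The identification step is carried out carefully and matches what the paper does implicitly through Proposition~\ref{PropOp}.

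There is one genuine refinement in your argument relative to the paper's. You exploit the skew-adjointness $R_\Omega^* = -R_\Omega$ (a fact the paper uses elsewhere, e.g.\ in Lemma~\ref{A2*adjoint} and in part~2 of Lemma~\ref{WellPosedNess}) to observe that $(R_\Omega u,u)_H$ is purely imaginary, so the Coriolis term vanishes from $\mathrm{Re}\,b_\lambda(u,u)$ and you obtain the clean threshold $\mathbf{c} = \sqrt{\beta}$. The paper's proof of this lemma does \emph{not} use skew-adjointness here: it simply bounds $|r(2R_\Omega u,u)_H| \le |r|\,\lVert 2R_\Omega\rVert\,\lVert u\rVert_H^2$ and absorbs this into the coercivity constant, yielding the larger threshold $\mathbf{c} = \tfrac{1}{2}\bigl(\lVert 2R_\Omega\rVert + \sqrt{\lVert 2R_\Omega\rVert^2 + 4\beta}\,\bigr)$. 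Your version is therefore slightly sharper and a little cleaner, at no extra cost.
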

 \medskip
 
 \begin{proof} 
 For $r, s\in \RR$ and $g \in H$ consider the solvability of the problem
 \begin{equation}\label{InterStep}
   A_2 u +  r 2R_{\Omega} u + s u = g.
   \end{equation}
For a fixed $r\in \RR$, we apply the result of Proposition \ref{PropOp}. To begin we introduce the bilinear form
$$ (a_2 + r 2 R_{\Omega}) (u,v):= a_2(u,v) + r ( 2 R_{\Omega} u, v)_H$$
which is E coercive relative to $H$ with constants $\alpha$ and $\beta+\lvert r\vert \lVert 2R_{\Omega}\rVert:$
$$(a_2 + r2R_{\Omega} ) (u,u) \geq {\alpha} \lVert u \rVert^2_E 
- \big(\beta+\lvert r\vert \lVert 2R_{\Omega}\rVert \big) \lVert u \rVert^2_H. $$
As in Proposition \ref{PropOp}, by  \cite[Propositions 9 and 10 , p 371]{DautrayLionsV2}, the operator
\[
A_2 + r\ R_\Omega + s\ \mathrm{Id}
\]
maps $D(A_2)$ onto $H$ for
$$ s> \beta + |r| \lVert 2 R_{\Omega}\rVert. $$
Now note that if $\lambda \in \RR$ satisfies $\lambda > \mathbf{c}$ where $\mathbf{c}$ is given by
 $$\mathbf{c} = \dfrac{1}{2}\Big(\lVert 2 R_{\Omega}\rVert + \sqrt{ \lVert 2 R_{\Omega}\rVert^2 + 4 {\beta} }\Big)$$
 then $\lambda$ satisfies
  $$\lambda^2 >{\beta} +  \lvert \lambda \rvert \lVert 2 R_{\Omega}\rVert .$$ 
Combining the previous conclusions we find that for $\lambda > \mathbf{c}$, with $r = \lambda^2$ and $s =  \lambda$, the operator
  $$A_2 + \lambda 2 R_{\Omega} + \lambda^2  : D(A_2) \rightarrow H$$
  is onto.
 \end{proof}
 \medskip

\noindent As a result of the Hille-Phillips Theorem and Remark \ref{QuasiCon}, Lemma~\ref{WellPosedNess} gives
\medskip

\begin{theorem}\label{qcontA2}
If the hypotheses of Theorem \ref{CoerciveA2} are satisfied, then there exists a $c > 0$ such that {$(\tilde{A}_2 -c , D(\tilde{A}_2))$ generates a contractive semigroup on $\mathcal{H}$. Hence $(\tilde{A}_2 , D(\tilde{A}_2))$ generates a quasi-contractive semigroup on $\mathcal{H}$.} 
\end{theorem}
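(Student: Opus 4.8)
The plan is to verify the three hypotheses of the Hille--Phillips Theorem (Theorem~\ref{HPtheorem}) for the shifted operator $B := \tilde{A}_2 - c$, where $c > 0$ is the constant furnished by Lemma~\ref{WellPosedNess}(2), and then to transfer the conclusion back to $\tilde{A}_2$ by means of Remark~\ref{QuasiCon}. Since all the substantive work has already been carried out in the earlier lemmas, the proof is essentially a matter of assembling them correctly.

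First I would record density: $D(B) = D(\tilde{A}_2) = D(A_2)\times E$ is dense in $\mathcal{H} = E\times H$, which is Lemma~\ref{WellPosedNess}(1) (equivalently, a consequence of Proposition~\ref{DomA2} and Proposition~\ref{PropE}). Next, for dissipativity of $B$ I would observe that for every $U \in D(\tilde{A}_2)$,
\[
\operatorname{Re}(B U, U)_{\mathcal{H}} = \operatorname{Re}(\tilde{A}_2 U, U)_{\mathcal{H}} - c\,(U,U)_{\mathcal{H}} \le c\,(U,U)_{\mathcal{H}} - c\,(U,U)_{\mathcal{H}} = 0,
\]
using precisely the estimate in Lemma~\ref{WellPosedNess}(2). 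Finally, for the range condition I would use the relaxed form noted in the remark following Theorem~\ref{HPtheorem}: it suffices to exhibit one $\lambda_0 > 0$ with $\lambda_0 \Id - B$ onto $\mathcal{H}$. Since $\lambda_0 \Id - B = (\lambda_0 + c)\Id - \tilde{A}_2$, and Lemma~\ref{WellPosedNess}(3)---which rests on Lemma~\ref{invertible}, valid for $\lambda > \mathbf{c}$---gives surjectivity of $\lambda \Id - \tilde{A}_2$ for all sufficiently large $\lambda$, I would simply take $\lambda_0 := \max(1,\ \mathbf{c} - c) + 1 > 0$, so that $\lambda_0 + c > \mathbf{c}$ and hence $\lambda_0 \Id - B$ is onto. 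Theorem~\ref{HPtheorem} then yields that $B = \tilde{A}_2 - c$ is the infinitesimal generator of a contractive $\mathcal{C}^0$ semigroup $\{G(t)\}_{t\ge 0}$ on $\mathcal{H}$.

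To conclude I would invoke Remark~\ref{QuasiCon} with $\tilde{A} = \tilde{A}_2$ and $\tilde{A} - c = B$: since $B$ generates the contraction semigroup $G(t)$, the operator $\tilde{A}_2$ generates the $\mathcal{C}^0$ semigroup $e^{ct}G(t)$, which obeys $\lVert e^{ct}G(t)\rVert \le e^{ct}$ and is therefore quasi-contractive. This delivers both assertions of the theorem. There is no serious obstacle at this stage---the genuine difficulty lay in proving coercivity of $a_2$ (Theorem~\ref{CoerciveA2}) and, through it, the surjectivity statement of Lemma~\ref{invertible}---so the only care needed here is bookkeeping: ensuring that the dissipativity constant $c$ of Lemma~\ref{WellPosedNess}(2) is the same $c$ used in the shift, and that the threshold $\mathbf{c}$ of Lemma~\ref{invertible} is cleared by $\lambda_0 + c$, both of which are handled by the explicit choice of $\lambda_0$ above.
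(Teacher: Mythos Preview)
Your proposal is correct and follows exactly the paper's approach: the paper's proof of Theorem~\ref{qcontA2} is the single sentence ``As a result of the Hille-Phillips Theorem and Remark~\ref{QuasiCon}, Lemma~\ref{WellPosedNess} gives [the theorem],'' and you have simply written out this assembly in detail, including the explicit bookkeeping for the shift in the range condition.
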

\medskip

\begin{remark}
The resolvent set is the set of values of $\lambda$
for which $[F(\lambda) + A_2]$, with $F(\lambda) = \lambda 2
R_{\Omega} + \lambda^2$, has a bounded inverse. The spectrum,
$\sigma$, is given by its complement in $\mathbb{C}$. While noting that
$\lambda \to F(\lambda)$ is bounded on $H$ for all $\lambda \in \mathbb{C}$,
we obtain the general property
\[
   \lambda \in \sigma \Rightarrow
   |\mathrm{Re}\ \lambda | \le c
\]
where $c$ is the constant from part 2 of Lemma \ref{WellPosedNess}. To obtain eigenfrequencies, one identifies $\lambda = i \omega$.
\end{remark}
\medskip

\subsection{Well-posedness results for the first-order problem}\label{AbsCauchy}

We begin by discussing the various notions of solution to the abstract Cauchy problem \eqref{Asos2}. We will consider weak and strong solutions on both infinite and finite time intervals. First, we introduce a class of distributions taking values in a Banach space.
\medskip

\begin{definition}
Let $\mathcal{D}_-$ denote the subset of $C^\infty(\mathbb{R})$ consisting of functions with support limited to the right (that is, $f \in \mathcal{D}_-$ if $f \in C^\infty(\mathbb{R})$ and $f(t) = 0$ for all $t$ sufficiently large). For a Banach space $Y$, denote by $\mathcal{D}_+'(Y)$ the space of continuous linear maps from $\mathcal{D}_-$ to Y. 
\end{definition}
\medskip

\noindent For $U_0 \in \mathcal{H}$ and $F \in \mathcal{D}'_+(\mathcal{H})$, we wish consider the problem \eqref{Asos2} with initial condition $U(0) = U_0$, which must be properly reformulated to allow for distributional solutions. Indeed, if $U$ is a classical solution of \eqref{Asos2} extended as zero for $t < 0$, and $F$ is also extended as zero there, then
\begin{equation} \label{LaplaceCauchy} 
 \dot{U} - \tilde{A}_2 U = F + \delta \otimes U_0
 \end{equation}
holds in the sense of distributions $\mathcal{D}'_+(\mathcal{H})$. 
\medskip

\begin{definition}\label{DefVectorVaSol}
 By a vector-valued distribution solution of \eqref{Asos2}, we mean a vector-valued distribution $U \in \mathcal{D}'_+ (D(\tilde{A}_2))$ satisfying \eqref{LaplaceCauchy}. Here $D(\tilde{A}_2)$ is made into a Banach space using the graph norm.
\end{definition}
\medskip

By Theorem \ref{qcontA2}, $\tilde{A}_2$ generates a quasi-contractive semigroup on $\mathcal{H}$, which we denote by $\{G(t)\}_{t\geq 0}$.   
This property of $\tilde{A}_2$ allows one to use the Method of Laplace transform, e.g. see \cite[Chapter XVI, Section 3, Theorem 2]{DautrayLionsV5}, to obtain a vector-valued distributional solution. 
\medskip

\begin{proposition}\label{VectorVaSol}
With $U_0 \in \mathcal{H}$ and $F \in \mathcal{D}'_+(\mathcal{H})$
$$\dot{U} = \tilde{A}_2U+ F;  \ \  U(0) = U_0$$
 has a unique solution in the sense of Definition \ref{DefVectorVaSol}, that is, in the space of distributions with values in $D(\tilde{A}_2)$. This solution is given by
 $$U = G \overset{t}{\ast} ( F + \delta \otimes U_0).$$ 
If we further assume that $F \in L^1_{\text{loc}}(0,\infty; \mathcal{H}) $ then the above solution can be written as,
\begin{equation}\label{VectorVSol}
U(t) = \int_0^t G(t-s) F(s) \, ds + G(t) U_0 .
\end{equation}
\end{proposition}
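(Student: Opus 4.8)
The plan is to reformulate \eqref{Asos2} together with its initial condition as the single convolution equation \eqref{LaplaceCauchy} posed in the convolution algebra $\mathcal{D}'_+(\mathcal{H})$ — every distribution involved has support bounded on the left, so convolution in $t$ is associative and well defined — and then to invoke the Laplace-transform method for evolution equations, \cite[Chapter~XVI, Section~3, Theorem~2]{DautrayLionsV5}. Its hypotheses are exactly what Theorem~\ref{qcontA2} provides: $\tilde A_2$ is densely defined, closed, and generates the quasi-contractive $\mathcal{C}^0$ semigroup $\{G(t)\}_{t\ge0}$, equivalently $(\zeta-\tilde A_2)^{-1}$ exists for $\mathrm{Re}\,\zeta$ large with $\|(\zeta-\tilde A_2)^{-1}\|\le(\mathrm{Re}\,\zeta-c)^{-1}$. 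The content of that theorem, translated to our setting, is that $G$ (extended by zero for $t<0$) is a fundamental solution of the operator $\tfrac{d}{dt}-\tilde A_2$ in $\mathcal{D}'_+$, so that $U:=G\overset{t}{\ast}(F+\delta\otimes U_0)$ belongs to $\mathcal{D}'_+(D(\tilde A_2))$ and solves \eqref{LaplaceCauchy}; this gives existence and the first formula.

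For uniqueness I would work inside the convolution algebra: if $V\in\mathcal{D}'_+(D(\tilde A_2))$ solves the homogeneous equation $\dot V-\tilde A_2 V=0$, convolving on the left with the fundamental solution $G$ and using associativity of $\overset{t}{\ast}$ together with the fact that $\delta$ is the convolution unit yields $V=G\overset{t}{\ast}(\dot V-\tilde A_2 V)=0$; applying this to the difference of two solutions of \eqref{LaplaceCauchy} gives uniqueness. Alternatively, and equivalently, one Laplace-transforms in $t$: the equation forces $\widehat U(\zeta)=(\zeta-\tilde A_2)^{-1}\bigl(\widehat F(\zeta)+U_0\bigr)$ for $\mathrm{Re}\,\zeta$ large, and since $\zeta\mapsto(\zeta-\tilde A_2)^{-1}$ is the Laplace transform of $G$ and the Laplace transform is injective on $\mathcal{D}'_+$, this simultaneously pins down $U$ and reproduces the convolution formula.

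To obtain the explicit expression \eqref{VectorVSol}, suppose in addition $F\in L^1_{\mathrm{loc}}(0,\infty;\mathcal{H})$. Then $G\overset{t}{\ast}(\delta\otimes U_0)$ is just the $\mathcal{H}$-valued function $t\mapsto G(t)U_0$, and $G\overset{t}{\ast}F$ is represented by the Bochner integral $t\mapsto\int_0^t G(t-s)F(s)\,ds$: because $\|G(\tau)\|\le Me^{c\tau}$, the map $s\mapsto G(t-s)F(s)$ is Bochner integrable on $[0,t]$, the resulting function of $t$ is continuous with values in $\mathcal{H}$, and strong continuity of $G$ identifies it with the distributional convolution. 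Adding the two pieces gives $U(t)=\int_0^t G(t-s)F(s)\,ds+G(t)U_0$, which is \eqref{VectorVSol}, and one reads off from \eqref{LaplaceCauchy} that indeed $\tilde A_2 U\in\mathcal{D}'_+(\mathcal{H})$, so that $U\in\mathcal{D}'_+(D(\tilde A_2))$.

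The only genuinely delicate point is the bookkeeping just mentioned: verifying that $G\overset{t}{\ast}(F+\delta\otimes U_0)$ takes values in $D(\tilde A_2)$ (with its graph-norm Banach structure) and that one may commute $\tilde A_2$ with $t$-convolution — i.e. that $\tilde A_2$, being closed and translation-commuting, acts on $\mathcal{D}'_+$-distributions through the equation. This is handled once and for all by the cited theorem of \cite{DautrayLionsV5}; the rest of the argument is a direct transcription together with the verification of the Duhamel formula, both routine.
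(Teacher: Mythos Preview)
Your proposal is correct and is essentially the same as the paper's approach: the paper does not give a detailed proof but simply invokes \cite[Chapter~XVI, Section~3, Theorem~2]{DautrayLionsV5} together with Theorem~\ref{qcontA2}, exactly as you do. Your write-up supplies the details (the convolution-algebra/Laplace-transform argument for existence and uniqueness, and the identification of the distributional convolution with the Duhamel integral when $F\in L^1_{\mathrm{loc}}$) that the paper leaves to the reference.
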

\medskip

\noindent It is straightforward to prove the following ``energy estimate" for $U(t)$ based on \eqref{VectorVSol} and the fact that $\|G(t)\|_{\mathcal{H} \rightarrow \mathcal{H}} \leq e^{ct}$.
\medskip

\begin{corollary}\label{Energyest:cor}
If $F \in L^1_{\text{loc}}(0,\infty; \mathcal{H})$, $U_0 \in \mathcal{H}$, and $U$ is the solution given by \eqref{VectorVSol}, then
\[
\|U(t)\|_{\mathcal{H}} \leq e^{ct} \big ( \|F\|_{L^1(0,t; \mathcal{H})} + \|U_0\|_{\mathcal{H}} \big ).
\]
The constant $c$ is the same as the constant in the second part of Lemma \ref{WellPosedNess}.
\end{corollary}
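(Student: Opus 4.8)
The plan is to read off the estimate directly from the explicit representation \eqref{VectorVSol}, using only the triangle inequality for the Bochner integral together with the quasi-contractive bound on the semigroup. First I would recall that Theorem~\ref{qcontA2} produces a $c > 0$ for which $\{e^{-ct}G(t)\}_{t\geq 0}$ is a contraction semigroup; in particular $\|G(t)\|_{\mathcal{H}\to\mathcal{H}} \leq e^{ct}$ for every $t \geq 0$. This is the only quantitative input required.

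Next, since $F \in L^1_{\mathrm{loc}}(0,\infty;\mathcal{H})$ and $t \mapsto G(t)$ is strongly continuous, the map $s \mapsto G(t-s)F(s)$ is Bochner integrable on $[0,t]$, so the integral form of the triangle inequality applies to \eqref{VectorVSol} and gives
\[
\|U(t)\|_{\mathcal{H}} \leq \int_0^t \|G(t-s)\|_{\mathcal{H}\to\mathcal{H}}\,\|F(s)\|_{\mathcal{H}}\, ds + \|G(t)\|_{\mathcal{H}\to\mathcal{H}}\,\|U_0\|_{\mathcal{H}}.
\]
Then I would insert the bound $\|G(t-s)\|_{\mathcal{H}\to\mathcal{H}} \leq e^{c(t-s)} \leq e^{ct}$, valid for $0 \leq s \leq t$ because $c > 0$, together with $\|G(t)\|_{\mathcal{H}\to\mathcal{H}} \leq e^{ct}$, and factor out $e^{ct}$ to obtain
\[
\|U(t)\|_{\mathcal{H}} \leq e^{ct}\left( \int_0^t \|F(s)\|_{\mathcal{H}}\, ds + \|U_0\|_{\mathcal{H}} \right) = e^{ct}\big( \|F\|_{L^1(0,t;\mathcal{H})} + \|U_0\|_{\mathcal{H}} \big),
\]
which is the asserted estimate.

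There is no genuine obstacle here; the argument is routine. The only point that deserves a word of justification is the measurability and integrability of the integrand, so that the integral triangle inequality is legitimate, and this follows from the strong continuity of $t \mapsto G(t)$ on $\mathcal{H}$ and the hypothesis $F \in L^1_{\mathrm{loc}}(0,\infty;\mathcal{H})$. One could alternatively pair $U(t)$ with an arbitrary $\Phi \in \mathcal{H}$ and estimate $|\langle U(t),\Phi\rangle_{\mathcal{H}}|$, but the direct Bochner estimate above is the cleanest route.
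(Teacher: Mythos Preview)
Your proof is correct and follows exactly the approach the paper indicates: apply the triangle inequality for the Bochner integral to \eqref{VectorVSol} and use the quasi-contractive bound $\|G(t)\|_{\mathcal{H}\to\mathcal{H}}\le e^{ct}$ from Theorem~\ref{qcontA2}. The paper does not write out the details beyond calling this ``straightforward,'' and your argument fills them in cleanly.
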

\medskip

One can also impose more hypotheses on the data $U_0$ and $F$, so that vector-valued distribution solution obtained in \eqref{VectorVSol} has more regularity, and thus solves the equation in a `stronger' sense. However, whether one is going to obtain more regularity or not, uniqueness of solution is guaranteed by Proposition \ref{VectorVaSol}.  

One scenario in which one has improved regularity is when the initial condition $U_0 \in D(\tilde{A}_2)$.  By Remark \ref{DiffSet}, under this hypothesis, the map $t\mapsto G(t)U_0$ is differentiable in the classical sense; as a result, $G(t)U_0$ is the classical solution to $\dot{U} = \tilde{A}_2 U$, $U(0) = U_0$. To consider the regularity of the other term in \eqref{VectorVaSol}, define the space $H^1_{loc}(0,\infty;\mathcal{H})$ to by
\[
H^1_{loc}(0,\infty;\mathcal{H}) = \left \{ F: [0,\infty) \rightarrow \mathcal{H} \ : \ \|F\|_{L^2(0,T;\mathcal{H})} , \ \|\dot{F}\|_{L^2(0,T;\mathcal{H})} < \infty \quad \mbox{for all $T>0$} \right \}.
\]
A straightforward extension of \cite[ChapterXVII, Part B, Proposition
  1]{DautrayLionsV5} shows that if $F\in H^1_{loc}(0,\infty;
\mathcal{H})$ then the function
$$t \mapsto \int_0^t G(t-s) F(s)\, ds $$
has a derivative in the classical sense in $\mathcal{H}$. This means that with $U_0 \in D(\tilde{A}_2)$ and $F\in H_{loc}^1(0,\infty; \mathcal{H})$, the solution defined by \eqref{VectorVSol} actually defines a strong (classical) solution. We summarize the discussion in the form of Definition \ref{StrongSolution} and Proposition \ref{StrongOnHalfline}.
\smallskip
\begin{definition}\label{StrongSolution}
$U$ is a strong (classical) solution of \eqref{Asos2} if $U$ satisfies the following requirements:
\begin{enumerate}[label= \arabic*.]
\item $U \in \mathcal{C}^0( [ 0, \infty) , \mathcal{H}) \cap \mathcal{C}^1( (0, \infty) ; \mathcal{H})$. 
\item $U(t) \in D(\widetilde{A}_2)$ for $t > 0$. 
\item $U$ satisfies \eqref{Asos2} pointwise. 
\end{enumerate}
\end{definition}
\medskip
\begin{proposition}\label{StrongOnHalfline}
Suppose that the hypotheses of Theorem~\ref{CoerciveA2} are satisfied so that by Theorem \ref{qcontA2} $\tilde{A}_2$ generates a quasi-contractive semigroup on $\mathcal{H}$ denoted $\{G(t)\}_{t\geq 0}$. If $U_0 \in D(\tilde{A}_2)$ and $F \in H^1_{loc}(0,\infty; \mathcal{H}) $, then \eqref{Asos2} has a unique strong solution in the sense of Definition \ref{StrongSolution}. This solution is given by \eqref{VectorVSol} which is differentiable in the classical sense in $\mathcal{H}$. 
\end{proposition}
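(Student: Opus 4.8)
The plan is to use the explicit representation \eqref{VectorVSol} furnished by Proposition~\ref{VectorVaSol} and to check the three requirements of Definition~\ref{StrongSolution} term by term, writing $U = U_1 + U_2$ with $U_2(t) = G(t)U_0$ and $U_1(t) = \int_0^t G(t-s)F(s)\,ds$. Since $U_0 \in D(\tilde{A}_2)$, Remark~\ref{DiffSet} shows that $t \mapsto U_2(t)$ is continuously differentiable on $[0,\infty)$ with $\dot{U}_2(t) = \tilde{A}_2 G(t)U_0 = G(t)\tilde{A}_2 U_0$; in particular $U_2$ solves $\dot{U}_2 = \tilde{A}_2 U_2$ classically with $U_2(0) = U_0$, and $U_2(t) \in D(\tilde{A}_2)$ for every $t \geq 0$ by the standard invariance of $D(\tilde{A}_2)$ under the semigroup (implicit in Remark~\ref{DiffSet}).

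For the Duhamel term $U_1$, I would invoke the extension of \cite[Chapter~XVII, Part~B, Proposition~1]{DautrayLionsV5} quoted just above the statement: because $F \in H^1_{loc}(0,\infty;\mathcal{H})$, the map $t \mapsto U_1(t)$ is continuously differentiable in $\mathcal{H}$, takes values in $D(\tilde{A}_2)$ for $t > 0$, and satisfies $\dot{U}_1(t) = \tilde{A}_2 U_1(t) + F(t)$, while $U_1(0) = 0$ and $U_1 \in \mathcal{C}^0([0,\infty);\mathcal{H})$. Concretely this follows from the change of variable $\sigma = t - s$, writing $U_1(t) = \int_0^t G(\sigma)F(t-\sigma)\,d\sigma$ and differentiating, using that $F \in \mathcal{C}^0([0,\infty);\mathcal{H})$ by the one-dimensional Sobolev embedding (so $F(0)$ and $G(t)F(0)$ make sense) and that $\dot{F} \in L^2_{loc} \subset L^1_{loc}$; the identity $\dot{U}_1 = \tilde{A}_2 U_1 + F$ and the membership $U_1(t) \in D(\tilde{A}_2)$ then come from the resolvent identity for $G$ together with a density argument, exactly as in the cited reference.

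Adding the two pieces gives $U = U_1 + U_2 \in \mathcal{C}^0([0,\infty);\mathcal{H}) \cap \mathcal{C}^1((0,\infty);\mathcal{H})$, with $U(t) \in D(\tilde{A}_2)$ for $t > 0$, $U(0) = U_0$, and $\dot{U}(t) = \tilde{A}_2 U(t) + F(t)$ pointwise, so $U$ is a strong solution in the sense of Definition~\ref{StrongSolution} and is given by \eqref{VectorVSol}. For uniqueness I would note that any strong solution, extended together with $F$ by zero for $t<0$, satisfies \eqref{LaplaceCauchy} and is thus a vector-valued distribution solution in the sense of Definition~\ref{DefVectorVaSol}, so uniqueness is already provided by Proposition~\ref{VectorVaSol}; alternatively, the difference $W$ of two strong solutions solves $\dot{W} = \tilde{A}_2 W$, $W(0) = 0$, and vanishes either by Remark~\ref{DiffSet} or directly from the dissipativity bound $\frac{d}{dt}\|W\|_{\mathcal{H}}^2 = 2\,\mathrm{Re}(\tilde{A}_2 W, W)_{\mathcal{H}} \leq 2c\|W\|_{\mathcal{H}}^2$ and Gronwall's inequality, with $c$ as in Lemma~\ref{WellPosedNess}.

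The main obstacle is the regularity statement for the Duhamel integral: establishing that $U_1(t) \in D(\tilde{A}_2)$ and $\dot{U}_1 = \tilde{A}_2 U_1 + F$ genuinely needs the $H^1_{loc}$ hypothesis on $F$ (mere continuity of $F$ does not suffice in the infinite-dimensional setting), and making this rigorous is precisely the content of the cited extension of the Dautray--Lions proposition; the remainder of the argument is routine bookkeeping with the semigroup properties already secured by Theorem~\ref{qcontA2}.
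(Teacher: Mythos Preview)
Your proposal is correct and follows essentially the same approach as the paper: the paper's argument is the discussion immediately preceding the proposition, which splits $U$ into the homogeneous part $G(t)U_0$ (handled via Remark~\ref{DiffSet}) and the Duhamel integral (handled by citing the same extension of \cite[Chapter~XVII, Part~B, Proposition~1]{DautrayLionsV5}), with uniqueness inherited from Proposition~\ref{VectorVaSol}. If anything, you have supplied more detail than the paper does---the sketch of the change-of-variable argument for the Duhamel term and the alternative Gronwall uniqueness proof go beyond what the paper writes out.
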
\

\medskip
\noindent Another case in which we have a classical solution occurs when $F$ is continuous taking values in $D(\tilde{A}_2)$, and $\tilde{A}_2F$ is locally integrable (see \cite[ChapterXVII, Part B, Proposition 2]{DautrayLionsV5}).

 Our next question is whether the distributional solution will have better regularity when $U_0$ is not in $D(\tilde{A}_2)$. In the case where $F=0$, the unique distributional solution in $\mathcal{D}'_+(\mathcal{H})$ is $G(t)U_0$ by Proposition \ref{VectorVaSol}. This unique solution would be a strong one if and only if the semigroup $\{G(t)\}_{t\geq 0}$ is differentiable.  We do not expect this to be true as our equation is hyperbolic, and we expect that singularities will propagate (that is, the range $G(t)$ is not contained in $D(\tilde{A}_2)$ for any $t$, and so $\{G(t)\}_{t\geq 0}$ is not differentiable). In other words, when $U_0\notin D(\tilde{A}_2)$, one cannot generally hope for a classical solution. On the other hand, when the problem is posed on a finite time interval $(0,T)$ and with the assumption that $F \in L^1(0,T; \mathcal{H})$, for all $U_0 \in \mathcal{H}$, 
$U$ defined by \eqref{VectorVSol} has a meaning as a continuous function from $[0,T]$ to $\mathcal{H}$. In fact, expression \eqref{VectorVSol} will be called a weak solution of \eqref{Asos2}. To facilitate further discussion, we first state the definition of strong solution and various notions of weak solution of \eqref{Asos2} on the finite interval $(0,T)$, $T >0$.
\medskip

\begin{definition}[Strong solution \cite{DautrayLionsV5}]\label{StrongFinite}
Suppose $F\in L^1(0,T; \mathcal{H})$. A distribution $U$ is called a strong solution of \eqref{Asos2} on $(0,T)$ if it satisfies the following properties,
\begin{enumerate}[label= \arabic*.]
\item $U\in \mathcal{C}^0([0,T]; \mathcal{H}) , \dot{U} \in L^1(0,T; \mathcal{H}),$
\item $U(0) = U_0$,
\item $U(t) \in D(\tilde{A}_2), \ \text{almost all}\  t \in [0, T],$
\item $ U(t)\ \text{satisfies}\  \eqref{Asos2}\  \text{in}\  L^1(0,T; \mathcal{H}).$
\end{enumerate}
Here the derivative $\dot{U}$ is taken in the sense of distributions.
\end{definition}
\medskip
\begin{definition}[Definition 3.1, p.64 in \cite{Brezis} - Weak solution]\label{WeakBrezis}
%\HOX{I added two hypotheses here which I assume are correct although I didn't check the reference because I don't have it. Can you check?}
Suppose $F\in L^1(0,T; \mathcal{H})$. 
$U\in \mathcal{C}^0([0,T],\mathcal{H})$ is called a weak solution of \eqref{Asos2} on $(0,T)$ if $U(0) = U_0$, and there exist sequences $\{F_n\}\in L^1(0,T;\mathcal{H})$, $\{U_n\} \in \mathcal{C}^0([0,T],\mathcal{H})$ such that 
\begin{enumerate}[label= \arabic*.]
\item 
$U_n$ are strong solutions, in the sense of Definition \ref{StrongFinite}, to
$$\dot{U}_n = \tilde{A}_2 U_n + F_n ;$$ 
\item $F_n$ converges to $F$ in $L^1(0,T;\mathcal{H})$;
%\item $U_{0,n}$ converges to $U_0$ in $\mathcal{H}$;
\item $U_n$ converges to $U$ in $\mathcal{C}^0([0,T],\mathcal{H})$.
\end{enumerate}
\end{definition}
\medskip
\begin{definition}[Ball \cite{Ball} and Balakrisknan \cite{Balakrishnan} - Weak solution]\label{WeakSG}
Suppose $F(t) \in L^1(0,T; \mathcal{H})$.\\ 
 $U \in \mathcal{C}^0([0,T]; \mathcal{H})$ is a weak solution of \eqref{Asos2} on $(0,T)$
 if and only if $U(0) = U_0$ and for every $V \in D(\tilde{A}_2^*)$ the function $t\mapsto ( U(t), V)_{\mathcal{H}}$ is absolutely continuous on $[0,T]$ and
 $$ \dfrac{d}{dt} (U(t), V)_{\mathcal{H}} = ( U(t), \tilde{A}_2^* V)_{\mathcal{H}} + (F(t), V)_{\mathcal{H}},  \text{for almost all}\  t \in [0,T].$$
 \end{definition}

We combine the result of Proposition \ref{StrongOnHalfline} (restricted to $[0,T]$) and the theory found in \cite[Chapter XVII, Part B]{DautrayLionsV5}, and the second part of \cite[Theorem 4.8.3]{Balakrishnan}, or \cite{Ball}, to obtain the following result describing the well-posedness of \eqref{Asos2} on a finite time interval.

\medskip
\begin{proposition}\label{AbsCauchyWP}
Suppose that the hypotheses of Theorem~\ref{CoerciveA2} are satisfied, so that $\tilde{A}_2$ generates a quasi-contractive semigroup on $\mathcal{H}$ denoted $\{G(t)\}_{t\geq 0}$.  If $F \in L^1( 0,T;\mathcal{H})$, then any weak or strong solution of $\dot{U} = \tilde{A}_2 U + F$ and $U(0) = U_0$ is given by \eqref{VectorVSol}

If $U_0 \in \mathcal{H}$ and $ F \in L^1( 0,T;\mathcal{H})$, then the solution given by \eqref{VectorVSol} is the unique weak solution in the sense of Definitions \ref{WeakBrezis} and \ref{WeakSG}.

If $U_0 \in D(\tilde{A}_2)$ and $ F \in L^1( 0,T;\mathcal{H})$, then the solution given by \eqref{VectorVSol} is a strong solution in the sense of Definition \ref{StrongFinite}

If $U_0 \in D(\tilde{A}_2)$ and $F \in H^1(0,T;\mathcal{H}) $, then the solution given by \eqref{VectorVSol} is a strong solution in the sense of Definition \ref{StrongSolution} restricted to the interval $[0,T]$. In this case, the strong solution satisfies the equation in the classical sense. 
\end{proposition}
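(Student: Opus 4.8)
The plan is to reduce every assertion to the semigroup $\{G(t)\}_{t\ge0}$ produced by Theorem~\ref{qcontA2}, which satisfies $\lVert G(t)\rVert_{\mathcal{H}\to\mathcal{H}}\le e^{ct}$, and then to quote the abstract machinery of \cite[Chapter XVII, Part B]{DautrayLionsV5}, \cite{Ball}, and \cite[Theorem 4.8.3]{Balakrishnan}. First I would record the standing facts that make the formula \eqref{VectorVSol} meaningful: for $F\in L^1(0,T;\mathcal{H})$ and $U_0\in\mathcal{H}$ the right-hand side of \eqref{VectorVSol} defines an element of $\mathcal{C}^0([0,T];\mathcal{H})$ (continuity of $t\mapsto G(t)U_0$ is strong continuity of the semigroup, and continuity of $t\mapsto\int_0^tG(t-s)F(s)\,ds$ follows from dominated convergence together with the $e^{cT}$ bound), that the estimate of Corollary~\ref{Energyest:cor} holds, and that the data-to-solution map $(U_0,F)\mapsto\eqref{VectorVSol}$ is Lipschitz from $\mathcal{H}\times L^1(0,T;\mathcal{H})$ into $\mathcal{C}^0([0,T];\mathcal{H})$; the last property will do most of the work in the limiting arguments below. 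I would also note that $\{G(t)^*\}$ is the $\mathcal{C}^0$ semigroup generated by $\widetilde A_2^*$, whose domain is computed explicitly in Lemma~\ref{A2*adjoint}.

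For the first assertion (every weak or strong solution is given by \eqref{VectorVSol}) I would argue separately for each notion of solution. If $U$ is a strong solution in the sense of Definition~\ref{StrongFinite}, then $\widetilde A_2U=\dot U-F\in L^1(0,T;\mathcal{H})$, so $U\in L^1(0,T;D(\widetilde A_2))$; extending $U$ and $F$ by zero for $t<0$, the identity \eqref{LaplaceCauchy} holds in $\mathcal{D}'_+(D(\widetilde A_2))$, so $U$ is a vector-valued distribution solution in the sense of Definition~\ref{DefVectorVaSol} and Proposition~\ref{VectorVaSol} identifies it with \eqref{VectorVSol}. If $U$ is a weak solution in the sense of Definition~\ref{WeakSG}, the uniqueness is exactly Ball's theorem / \cite[Theorem 4.8.3]{Balakrishnan}; the explicit description of $(\widetilde A_2^*,D(\widetilde A_2^*))$ in Lemma~\ref{A2*adjoint} is what guarantees that the test-class appearing in Definition~\ref{WeakSG} is the domain of the generator of the adjoint semigroup, which is the hypothesis those references need. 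Finally, if $U$ is a weak solution in the sense of Definition~\ref{WeakBrezis}, it is by construction a $\mathcal{C}^0$-limit of strong solutions $U_n$ of $\dot U_n=\widetilde A_2U_n+F_n$ with $F_n\to F$ in $L^1$; by the case just handled each $U_n$ coincides with \eqref{VectorVSol} for the data $(U_n(0),F_n)$, and since uniform convergence forces $U_n(0)\to U(0)=U_0$, the Lipschitz dependence on the data gives that $U$ equals \eqref{VectorVSol}.

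For the existence statements I would reverse these steps. That \eqref{VectorVSol} is a weak solution in the sense of Definition~\ref{WeakSG} is again Ball/Balakrishnan: differentiability of $t\mapsto(\eqref{VectorVSol}(t),V)_{\mathcal{H}}$ for $V\in D(\widetilde A_2^*)$ follows from $\frac{d}{dt}G(t)^*V=\widetilde A_2^*G(t)^*V$ and Fubini, and the resulting identity is precisely the one required. To see it is a weak solution in the sense of Definition~\ref{WeakBrezis}, I would approximate $U_0$ by $U_{0,n}\in D(\widetilde A_2)$ (density from Proposition~\ref{DomA2}) and $F$ by $F_n\in H^1(0,T;\mathcal{H})$ in $L^1$; for such regularized data \cite[Chapter XVII, Part B]{DautrayLionsV5} gives that \eqref{VectorVSol} with data $(U_{0,n},F_n)$ is a strong solution in the sense of Definition~\ref{StrongFinite}, and by the Lipschitz bound these converge in $\mathcal{C}^0$ to \eqref{VectorVSol}. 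The third assertion — that \eqref{VectorVSol} is already a strong solution in the sense of Definition~\ref{StrongFinite} when merely $U_0\in D(\widetilde A_2)$ and $F\in L^1(0,T;\mathcal{H})$ — is the content of \cite[Chapter XVII, Part B]{DautrayLionsV5} applied to the generator $\widetilde A_2$: the term $G(\cdot)U_0$ lies in $\mathcal{C}^1([0,T];\mathcal{H})$ with $\frac{d}{dt}G(t)U_0=\widetilde A_2G(t)U_0$ since $U_0\in D(\widetilde A_2)$ (Remark~\ref{DiffSet}), and the convolution term supplies the rest. For the last assertion I would extend $F\in H^1(0,T;\mathcal{H})$ to some $\widetilde F\in H^1_{loc}(0,\infty;\mathcal{H})$ (for instance constant on $[T,\infty)$, using $H^1(0,T;\mathcal{H})\hookrightarrow\mathcal{C}^0([0,T];\mathcal{H})$), apply Proposition~\ref{StrongOnHalfline} to $(U_0,\widetilde F)$ to get a strong solution on $[0,\infty)$ in the sense of Definition~\ref{StrongSolution}, and restrict to $[0,T]$; since \eqref{VectorVSol} only sees $F$ on $[0,t]$, the restriction is independent of the chosen extension, is given by \eqref{VectorVSol}, is differentiable in the classical sense, and therefore satisfies \eqref{Asos2} pointwise on $(0,T)$.

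The step I expect to require the most care is not analytically deep but is the bookkeeping of the second paragraph: checking that a strong solution in the sense of Definition~\ref{StrongFinite} genuinely lives in $\mathcal{D}'_+(D(\widetilde A_2))$ so that Proposition~\ref{VectorVaSol} is applicable, and confirming that the several a priori different notions of weak solution in Definitions~\ref{WeakBrezis} and \ref{WeakSG} are all pinned down by the single formula \eqref{VectorVSol}. Here the adjoint computation of Lemma~\ref{A2*adjoint} is exactly what makes the Ball/Balakrishnan framework directly applicable, and the density of $D(\widetilde A_2)$ from Proposition~\ref{DomA2} is what makes the Brezis-type approximation go through; beyond that there is no genuine obstacle that is not already absorbed into Theorem~\ref{qcontA2}.
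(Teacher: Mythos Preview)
Your proposal is correct and follows essentially the same approach as the paper: the paper does not give a self-contained proof of this proposition but simply records it as the combination of Proposition~\ref{StrongOnHalfline} (restricted to $[0,T]$), the theory in \cite[Chapter XVII, Part B]{DautrayLionsV5}, and \cite[Theorem 4.8.3]{Balakrishnan} or \cite{Ball}. Your write-up carries out in detail exactly this combination, using the same references for the same steps (Ball/Balakrishnan for the weak-solution characterization via $D(\widetilde A_2^*)$, Dautray--Lions for the $L^1$ strong-solution theory, Proposition~\ref{StrongOnHalfline} for the $H^1$ case via extension), so nothing further is needed.
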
 
\medskip

\subsection{Existence of solution for second order equation}\label{From1to2}
Having established results for the first-order equation \eqref{Asos2}, we now go back to the original second order equation \eqref{2ndorderevo}. We start by analyzing what a weak solution of \eqref{Asos2} gives in terms of a solution for \eqref{VartionalP2}. Recall that there are two equivalent definitions of weak solution of \eqref{Asos2}, given by Definition \ref{WeakSG} and \ref{WeakBrezis}. In this section we explicitly consider only the case of a finite time interval, but similar results can be found for the case of $\mathbb{R}^+$.
\medskip

%\HOX{I have changed the proof of this lemma a fair bit to reflect the changed treatment of $\widetilde{A}_2^*$ and also a change I made in Lemma B.1. Please check.}
\begin{lemma}\label{WeakSol1to2}
Suppose $U\in \mathcal{C}^0([0,T];\mathcal{H})$ is a weak solution of \eqref{Asos2} with $F(t) = \begin{pmatrix} 0 \\ f\end{pmatrix} \in L^1(0,T;\mathcal{H})$. Denote by $u(t)$ the first component of $U$. Then $u(t) \in  \mathcal{C}^0([0,T]; E) \cap \mathcal{C}^1([0,T];H)$ satisfies
$$\forall v \in E ,\ \  \dfrac{d}{dt} ( \dot{u}, v) = -a_2(u,v)
- ( 2R_{\Omega} \dot{u},  v)_H + 
 (f, v)_H ,\ \ \text{in}\ \mathcal{D}'(0,T). $$
\end{lemma}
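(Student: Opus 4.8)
The plan is to exploit the explicit description of $\widetilde{A}_2^*$ from Lemma~\ref{A2*adjoint} together with the definition of weak solution (Definition~\ref{WeakSG}), by testing the identity
$$\frac{d}{dt}\big(U(t),V\big)_{\mathcal{H}} = \big(U(t),\widetilde{A}_2^*V\big)_{\mathcal{H}} + \big(F(t),V\big)_{\mathcal{H}}$$
against two families of elements $V$ of $D(\widetilde{A}_2^*) = D(A_2)\times E$. Write $U = (u_1,u_2)$. Since $U\in\mathcal{C}^0([0,T];\mathcal{H})$ and $\mathcal{H} = E\times H$, it is immediate that $u := u_1 \in \mathcal{C}^0([0,T];E)$ and $u_2 \in \mathcal{C}^0([0,T];H)$; the real content is that $\dot u = u_2$ in the classical $\mathcal{C}^1([0,T];H)$ sense and that $u$ satisfies the stated weak second-order equation.

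First I would recover $\dot u = u_2$. For $v_1 \in D(A_2)$ take $V = (v_1,0)$; by Lemma~\ref{A2*adjoint}, $\widetilde{A}_2^*V = (0,(A_2+\beta)v_1)$. Using the definition \eqref{SPH} of $(\cdot,\cdot)_{\mathcal{H}}$, that $a_2+\beta$ is Hermitian (Lemma~\ref{Propa3}), and that $(a_2+\beta)(w,\cdot) = ((A_2+\beta)w,\cdot)_H$ holds on $E$ as soon as $w\in D(A_2)$, the weak-solution identity reduces to $\frac{d}{dt}(u_1(t),(A_2+\beta)v_1)_H = (u_2(t),(A_2+\beta)v_1)_H$ in $\mathcal{D}'(0,T)$. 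Since $A_2+\beta:D(A_2)\to H$ is onto (Proposition~\ref{PropOp}), as $v_1$ ranges over $D(A_2)$ the vector $(A_2+\beta)v_1$ ranges over all of $H$, so $\frac{d}{dt}(u_1(t),w)_H = (u_2(t),w)_H$ for every $w\in H$. As $u_2$ is continuous into $H$ and $u_1\in\mathcal{C}^0([0,T];H)$, integrating and commuting $(\cdot,w)_H$ with the Bochner integral gives $u_1(t)-u_1(s) = \int_s^t u_2(\tau)\,d\tau$ in $H$; hence $u = u_1\in\mathcal{C}^1([0,T];H)$ with $\dot u = u_2$.

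Next I would derive the weak equation by testing against $V = (0,v_2)$, $v_2\in E$, for which Lemma~\ref{A2*adjoint} gives $\widetilde{A}_2^*V = \big(-v_2 + \beta(A_2+\beta)^{-1}v_2,\ 2R_{\Omega}v_2\big)$. The key computation is the identity
$$(a_2+\beta)\big(u,\,-v_2 + \beta(A_2+\beta)^{-1}v_2\big) = -a_2(u,v_2), \qquad u\in E,$$
in which the term $-\beta(u,v_2)_H$ produced by $(a_2+\beta)(u,-v_2)$ is cancelled by $\beta(a_2+\beta)(u,(A_2+\beta)^{-1}v_2) = \beta(u,v_2)_H$; the last equality uses the Hermitian symmetry of $a_2+\beta$ and $(a_2+\beta)((A_2+\beta)^{-1}v_2,u) = ((A_2+\beta)(A_2+\beta)^{-1}v_2,u)_H = (v_2,u)_H$, which is legitimate because $(A_2+\beta)^{-1}v_2\in D(A_2)$ even though $u\notin D(A_2)$ in general. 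Together with $R_{\Omega}^* = -R_{\Omega}$, $(U(t),V)_{\mathcal{H}} = (u_2(t),v_2)_H$ and $(F(t),V)_{\mathcal{H}} = (f(t),v_2)_H$, the weak-solution identity becomes
$$\frac{d}{dt}(u_2(t),v_2)_H = -a_2(u(t),v_2) - (2R_{\Omega}u_2(t),v_2)_H + (f(t),v_2)_H \quad\text{in }\mathcal{D}'(0,T).$$
Substituting $u_2 = \dot u$ and renaming $v_2$ as $v$ gives the asserted identity for all $v\in E$.

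The main obstacle is the algebraic bookkeeping of the second step: one must carefully track the $(A_2+\beta)^{-1}$ appearing in $\widetilde{A}_2^*$ and justify the cancellation of the $\beta$-terms without assuming $u\in D(A_2)$ — this is precisely where the Hermitian symmetry of $a_2+\beta$ and the fact that the defining relation $(a_2+\beta)(w,\cdot)=((A_2+\beta)w,\cdot)_H$ only requires the first argument to lie in $D(A_2)$ are used. A secondary, routine point is upgrading the merely absolutely continuous time-dependence supplied by Definition~\ref{WeakSG} to genuine $\mathcal{C}^1$ regularity, which is immediate from continuity of $u_2$ into $H$.
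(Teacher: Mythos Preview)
Your proof is correct and follows essentially the same route as the paper: testing the weak-solution identity against $V=(v_1,0)$ with $v_1\in D(A_2)$ to extract $\dot u = u_2$ via the surjectivity of $A_2+\beta$, and then against $V=(0,v_2)$ with $v_2\in E$ to obtain the second-order weak equation. The only cosmetic differences are that the paper cites the intermediate identity \eqref{step} from the proof of Lemma~\ref{A2*adjoint} rather than recomputing $(U,\widetilde{A}_2^*V)_{\mathcal{H}}$ from the formula for $\widetilde{A}_2^*$, and invokes Lemma~\ref{commute} for the $\mathcal{C}^1$ upgrade where you sketch the Bochner-integral argument directly.
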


\begin{proof}
 By Definition \ref{WeakSG}, 
 a weak solution to \eqref{Asos2} is $U\in \mathcal{C}^0([0,T];\mathcal{H})$ that satisfies : 
\begin{align*}
\forall \ V\in D(\tilde{A}_2^*),\ \ 
  \dfrac{d}{dt} (U(t), V)_{\mathcal{H}} = ( U(t), \tilde{A}_2^* V)_{\mathcal{H}} +& (F(t), V)_{\mathcal{H}}, \\
  & \text{for almost all}\  t \in [0,T].
  \end{align*}
In this definition, $(U(t),V)_{\mathcal{H}}$ is required to be absolutely continuous, and thus $\tfrac{d}{dt}$ can be taken in the distribution sense, or in the sense of the pointwise derivative. In other words, the above expression can be written as
\begin{equation}\label{weakSolA*} 
\forall \ V\in D(\tilde{A}_2^*),\ \  \dfrac{d}{dt} \left(U(t), V\right)_{\mathcal{H}} = \left( U(t), \tilde{A}_2^* V\right)_{\mathcal{H}} + \left(F(t), V\right)_{\mathcal{H}},  \ \text{in}\ \mathcal{D}'( 0,T).
 \end{equation}
Using the notation $U(t) = (u_1(t),u_2(t))$ and $V(t) = (v_1(t),v_2(t))$, by equation \eqref{step} in the proof of Lemma \ref{A2*adjoint}, \eqref{weakSolA*} is equivalent to
 \begin{equation}\label{ComponentWeakSol}
\begin{aligned}
\dfrac{d}{dt} 
(a_2+\beta)(u_1,v_1) 
+ \dfrac{d}{dt} (u_2, v_2)_H =& -a_2(u_1, v_2)
+ \left(u_2, (A_2+\beta) v_1 + 2R_{\Omega} v_2\right)_H \\
&+ 
 (f, v_2) ,\ \ \text{in}\  \mathcal{D}'(0,T).
 \end{aligned}
 \end{equation}
Choosing $v_2 = 0 $ in \eqref{ComponentWeakSol}, we obtain
$$ \dfrac{d}{dt} (a_2+\beta)( u_1,v_1) 
= \left(u_2 ,(A_2+\beta) v_1\right)_H  \ \text{in}\  \mathcal{D}'(0,T).$$
Since $v_1\in D(A_2)$ we have
$$ (a_2+\beta)( u_1,v_1)  = \overline{ (a_2+\beta)( v_1,u_1) } 
= \overline{( (A_2+\beta) v_1, u_1)}_H = (u_1, (A_2+\beta)v_1)_H.$$
Thus,
$$\forall v_1\in D(A_2) ,   
\dfrac{d}{dt} \left(u_1, (A_2+\beta\right)v_1)_H  = \left(u_2 ,(A_2+\beta) v_1\right)_H   , \ \text{in}\  \mathcal{D}'(0,T). $$
We recall that $(A_2+\beta): D(A_2) \rightarrow H$ is onto, and so the last equation implies
$$\forall w \in H, \dfrac{d}{dt}\, (u_1, w)_H  = \left(u_2 ,w\right)_H   , \ \text{in}\  \mathcal{D}'(0,T). $$
As a result of Lemma \ref{commute}, we have $u_1 \in \mathcal{C}^1([0,T],H)$ and $\dot{u}_1 = u_2$. Now choosing $v_1 = 0 $ in \eqref{ComponentWeakSol} we obtain
$$\forall v_2 \in E ,\ \  \dfrac{d}{dt} (u_2, v_2)_H = -a_2( u_1, v_2)
+ \left(u_2,  2 R_{\Omega}^*v_2\right)_H + 
 (f, v_2)_H, \ \text{in}\  \mathcal{D}'(0,T).$$
Substitute $\dot{u}_1 = u_2$ in the above expression, and note $u = u_1$,  to obtain
 $$\forall v \in E ,\ \  \dfrac{d}{dt} ( \dot{u}, v)_H = -a_2(u, v)
- ( 2 R_{\Omega} \dot{u},  v)_H + 
 (f, v)_H ,\ \ \text{in}\ \mathcal{D}'((0,T)). $$
This completes the proof.
\end{proof}
\medskip

Next, we consider strong solutions showing what a strong solution of
\eqref{Asos2} gives in terms of \eqref{VartionalP2}.
  
\medskip
\begin{proposition} \label{StrongSolution2}
Suppose that the hypotheses of Theorem~\ref{CoerciveA2} are
satisfied. If $f \in H^1(0,T;H),\ g \in D(A_2) $, and $h\in E$, then
problem \eqref{VartionalP2} has a unique strong solution $u \in
\mathcal{C}^1([0,T], H) \,\cap \,\mathcal{C}^2((0,T), H)$, which
satisfies \eqref{IVP} pointwise on $(0,T)$. If we denote by $U(t) = (
u ,\dot{u})$, then $U(t)$ is given by
\eqref{VectorVSol}. Additionally, for every $t \in (0,T)$, $u(t) \in
D(A_2)$, $\dot{u}(t) \in E$, and $u(t)$ is classically
differentiable as a map into $E$.
\end{proposition}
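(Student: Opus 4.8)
The plan is to obtain $u$ as the first component of the strong solution of the first-order abstract Cauchy problem \eqref{Asos2} furnished by subsection~\ref{AbsCauchy}, and then to unpack the components. First I would set $U_0 = (g,h)^T$ and $F = (0,f)^T$. Since $g\in D(A_2)$ and $h\in E$, we have $U_0\in D(A_2)\times E = D(\tilde A_2)$; and since the first component of $F$ is the zero element of $E$ and $\|F(t)\|_{\mathcal H}=\|f(t)\|_H$, the hypothesis $f\in H^1(0,T;H)$ gives $F\in H^1(0,T;\mathcal H)$. Hence the last part of Proposition~\ref{AbsCauchyWP} (equivalently Proposition~\ref{StrongOnHalfline} restricted to $[0,T]$) applies: there is a unique strong solution $U$ of \eqref{Asos2} with $U(0)=U_0$ in the sense of Definition~\ref{StrongSolution} restricted to $[0,T]$, it is given by the formula \eqref{VectorVSol}, it satisfies $U\in\mathcal C^0([0,T];\mathcal H)\cap\mathcal C^1((0,T);\mathcal H)$ with $U(t)\in D(\tilde A_2)$ for $t>0$, and it solves $\dot U = \tilde A_2 U + F$ pointwise on $(0,T)$.

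Next I would write $U(t)=(u(t),w(t))$ and read off the component statements. The $\mathcal H = E\times H$ regularity of $U$ yields $u\in\mathcal C^0([0,T];E)\cap\mathcal C^1((0,T);E)$ and $w\in\mathcal C^0([0,T];H)\cap\mathcal C^1((0,T);H)$, while $U(t)\in D(\tilde A_2)=D(A_2)\times E$ gives $u(t)\in D(A_2)$ and $w(t)\in E$ for every $t\in(0,T)$. The first row of the pointwise equation is $\dot u(t)=w(t)$, an identity in $E$, so $u$ is classically differentiable as a map into $E$ on $(0,T)$ with $\dot u=w$; in particular $\dot u=w\in\mathcal C^0([0,T];H)$, hence $u\in\mathcal C^1([0,T];H)$, and $\ddot u=\dot w\in\mathcal C^0((0,T);H)$, hence $u\in\mathcal C^2((0,T);H)$. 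The second row is $\ddot u(t)=-A_2u(t)-2R_{\Omega}\dot u(t)+f(t)$ in $H$, that is, $\ddot u+2R_{\Omega}\dot u+A_2u=f$ holds pointwise on $(0,T)$, which is \eqref{IVP} in the strong sense. Finally $U(0)=U_0$ gives $u(0)=g$ and, by continuity of $\dot u=w$ up to $t=0$, $\dot u(0)=w(0)=h$; and by construction $U=(u,\dot u)$ is given by \eqref{VectorVSol}.

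It then remains to check that this $u$ solves \eqref{VartionalP2} and that it is the only strong solution. For the first point, testing the pointwise identity $\ddot u+2R_{\Omega}\dot u+A_2u=f$ against an arbitrary $v\in E$, using $(A_2u,v)_H=\langle A_2u,v\rangle_{E',E}=a_2(u,v)$ (valid since $u(t)\in D(A_2)$) and $\tfrac{d}{dt}(\dot u,v)_H=(\ddot u,v)_H$, gives exactly the equation of \eqref{VartionalP2}, pointwise and hence in $\mathcal D'(0,T)$; the regularity $u\in\mathcal C^0([0,T];E)$, $\dot u\in\mathcal C^0([0,T];H)$ and the initial conditions are those demanded in \eqref{VartionalP2}. (Alternatively one invokes Lemma~\ref{WeakSol1to2}, a strong solution of \eqref{Asos2} being in particular a weak one.) For uniqueness, any strong solution of \eqref{VartionalP2} is in particular a variational solution, whose uniqueness is already established in Theorem~\ref{maintheorem}; equivalently, reduction of order turns such a solution into a weak solution of \eqref{Asos2} with the same data, which by the first assertion of Proposition~\ref{AbsCauchyWP} must equal \eqref{VectorVSol}.

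I do not expect a deep obstacle here; the main point requiring care will be the bookkeeping of the various notions of derivative — verifying that $\dot u=w$ holds as a genuine classical derivative valued in $E$ rather than merely in $H$, and that the $\mathcal C^1((0,T);\mathcal H)$ regularity of $U$ together with continuity up to the endpoint produces precisely $u\in\mathcal C^1([0,T];H)\cap\mathcal C^2((0,T);H)$ and the correct matching $\dot u(0)=h$ at $t=0$.
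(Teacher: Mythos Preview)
Your proposal is correct and follows essentially the same route as the paper: reduce to the first-order system, apply Proposition~\ref{AbsCauchyWP} to get the strong solution $U$ via \eqref{VectorVSol}, unpack the components to obtain the stated regularity and the pointwise identity \eqref{IVP}, and for uniqueness go back to the first-order system. One small caution: your primary uniqueness appeal to Theorem~\ref{maintheorem} may be logically circular depending on how the paper is organized (this proposition is part of its proof), but your alternative ``reduction of order plus Proposition~\ref{AbsCauchyWP}'' is exactly what the paper does and is safe.
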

\medskip

\begin{proof}

\textbf{Existence}: Under the current assumption on $f, g$ and $h$, we have $(0,f)\in H^1(0,T, \mathcal{H})$ and $( g,h ) \in D(\tilde{A}_2)$. By Proposition \ref{AbsCauchyWP}, the problem
\begin{equation}\label{sample}
\dfrac{d}{dt} U(t) = \tilde{A}_2 U + \begin{pmatrix} 0 \\ f \end{pmatrix} ; U(0) = \begin{pmatrix} g\\ h \end{pmatrix}
\end{equation}
has a unique strong solution $U(t) \in \mathcal{C}^0( [0,T] , \mathcal{H}) \cap \mathcal{C}^1( (0,T) , \mathcal{H})$ given by \eqref{VectorVSol}; that is $U(t) \in D(\tilde{A}_2)$ for all $t\in (0,T)$ and
$$ \dfrac{d}{dt} U(t)  = \tilde{A}_2 U + \begin{pmatrix} 0 \\ f \end{pmatrix}  \ \text{in}\  \mathcal{C}^0( (0,T), \mathcal{H})$$
where the derivative is taken in the classical sense. Using the notation $U(t) = (u_1(t),u_2(t))$, we have for all $t \in (0,T)$, $u_1(t) \in D(A_2)$, $u_2(t) \in E$, and
$$ u_1 \in \mathcal{C}^0([0,T]; E) ;\  \  u_2 \in \mathcal{C}^0([0,T], H)  ;$$
$$ u_1 \in \mathcal{C}^1((0,T); E);\ \  u_2 \in  \mathcal{C}^1((0,T), H)  .$$ 
When $\tilde{A}_2 = \begin{pmatrix} 0 & \Id \\ -A_2 & -2R_{\Omega} \end{pmatrix} $, the Cauchy problem \eqref{sample} can be writen as
$$\begin{pmatrix} \dot{u}_1 \\ \dot{u}_2 \end{pmatrix} = \begin{pmatrix} u_2\\ -A_2 u_1- R_{\Omega} u_2\end{pmatrix}
+ \begin{pmatrix} 0\\ f\end{pmatrix},\  \text{in}\  \mathcal{C}^0((0,T), \mathcal{H}),$$
which componentwise gives
$$\begin{cases}  
 \dot{u}_1 = u_2\  \text{in}\ \  \mathcal{C}^0((0,T), E ) \\
 \dot{u}_2 +  A_2 u_1 + R_{\Omega} u_2  = f \ \text{in} \ \mathcal{C}^0((0,T), H).
 \end{cases}.$$
Since $\dot{u}_1 = u_2 \in \mathcal{C}^1((0,T), H)$ can replace $u_2$ by $\dot{u}_1$ and $\dot{u}_2$ by $\ddot{u}_1$, with all derivatives taken in the classical sense, and obtain
$$\begin{cases} \ddot{u}_1 +  A_2 u_1 + 2R_{\Omega} \dot{u}_1  = f \ \text{in} \ \mathcal{C}^0((0,T), H);\\
u_1(0) = g ;\  \dot{u}_1(0) = h.  \end{cases}.$$
Taking $u = u_1$ we see we have proven that a solution with claimed properties exists.

\textbf{Uniqueness}: On the other hand, under the same assumptions on $f, g$ and $h$, if
 we have a classical solution of \eqref{IVP} that is, $u(t) \in  \mathcal{C}^0([0,T], E) \cap \mathcal{C}^2((0,T), H)$ which satisfies \eqref{IVP}  pointwise on $(0,T)$, then $U(t) = \begin{pmatrix} u(t) \\ \dot{u}(t) \end{pmatrix}$ is a classical solution to \eqref{Asos2}  with initial condition $\begin{pmatrix} g\\ h \end{pmatrix} $ and inhomogeneous term $\begin{pmatrix} 0 \\ f\end{pmatrix}$. The uniqueness of such a $U(t)$ is guaranteed by Proposition \ref{AbsCauchyWP}, which then implies the uniqueness of $u$. 
 \end{proof}
 \medskip

To finish this section we make a few remarks on conservation of energy. Let us define the energy associated to our problem as
\begin{equation}\label{Energy}
E(t) = \frac{1}{2} \left ( \| \dot{u} \|_H^2 + a_2(u,u) \right )
\end{equation}
which would be $\|(u,\dot{u})\|_{\mathcal{H}}$ upon setting $\beta = 0$. If $u$ is the strong solution for problem \eqref{VartionalP2} with $f = 0$ shown to exist in Proposition \ref{StrongSolution2}, then making use of the regularity properties given in the proposition  and the fact $u$ satisfies \eqref{IVP} pointwise we have
\[
\begin{split}
\dot{E}(t) & = \mathrm{Re} \big [ ( \ddot u, \dot u)_H + a_2(\dot u , u) \big ]\\
& = \mathrm{Re} \big [ ( -2 R_\Omega \dot u - A_2 u , \dot u)_H + a_2(\dot u , u) \big ]\\
& = 0.
\end{split}
\]
Thus we see that $E(t)$ is constant. By using the convergence of strong solutions to weak solutions as in Definition \eqref{WeakBrezis} we can establish the same result for weak solutions of problem \eqref{VartionalP2}. Note that this is in some ways an improvement on the energy estimate given in Corollary \ref{Energyest:cor} in the $\mathcal{H}$ norm.

The total physical energy \cite[(3.226)-(3.227)]{Dahlen} is
given in Appendix~C. To show the equivalence of the
total physical energy conservation and \eqref{Energy}, we have to resort to a
second-order tangential slip condition at fluid-solid boundaries,
$\Sigma^{F}$, given in \eqref{2ndorderBC} replacing the linearized one \eqref{FcontinuityBC} used in the main text. A proof of the equivalence of the total physical energy and \eqref{Energy} under the additional nonlinear slip condition is given in Appendix~C.

\section{Galerkin Approximation}\label{Galerkin1}

We restate our original problem which was defined in \eqref{VartionalP2}
\begin{equation}\label{Galerkinprob1}
\left \{
\begin{split}
& \text{Given} \ u_0 \in E ,\ \dot{u}_0 \in H,\  f\in L^2(0,T; H)\\[0.2cm] 
&\text{Find}\ u\ \text{satisfying}: u \in \mathcal{C}^0( [0,T]; E),\ \dot{u} \in \mathcal{C}^0([0,T]; H),\ \text{such that} \\
 &\forall v \in E ,  \dfrac{d}{dt} (\dot{u}, v)_H + (2R_{\Omega} \dot{u} , v)_H + a_2(u,v) = (f, v)_H   , \ \text{in}\ \mathcal{D}'(0,T);\\
& u(0) = u_0\  ; \ \dot{u}(0) = \dot{u}_0 . 
\end{split}
\right .
 \end{equation}
%The Galerkin method requires that we work with real Hilbert spaces, and so in this section we specify that all the spaces are over the real field, and thus $a_2$ is symmetric.
Existence and uniqueness of solutions to \eqref{Galerkinprob1} can be proven via the Faedo-Galerkin method combined with parabolic regularization as shown in \cite[Theorem 3, p572, Theorem 4, p574]{DautrayLionsV5}. The problem \eqref{Galerkinprob1} can also be written in its equivalent form as, cf. \eqref{IVP},
\begin{equation}\label{eqnInDist}
\ddot{u} + 2 R_{\Omega} \dot{u} + A_2 u = f ,\ \text{in}\ \mathcal{D}'(0,T; E')
\end{equation}
where elements of $H$ are identified with elements of $E'$ via the $H$-inner product. As we have done in the semigroup approach, we can also replace \eqref{eqnInDist} formally by a system of first-order equations. A precise formulation of \eqref{Galerkinprob1} as a first-order system is
\begin{equation}\label{Galerkinprob2}
\left \{
\begin{split}
& \text{Given} \ U_0 \in \mathcal{H},\  (0,f) \in L^2(0,T; \mathcal{H})\\[0.1cm] 
&\text{Find}\ U\ \text{satisfying}: U \in \mathcal{C}^0( [0,T]; \mathcal{H}),\ \text{such that} \\
 &\dot{U} = \widetilde{A}_2 U + F   , \ \text{in}\ \mathcal{D}'(0,T;H \times E');\\
& U(0) = U_0.
\end{split}
\right .
\end{equation} 
We remind the reader that we are using the notation $\mathcal{H} = E \times H$ as in section \ref{semigroup:sec}. We will also take the inner product on $\mathcal{H}$ to be that given by \eqref{SPH}. As discussed in section \ref{semigroup:sec}, existence and uniqueness of solutions for \eqref{Galerkinprob2} follows from existence and uniqueness of solutions for \eqref{Galerkinprob1}. Our plan in this section is to review the Faedo-Galerkin method applied directly to \eqref{Galerkinprob2}, and to provide a proof of strong convergence of the Galerkin approximations in this case. Alternatively, well-posedness of both \eqref{Galerkinprob2} and \eqref{Galerkinprob1} can be deduced from the results of section \ref{semigroup:sec}. %We first introduce function spaces necessary for the proofs.
Convergence of the Galerkin approximations for many classes of abstract problems closely related to the problem considered in this paper have been established previously (e.g. see \cite{Joly} for a review, or \cite{Dupont} for a classical account of the standard approach), and the method of proof that we use below is essentially the same.

%\subsection{Galerkin approximation}

Standard Galerkin approximation for \eqref{Galerkinprob2} would proceed as follows. Let $\{X_m\}$ be an increasing sequence of finite dimensional subspaces of $E$ such that $\cup_{m=1}^\infty X_m$ is dense in $E$, and define $\mathcal{E}_m = X_m \times X_m$. Suppose that for every $m$, $\{W_{j,m}\}_{j=1}^{d_m}$ is a basis for $\mathcal{E}_m$, and introduce the matrices
\[
M_{ij} = ( W_{j,m},W_{i,m})_{H\times H}, \quad K_{ij} = \langle \tilde{A}_2 W_{j,m}, W_{i,m} \rangle_{(E\times E)', (E\times E)}.
\]
For the definition of $K_{ij}$, we identify elements of $H \times H$ with elements of $(E \times E)'$ via the $H$-inner product. The corresponding Galerkin approximations for \eqref{Galerkinprob2} would then be
\[
U_m(t) = \sum_{j=1}^{d_m} g_{m,j}(t) W_{j,m}
\]
where the coefficients $g_{m,j}$ are obtained by solving the finite dimensional problems
\begin{equation}\label{Galerkinapprox}
\sum_{j = 1}^{d_m} M_{ij}\ \dot{g}_{m,j}(t) = \sum_{j=1}^{d_m} K_{ij}\ g_{m,j}(t) + (F, W_{i,m})_{H\times H}, \quad \sum_{j = 1}^{d_m} ( W_{j,m},W_{i,m})_{\mathcal{H}}\ g_{m,j}(0) = (U_0,W_{i,m})_{\mathcal{H}} 
\end{equation}
Since $\{W_{j,m}\}$ is linearly independent, $M_{ij}$ is symmetric positive definite, and this finite system has a unique solution for every $m$. Of course application of this method in practice requires much more work including proper design of the approximation spaces $\mathcal{E}_m$, and construction of the matrix $K_{ij}$ which will involve boundary integrals as well as a nonlocal contribution from the effects of self-gravitation. We do not comment further on these issues, although do note that the Volterra equation method outlined in section \ref{Voltsec} provides a way in which the nonlocal self-gravitation effects could be dealt with separately. 

The Faedo-Galerkin method involves demonstrating that the approximations $U_m$ converge to a solution of \eqref{Galerkinprob2}. The first step is to establish bounds, so-called energy estimates, on $U_m$ which will imply weak convergence of subsequences. Note that since $\{W_{j,m}\}$ is a basis for $\mathcal{E}_m$, we have from the definition of the Galerkin approximation that 
\[
\left (\dot{U}_m, V \right )_{H\times H} = \left \langle \tilde{A}_2 U_m, V\right \rangle_{(E \times E)', E \times E} + (F, V)_{H \times H}
\]
for all $V \in \mathcal{E}_m$. We will prove in the next lemma an energy estimate which establishes bounds for $\|U_m\|_{\mathcal{H}}$.
\medskip
% as a special case, but is actually more general. We will need the extra generality in the sequel to prove the strong convergence of $U_m$ to the solution $U$ of \eqref{Galerkinprob2}.

\begin{lemma} \label{Energy1}
Suppose that $f \in L^2(0,T; H)$, %$h \in W^{1,2}(0,T; H)$,
$W \in L^2(0,T;\mathcal{E}_m)$ and
\begin{equation}\label{Galerkinenergy}
-\int_0^T \left (W(t), \dot{V}(t) \right )_{H\times H} \ \mathrm{d} t = \int_0^T\left \langle \tilde{A}_2 W(t), V(t) \right \rangle_{(E \times E)', E \times E} + (F(t), V(t))_{H \times H} \ \mathrm{d} t, \quad 
\end{equation}
where $F = (0,f)$, for all $V \in \mathcal{C}^1([0,T];\mathcal{E}_m)$ with $V(0) = V(T) = 0$. Then $\dot{W} \in L^2(0,T; \mathcal{H})$ and there is a constant $C$ independent of $m$ such that for all $t \in [0,T]$
\[
\|W (t) \|^2_{\mathcal{H}} \leq C \left ( \|f\|^2_{L^2(0,t;H)} %+\|h\|^2_{W^{1,2}(0,T;H)}
+ \|W(0)\|^2_{\mathcal{H}} \right ).
\]
\end{lemma}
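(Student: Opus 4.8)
The plan is to derive the energy estimate in the standard way for Faedo--Galerkin approximations of first-order evolution equations, exploiting the specific algebraic structure of $\tilde{A}_2$. First I would observe that, since $W(t) \in \mathcal{E}_m$ for a.e. $t$ and \eqref{Galerkinenergy} is the weak (distributional in time) form of $\dot{W} = \tilde{A}_2 W + F$ tested against functions valued in $\mathcal{E}_m$, one can first argue that $\dot{W} \in L^2(0,T;\mathcal{E}_m) \subset L^2(0,T;\mathcal{H})$: indeed, on the finite-dimensional space $\mathcal{E}_m$ the operator $\tilde{A}_2$ restricts (after identifying $H\times H$ with $(E\times E)'$) to the bounded finite-dimensional operator with matrix $M^{-1}K$, so $W$ solves the linear ODE system \eqref{Galerkinapprox} and is automatically in $H^1(0,T;\mathcal{E}_m)$. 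This justifies all the integrations by parts in $t$ that follow and lets us test \eqref{Galerkinenergy} with $V = W$ on subintervals $[0,t]$.

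Next I would compute $\frac{d}{dt}\|W(t)\|_{\mathcal{H}}^2 = 2\,\mathrm{Re}(\dot W, W)_{\mathcal{H}}$ using the equation. Writing $W = (w_1,w_2)$ with $w_1, w_2 \in X_m$ and using the definition \eqref{SPH} of the $\mathcal{H}$-inner product, the key cancellation is exactly the one that appeared in the proof of Lemma~\ref{WellPosedNess}:
\[
\mathrm{Re}\,( \tilde{A}_2 W, W)_{\mathcal{H}} = \mathrm{Re}\big[ (a_2+\beta)(w_2,w_1) - a_2(w_1,w_2) - 2(R_\Omega w_2, w_2)_H \big] = \mathrm{Re}\big[\beta(w_2,w_1)_H\big],
\]
where the $a_2$ terms cancel by the Hermitian symmetry of $a_2$ (Lemma~\ref{Propa3}) and the term $(R_\Omega w_2,w_2)_H$ is purely imaginary since $R_\Omega^* = -R_\Omega$. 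Hence
\[
\frac{d}{dt}\|W(t)\|_{\mathcal{H}}^2 = 2\,\mathrm{Re}\big[\beta(w_2,w_1)_H\big] + 2\,\mathrm{Re}(F,W)_{\mathcal{H}} \le \beta\big(\|w_1\|_H^2 + \|w_2\|_H^2\big) + \|f\|_H^2 + \|w_2\|_H^2,
\]
using Cauchy--Schwarz and Young's inequality on both terms and that $F = (0,f)$ so $(F,W)_{\mathcal{H}} = (f,w_2)_H$. Since $\|w_1\|_H \le \|w_1\|_E \le C'\|w_1\|_{a_2+\beta}$ (the norms on $E$ being equivalent, by coercivity, Theorem~\ref{CoerciveA2}) and $\|w_2\|_H$ is one of the components of $\|W\|_{\mathcal{H}}^2$, the right-hand side is bounded by $C_1\|W(t)\|_{\mathcal{H}}^2 + \|f(t)\|_H^2$ with $C_1$ independent of $m$.

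Finally I would integrate this differential inequality from $0$ to $t$ and apply Gronwall's inequality to conclude
\[
\|W(t)\|_{\mathcal{H}}^2 \le e^{C_1 t}\Big(\|W(0)\|_{\mathcal{H}}^2 + \|f\|_{L^2(0,t;H)}^2\Big) \le C\Big(\|f\|_{L^2(0,t;H)}^2 + \|W(0)\|_{\mathcal{H}}^2\Big),
\]
with $C = e^{C_1 T}$ depending only on $T$, $\beta$, and the embedding constant of $E\hookrightarrow H$, hence independent of $m$. I expect the only genuinely delicate point to be the justification that \eqref{Galerkinenergy}, which a priori only holds against test functions vanishing at the endpoints, can be upgraded to the pointwise-in-$t$ identity $\dot W = \tilde{A}_2 W + F$ in $L^2(0,T;\mathcal{H})$ and then tested against $W$ itself on $[0,t]$; this is where one uses finite-dimensionality of $\mathcal{E}_m$ together with a standard density/regularization argument (or directly the fact that $W$ solves the ODE \eqref{Galerkinapprox}), after which the rest is the routine energy computation above. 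A minor care point is that the constant must be tracked to be $m$-independent, which is automatic here since every estimate uses only $\beta$, the $E\hookrightarrow H$ embedding, and $R_\Omega^* = -R_\Omega$, none of which depend on $m$.
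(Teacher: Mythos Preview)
Your proposal is correct and follows essentially the same route as the paper: reduce to the ODE in $\mathcal{E}_m$ to get $\dot W\in L^2(0,T;\mathcal{H})$, compute $\tfrac{d}{dt}\|W\|_{\mathcal{H}}^2$, exploit the Hermitian symmetry of $a_2$ and the skew-symmetry of $R_\Omega$ to reduce to $\mathrm{Re}[\beta(w_1,w_2)_H+(f,w_2)_H]$, then finish with Cauchy--Schwarz/Young and Gronwall. One small point of care: writing ``$\dot W=\tilde A_2 W+F$ in $L^2(0,T;\mathcal{H})$'' and then pairing $(\tilde A_2 W,W)_{\mathcal{H}}$ is notationally imprecise since $\tilde A_2 W$ need not lie in $\mathcal{H}$ when $w_1\notin D(A_2)$; the paper avoids this by working componentwise with the Galerkin relations (in particular using $\dot w_1=w_2$ exactly, then testing the second equation with $v_2=w_2\in X_m$), which is really what your computation amounts to once unpacked.
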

\begin{proof}
First of all, we claim that $\dot{W} \in L^2(0,T;\mathcal{H})$. This can be established by expanding $W(t)$ in the basis $W_{m,j}$ for $\mathcal{E}_m$, and then noting that the coefficients satisfy %a finite system of ODEs similar to 
the Galerkin system \eqref{Galerkinapprox}. From this we see that the coefficients have derivatives in $L^2(0,T)$, and this establishes the initial claim that $\dot{W} \in L^2(0,T;\mathcal{H})$. From this
\[
\begin{split}
\|W (t) \|^2_{\mathcal{H}} & =\|W (0) \|^2_{\mathcal{H}}+ 2\int_0^t \mathrm{Re}\ \left (\dot{W}(s) , W(s) \right )_{\mathcal{H}}\ \mathrm{d} s\\
& =\|W (0) \|^2_{\mathcal{H}}+ 2\int_0^t \mathrm{Re}\ \Big [ a_2(w_{1},\dot{w}_{1}) + \beta (w_{1},\dot{w}_{1})_H + (w_{2},\dot{w}_{2})_H \Big ] \ \mathrm{d} s
\end{split}
\]
where we have used the notation $W = (w_{1}, w_{2})$. Note that this formula would be true by the fundamental theorem of calculus if $W \in \mathcal{C}^1([0,T];\mathcal{H})$, and then follows in our case by density.

Next, setting $V = (v_1, v_2) \in \mathcal{E}_m$, \eqref{Galerkinenergy} gives
\[
(\dot{w}_{1}(t), v_1)_H = (w_{2}(t), v_1)_H% + (h(t),v_1)_H
\]
and
\[
(\dot{w}_{2}(t), v_2)_H = -a_2(w_{1}(t),v_2) - (2 R_\Omega w_{2}(t), v_2)_H + (f(t), v_2)_H
\]
for all $v_1$, $v_2 \in X_m$ and almost every $t \in [0,T]$. 

Now let $\pi_{X_m}^H$ be the $H$-orthogonal projection onto $X_m$. Choosing $v_2 = \dot{w}_{1}(t)$, and $v_1$ either $\dot{w}_{2}(t)$, $w_{1}(t)$, $\pi_{X_m}^H f(t)$, or $\pi_{X_m}^H 2 R_\Omega w_{2}(t)$, and using the skew symmetry of $R_\Omega$, we obtain, after some calculation,
\[
\begin{split}
\|W (t) \|^2_{\mathcal{H}}  & = \|W (0) \|^2_{\mathcal{H}} + 2\int_0^t \mathrm{Re}\ \Big [%(h,w_2')_H - (\pi_{E_m}^H 2 R_\Omega w_2,h)_H +
\beta (w_1, w_2)_H  %+ \beta (w_1,h)_H\\
%&\hskip.5in  
+ (f,w_2)_H% + (\pi_{E_m}^H f, h)_H  
\Big ] \ \mathrm{d}s\\
%& = \|W (0) \|^2_{\mathcal{H}} + 2\ \mathrm{Re}\big [(h(0),w_2(0))_H - (h(t),w_2(t))_H\big ] + 2\int_0^t \mathrm{Re}\ \Big [(h',w_2)_H \\
%&\hskip.5in - (\pi_{E_m}^H 2 R_\Omega w_2,h)_H + \beta (w_1, w_2)_H  + \beta (w_1,h)_H  + (f,w_2)_H + (\pi_{E_m}^H f, h)_H  \Big ] \ \mathrm{d}s\\
\end{split}
\]
%where we have used integration by parts in the second step which is allowed here since $h'$ and $w_2'$ are in $L^2(0,t;H)$. 
Applying the Cauchy-Schwarz inequality and the inequality $2a b\leq a^2 + b^2$, % and the fact that $\pi_{E_m}^H: H \rightarrow H$ is bounded,
we have
\[
\begin{split}
\|W (t) \|^2_{\mathcal{H}} & \lesssim \|f\|_{L^2(0,t;H)}^2 %+ \|h\|^2_{W^{1,2}(0,T;H)} + \|h(0)\|_{H}^2 + \|h(t)\|_{H}^2\\
%&\hskip1in 
+ \| W (0) \|^2_{\mathcal{H}}  + \int_0^t \|W(s) \|^2_{\mathcal{H}} \ \mathrm{d} s
\end{split}
\]
where we use the notation $A\lesssim B$ to mean $A \leq C B$ for some constant $C$ depending only on $T$ and $\beta$. Finally, %the trace theorem {\bf [ADD CITATION]} gives
%\[
%\|W (t) \|^2_{\mathcal{H}} \lesssim \|f\|_{L^2(0,T;H)}^2 + \|h\|^2_{W^{1,2}(0,T;H)} + \| W (0) \|^2_{\mathcal{H}}  + \int_0^t \|W(s) \|^2_{\mathcal{H}} \ \mathrm{d} s,
%\]
%and then
Gronwall's lemma implies the result.
\end{proof}

Taking $W = U_m$ and observing that $U_m(0)$ is the $\mathcal{H}$-orthogonal projection of $U_0$ onto $\mathcal{E}_m$, Lemma \ref{Energy1} implies that 
\[
\|U_m(t) \|^2_{\mathcal{H}} \lesssim \|f\|_{L^2(0,t;H)}^2 + \|U_0\|_{\mathcal{H}}^2.
\]
Thus we see that the sequence $\{U_m\}$ is uniformly bounded in $L^\infty(0,T;\mathcal{H})$, and from this we can conclude that subsequences of $\{U_m\}$ converge weakly in $L^2(0,T;\mathcal{H})$ and weak* in $L^\infty(0,T;\mathcal{H})$. This is actually insufficient to establish existence and uniqueness of solutions for \eqref{Galerkinprob2} since we cannot conclude, based on this alone, that the weak limits are continuous with values in $\mathcal{H}$. This necessitates the so-called parabolic regularization in which the operator $\tilde{A}_2$ is changed to
\[
M_\epsilon = \begin{pmatrix}  0 & \Id \\ -A_2  & -2R_{\Omega} + \epsilon (A_2 + \beta \Id)  \end{pmatrix}.
\]
Assuming that $a_2$ is coercive, for every $\epsilon >0$ the problem \eqref{Galerkinprob2} with $\tilde{A}_2$ replaced by $M_\epsilon$ is in fact parabolic, and existence and uniqueness can be established from an appropriate analog of Lemma \ref{Energy1}. It then remains to prove that as $\epsilon \rightarrow 0^+$ a solution of \eqref{Galerkinprob2} is obtained. We will not expand on this further, but for details we refer the reader to \cite[Theorem 3, p.572, Theorem 4, p.574]{DautrayLionsV5}. In fact, well-posedness for \eqref{Galerkinprob2} follows from the results of section \ref{semigroup:sec} and so there is no rigor lost in omitting a proof of the same based only on Galerkin approximation. In a broader context, Galerkin approximation as a method to prove well-posedness enjoys the advantage that results can be obtained in the case when the operator $A_2$ depends on $t$, but we will not consider this here as it is not important in our application. Instead, we will now establish that the Galerkin approximations $U_m$ converge strongly in $\mathcal{C}^0([0,T];\mathcal{H})$ to the solution $U$ of \eqref{Galerkinprob2}.
\begin{theorem}
If $U_0 \in \mathcal{H}$ and $f \in L^2(0,T;H)$, then the Galerkin approximations defined by \eqref{Galerkinapprox} converge to the solution of \eqref{Galerkinprob1} in the $\mathcal{C}^0([0,T];\mathcal{H})$ norm.
\end{theorem}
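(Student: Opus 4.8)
The plan is to establish strong convergence by comparing the Galerkin iterates $U_m$ with the $\mathcal{H}$-orthogonal projection onto $\mathcal{E}_m$ of the exact solution $U$ — whose existence, uniqueness and, for smooth data, regularity are supplied by Section~\ref{semigroup:sec} — and then running a Gronwall argument on the difference with the help of the $m$-independent energy estimate of Lemma~\ref{Energy1}. A preliminary density reduction to smooth data is needed because for general $U_0 \in \mathcal{H}$ the solution $U = (u,\dot u)$ is only known to be continuous into $\mathcal{H}$, too little regularity to insert into the Galerkin identity.

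First I would carry out the reduction. Choose $U_0^{(k)} \in D(\tilde{A}_2)$ and $f^{(k)} \in H^1(0,T;H)$ with $U_0^{(k)} \to U_0$ in $\mathcal{H}$ and $f^{(k)} \to f$ in $L^2(0,T;H)$ (possible by Proposition~\ref{DomA2} and density of $H^1$ in $L^2$), and write $U^{(k)}$, $U_m^{(k)}$ for the corresponding exact solution and $m$-th Galerkin approximation. Since $U_m^{(k)} - U_m$ solves the Galerkin system \eqref{Galerkinapprox} with initial datum $\Pi^{\mathcal{H}}_{\mathcal{E}_m}(U_0^{(k)} - U_0)$ and forcing $(0, f^{(k)} - f)$, the estimate of Lemma~\ref{Energy1}, whose constant does not depend on $m$, gives $\sup_{[0,T]}\|U_m^{(k)} - U_m\|_{\mathcal{H}} \le C(\|U_0^{(k)} - U_0\|_{\mathcal{H}} + \|f^{(k)} - f\|_{L^2(0,T;H)})$ uniformly in $m$, while the analogous bound for $\|U^{(k)} - U\|_{\mathcal{H}}$ follows from the representation formula \eqref{VectorVSol} and $\|G(t)\|_{\mathcal{H}\to\mathcal{H}} \le e^{ct}$. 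A $3\varepsilon$ argument then reduces the theorem to the case $U_0 \in D(\tilde{A}_2)$, $f \in H^1(0,T;H)$, for which Propositions~\ref{AbsCauchyWP} and~\ref{StrongSolution2} give that $U = (u,\dot u)$ is a strong solution with $u \in \mathcal{C}^1([0,T];E)$ and $\ddot u \in \mathcal{C}^0([0,T];H)$ — the regularity at $t=0$ using $U_0 \in D(\tilde{A}_2)$ and continuity of $F$ there — and $u(t) \in D(A_2)$ for $t \in (0,T)$.

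For such data set $W_m(t) := \Pi^{\mathcal{H}}_{\mathcal{E}_m} U(t)$. Expressing the inner product of $\mathcal{H} = E \times H$ as $(a_2+\beta)(\cdot,\cdot) \oplus (\cdot,\cdot)_H$ and using $\mathcal{E}_m = X_m \times X_m$, this projection splits as $W_m = (R_m u,\ \pi^H_{X_m}\dot u)$, with $R_m$ the Ritz (elliptic) projection onto $X_m$ for the coercive Hermitian form $a_2+\beta$ and $\pi^H_{X_m}$ the $H$-orthogonal projection onto $X_m$. Because $\{u(t)\}_t, \{\dot u(t)\}_t$ are compact in $E$ and $\{\ddot u(t)\}_t$ is compact in $H$, quasi-optimality of $R_m$ (C\'ea's lemma) and density of $\bigcup_m X_m$ yield $\sup_{[0,T]}\|W_m(t) - U(t)\|_{\mathcal{H}} \to 0$. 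Subtracting the variational identity \eqref{VartionalP2}, which the strong solution satisfies pointwise and which we test against $\mathcal{E}_m$, from the Galerkin equation, one finds that $\chi_m := U_m - W_m \in \mathcal{C}^1([0,T];\mathcal{E}_m)$ solves a Galerkin system of exactly the same structure as the one for $U_m$, with an extra consistency-error forcing. The key point is that the elliptic projection kills the top-order part of this error: by the defining property of $R_m$, $a_2(u - R_m u,\phi) = -\beta\,(u - R_m u,\phi)_H$ for all $\phi \in X_m$, so the consistency error is a genuine $H$-valued source assembled from the projection errors $u - R_m u$, $\dot u - \pi^H_{X_m}\dot u$, $\ddot u - \pi^H_{X_m}\ddot u$, each of which tends to $0$ in $\mathcal{C}^0([0,T];H)$. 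Applying (a two-sided-forcing variant of) the energy estimate of Lemma~\ref{Energy1} to $\chi_m$ with this small $L^2(0,T;H)$ forcing, together with Gronwall's inequality and the exact identity $\chi_m(0) = \Pi^{\mathcal{H}}_{\mathcal{E}_m}U_0 - \Pi^{\mathcal{H}}_{\mathcal{E}_m}U(0) = 0$, gives $\sup_{[0,T]}\|\chi_m(t)\|_{\mathcal{H}} \to 0$; combined with $\sup_{[0,T]}\|W_m - U\|_{\mathcal{H}} \to 0$ this is the assertion $U_m \to U$ in $\mathcal{C}^0([0,T];\mathcal{H})$.

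The step I expect to be the main obstacle — and the reason the argument has to go through the Ritz projection rather than the more elementary weak-$*$-limit-plus-energy-identity route one might try after the Faedo--Galerkin discussion above — is the \emph{lack of compactness of the embedding $E \hookrightarrow H$}: in the fluid region the divergence alone does not control a full gradient, so $H(\Div,\Omega^F)$ does not embed compactly into $L^2(\Omega^F)$ and there is no Aubin--Lions argument upgrading the weak-$*$ convergence of $\{U_m\}$ to strong convergence; furthermore the lower-order Coriolis, self-gravitational and $\beta$-shift terms prevent $a_2$ itself from being coercive, so weak convergence together with convergence of norms would still not close. The elliptic projection is what repairs this: it converts the otherwise $E'$-valued top-order consistency error into an $H$-valued one, after which the $m$-independent energy estimate of Lemma~\ref{Energy1} closes the argument with no compactness at all. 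The only other delicate point — the weak regularity of $\dot u$ for general data in $\mathcal{H}$ — is handled by the density reduction to strong solutions.
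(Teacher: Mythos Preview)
Your density reduction to smooth data and the final Gronwall step are exactly right, and they match the paper's strategy. The gap is in the middle step, and it is precisely the one you try to pre-empt with the Ritz projection.

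With $W_m=(R_m u,\pi^H_{X_m}\dot u)$ the first component of $\chi_m=U_m-W_m$ does \emph{not} satisfy the clean relation $\dot\chi_{1,m}=\chi_{2,m}$; instead $\dot\chi_{1,m}=\chi_{2,m}+(\pi^H_{X_m}\dot u-R_m\dot u)$. In any ``two-sided-forcing'' variant of Lemma~\ref{Energy1} this first-component defect enters the energy identity through $(a_2+\beta)(\pi^H_{X_m}\dot u-R_m\dot u,\chi_{1,m})$, i.e.\ through the $E$-pairing, not through $(\cdot,\cdot)_H$. Your claim that the consistency error is an ``$H$-valued source'' is therefore not enough: you would need $\|\pi^H_{X_m}\dot u-R_m\dot u\|_E\to 0$, and since $\pi^H_{X_m}$ is not uniformly bounded $E\to E$ there is no reason for this to hold. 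Replacing $\pi^H_{X_m}$ by $R_m$ in the second slot removes the first-component defect but then forces $R_m\ddot u$ into the second-component error, which requires $\ddot u\in E$ --- regularity you do not have for the strong solution of Proposition~\ref{StrongSolution2}.

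The paper sidesteps this entirely by \emph{not} projecting $U$. It works with the full error $E_m=U-U_m$, for which $\dot e_{1,m}=e_{2,m}$ holds exactly (so there is no first-component forcing at all), and then exploits Galerkin orthogonality by inserting an \emph{arbitrary} $v_m\in\mathcal C^2([0,T];X_m)$ into the second-component error equation. After one integration by parts in $t$ on the term $(\dot e_{2,m},u_2-\dot v_m)_H$, the resulting bound depends only on $\|\dot u-\dot v_m\|_{L^2(0,T;E)}$ and $\|\ddot u-\ddot v_m\|_{L^2(0,T;H)}$, both of which can be made small by density using only $\dot u\in\mathcal C^0([0,T];E)$ and $\ddot u\in\mathcal C^0([0,T];H)$. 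This is the Baker--Dupont device, and it is exactly what your Ritz-projection route would need to be repaired into.
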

\begin{proof}
Assume for the moment that $U\in \mathcal{C}^1([0,T];\mathcal{H})$ is the solution of \eqref{Galerkinprob2} and that $U_m$ are the Galerkin approximations to $U$ defined above. Note we are assuming here that $U$ has more regularity than should be expected. For convenience we will use the notation $E_m = U - U_m = (e_{1,m}, e_{2,m})$ for the approximation error and $U_m = (u_{1,m},u_{2,m})$ for the approximations. With the assumed regularity, we have, just as in the proof of Lemma \ref{Energy1},
\[
\| E_m(t) \|^2_{\mathcal{H}} =  \|E_m (0) \|^2_{\mathcal{H}} + 2\int_0^t \mathrm{Re}\ \Big [ a_2(e_{1,m},\dot{e}_{1,m}) + \beta (e_{1,m},\dot{e}_{1,m})_H + (e_{2,m},\dot{e}_{2,m})_H \Big ] \ \mathrm{d} s.
\]
The equations satisfied by $U$ and $U_m$ imply that
\[
a_2( e_{1,m}, v_2 ) + (\dot{e}_{2,m},v_2)_H = -2 (R_\Omega e_{2,m},v_2 )_H, \quad \mbox{and} \quad (\dot{e}_{1,m},v_1)_H = (e_{2,m},v_1)_H
\]
for all $v_2 \in X_m$ and $v_1 \in H$. First setting either $v_1 = \dot{e}_{2,m}$ or $v_1 = e_{1,m}$ we have
\[
\begin{split}
\| E_m(t) \|^2_{\mathcal{H}} & =  \|E_m (0) \|^2_{\mathcal{H}} + 2\int_0^t \mathrm{Re}\ \Big [ a_2(e_{1,m},\dot{e}_{1,m}) + \beta (e_{1,m},e_{2,m})_H + (\dot{e}_{1,m},\dot{e}_{2,m})_H \Big ] \ \mathrm{d} s\\
& = \|E_m (0) \|^2_{\mathcal{H}} + 2\int_0^t \mathrm{Re}\ \Big [ a_2(e_{1,m},\dot{u}_1 - \dot{u}_{1,m}) + (\dot{e}_{2,m},\dot{u}_1 - \dot{u}_{1,m})_H + \beta (e_{1,m},e_{2,m})_H \Big ] \ \mathrm{d} s
\end{split}
\]
Next we set $v_2 = \dot{u}_{1,m}$, and then for an arbitrary $v_m \in \mathcal{C}^2([0,T];X_m)$ set $v_2 = \dot{v}_m$ (this is using ``Galerkin orthogonality"). In this way, also using $(e_{2,m},\dot{u}_1)_H = (e_{2,m},u_2)_H$, we establish
\[
\begin{split}
\| E_m(t) \|^2_{\mathcal{H}} & = \|E_m (0) \|^2_{\mathcal{H}} + 2\int_0^t \mathrm{Re}\ \Big [ a_2(e_{1,m},\dot{u}_1 - \dot{v}_m) + (\dot{e}_{2,m},u_2 - \dot{v}_m)_H + \beta (e_{1,m},e_{2,m})_H\\
&\hskip1.5in - 2 (R_\Omega e_{2,m}, \dot{u}_{1,m}- \dot{v}_m)_H \Big ] \ \mathrm{d} s\\
& = \|E_m (0) \|^2_{\mathcal{H}} + 2\ \mathrm{Re}\Big [(e_{2,m},u_2 - \dot{v}_m)_H|_0^t\Big ] \\
& \hskip-.5in + 2\int_0^t \mathrm{Re}\ \Big [ a_2(e_{1,m},\dot{u}_1 - \dot{v}_m) - (e_{2,m},\dot{u}_2 - \ddot{v}_m)_H + \beta (e_{1,m},e_{2,m})_H - 2 (R_\Omega e_{2,m}, \dot{u}_{1,m}- \dot{v}_m)_H \Big ] \ \mathrm{d} s.
\end{split}
\]
Note that in the last term $\dot{u}_{1,m}$ may be replaced by $\dot{u}_{1}$ since $R_\Omega$ is skew symmetric and so
\[
(R_\Omega e_{2,m}, \dot{u}_{1,m}- \dot{v}_m)_H = (R_\Omega e_{2,m}, -e_{2,m} + \dot{u}_{1,m} - \dot{v}_m)_H = (R_\Omega e_{2,m}, \dot{u}_{1}- \dot{v}_m)_H.
\]
Now, using the Cauchy-Schwarz inequality, the inequality $2 ab \leq a^2 + b^2$, the boundedness of $a_2$, and a trace theorem, we obtain an estimate
\[
\| E_m(t) \|^2_{\mathcal{H}} \lesssim \|E_m (0) \|^2_{\mathcal{H}} + \int_0^t \| E_m \|^2_{\mathcal{H}} + \|\dot{u}_1 - \dot{v}_m\|^2_E + \|\dot{u}_{2} - \ddot{v}_m\|^2_H \ \mathrm{d} s.
\]
Gronwall's inequality then implies
\[
\sup_{t \in [0,T]} \| E_m(t) \|^2_{\mathcal{H}} \lesssim \|E_m (0) \|^2_{\mathcal{H}} + \int_0^t \|\dot{u}_1 - \dot{v}_m\|^2_E + \|\dot{u}_{2} - \ddot{v}_m\|^2_H \ \mathrm{d} s.
\]
We note that this holds for any $v_m \in \mathcal{C}^2([0,T];X_m)$, and the fact that $\cup_m X_m$ is dense in both $E$ and $H$ then proves that
\[
\lim_{m\rightarrow \infty} \sup_{t \in [0,T]} \| E_m(t) \|^2_{\mathcal{H}} = 0,
\]
or, equivalently, that the Galerkin approximations converge in the $\mathcal{C}^0([0,T];\mathcal{H})$ norm to the true solution.

At this point we have proven the theorem under the assumption that $U\in \mathcal{C}^1([0,T];\mathcal{H})$. The regularity of $U$ depends on that of $U_0$ and $f$, and can, more or less, be read off from the equation \eqref{VectorVSol} for $U$. Indeed, based on the comments following \eqref{VectorVSol}, if we assume that $U_0 \in D(\tilde{A}_2)$ and $f \in H^1(0,T;H)$, then for the solution we have $U \in \mathcal{C}^1([0,T],\mathcal{H})$. To obtain the proof in the general case then, we can use the facts that $D(\tilde{A}_2)$ is dense in $\mathcal{H}$, and $H^1(0,T;H)$ is dense in $L^2(0,T;H)$, as well as using Lemma \ref{Energy1} to bound the difference between the Galerkin approximation of the true problem, and the Galerkin approximation for a problem with regularized $U_0$ and $f$.
\end{proof}

 \section{Volterra equation method}\label{Voltsec}
 
 In this section we present one further method of proving well-posedness for equation \eqref{globalseis} via conversion to a Volterra equation. This method enjoys the additional benefit that we may analyse in more detail the error associated with neglecting the self-gravitation. Additionally it allows more general lower order operators to be added which could for example include viscoelastic effects. To accomplish this let us begin by defining another bilinear form
\[
a_3(u,v) = a_2(u,v) + \frac{1}{4 \pi G} \int_{\mathbb{R}^3} \nabla S(u) \cdot \nabla S(\overline{v}) \ dV.
\]
Here, we remove the first-order perturbation in the gravitational
potential but retain the initial potential, $\Phi^0$. (The acoustic
version of this results in the so-called Cowling approximation.) The
non-gravitating ``limit'' is obtained by also removing terms
containing $\Phi^0$ and $\Psi^s$; it should be carefully noted that this is not equivalent to setting $g_0' = 0$ in the expression \eqref{a2} for $a_2$. This is because of the initial stress $T_0$ which appears in the highest order terms of $a_2$, and already includes some contribution from the gravitation and rotation. Indeed, if, as with $a_3$, we just subtract the terms involving $\Phi^0$ and $\Psi^s$ we obtain the form
\begin{equation} \label{a4}
\begin{aligned}
a_4\big(u, v \big)
= a_2(u,v) - \int_{\tilde{X}} \rho^0 u \cdot \nabla \nabla ( \Phi^0 + \Psi^s) \cdot \overline{v}\ dV + \frac{1}{4 \pi G} \int_{\mathbb{R}^3} \nabla S(u) \cdot \nabla S(\overline{v}) \ dV 
\end{aligned}
\end{equation}
which generates the system of equations describing acousto-elastic waves including the effect of the initial stress $T_0$. Since the two terms which are removed are both bounded on $H = L^2(\widetilde{X}, \rho^0 \,dV)$, $a_4$ is coercive on $E$ relative to $H$, and the same theory follows for $a_4$ as the cases of $a_2$ and $a_3$. Unfortunately the removal of the two terms in $a_4$ does not result in a significantly simpler formula than \eqref{a2}. Indeed, one might imagine starting from the formula \eqref{a1} for  $a_1$, removing the two terms involving $\Phi^0$, $\Psi^s$, and $\Phi^1$, and then following a similar calculation to that in section \ref{weak:sec} in order to find a coercive Hermitian form. This plan results precisely in $a_4$ given by \eqref{a4} where $a_2$ is given by \eqref{a2}. In order to have a simpler formula we would need to remove the contribution of the initial stress $T_0$.

The domains of $a_3$ and $a_4$ are again $E$, the domain of $a_2$, and the proof of Theorem \ref{CoerciveA2} still applies to show that $a_3$ and $a_4$ are $E$ coercive relative to $L^2(\widetilde{X}, \rho^0 \,dV)$. We can define $A_3$ and $A_4$ in the same way that $A_2$ was defined in section \ref{A2Op}, and Propositions \ref{PropOp} and \ref{DomA2} still hold with $A_2$ replaced by $A_3$ or $A_4$. We consider now the initial value problem
\begin{equation} \label{volt_IVP}
\ddot{u}  + A_3 u + 2R_\Omega \dot{u} + \rho^0 \nabla S(u) =  f, \quad u(0) = u_0, \quad \dot{u}(0) = \dot{u}_0,
\end{equation}
or the analogous problem using $A_4$. The distinction from what was done in sections \ref{semigroup:sec} and \ref{Galerkin1} is that here we have separated out the contributions of self-gravitation for $A_3$, and gravitation as well as rotation for $A_4$. We will show that any ``lower order perturbation" such as the self-gravitation does not affect the well-posedness of the problem, and give an expression for the difference between the solution without the effects of self-gravitation, and the solution including those effects. The other lower order terms may also be separated out in the same way, but we highlight the self-gravitation term here as it is the only nonlocal term.

Let us consider the more general problem
\begin{equation} \label{volt_genIVP}
\ddot{u}  + A u + Q_1 \dot{u} + Q_0 u =  f, \quad u(0) = u_0, \quad \dot{u}(0) = \dot{u}_0
\end{equation}
on the interval $[0,T]$ where we are considering that the operator $A$ may be $A_3$ or $A_4$. We will work in a rather general setting in which $Q_1: L^1(0,t; E) \rightarrow L^1(0,t; E)$ is bounded uniformly for any $t$, and can be extended to act on the same spaces with $E$ replaced by $H$. For $Q_0$, we assume $Q_0: L^1(0,t; D(A)) \rightarrow L^1(0,t; E)$ uniformly for any $t$ where $D(A)$ is given the graph norm for $A$, and $Q_0$ also extends to a bounded operator $L^1(0,t; E) \rightarrow L^1(0,t; H)$ uniformly in $t$. This degree of generality allows the treatment of viscoelasticity, although we will not discuss that topic any further. Also, it can be shown that $Q_0 = \nabla S$ satisfies these properties. We first establish the following result concerning strong solutions of \eqref{volt_genIVP}.

\begin{theorem} \label{volt_thm}
Suppose that the hypotheses of Theorem \ref{CoerciveA2} are satisfied, and that $Q_0$ and $Q_1$ are any linear operators with the mapping properties described above. Then if $u_0 \in D(A)$, $\dot{u}_0 \in E$, and $f \in C^0([0,T]; E)$ for any $T>0$ there exists a unique strong solution $u \in C^2([0,T];H) \cap C^1([0,T];E) \cap C^0([0,T];D(A))$ of \eqref{volt_genIVP}. The solution satisfies the estimate
\[
\|u\|_{C([0,T];E)} \leq Ce^{CT} (\|u_0\|_E + \|\dot{u}_0\|_H + \|f\|_{L^1(0,T;H)})
\]
where the constant $C$ does not depend on $T$.
\end{theorem}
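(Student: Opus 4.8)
The plan is to convert the initial value problem \eqref{volt_genIVP} into a Volterra integral equation, treating the lower order terms $Q_0 u$ and $Q_1 \dot{u}$ as a perturbation of the forcing, and then apply a fixed point argument. Concretely, let $S_0(t)$ denote the semigroup (or solution operator) associated with the unperturbed problem $\ddot{w} + Aw + 2R_\Omega \dot{w} = g$, $w(0) = w_0$, $\dot{w}(0) = w_1$; this is exactly the semigroup $\{G(t)\}$ constructed in Section~\ref{semigroup:sec} with $A_2$ replaced by $A$ (recall that Propositions~\ref{PropOp} and \ref{DomA2} and Theorem~\ref{qcontA2} hold for $A_3$ and $A_4$ as noted in the text). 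First I would rewrite \eqref{volt_genIVP} in first-order form $\dot U = \widetilde A\, U + (0, f - Q_1 \dot u - Q_0 u)^T$ with $U = (u,\dot u)^T$, and use Proposition~\ref{VectorVaSol} / the variation of parameters formula \eqref{VectorVSol} to obtain the fixed-point equation
\[
U(t) = G(t) U_0 + \int_0^t G(t-s)\begin{pmatrix} 0 \\ f(s) - (\mathcal{Q}U)(s) \end{pmatrix} ds,
\]
where $\mathcal{Q}U := Q_1 \dot u + Q_0 u$ acts componentwise on $U = (u,\dot u)$. Since $\mathcal{Q}$ by hypothesis maps $L^1(0,t;\mathcal{H}) \to L^1(0,t;H)$ boundedly and uniformly in $t$ (the $Q_0$ bound from $E$ to $H$ and the $Q_1$ bound from $E$ to $E \hookrightarrow H$), and $\|G(t)\|_{\mathcal{H}\to\mathcal{H}} \le e^{ct}$ by Theorem~\ref{qcontA2}, the map
\[
\Phi(U)(t) := G(t) U_0 + \int_0^t G(t-s)\begin{pmatrix} 0 \\ f(s) - (\mathcal{Q}U)(s)\end{pmatrix} ds
\]
is a contraction on $C^0([0,T'];\mathcal{H})$ for $T'$ small; iterating on successive subintervals (or, more cleanly, running the standard Gronwall/Picard iteration on all of $[0,T]$ using that the $L^1$-in-time norm of the kernel is finite) gives a unique fixed point $U \in C^0([0,T];\mathcal{H})$, hence a unique weak solution $u \in C^0([0,T];E) \cap C^1([0,T];H)$ of \eqref{volt_genIVP}. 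The estimate
\[
\|u\|_{C([0,T];E)} \le C e^{CT}\bigl(\|u_0\|_E + \|\dot u_0\|_H + \|f\|_{L^1(0,T;H)}\bigr)
\]
then follows from applying Corollary~\ref{Energyest:cor} to the fixed-point identity and invoking Gronwall's inequality to absorb the term $\int_0^t \|\mathcal{Q}U(s)\|_H\,ds \lesssim \int_0^t \|U(s)\|_{\mathcal{H}}\,ds$.

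To upgrade from the weak solution to the claimed strong solution $u \in C^2([0,T];H) \cap C^1([0,T];E) \cap C^0([0,T];D(A))$ under the regularity hypotheses $u_0 \in D(A)$, $\dot u_0 \in E$, $f \in C^0([0,T];E)$, I would bootstrap: once $U \in C^0([0,T];\mathcal{H})$ is known, the right-hand side $s \mapsto f(s) - (\mathcal{Q}U)(s)$ lies in $C^0([0,T];E)$ — here one uses that $Q_0$ and $Q_1$ map into $E$-valued continuous functions, which requires checking they preserve continuity in $t$, not just the $L^1$ bounds — and with $U_0 = (u_0,\dot u_0) \in D(\widetilde A)$, Proposition~\ref{StrongOnHalfline} (or the remark after it concerning $F$ continuous into $D(\widetilde A)$, combined with $f - \mathcal{Q}U$ merely being $C^0$ into $\mathcal{H}$, so one actually needs the version for $F \in H^1_{loc}$ or a density/approximation argument as in Section~\ref{From1to2}) promotes the solution to a strong one, giving $U \in C^1([0,T];\mathcal{H})$ with $U(t) \in D(\widetilde A)$, i.e. $u \in C^1([0,T];E)$, $u(t) \in D(A)$, $\dot u \in C^1([0,T];H)$, and the equation \eqref{volt_genIVP} holds pointwise in $C^0([0,T];H)$, which forces $u \in C^2([0,T];H)$. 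One then needs the slightly finer statement that $u \in C^0([0,T];D(A))$, which follows by reading off $A u = f - \ddot u - Q_1\dot u - Q_0 u$ and checking the right side is continuous into $H$.

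The main obstacle I anticipate is not the contraction argument itself — which is routine given the uniform semigroup bound and the $L^1$ mapping properties — but rather verifying carefully that the operators $Q_0 = \rho^0 \nabla S$ (and the rotation-type or viscoelastic $Q_1$) genuinely satisfy the stated mapping properties, in particular that $\nabla S : E \to H$ and $\nabla S: D(A) \to E$ boundedly, and that these operators send $E$-valued (respectively $D(A)$-valued) continuous functions of $t$ to $E$-valued continuous functions, so that the bootstrap to a classical solution goes through. The boundedness $\nabla S : L^2(\tilde X, \rho^0\,dV) \to L^2(\mathbb{R}^3)$ is already established in the proof of Theorem~\ref{CoerciveA2} (via \eqref{S(u)calc} and Cauchy--Schwarz), and $S$ being a pseudodifferential operator of order $-1$ gives the gain of one derivative needed for the $D(A) \to E$ mapping; but making the $D(A) \to E$ statement precise requires knowing that $D(A)$ embeds into a space on which $S$ gains enough regularity to land in $H^1$ of each region — this elliptic-regularity bookkeeping across the interfaces $\Sigma$ is where the care is needed. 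A secondary subtlety is the mismatch between $f \in C^0([0,T];E)$ (only continuous, not $H^1$) and the hypotheses of Proposition~\ref{StrongOnHalfline}; this is handled by the alternative criterion mentioned in the text (Proposition~2 of \cite[Chapter XVII, Part B]{DautrayLionsV5}: $F$ continuous into $D(\widetilde A)$ with $\widetilde A F$ locally integrable) together with the fact that $f$ valued in $E$ means $(0,f)^T$ is valued in $D(\widetilde A)$.
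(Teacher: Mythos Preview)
Your proposal is correct in strategy and would succeed, but it is organized differently from the paper's proof and you have correctly identified the step where your route is less clean than theirs.

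The paper does not set up a fixed-point equation for $U = (u,\dot u)$. Instead it introduces the ``cosine'' and ``sine'' operators
\[
C_A(t)u = \pi_1 G(t)\begin{pmatrix}u\\0\end{pmatrix},\qquad
S_A(t)u = \pi_1 G(t)\begin{pmatrix}0\\u\end{pmatrix},
\]
built from the semigroup for $\widetilde A = \begin{pmatrix}0&\mathrm{Id}\\-A&0\end{pmatrix}$ (note: with \emph{no} Coriolis term, so $2R_\Omega$ is itself part of $Q_1$), records their mapping and differentiability properties in a lemma, and writes the ansatz $u(t)=\int_0^t S_A(t-s)g(s)\,ds + C_A(t)u_0 + S_A(t)\dot u_0$. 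Substituting into \eqref{volt_genIVP} yields a Volterra equation $g = \mathbf F + \mathcal K g$ for the auxiliary forcing $g$ rather than for $U$. The factorial estimate $\|\mathcal K^j\|_{L^1(0,T)} \le C^j T^j/j!$ then gives convergence of the Neumann series on the whole interval in one stroke and produces the explicit exponential constant, avoiding the small-interval contraction plus iteration that you propose.

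The practical payoff of the paper's arrangement is exactly at the point you flagged as the obstacle. Because the iteration is for $g$ and the hypotheses guarantee the series converges in $L^1(0,T;E)$ when $u_0\in D(A)$, $\dot u_0\in E$, $f\in C^0([0,T];E)$, the regularity $u\in C^2([0,T];H)\cap C^1([0,T];E)\cap C^0([0,T];D(A))$ is read off directly from the mapping properties of $S_A,C_A$ applied to $g\in L^1(0,T;E)$; there is no need to check that $Q_0,Q_1$ preserve continuity in $t$, nor to invoke Proposition~\ref{StrongOnHalfline} or the $H^1$-in-time criterion. Your bootstrap via ``$f-\mathcal Q U$ is continuous into $E$'' would require an additional hypothesis on $Q_0,Q_1$ beyond the stated $L^1\to L^1$ bounds, which the paper's route does not need.

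Your concern about verifying $\nabla S:D(A)\to E$ is about the \emph{application}, not the theorem; the theorem takes the mapping properties of $Q_0,Q_1$ as hypotheses, and the paper simply asserts that $Q_0=\rho^0\nabla S$ satisfies them.
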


\begin{remark}
The constant $C$ is given explicitly in terms of operator norms during the proof.
\end{remark}

\begin{proof}
We will, as in section \ref{semigroup:sec}, define $\mathcal{H} = E \times H$, and use the same inner product defined by \eqref{SPH} on $\mathcal{H}$, although the constant $\beta$ in this case comes from the coercivity inequality for $a_3$ or $a_4$ rather than $a_2$. The unbounded operator $\widetilde{A}$ with domain $D(A) \times E \subset \mathcal{H}$ is then defined by
\[
\widetilde{A} = \left (
\begin{matrix}
0 & \mathrm{Id}\\
- A & 0
\end{matrix}
\right ).
\]
Using the same method as in section \ref{InfGen} we see that $\widetilde{A}$ is the generator of a quasi-contraction semi-group on $\mathcal{H}$ which we will label here $G(t)$. Using $G(t)$ we define two additional families of operators $C_{A}(t)$ and $S_{A}(t)$ by
\[
C_{A}(t)u = \pi_1 \circ G(t) \left (
\begin{matrix}
u \\
 0
\end{matrix}
\right )
\]
and
\[
S_{A}(t)u = \pi_1 \circ G(t) \left (
\begin{matrix}
0 \\
 u
\end{matrix}
\right )
\]
where $\pi_1$ is the projection onto the $E$ component in $\mathcal{H} = E \times H$. Using properties of the semi-group $G(t)$ we can easily establish the properties of $C_{A}(t)$ and $S_{A}(t)$ given in the following lemma.
\medskip

\begin{lemma} \label{CS_lem}
Making all the assumptions of Theorem \ref{CoerciveA2} we have
\begin{enumerate}
\item $C_{A}(0) = \mathrm{Id}$, and $S_{A}(0) = 0$.
\item $C_{A}(t): E \rightarrow E$, and $S_{A}(t): H \rightarrow E$ for all $t \in [0,\infty)$.
\item $C_{A}(t)$ and $S_{A}(t)$ are bounded maps of $E$ into $D(A)$ (with the graph norm) for all $t \in [0,\infty)$.
\end{enumerate}
If in addition $u \in D(A)$, then the map $t \rightarrow C_{A}(t) u$ is $C^2$ with 
\begin{enumerate}
\item $\dot{C}_{A}(0)u = 0$,
\item $\dot{C}_{A}(t): D(A) \rightarrow E$, and $\dot{C}_{A}(t)$ extends to a continuous operator $\dot{C}_{A}(t): E \rightarrow H$,
\item $\ddot{C}_{A}(t) u = - A C_{A}(t) u$ for all $t \in [0,\infty)$.
\end{enumerate}
If $u \in E$, then the map $t \rightarrow S_{A}(t) u$ is $C^2$ with 
\begin{enumerate}
\item $\dot{S}_{A}(0) u = u$, 
\item $\dot{S}_{A}(t) : E \rightarrow E$ and extends to a continuous operator $\dot{S}_{A}(t): H \rightarrow H$,
\item $\ddot{S}_{A}(t) u = -A S_{A}(t) u$ for all $t \in [0,\infty)$.
\end{enumerate}
\end{lemma}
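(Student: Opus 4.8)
The plan is to read off every assertion of Lemma~\ref{CS_lem} from the block structure of the semigroup $\{G(t)\}_{t\geq 0}$ on $\mathcal{H} = E\times H$, which, as noted above, is generated by $\widetilde{A}$ and is quasi-contractive of class $\mathcal{C}^0$ by the argument of Section~\ref{InfGen}. Thus $\|G(t)\|_{\mathcal{H}\to\mathcal{H}}\leq M e^{\omega t}$ for some constants $M,\omega$, $G(t)$ maps $\mathcal{H}$ into itself and $D(\widetilde{A}) = D(A)\times E$ into itself, and by Remark~\ref{DiffSet}, for $U_0\in D(\widetilde{A})$ the map $t\mapsto G(t)U_0$ is $\mathcal{C}^1$ into $\mathcal{H}$, continuous into $D(\widetilde{A})$, and satisfies $\frac{d}{dt}G(t)U_0 = \widetilde{A}G(t)U_0 = G(t)\widetilde{A}U_0$. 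Writing $\pi_2$ for the projection of $\mathcal{H}$ onto its $H$-component and using $\widetilde{A}(a,b)^T = (b,-Aa)^T$, this gives, for $U_0\in D(\widetilde{A})$,
\[
\frac{d}{dt}\,\pi_1 G(t)U_0 = \pi_2 G(t)U_0, \qquad \frac{d}{dt}\,\pi_2 G(t)U_0 = -A\,\pi_1 G(t)U_0 ,
\]
the second identity being meaningful because $\pi_1 G(t)U_0\in D(A)$ whenever $U_0\in D(\widetilde{A})$.

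The one thing to keep in mind is that $(u,0)^T\in D(\widetilde{A})$ exactly when $u\in D(A)$, whereas $(0,u)^T\in D(\widetilde{A})$ already when $u\in E$. Substituting $U_0 = (u,0)^T$ (for $u\in D(A)$) into the identities above identifies $\dot{C}_{A}(t)u = \pi_2 G(t)(u,0)^T$ and $\ddot{C}_{A}(t)u = -A\,C_{A}(t)u$, while substituting $U_0 = (0,u)^T$ (for $u\in E$) identifies $\dot{S}_{A}(t)u = \pi_2 G(t)(0,u)^T$ and $\ddot{S}_{A}(t)u = -A\,S_{A}(t)u$. This is precisely the asymmetry responsible for the separate ``$u\in D(A)$'' and ``$u\in E$'' clauses in the statement.

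Everything else is then inspection. The claim $C_{A}(0) = \mathrm{Id}$, $S_{A}(0) = 0$ is the block decomposition of $G(0) = \mathrm{Id}_{\mathcal{H}}$. The continuity bounds follow from $\|G(t)\|_{\mathcal{H}\to\mathcal{H}}\leq M e^{\omega t}$, e.g. $\|C_{A}(t)u\|_E\leq M e^{\omega t}\|u\|_E$ and $\|S_{A}(t)u\|_E\leq M e^{\omega t}\|u\|_H$. The statements that map into $D(A)$ follow from $G(t)$ preserving $D(\widetilde{A}) = D(A)\times E$, with the associated graph-norm bounds coming from the displayed identities, for instance $\|A\,S_{A}(t)u\|_H = \|\pi_2\widetilde{A}G(t)(0,u)^T\|_H = \|\pi_2 G(t)(u,0)^T\|_H\leq M e^{\omega t}\|u\|_E$ for $u\in E$. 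The initial values $\dot{C}_{A}(0)u = \pi_2 G(0)(u,0)^T = 0$ and $\dot{S}_{A}(0)u = \pi_2 G(0)(0,u)^T = u$ are immediate, and $\dot{S}_{A}(t)$ maps $E$ into $E$ and extends to $H\to H$, while $\dot{C}_{A}(t)$ maps $D(A)$ into $E$ and extends to $E\to H$, simply because $\pi_2 G(t)$ carries $D(\widetilde{A})$ into $E$ and $\mathcal{H}$ into $H$. The $\mathcal{C}^2$-in-time assertions together with $\ddot{C}_{A} = -AC_{A}$ and $\ddot{S}_{A} = -AS_{A}$ are the displayed identities applied a second time, the continuity in $t$ of the right-hand sides being inherited from continuity of $t\mapsto G(t)U_0$ into $D(\widetilde{A})$. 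Finally, every bound above has the uniform form $M e^{\omega t}$ on compact $t$-intervals, so the extensions from $D(A)$ to $E$ and from $E$ to $H$ are justified by density: $D(A)$ is dense in $E$ (Proposition~\ref{DomA2}, valid with $A_3$, $A_4$ in place of $A_2$) and $E$ is dense in $H$ (Proposition~\ref{PropE}).

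Since the proof is essentially bookkeeping, there is no single hard step; the only real point of care is to track which elements of $\mathcal{H}$ lie in $D(\widetilde{A})$ — which is exactly what distinguishes the $u\in D(A)$ and $u\in E$ regimes — and to invoke continuity of $t\mapsto G(t)U_0$ \emph{into} $D(\widetilde{A})$, not merely into $\mathcal{H}$, when asserting continuity of $t\mapsto A\,C_{A}(t)u$ and $t\mapsto A\,S_{A}(t)u$.
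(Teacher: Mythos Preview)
Your approach is exactly what the paper intends: it offers no detailed proof, saying only that the properties can ``easily'' be established from those of the semigroup $G(t)$, and your bookkeeping via the block structure of $G(t)$ together with $\widetilde{A}G(t)=G(t)\widetilde{A}$ on $D(\widetilde{A})$ is precisely that argument.

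One caveat on item~3. Your preservation argument gives $S_A(t)\colon E\to D(A)$ (since $(0,u)^T\in D(\widetilde{A})$ already for $u\in E$) but only $C_A(t)\colon D(A)\to D(A)$, because $(u,0)^T\in D(\widetilde{A})$ requires $u\in D(A)$. The stated bound $C_A(t)\colon E\to D(A)$ is in fact not true in general---for the wave equation, $\cos(t\sqrt{-\Delta})$ does not carry $H^1_0$ into $H^2\cap H^1_0$---so this appears to be an imprecision in the lemma rather than a defect in your reasoning; the proof of Theorem~\ref{volt_thm} uses only $S_A(t)\colon E\to D(A)$ and $C_A(t)\colon E\to E$, both of which your argument does establish.
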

\medskip

\noindent Using these operators we introduce the following ansatz
\begin{equation}\label{volt_ansatz}
u(t) = \int_0^t S_{A}(t-s) g(s) \ \mathrm{d} s + C_{A}(t) u_0 + S_{A}(t) \dot u_0
\end{equation}
which is well-defined on $[0,T]$ provided $u_0 \in E$, $\dot u_0 \in H$, and $g \in L^1(0,T; H)$. If in fact $u_0 \in D(A)$, $\dot{u}_0 \in E$, and $g \in L^1(0,T; E)$, then using Lemma \ref{CS_lem} $u(t)$ is $C^2$,
\[
\dot{u}(t) = \int_0^t \dot{S}_{A}(t-s) g(s) \ \mathrm{d} s + \dot{C}_{A}(t) u_0 + \dot{S}_{A}(t) \dot{u}_0,
\]
and
\[
\ddot{u}(t) = g(t) - \int_0^t A S_{A}(t-s) g(s) \ \mathrm{d} s - A C_{A}(t) u_0 - A S_{A}(t).
\]
If $u$ is to satisfy \eqref{volt_genIVP} then we find that
\begin{equation} \label{intg}
\begin{split}
g(t) +  Q_1 \int_0^t \dot{S}_{A}(t-s) g(s) \ \mathrm{d} s + &Q_0 \int_0^t S_{A}(t-s)  g(s) \ \mathrm{d} s \\
&+ (Q_1 \dot{C}_{A}(t) + Q_0 C_{A}(t)) u_0 + (Q_1 \dot{S}_{A}(t) + Q_0 S_{A}(t)) \dot{u}_0 = f(t).
\end{split}
\end{equation}
This is essentially a Volterra equation for $g$, although the general form of the operators $Q_0$ and $Q_1$ makes it slightly more complicated.

Now we introduce the operator
 \[
 \mathcal{K}[g](t) =  -Q_1 \int_0^t \dot{S}_{A}(t-s) g(s) \ \mathrm{d} s - Q_0 \int_0^t S_{A}(t-s)  g(s) \ \mathrm{d} s,
 \]
 and set
 \[
 \mathbf{F}(t) = f(t) - (Q_1 \dot{C}_{A}(t) + Q_0 C_{A}(t)) u_0 - (Q_1 \dot{S}_{A}(t) + Q_0 S_{A}(t)) \dot{u}_0.
 \]
From Lemma \ref{CS_lem}, and our assumptions on $Q_1$ and $Q_0$ we have that
\[
 \mathcal{K}: L^1(0,t;E) \rightarrow L^1(0,t;E)
\]
uniformly in $t$. An induction argument shows further that
\begin{equation} \label{Kjest}
\|\mathcal{K}^j \|_{L^1(0,T;E) \rightarrow L^1(0,T; E)} \leq C_E^j \frac{T^{j}}{j!}
\end{equation}
where
\[
\begin{split}
C_E = \max_{t \in [0,T]} \big \{& \|\dot{S}_{A}(t)\|_{E \rightarrow E} \big \} \max_{t \in [0,T]} \big \{ \|Q_1\|_{L^1(0,t;E)\rightarrow L^1(0,t;E)}\big \} \\
& + \max_{t \in [0,T]} \big \{ \|S_{A}(t)\|_{E \rightarrow D(A) } \big \} \max_{t \in[0,T]} \big \{ \|Q_0\|_{L^1(0,t;D(A))\rightarrow L^1(0,t;E)} \big \}.
\end{split}
\] 
Also, using different parts of Lemma \ref{CS_lem} we can establish the same estimate with $E$ replaced by $H$ and the constant $C_E$ replaced by $C_H$ given by
\begin{equation}\label{C_H}
\begin{split}
C_H =  \max_{t \in [0,T]} \big \{& \|\dot{S}_{A}(t)\|_{H \rightarrow H} \big \} \max_{t \in [0,T]} \big \{ \|Q_1\|_{L^1(0,t;H)\rightarrow L^1(0,t;H)}\big \} \\
& + \max_{t \in [0,T]} \big \{ \|S_{A}(t)\|_{H \rightarrow E } \big \} \max_{t \in[0,T]} \big \{ \|Q_0\|_{L^1(0,t;E)\rightarrow L^1(0,t;H)} \big \}.
\end{split}
\end{equation}

%\[
%\|\mathcal{K}\|_{L^1(0,T;E)\rightarrow L^1(0,T;E)} 
%\]
%is uniformly bounded for $t$ and $s \in [0,T]$, and also extends to a uniformly bounded operator from $H$ to $H$. Also, if $f \in L^1(0,T; H)$, $u_0 \in E$, and $\dot u_0 \in H$ then $\mathbf{F}(t) \in L^1(0,T;H)$, and if $f \in L^1(0,T;E)$, $u_0 \in D(A)$, and $\dot u_0 \in E$, then $\mathbf{F}(t) \in L^1(0,T;E)$.

Now, the integral equation \eqref{intg} for $g$ is simplified using the operators just introduced into the form
 \begin{equation} \label{volt}
 g(t) = \mathbf{F}(t) + \mathcal{K}[g](t).
 \end{equation}
The solution of this equation is formally
\begin{equation}\label{volt_sol}
g(s) = \sum_{j=0}^\infty \mathcal{K}^j \mathbf{F}.
\end{equation}
Using \eqref{Kjest}, if $u_0 \in D(A)$, $\dot{u}_0\in E$, and $f \in L^1(0,T;E)$, then this series converges in $L^1(0,T;E)$, and the resulting $g$ satisfies the equation \eqref{intg}. Also, using \eqref{Kjest} with $E$ replaced by $H$, if $u_0 \in E$, $\dot{u}_0 \in H$, and $f \in L^1(0,T;H)$, then the series converges in $L^1(0,T;H)$, and the resulting $g$ still satisfies \eqref{intg}. Further, we have the estimate
\begin{equation}\label{gest}
\|g(t) \|_{L^1(0,T;H)} \leq e^{C_H T} \|\mathbf{F}\|_{L^1(0,T;H)} \leq A e^{C_H T} \Big (\|u_0\|_{E} + \|\dot{u}_0\|_H + \|f \|_{L^1(0,T;H)} \Big )
\end{equation}
where
\begin{equation}\label{Adef}
\begin{split}
A = 3 \max \Bigg \{ \max_{t \in [0,T]} \big \{& \|\dot{C}_{A}(t)\|_{E \rightarrow H} \big \} \max_{t \in [0,T]} \big \{ \|Q_1\|_{L^1(0,t;H)\rightarrow L^1(0,t;H)}\big \} \\
& + \max_{t \in [0,T]} \big \{ \|C_{A}(t)\|_{E \rightarrow E } \big \} \max_{t \in[0,T]} \big \{ \|Q_0\|_{L^1(0,t;E)\rightarrow L^1(0,t;H)} \big \}, C_H, 1 \Bigg \}.
\end{split}
\end{equation}
This establishes the existence of strong solutions to \eqref{volt_genIVP} when $u_0 \in D(A)$, $\dot{u}_0\in E$, and $f \in L^1(0,T;E)$, as well as the statements on regularity and the estimate of $u$ given in the theorem. It remains to prove uniqueness of the solution.

To establish uniqueness of the solution let us consider $u$ satisfying \eqref{volt_genIVP} with $u_0 = \dot{u}_0 = f = 0$, and the given regularity properties. Then set
\[
g(t) = \ddot{u}(t) + A u(t).
\]
From \eqref{volt_genIVP} and the regularity properties of $u$, we see that $g(s) \in L^1(0,T; E)$. Using the identity
\[
S_{A}(t-s) (\ddot{u}(s) + A u(s)) = \frac{\mathrm{d}}{\mathrm{d} s} \left ( S_{A}(t-s) \dot{u}(s) + \dot{S}_{A}(t - s) u(s) \right )
\]
we see that
\[
u(t) = \int_0^t S_{A}(t-s) g(s) \ \mathrm{d} s.
\]
Thus the remainder of the calculation from above holds and we see that $g$ must satisfy the equation \eqref{volt} with $\mathbf{F} = 0$. Since $\mathrm{Id} - \mathcal{K}$ is invertible on $L^1(0,T;E)$, this completes the proof.
\end{proof}

Even in the case when we only assume $u_0 \in E$, $\dot{u}_0 \in H$, and $f \in L^1(0,T;H)$ the solution constructed in the proof of Theorem \ref{volt_thm} using equations \eqref{volt_sol} and \eqref{volt_ansatz} is well-defined and lies in the space $\mathcal{C}^0([0,T];H)$. This is the unique weak solution in a sense adapted from Definition \ref{WeakBrezis} as we now demonstrate. Note that Definition \ref{WeakBrezis} does not apply in this case directly, even if we reformulate \eqref{volt_genIVP} as a first-order system, since we have allowed $Q_0$ and $Q_1$ to act on $L^1(0,T;H)$. The definition of weak solution we apply in this case will be as follows.
\begin{definition}\label{weak_voltsol}
Suppose that $u_0 \in E$, $\dot{u}_0 \in H$, and $f \in L^1(0,T;H)$. We say that $u \in \mathcal{C}^0([0,T];E)$ is a weak solution of \eqref{volt_genIVP} if there are sequences $f_n \in \mathcal{C}^0([0,T],E)$, $u_{0,n} \in D(A)$, $\dot{u}_{0,n} \in E$, and $u_n \in C^2([0,T];H) \cap C^1([0,T];E) \cap C^0([0,T];D(A))$ such that
\begin{enumerate}
\item $u_n$ is the unique strong solution of
\[
\ddot{u}_n  + A u_n + Q_1 \dot{u}_n + Q_0 u_n =  f, \quad u_n(0) = u_{0,n}, \quad \dot{u}_n(0) = \dot{u}_{0,n}
\]
for all $n$.
\item $f_n \rightarrow f$ in $L^1(0,T;H)$
\item $u_{0,n} \rightarrow u_0$ in $E$.
\item $\dot{u}_{0,n} \rightarrow \dot{u}_0$ in $H$.
\item $u_n \rightarrow u$ in $C(0,T;E)$.
\end{enumerate}
\end{definition}

With this definition we have the following result for weak solutions which follows easily from the estimate given in Theorem \ref{volt_thm}.

\begin{theorem} \label{volt_weakthm}
If $u_0 \in E$, $\dot{u}_0 \in H$, and $f \in L^1(0,T;H)$ the unique weak solution of \eqref{volt_genIVP} is given by formulas \eqref{volt_sol} and \eqref{volt_ansatz}.
\end{theorem}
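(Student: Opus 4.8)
The plan is to combine the existence of strong solutions from Theorem \ref{volt_thm} with a density and stability argument, exactly mimicking the way weak solutions were handled in the semigroup setting (cf. Definition \ref{WeakBrezis} and Proposition \ref{AbsCauchyWP}). First I would fix data $u_0 \in E$, $\dot u_0 \in H$, and $f \in L^1(0,T;H)$, and choose approximating sequences $u_{0,n} \in D(A)$, $\dot u_{0,n} \in E$, $f_n \in C^0([0,T];E)$ with $u_{0,n} \to u_0$ in $E$, $\dot u_{0,n} \to \dot u_0$ in $H$, and $f_n \to f$ in $L^1(0,T;H)$; this is possible since $D(A)$ is dense in $E$ (Proposition \ref{DomA2}, applied to $A_3$ or $A_4$), $E$ is dense in $H$ (Proposition \ref{PropE}), and $C^0([0,T];E)$ is dense in $L^1(0,T;H)$. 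By Theorem \ref{volt_thm} each such triple yields a unique strong solution $u_n \in C^2([0,T];H) \cap C^1([0,T];E) \cap C^0([0,T];D(A))$.

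Next I would show that $\{u_n\}$ is Cauchy in $C^0([0,T];E)$. Since the problem is linear, $u_n - u_m$ is the strong solution of \eqref{volt_genIVP} with data $u_{0,n}-u_{0,m}$, $\dot u_{0,n}-\dot u_{0,m}$, $f_n - f_m$, so the estimate in Theorem \ref{volt_thm},
\[
\|u_n - u_m\|_{C([0,T];E)} \leq C e^{CT}\big(\|u_{0,n}-u_{0,m}\|_E + \|\dot u_{0,n}-\dot u_{0,m}\|_H + \|f_n - f_m\|_{L^1(0,T;H)}\big),
\]
with $C$ independent of $T$ and of $n,m$, shows that $\{u_n\}$ converges in $C^0([0,T];E)$ to some limit $u$. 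By construction this limit satisfies all five conditions of Definition \ref{weak_voltsol}, so $u$ is a weak solution. To identify it with the formula, I would note that the construction in the proof of Theorem \ref{volt_thm} gives $u_n$ via \eqref{volt_sol} and \eqref{volt_ansatz} with data $(u_{0,n},\dot u_{0,n},f_n)$; passing to the limit in these formulas, using the boundedness of $S_A(t)$ and $C_A(t)$ from Lemma \ref{CS_lem} and the continuity of the operators $\mathcal K$ and the map $\mathbf F \mapsto \sum_j \mathcal K^j \mathbf F$ with $E$ replaced by $H$ (estimate \eqref{Kjest} with constant $C_H$), one sees that the limit is precisely the function given by \eqref{volt_sol} and \eqref{volt_ansatz} applied to $(u_0,\dot u_0,f)$. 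This also shows the limit is independent of the choice of approximating sequences.

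Finally, for uniqueness of the weak solution I would argue that if $u$ and $\tilde u$ are both weak solutions in the sense of Definition \ref{weak_voltsol}, with approximating sequences $u_n$ and $\tilde u_n$, then by the stability estimate above applied to the differences of the approximating data, $\|u_n - \tilde u_n\|_{C([0,T];E)} \to 0$, whence $u = \tilde u$. I expect the main obstacle to be the limiting argument identifying the weak solution with \eqref{volt_sol}--\eqref{volt_ansatz}: one must check carefully that the operators $C_A(t)$, $S_A(t)$, $\dot C_A(t)$, $\dot S_A(t)$ are bounded on the relevant $H$-based spaces uniformly in $t\in[0,T]$ so that the convolution integrals and the Neumann series pass to the limit in $L^1(0,T;H)$, and that $Q_0 = \rho^0 \nabla S$ (and $Q_1 = 2R_\Omega$) indeed have the mapping properties assumed in the statement, including the extension to the $H$-based spaces. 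The remaining steps are routine once the uniform-in-$t$ boundedness from Lemma \ref{CS_lem} and estimate \eqref{Kjest} are in hand.
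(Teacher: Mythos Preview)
Your proposal is correct and matches the paper's approach: the paper simply states that the result ``follows easily from the estimate given in Theorem~\ref{volt_thm}'' without further detail, and your density-plus-stability argument is exactly the intended elaboration of that remark. The obstacles you flag (uniform-in-$t$ boundedness of $C_A$, $S_A$, $\dot C_A$, $\dot S_A$ on the $H$-based spaces, and convergence of the Neumann series in $L^1(0,T;H)$) are already handled in the proof of Theorem~\ref{volt_thm} via Lemma~\ref{CS_lem} and the estimate \eqref{Kjest} with constant $C_H$, together with \eqref{gest}, so no new ingredients are required.
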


Finally, we can compare the solution of \eqref{volt_genIVP} with the solution of the corresponding equation if the effects of the lower order operators, $Q_0$ and $Q_1$, are neglected by noting that the solution $u_r$ of \eqref{volt_genIVP} with $Q_0$ and $Q_1$ set to zero is, according to \eqref{volt_sol} and \eqref{volt_ansatz}
\[
u_r(t) = \int_0^t S_{A}(t-s) f(s) \ \mathrm{d} s + C(t) u_0 + S_{A}(t) \dot{u}_0.
\]
Therefore if $u$ is the solution of \eqref{volt_genIVP} then
\[
u(t) - u_r(t)= \int_0^t S_{A}(t-s) \left ( \sum_{j=1}^\infty \mathcal{K}^j \mathbf{F} (s) \right ) \mathrm{d} s.
\]
From this we have the estimate
\[
\|u - u_r\|_{C([0,T];E)} \leq C_H \max_{t \in [0,T]} \{ \|S_{A}(t)\|_{H\rightarrow E} \} T A e^{C_H T} \Big (\|u_0\|_{E} + \|\dot{u}_0\|_H + \|f \|_{L^1(0,T;H)} \Big )
\]
where $C_H$ and $A$ are defined by \eqref{C_H} and \eqref{Adef}. Note that this error is $O(C_H)$, and so decreases linearly with $\|Q_1\|_{L^1(0,t;H)\rightarrow L^1(0,t;H)}$ and $\|Q_0\|_{L^1(0,t;E)\rightarrow L^1(0,t;H)}$. Actually, once well-posedness for \eqref{volt_genIVP} is known, then this error estimate also follows from Corollary \ref{Energyest:cor}.

\section*{Acknowledgements}

This research was carried out in part while MVdH was a Schlumberger
Visiting Professor in the Department of Earth, Atmospheric and
Planetary Sciences at MIT, and was supported in part by the members of
the Geo-Mathematical Imaging Group, currently at Rice University.

\appendix

\renewcommand{\theequation}{\Alph{section}.\arabic{equation}}

\setcounter{equation}{0}
\section{Tangential Divergence and Gradient, and Weingarten operator}\label{Wein}

We take advantage of a view facts from geometry in Sections~\ref{FS:sec} and~\ref{weak:sec} when showing the equivalence of the different weak formulations. For more details, see \cite{ONeill}. We let $\Omega \subset \mathbb{R}^3$ be a domain and $\Sigma$ be a $C^2$ surface in $\Omega$. We let $\nabla$ be the Euclidean covariant derivative on $\Omega$ and $\nu$ a unit normal vector field for $\Sigma$.

First we recall the definition of the shape and Weingarten operators.
\medskip

 \begin{definition}
For $p \in \Sigma$ we have operators the shape and Weingarten operators $S$ and $W$ on $T_p \Sigma$ defined by
 $$S(v_p) := - \nabla_{v_p} \nu ;\ \ W (v_p):= \nabla_{v_p} \nu. $$
 \end{definition}
\medskip

\noindent Both $S$ and $W$ are self-adjoint. Next, we have the second fundamental form.
\medskip

\begin{definition}
For $p \in \Sigma$ the second fundamental form of $\Sigma$ at $p$ is the symmetric bilinear form on $T_p \Sigma$ defined by
$$II(v_p, w_p) := \langle v_p, S(w_p)\rangle = \langle S(v_p), w_p\rangle$$
\end{definition}
\medskip

\begin{definition}
For $X_p \in T_p\Sigma$ and $Y$ a tangent vector field on $\Sigma$ the Levi-Civita connection on $\Sigma$ is given by
$$\nabla^{\Sigma}_{X_p} Y := \pi_{T_pM} \nabla_{X_p} Y $$
where $\pi_{T_p\Sigma} : T_p\RR^n \rightarrow T_p \Sigma$ is orthogonal projection onto $T_p \Sigma$. 
\end{definition}
\medskip

\noindent From this definition we obtain the following formula for $\nabla^\Sigma$
$$\nabla^{\Sigma}_{X_p} Y= \nabla_{X_p} Y - \langle S (X_p), Y_p\rangle \nu_p
= \nabla_{X_p} Y  + II(X_p, Y_p) \nu_p.$$
For a function $f$ defined on $\Sigma$ the tangential gradient $\nabla^{\Sigma} f $ is defined to be the dual of $df$ via the metric induced on $\Sigma$ by the Euclidean metric. This is given by the formula
$$\nabla^{\Sigma} f = \nabla f - \langle \nabla f , \nu \rangle \nu.$$
The tangential divergence of a vector field on $\Sigma$ is defined as the trace of the covariant derivative corresponding to the Levi-Civita connection on $\Sigma$. We have the following identity relating the tangential divergence and the Euclidean divergence in $\Omega$.
 \begin{equation}\label{tanDi}
 \nabla \cdot u = \nabla^{\Sigma} \cdot(  u - (u\cdot \nu) \nu )
 + \nu \cdot \nabla u \cdot \nu +     ( c_1 + c_2) (u\cdot \nu) 
 \end{equation}
 where $c_1, c_2$ are the eigenvalues of the Weingarten operator $W$.

\setcounter{equation}{0}
\section{Estimates for gravitational potential}

\begin{lemma}\label{infiComp}
If $f$ has compact support with $\supp f \subset B(0,R)$ then
$$\int \dfrac{1}{\lvert x-y\rvert} f(y) \, dy \rightarrow 0 , \ \lvert x\rvert \rightarrow \infty$$
$$\nabla \int \dfrac{1}{\lvert x-y\rvert} f(y) \, dy \rightarrow 0 , \ \lvert x\rvert \rightarrow \infty$$
\end{lemma}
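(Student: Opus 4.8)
The plan is to exploit the fact that, since $f$ is supported in the ball $B(0,R)$, for $|x|$ large the kernel $|x-y|^{-1}$ and its $x$-derivatives are uniformly small over the (compact) domain of integration. First I would fix $|x| \geq 2R$; then for every $y$ in the support of $f$ the triangle inequality gives $|x-y| \geq |x| - |y| \geq |x| - R \geq \tfrac{1}{2}|x|$. Since $f$ is (at least locally) integrable with compact support it lies in $L^1$ (in all our applications it is moreover bounded), so this bound yields
\[
   \left| \int \frac{1}{|x-y|}\, f(y)\, dy \right|
   \;\leq\; \frac{2}{|x|} \int_{B(0,R)} |f(y)|\, dy
   \;=\; \frac{2}{|x|}\, \|f\|_{L^1(B(0,R))},
\]
and the right-hand side tends to $0$ as $|x|\to\infty$, which proves the first assertion.

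For the second assertion I would first justify differentiation under the integral sign on the open region $\{|x| > R\}$, where for each fixed $y \in \operatorname{supp} f$ the map $x \mapsto |x-y|^{-1}$ is smooth with $\nabla_x |x-y|^{-1} = -(x-y)\,|x-y|^{-3}$; locally uniformly in such $x$, this gradient is dominated by the $L^1(dy)$ function $y \mapsto |x-y|^{-2}\,|f(y)|$ (using the lower bound on $|x-y|$ above), so the standard dominated-convergence criterion applies and
\[
   \nabla_x \int \frac{1}{|x-y|}\, f(y)\, dy
   \;=\; -\int \frac{x-y}{|x-y|^3}\, f(y)\, dy , \qquad |x| > R .
\]
Applying again the bound $|x-y| \geq \tfrac{1}{2}|x|$, valid for $|x|\geq 2R$ and $y \in B(0,R)$, gives
\[
   \left| \nabla_x \int \frac{1}{|x-y|}\, f(y)\, dy \right|
   \;\leq\; \int_{B(0,R)} \frac{|f(y)|}{|x-y|^2}\, dy
   \;\leq\; \frac{4}{|x|^2}\, \|f\|_{L^1(B(0,R))} ,
\]
which again tends to $0$ as $|x|\to\infty$.

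There is no genuine obstacle here; the only step requiring a moment of care is the interchange of differentiation and integration, and even that is routine once one works in the region $|x| > R$, away from the support of $f$, where the Newtonian kernel is real-analytic in $x$. (Alternatively, one may observe that $E_3 * f$ is harmonic, hence smooth, on the complement of $\overline{B(0,R)}$, and compute its gradient by differentiating the kernel there; the two $O(|x|^{-1})$ and $O(|x|^{-2})$ bounds above then follow in the same way.)
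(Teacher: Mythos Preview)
Your proof is correct and follows essentially the same approach as the paper: use the triangle inequality $|x-y| \geq |x| - R$ on the support of $f$ to bound the convolution by $\|f\|_{L^1}/(|x|-R)$ and its gradient by $\|f\|_{L^1}/(|x|-R)^2$. Your version is slightly more careful in that you explicitly justify differentiating under the integral sign, which the paper simply asserts.
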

\begin{proof}
Since $f$ has compact support we have
$$\int_{\RR^3} \dfrac{1}{\lvert x-y\rvert} f(y) \, dy 
= \int_{B(0,R)} \dfrac{1}{\lvert x-y\rvert} f(y) \, dy $$
Using the following inequality
$$\lvert x-y\rvert  > \lvert x\rvert - \lvert y \rvert = \lvert x\rvert - R, $$
we can establish the bound
$$\int_{\RR^3} \dfrac{1}{\lvert x-y\rvert} f(y) \, dy \leq
\dfrac{1}{\lvert x\rvert - R} \int_{B(0,R)}  f(y) \, dy 
\leq \dfrac{1}{\lvert x\rvert - R}  \lVert f\rVert_{L^1}
\rightarrow 0, \ \lvert x\rvert \rightarrow \infty.$$
Similarly we have
$$\nabla\int_{\RR^3} \dfrac{1}{\lvert x-y\rvert} f(y) \, dy
= \int_{\RR^3} \dfrac{x-y}{\lvert x-y\rvert^3} f(y) \, dy$$
\begin{align*}
\Rightarrow \Big|\nabla \int_{\RR^3} \dfrac{1}{\lvert x-y\rvert} f(y) \, dy \Big|
&= \Big|  \int_{\RR^3} \dfrac{x-y}{\lvert x-y\rvert} f(y) \, dy\Big|\\
&\leq \int_{B(0,R)} \dfrac{1}{\lvert x-y\rvert^2} f(y) \, dy\\
&\leq \dfrac{1}{(\lvert x\rvert - R)^2} \int_{B(0,R)}  f(y) \, dy \\
&\leq \dfrac{1}{(\lvert x\rvert - R)^2}  \lVert f\rVert_{L^1}\end{align*}
$$\Rightarrow \Big|\nabla \int_{\RR^3} \dfrac{1}{\lvert x-y\rvert} f(y) \, dy \Big|
\leq \dfrac{1}{(\lvert x\rvert - R)^2}  \lVert f\rVert_{L^1}\rightarrow 0, \ \lvert x\rvert \rightarrow \infty.$$
\end{proof}

\setcounter{equation}{0}
\section{Conservation of physical energy}\label{energy:sec} 

In this section we relate the total physical energy of the
rotating earth to our formulation, and in particular the energy given in \eqref{Energy}. We use \cite{Dahlen} as a reference
for the physical energy. The kinetic energy of the freely deforming
earth, minus that of the initial uniformly rotating earth, is given by
\begin{equation}
   E^{\mathrm{kin}}
        = \int_{\tilde{X}} \rho^0 [\tfrac{1}{2} \dot{u} \cdot \dot{u}
     + u \cdot \nabla \Psi^s
     + \tfrac{1}{2} u \cdot \nabla\nabla \Psi^s \cdot u] \, dV .
\end{equation}
The stored elastic energy in the deformed earth, relative to that in
the initial undeformed earth, is given by
\begin{equation}
   E^{\mathrm{el}}
        = \int_{\tilde{X}} [T^0 : \tfrac{1}{2} [\nabla u + (\nabla u)^T]
               + \tfrac{1}{2} \nabla u : \Lambda^{T^0} : \nabla u]
                 \, dV .
\end{equation}
The gravitational potential energy associated with a deformation $u$
is given by
\begin{equation}
   E^{\mathrm{g}}
        = \int_{\tilde{X}} \rho^0 [u \cdot \nabla \Phi^0
               + \tfrac{1}{2} u \cdot \nabla S(u)
               + \tfrac{1}{2} u \cdot \nabla\nabla \Phi^0 \cdot u]
                 \, dV .
\end{equation}
The total energy then takes the form
\begin{equation}
\begin{split}
   E & = E^{\mathrm{kin}} + E^{\mathrm{el}} + E^{\mathrm{g}} \\
   & = \int_{\tilde{X}} \rho^0 [\tfrac{1}{2} \dot{u} \cdot \dot{u}
     + u \cdot g_0'
     + \tfrac{1}{2} u \cdot \nabla g_0' \cdot u + \tfrac{1}{2} u \cdot \nabla S(u)] \\
               &\hskip1.5in  + [T^0 : \tfrac{1}{2} [\nabla u + (\nabla u)^T]
               + \tfrac{1}{2} \nabla u : \Lambda^{T^0} : \nabla u] \,  dV.
\end{split}
\end{equation}
Now we simplify the terms that are first order in $u$ by integrating by
parts, using the boundary conditions for $T^0$, and applying the
mechanical equilibrium equation \eqref{Equi1} to obtain
\[
\int_{\tilde{X}} [T^0 : \tfrac{1}{2} [\nabla \dot{u} + (\nabla u)^T] + \rho^0 u \cdot g_0'] \, d V = \int_{\Sigma^{F}} [\nu \cdot T^0 \cdot u]_-^+ d \Sigma = -\int_{\Sigma^{F}} p^0 [\nu \cdot u]_-^+ \ d \Sigma.
\]
The second equality follows from the boundary condition for $T^0$ and
the fact that $T^0 = -p^0 \mathrm{Id}$ in the fluid region. If we
assume further that $u$ satisfies the \textit{second-order} tangential
slip condition \cite[Formula (3.95)]{Dahlen}
\begin{equation} \label{2ndorderBC}
 \left[ \nu \cdot u - u \cdot \nabla^{\Sigma} (\nu \cdot u)
    + \tfrac{1}{2} u \cdot (\nabla^{\Sigma} \nu) \cdot u
          \right]_-^+ = 0,
\end{equation}
then
\[
\begin{split}
\int_{\Sigma^{F}} p^0 [\nu \cdot u]_-^+ \ d \Sigma &= \int_{\Sigma^{F}} p^0 \left [ \tfrac{1}{2} W\big( u - (u \cdot \nu) \nu\big) \cdot \big( u - (u\cdot \nu) \nu\big)  - u \cdot \nabla^{\Sigma}(\nu \cdot u) \right ]_-^+ \ d \Sigma
\end{split}
\]
Therefore we may rewrite the total energy as
\[
\begin{split}
   E & = \int_{\tilde{X}} \rho^0 [\tfrac{1}{2} \dot{u} \cdot \dot{u}
     + \tfrac{1}{2} u \cdot \nabla g_0' \cdot u + \tfrac{1}{2} u \cdot \nabla S(u) + \tfrac{1}{2} \nabla u : \Lambda^{T^0} : \nabla u]\, dV \\
               &\hskip1.5in  -  \int_{\Sigma^{F}} p^0 \left [ \tfrac{1}{2} W\big( u - (u \cdot \nu) \nu\big) \cdot \big( u - (u\cdot \nu) \nu\big)  - u \cdot \nabla^{\Sigma}(\nu \cdot u) \right ]_-^+ \ d \Sigma.
\end{split}
\]
Now suppose that $u$ is sufficiently regular on $\Omega^S \cup \Omega^F$, satisfies all of the boundary conditions appearing in table~\ref{unT}, and satisfies the equation \eqref{globalseis} with $f = 0$. We point out that the assumption that $u$ satisfies the second and first order tangential slip conditions here appears unphysical, but is in fact the same as used for example in \cite{Dahlen}. In this case, by Lemma~\ref{aOriga1lem}
\[
E = \frac{1}{2} (\|\dot{u}\|_{L^2(\tilde{X}, \rho^0 dV)}^2 + a_{original}(u,u)),
\]
and then, using the fact that $u$ satisfies the necessary boundary conditions by Lemma \ref{A3}
\[
E = \frac{1}{2} (\|\dot{u}\|_{L^2(\tilde{X}, \rho^0 dV)}^2 + a_{2}(u,u)).
\]
This is the same as \eqref{Energy}, and since $u$ satisfies \eqref{globalseis} with sufficient regularity and satisfying the required boundary conditions the discussion following \eqref{Energy} shows that $\dot{E} = 0$, and energy is conserved. We stress here that for the total physical energy and the energy given in \eqref{Energy} to be equivalent we must use both the nonlinear second order slip conditions on $\Sigma^F$, and all the boundary conditions in table~\ref{unT}.

\setcounter{equation}{0}
\section{Some facts on vector-valued distributions}

The following technical lemma is used in some of the proofs in section
\ref{semigroup:sec}.
\medskip

\begin{lemma}\label{commute}
Let $H$ be a Hilbert space and $u_1$, $u_2 \in \mathcal{C}^0([0,T], H)$. 
Suppose 
\begin{equation}\label{asumFirstD}
\forall w \in H,\ \dfrac{d}{dt} (u_1, w)_H = (u_2, w)_H , \ \text{in} \ \mathcal{D}'(0,T).
\end{equation}
Then $ u_1 \in \mathcal{C}^1([0,T], H)$ and $\dot{u}_1 = u_2$ with the derivative taken in the classical sense.
\end{lemma}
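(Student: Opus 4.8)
## Proof plan for Lemma~\ref{commute}

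The plan is to show that the hypothesis \eqref{asumFirstD} says precisely that the distributional derivative of the $H$-valued function $u_1$ equals $u_2$, and then to upgrade this to a classical statement using the continuity of $u_2$. First I would recall that for a continuous $H$-valued function $u_1 \in \mathcal{C}^0([0,T],H)$, the distributional derivative $\dot u_1 \in \mathcal{D}'(0,T;H)$ is defined by $\langle \dot u_1, \varphi\rangle = -\int_0^T u_1(t)\,\dot\varphi(t)\,dt$ for $\varphi \in \mathcal{D}(0,T)$. Pairing with an arbitrary fixed $w \in H$ and using that the Bochner integral commutes with the bounded linear functional $(\cdot,w)_H$, the identity \eqref{asumFirstD} becomes $\langle \dot u_1, \varphi\rangle_H = \int_0^T u_2(t)\varphi(t)\,dt$ tested against every $w$; since $w$ is arbitrary this gives $\dot u_1 = u_2$ as an equality in $\mathcal{D}'(0,T;H)$.

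Next I would integrate. Fix $t_0 \in (0,T)$ and set $v(t) = u_1(t_0) + \int_{t_0}^t u_2(s)\,ds$, which is well-defined since $u_2$ is continuous hence Bochner integrable on compact subintervals; moreover $v \in \mathcal{C}^1((0,T),H)$ with $\dot v = u_2$ in the classical sense (the fundamental theorem of calculus for Bochner integrals of continuous integrands). Then $u_1 - v$ has vanishing distributional derivative on $(0,T)$, so by the vector-valued analogue of the "distributions with zero derivative are constant" lemma, $u_1 - v$ is a constant element $c \in H$ on $(0,T)$. Hence $u_1(t) = c + v(t)$ on $(0,T)$, and by continuity of both $u_1$ and $v$ up to the endpoints this extends to $[0,T]$. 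Therefore $u_1 \in \mathcal{C}^1([0,T],H)$ with $\dot u_1 = \dot v = u_2$ classically, which is the claim.

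The only point requiring a little care — and the step I expect to be the main obstacle, though it is standard — is justifying the vector-valued "zero derivative implies constant" fact and the commutation of the Bochner integral with the functional $(\cdot,w)_H$; both follow from scalarization, applying the classical scalar results to $t \mapsto (u_1(t),w)_H$ for each $w \in H$ and then invoking the fact that a weakly constant continuous $H$-valued function is constant (equivalently, that $H^* $ separates points). One can alternatively avoid even this by testing \eqref{asumFirstD} with $w$ ranging over a countable dense subset of $H$ and passing to the limit. Everything else is a routine application of the definitions, so the proof is short.
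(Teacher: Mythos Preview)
Your proof is correct and follows essentially the same approach as the paper: first establish that $\dot u_1 = u_2$ holds in $\mathcal{D}'(0,T;H)$ by pairing the hypothesis against an arbitrary $w\in H$ and test function $\varphi\in\mathcal{D}(0,T)$, then upgrade to classical differentiability by noting that $u_1$ differs from the Bochner primitive $\int_0^t u_2(s)\,ds \in \mathcal{C}^1([0,T],H)$ by a constant. The paper is slightly more explicit in writing out the distributional pairings for the first step, while you are slightly more explicit in the second, but the substance is identical.
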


\begin{proof}
Note that $u_1 \in \mathcal{C}^0([0,T], H)$ and so we can differentiate $u_1$ in the sense of distributions to get $\tfrac{d}{dt} u_1 \in \mathcal{D}'( 0,T; H)$. We will show that under assumption \eqref{asumFirstD}, in fact $\tfrac{d}{dt} u_1  = u_2 \in D'(0,T; H)$. This will complete the proof because $u_2 \in \mathcal{C}^0([0,T], H)$ and $u_1(t)$ then differs only by a constant from
\[
\int_0^t u_2(s) \ \mathrm{d} s \in \mathcal{C}^1([0,T], H).
\]
For all $w \in H$ and $\varphi \in \mathcal{D}(0,T)$ 
\begin{align*}
\left \langle \dfrac{d}{dt} (u_1, w)_H, \varphi \right \rangle_{\mathcal{D}', \mathcal{D}}
 & = - \int_0^T (u_1, w)_H \dot{\varphi} \, dt \\
&  = - \int_0^T ( \varphi' u_1, w)_H \, dt  =    \left(  -\int_0^T u_1 \dot{\varphi} \, dt  , w\right)_H .
\end{align*}
 The second to last equality follows from \cite[Corollary 1, p.470]{DautrayLionsV5}. 
On the other hand  by assumption \eqref{asumFirstD}
\begin{align*}
\left\langle \dfrac{d}{dt} (u_1, w)_H, \varphi \right\rangle_{\mathcal{D}', \mathcal{D}}
   &= \langle (u_2, w)_H, \varphi\rangle_{\mathcal{D}', \mathcal{D}}  \stackrel{(*)}{=} \int_0^T (u_2, w)_H \varphi\, dt \\
   & = \int_0^T (\varphi u_2, w)_H \, dt  = \left( \int_0^T u_2 \varphi \, dt     , w\right)_H. 
\end{align*}
    since $(*)$ is due to $u_2\in \mathcal{C}^0( [0,T], H)$ and the second to last 
    follows again from  \cite[Corollary 1, p.470]{DautrayLionsV5}. 
Hence we have
$$\forall w \in H , \forall \varphi \in \mathcal{D}, \ \     \left( \int_0^T u_2 \varphi \, dt     , w\right)_H =   \left(  -\int_0^T u_1 \dot{\varphi} \, dt  , w\right)_H $$
$$\Rightarrow \forall \varphi\in \mathcal{D} (0,T) , \  \int_0^T u_2 \varphi \, dt  =  -\int_0^T u_1 \dot{\varphi} \, dt . $$
This means that 
$$ \dfrac{d}{dt} u_1  = u_2 \in  \mathcal{D}'( 0,T; H)$$
which completes the proof.
\end{proof}

\bibliographystyle{siam}
\bibliography{global_Bibliography}

 \end{document}